\documentclass[11pt]{article}
\usepackage[utf8]{inputenc}

\usepackage{amsmath,amssymb,amsfonts,amsthm}
\usepackage{graphicx}
\usepackage{multirow}
\usepackage{adjustbox}
\usepackage{hyperref}
\usepackage{color}
\usepackage{wrapfig}
\usepackage{tikz}
\usetikzlibrary{decorations.pathreplacing}
\usepackage{setspace}
\usepackage{algorithm}
\usepackage[noend]{algpseudocode}
\usepackage{xspace}
\usepackage{pgfplots}
\usepackage{framed}
\usepackage{subcaption}
\usepackage{thmtools}
\usepackage{thm-restate}
\pgfplotsset{compat=1.5}
\usepackage[margin=2in]{geometry}

\newtheorem{theorem}{Theorem}[section]
\newtheorem{corollary}[theorem]{Corollary}
\newtheorem{lemma}[theorem]{Lemma}

\newtheorem{definition}[theorem]{Definition}
\newtheorem{remark}[theorem]{Remark}
\newtheorem{claim}[theorem]{Claim}

\newtheorem{fact}[theorem]{Fact}

\newenvironment{proofof}[1]{\begin{trivlist} \item {\bf Proof
#1:~~}}
  {\qed\end{trivlist}}

\newcommand{\namedref}[2]{\hyperref[#2]{#1~\ref*{#2}}}
\newcommand{\thmlab}[1]{\label{thm:#1}}
\newcommand{\thmref}[1]{\namedref{Theorem}{thm:#1}}
\newcommand{\lemlab}[1]{\label{lem:#1}}
\newcommand{\lemref}[1]{\namedref{Lemma}{lem:#1}}

\newcommand{\corlab}[1]{\label{cor:#1}}
\newcommand{\corref}[1]{\namedref{Corollary}{cor:#1}}
\newcommand{\seclab}[1]{\label{sec:#1}}
\newcommand{\secref}[1]{\namedref{Section}{sec:#1}}

\newcommand{\appref}[1]{\namedref{Appendix}{app:#1}}
\newcommand{\factlab}[1]{\label{fact:#1}}
\newcommand{\factref}[1]{\namedref{Fact}{fact:#1}}
\newcommand{\remlab}[1]{\label{rem:#1}}
\newcommand{\remref}[1]{\namedref{Remark}{rem:#1}}

\newcommand{\alglab}[1]{\label{alg:#1}}
\renewcommand{\algref}[1]{\namedref{Algorithm}{alg:#1}}
\newcommand{\tablelab}[1]{\label{tab:#1}}
\newcommand{\tableref}[1]{\namedref{Table}{tab:#1}}
\newcommand{\deflab}[1]{\label{def:#1}}
\newcommand{\defref}[1]{\namedref{Definition}{def:#1}}

\def \Lap    {\mdef{\mathsf{Lap}}}
\def \C    {\mdef{\mathsf{C}}}

\def \dist    {\mdef{\mathsf{dist}}}



\newcommand{\PPr}[1]{\ensuremath{\mathbf{Pr}\left[#1\right]}}

\renewcommand{\O}[1]{\ensuremath{\mathcal{O}\left(#1\right)}}
\newcommand{\tO}[1]{\ensuremath{\tilde{\mathcal{O}}\left(#1\right)}}
\newcommand{\eps}{\varepsilon}

\def \Lap    {\mdef{\mathsf{Lap}}}

\def \calA    {\mdef{\mathcal{A}}}

\def \calD    {\mdef{\mathcal{D}}}

\def \calS    {\mdef{\mathcal{S}}}
\def \calT    {\mdef{\mathcal{T}}}
\def \calU    {\mdef{\mathcal{U}}}
\def \calW    {\mdef{\mathcal{W}}}

\def \calR    {\mdef{\mathcal{R}}}
\def \frakS    {\mdef{\mathfrak{S}}}

\newcommand{\mdef}[1]{{\ensuremath{#1}}\xspace}  
\newcommand{\myfunc}[1]{\mdef{\mathsf{#1}}}      

\DeclareMathOperator*{\polylog}{polylog}
\DeclareMathOperator*{\poly}{poly}
\DeclareMathOperator*{\negl}{negl}



\def \negl     {\mdef{\myfunc{negl}}}                

\newcommand{\ignore}[1]{}

\newif\ifnotes\notestrue 
\ifnotes
\newcommand{\samson}[1]{\textcolor{blue}{{\bf (Samson:} {#1}{\bf ) }} \marginpar{\tiny\bf
             \begin{minipage}[t]{0.5in}
               \raggedright S:
            \end{minipage}}}   
\newcommand{\jeremiah}[1]{\textcolor{red}{{\bf (Jeremiah:} {#1}{\bf ) }} \marginpar{\tiny\bf
             \begin{minipage}[t]{0.5in}
               \raggedright J:
            \end{minipage}}}

\newcommand{\tanote}[1]{\textcolor{purple}{{\bf (Tamalika:} {#1}{\bf ) }} \marginpar{\tiny\bf
             \begin{minipage}[t]{0.5in}
               \raggedright T:
            \end{minipage}}}
\newcommand{\enote}[1]{\textcolor{green}{{\bf (Elena:} {#1}{\bf ) }} \marginpar{\tiny\bf
             \begin{minipage}[t]{0.5in}
               \raggedright E:
            \end{minipage}}}
\newcommand{\seunghoon}[1]{\textcolor{red}{{\bf (Seunghoon:} {#1}{\bf ) }} \marginpar{\tiny\bf
             \begin{minipage}[t]{0.5in}
               \raggedright T:
            \end{minipage}}}

\else
\newcommand{\samson}[1]{}
\newcommand{\jeremiah}[1]{}
\newcommand{\tanote}[1]{}
\newcommand{\seunghoon}[1]{}
\newcommand{\enote}[1]{}
\fi

\providecommand{\email}[1]{\href{mailto:#1}{\nolinkurl{#1}\xspace}}

\definecolor{darkpastelgreen}{rgb}{0.01, 0.75, 0.24}
\definecolor{bleudefrance}{rgb}{0.19, 0.55, 0.91}
\hypersetup{
     colorlinks   = true,
     citecolor    = bleudefrance,
	 linkcolor	  = darkpastelgreen,
	 urlcolor     = darkpastelgreen
}

\makeatletter
\renewcommand*{\@fnsymbol}[1]{\textcolor{darkpastelgreen}{\ensuremath{\ifcase#1\or *\or \dagger\or \ddagger\or
 \mathsection\or \triangledown\or \mathparagraph\or \|\or **\or \dagger\dagger
   \or \ddagger\ddagger \else\@ctrerr\fi}}}
\makeatother
\usepackage{fullpage}

\title{How to Make Your Approximation Algorithm Private:\\
\vspace{.1in}
\large	A Black-Box Differentially-Private Transformation\\ for Tunable Approximation Algorithms of Functions with Low Sensitivity}

\author{
Jeremiah Blocki\thanks{E-mail: \email{jblocki@purdue.edu}. Supported in part by NSF CCF-1910659, NSF CNS-1931443. and NSF CAREER award CNS-2047272}
\\Purdue University
\and
Elena Grigorescu\thanks{E-mail: \email{elena-g@purdue.edu}. Supported in part by NSF CCF-1910659, NSF CCF-1910411, and NSF CCF-2228814.}
\\Purdue University
\and
Tamalika Mukherjee\thanks{E-mail: \email{tm3391@columbia.edu}. Most of the work was done as a Ph.D. student at Purdue University. Supported in part by the Bilsland Dissertation Fellowship, NSF CCF-1910659 and and NSF CCF-2228814.}
\\Columbia University
\and
Samson Zhou\thanks{E-mail: \email{samsonzhou@gmail.com}. Work done in part while at Carnegie Mellon University and supported in part by a Simons Investigator Award and by NSF CCF-1815840.}
\\UC Berkeley and Rice University
}

\date{\today}
\begin{document}

\maketitle
\begin{abstract}

We develop a framework for efficiently transforming certain approximation algorithms into differentially-private variants, in a black-box manner. Specifically, our results focus on algorithms $A$ that output an approximation to a function $f$ of the form $(1-\alpha)f(x)-\kappa\leq A(x) \leq (1+\alpha)f(x)+\kappa$, where $\kappa \in \mathbb{R}_{\geq 0}$ denotes additive error and $\alpha\in [0,1)$ denotes multiplicative error can be``tuned" to small-enough values while incurring only a polynomial blowup in the running time/space. We show that such algorithms can be made differentially private without sacrificing accuracy, as long as the function $f$ has small ``global sensitivity''. We achieve these results by applying the ``smooth sensitivity'' framework developed by Nissim, Raskhodnikova, and Smith (STOC 2007). 

Our framework naturally applies to transform non-private FPRAS and FPTAS algorithms into $\varepsilon$-differentially private approximation algorithms where the former case requires an additional postprocessing step. We apply our framework in the context of sublinear-time and sublinear-space algorithms, while preserving the nature of the algorithm in meaningful ranges of the parameters. Our results include the first (to the best of our knowledge) $\eps$-edge differentially-private sublinear-time algorithm for estimating the number of triangles, the number of connected components, and the weight of a minimum spanning tree of a graph whose accuracy holds with high probability. 
In the area of streaming algorithms, our results include $\eps$-DP algorithms for estimating $L_p$-norms, distinct elements, and weighted minimum spanning tree for both insertion-only and turnstile streams. Our transformation also provides a private version of the smooth histogram framework, which is commonly used for converting streaming algorithms into sliding window variants, and achieves a multiplicative approximation to many problems, such as estimating $L_p$-norms, distinct elements, and the length of the longest increasing subsequence. 
\end{abstract}

\section{Introduction}
Approximation algorithms are often used to efficiently approximate a function $f: \mathcal{D} \rightarrow \mathbb{R}^+$ in settings where resource constraints prevent us from computing the function exactly. For example, problems such as Knapsack are $\mathsf{NP}$-$\mathsf{Hard}$ and, unless $\mathsf{P} = \mathsf{NP}$, do not admit a polynomial time solution. However, the Knapsack problem admits a fully polynomial time approximation scheme (FPTAS) i.e., for any $\alpha > 0$ there is a deterministic algorithm, running in time $\poly(n,1/\alpha)$, which outputs a solution that is guaranteed to be be nearly as good (up to multiplicative factor $1 \pm \alpha$) as the optimal solution.  As a second example, even if the problem is computationally tractable it may still be the case that the input dataset $D \in \mathcal{D}$ is extremely large, making it infeasible to load the entire dataset into RAM, or impractical to execute a linear time algorithm. To remedy such shortcomings, models such as sublinear-space and sublinear-time algorithms have been proposed. For example, one may want to estimate frequencies of elements that appear in a stream of $n$ elements up to a multiplicative $1\pm \alpha$ factor, while using only $\poly\left(\log n, \frac{1}{\alpha}\right)$ memory cells. Or, one may want to estimate the number of connected components of a dense graph on $n$ vertices up to (relative) additive error $\kappa$ by only inspecting $\poly(\log n, \frac{1}{\kappa})$ many edges. 

In addition to time and space efficiency, user privacy is another important consideration in contexts where the input to our function $f$ is sensitive user data. Differential privacy (DP)~\cite{Dwork06,DworkMNS06} is a rigorous mathematical concept that gives provable guarantees on what it means for an algorithm to preserve the privacy of individual information in the input dataset. Informally, a randomized function computed on a dataset $D$ is {\em differentially private} if the distribution of the function's output does not change significantly with the presence or absence of an individual data point. Thus, a natural goal is to develop efficient differentially private algorithms to approximate functions/queries of interest. 

One general way to preserve differential privacy is to add noise scaled to the global sensitivity $\Delta_f$ of our function $f$, i.e., the maximum amount $|f(D)-f(D')|$ that the answer could change by modifying a single record in our dataset $D$ to obtain a new dataset $D'$. This general approach yields efficient and accurate approximations for $f$ as long as we have an efficient algorithm to compute $f$ {\em exactly} and the global sensitivity of $f$ is sufficiently small. However, in some resource-constrained settings, we may need to use an approximation algorithm $\calA_f$ instead of evaluating $f$ exactly. Unfortunately, the mechanism that computes $\calA_f(D)$ and then adds noise scaled to the global sensitivity $\Delta_f$ of our function $f$ is not necessarily differentially private. In particular, even if we are guaranteed that $|\calA_f(D) - f(D)| \leq \alpha f(D)$ we might still have $|\calA_f(D)-\calA_f(D')| \geq 2 \alpha f(D) \gg \Delta_f$ for neighboring datasets $D$ and $D'$, e.g., suppose $\calA_f(D) = (1+\alpha)f(D)$ and $\calA_f(D') = (1-\alpha)f(D')$. Thus, the global sensitivity of $\calA_f$ can be quite large and adding noise proportional to $\Delta_{\calA_f}$ would prevent us from providing meaningful accuracy guarantees. This raises a natural question: Suppose that our function $f$ admits an accurate (but not necessarily resource-efficient) differentially private approximation algorithm and that $f$ also admits an efficient (but not necessarily private) approximation algorithm. Is it necessarily the case that there is also an equally efficient differentially private approximation algorithm?

Unfortunately, a result of~\cite{HardtT10, BunUV18} suggests that the answer to the previous question may be no. Suppose our dataset $D$ consists of $n$ users $x_1, \ldots, x_n$ with $n$ binary attributes i.e., $x_i \in \{0,1\}^n$. Consider the function $f(D)$ that computes all of the one-way marginals i.e., $f(D) = \langle \frac{1}{n} \sum_{i=1}^n x_i[j] \rangle_{j=1}^n \in \mathbb{R}^n$. In particular, there is a non-private sublinear time algorithm that samples $\O{\log n}$ users and (with high probability) outputs a good approximation to {\em all} $n$ one-way marginals. However, if we require that our algorithm satisfy pure, i.e, $\eps$-differential privacy (resp. approximate, i.e., $(\eps,\delta)$-differential privacy) then we need to look at {\em at least} $\Omega(n/\eps)$ (resp. $\Omega(\sqrt{n \log (1/\delta)})$) samples~\cite{HardtT10, BunUV18}. 
In light of this, we pose the following general questions: 
\begin{center}
\textit{What are sufficient conditions for an approximation algorithm to be made differentially private? \\
Can an approximation algorithm be made differentially private in an efficient black-box manner? } 
\end{center}

Over the years, many differentially private approximation algorithms have been developed for problems in optimization, machine learning, and distribution testing (see for e.g.,~\cite{GuptaLMRT10,AcharyaSZ18,Ghazi0MN21,GongXPFQ20}), in a somewhat ad-hoc manner. Often, these results give a differentially private algorithm for that specific problem and do not easily generalize to give differentially private algorithms for a large class of problems. A general framework for developing differentially private approximation algorithms for a large class of problems is desirable as this would not only make DP approximation algorithms more easily accessible to non-DP experts, but more importantly, it would shed light on what kinds of algorithms are more amenable to differential privacy. Furthermore, a framework that uses the underlying approximation algorithm as a {\em black-box} is desirable as this avoids the need to (re)design, (re)analyze, and (re)implement the new differentially private versions of these approximation algorithms. We emphasize that this type of framework has been well-studied for \emph{computing} functions privately by calibrating noise proportional to their global or smooth sensitivity~\cite{DworkMNS16, NissimRS07} (see \secref{sec:related-work} for more discussion).

Our work makes partial progress towards answering these general questions. In particular, we give an efficient black-box framework for converting a non-private approximation algorithm $\calA_f$ with tunable accuracy parameters into a differentially private approximation algorithm $\calA'_f$ with reasonable accuracy guarantees as long as the global sensitivity $\Delta_f$ of the function $f$ being approximated is sufficiently low. For the case when $\calA_f$ is deterministic, we achieve a pure $\eps$-differentially private approximation algorithm via a direct transformation, and when $\calA_f$ is randomized, i.e., has a small failure probability, we achieve $\eps$-differential privacy by first applying a transformation that gives a $(\eps,\delta)$-differentially private algorithm and then apply a postprocessing step to achieve $\eps$-differentially privacy. For example, suppose that for any $\alpha > 0$ our algorithm $\calA_f$, taking $\alpha$ and our dataset $D$ as input, provides the guarantee that $|\calA_f(D)-f(D)| \leq \alpha f(D)$ e.g., any FPTAS algorithm would satisfy our tunable accuracy requirement. In such a case, for any $\alpha >0$ we can transform our non-private algorithm $\calA_f$ into a differentially private version with multiplicative error $\alpha$ and small additive error term which (necessarily) comes from the noise that we added. Intuitively, we exploit the fact that we can run $\calA_f$ with an even smaller accuracy parameter $\rho \ll \alpha$ which can be tuned to ensure that the smooth sensitivity of our algorithm is sufficiently small. Our same general framework still applies if we allow that the approximation algorithm $\calA_f$ has a small additive error term i.e., $|\calA_f(D)-f(D)| \leq \alpha f(D) + \kappa$. If $\calA_f(D)$ is only guaranteed to output a good approximation (i.e., $|\calA_f(D)-f(D)| \leq \alpha f(D) + \kappa$) with probability $1-\delta/2$ (e.g., an FPRAS algorithm would satisfy this requirement with additive error $\kappa=0$) then our framework achieves $\eps$-differential privacy by first obtaining an approximate $(\eps, \delta)$-differential privacy algorithm and then a postprocessing step. In cases where the approximation algorithm is not tunably accurate our black-box framework does not necessarily apply\footnote{One could still apply our black-box transformation. However, the accuracy guarantees would be degraded and we would only achieve $(\eps,\delta)$-differential privacy for sufficiently large values of $\eps, \delta >0$ which depend on the approximation error parameter $\alpha$.  }. For example, the best known approximation algorithms for vertex cover achieve the guarantee $f(G) \leq \calA_f(G) \leq 2 f(G)$ i.e., because there is no way to control the smooth sensitivity of our approximation algorithm.

\subsection{Our Contributions} 
We introduce a generic black-box framework for converting certain approximation algorithms for a function $f:\mathcal{D} \rightarrow \mathbb{R}^+$ into a differentially private approximation algorithm using smooth sensitivity~\cite{NissimRS07}. We first introduce a definition for \emph{tunable} approximation algorithms used throughout our paper, and then present our main results for the DP framework, and then give new differentially private algorithms for a variety of approximation algorithms obtained via this unifying framework. 

 \begin{definition}[$(\alpha,\kappa, \delta)$-approximation]\deflab{def:akd-approx}
An algorithm $\calA_f$ is a $(\alpha,\kappa, \delta)$-approximation for $f$ if for every $D \in \calD$ with probability at least $1-\delta$, we have that $(1-\alpha)f(D) -\kappa \leq \calA_f(D) \leq (1+\alpha)f(D) + \kappa$.  
 \end{definition}

We may abuse notation and omit the failure probability $\delta$ parameter in the above definition, if it is clear from the context.  Some algorithms $\calA_f$ may take the approximation parameters $\alpha, \kappa, \delta \geq 0$ as input\footnote{We allow that $\alpha = \kappa = \delta = 0$ in which case $\calA_f$ can simply compute $f$ exactly --- whether or not this computation is efficient. }. 

\begin{definition}[tunable approximation]\deflab{def:tunable}
$\calA_f(D, \alpha,\kappa,\delta)$ provides a tunable approximation of $f$ if for  every $\alpha, \kappa, \delta \geq 0$ the algorithm $\calA_f(\cdot, \alpha,\kappa,\delta)$ obtained by hardcoding $\alpha, \kappa$ and $\delta$ is a  $(\alpha,\kappa,\delta)$-approximation for $f$. \\
When the parameters $\alpha,\kappa, \delta$ are clear from the context, we may abuse notation and just write $\calA_f(D)$. For a tunable approximation algorithm we will use $R(n,\alpha, \kappa, \delta)$ to denote the amount of a particular resource used by the algorithm. The resources we consider in this work include time, space and query complexity of the algorithm (depending on the model) which we denote by  $T(\cdot, \cdot, \cdot, \cdot)$,  $S(\cdot, \cdot, \cdot, \cdot)$ , and  $Q(\cdot, \cdot, \cdot, \cdot)$ respectively.  

\end{definition}

As a concrete example any FPTAS algorithm $\calA_f$ for $f$ would be a tunable approximation for $f$ with running time $T(n,\alpha, \kappa, \delta) = \poly(n, 1/\alpha)$ for any $\alpha >0$ and any $\kappa, \delta \geq 0$ --- an FPTAS has no additive error $(\kappa = 0)$ and zero failure probability ($\delta =0$). Similarly a FPRAS algorithm would be a tunable approximation with running time $T(n,\alpha,\kappa, \delta) = \poly(n, \alpha, \log(1/\delta))$ for any $\alpha, \delta > 0$ and any $\kappa \geq 0$ --- an FPRAS also has no additive error $(\kappa =0$). 

\paragraph{General Framework for Approximation Algorithms. }
Our main result gives a framework for converting any existing non-DP algorithm $\calA_f$ that provides an $(\alpha, \kappa,\delta)$-approximation of $f$ into an $\eps$-DP algorithm $\calA''_f$ in the following manner: (1) Apply \algref{alg:fpras-DP} to obtain an $(\eps,\delta)$-DP algorithm $\calA'_f$ that achieves an $(\alpha',\kappa',\delta')$-approximation (see~\thmref{thm:main}), (2) Apply a postprocessing step on the output of $\calA'_f$ outlined in \thmref{thm:dp-transform} to achieve an $\eps$-DP algorithm $\calA''_f$ with the same accuracy guarantees as $\calA'_f$ barring an additive error of $o(1)$. 
We emphasize that $\calA_f$ is a tunable approximation, in other words, $\calA_f$ takes the parameters $(\alpha,\kappa,\delta)$ as input.

\begin{restatable}{theorem}{thmmainfpras}\thmlab{thm:main} (($\eps, \delta$)-privacy)
Suppose that $\calA_f$ is a tunable approximation of $f: \calD \to \mathbb{R}^{+}$. Then for all $\eps > 0$, $\delta = \delta(n) > 0$\footnote{typically we set $\delta=\text{negl}(n)$ or $\delta=n^{-c}$ for some constant $c>0$. In particular $\delta(n) $ may approach zero as $n \to \infty$.}, $\alpha\geq 0$ and $ \kappa \geq 0$, there is an algorithm $\calA'_f$ such that  
\begin{enumerate}
    \item (Privacy) $\calA'_f$ is $(\eps,\delta')$-differentially private  where $\delta'=\delta(1+\exp(\eps/2))$. 
    \item (Accuracy) For all $D \in \calD$, and $0< \gamma$, with probability $1-\delta-\exp(-\gamma)$,
    \begin{align*}
    (1-\alpha')f(D) -\kappa' -  \frac{2\Delta_f}{\eps} \cdot \gamma \leq \calA'_f(D) \leq (1+\alpha')f(D) +  \kappa' + \frac{2\Delta_f}{\eps} \cdot \gamma     
    \end{align*}
        where $\alpha' = \frac{\alpha (\eps + 16\gamma)}{12 \log(4/\delta)}$, and $\kappa'= \kappa\left(\frac{2\gamma\alpha}{3\log(4/\delta)}+\frac{8\gamma}{\eps}+1\right)$, and $\Delta_f:=\max_{D,D'\in\calD,D \sim D'}\|f(D)-f(D')\|_1. $
    \item (Resource) $\calA'_f$ uses $R\left(n,\frac{\eps\alpha}{\log(4/\delta)}, \kappa, \delta \right)$ resource, where $R(\cdot, \cdot, \cdot, \cdot)$ is the resource used by $\calA_f$. 
    \end{enumerate}
\end{restatable}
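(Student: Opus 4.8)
The algorithm $\calA'_f$ runs $\calA_f$ \emph{once} with a deliberately shrunken accuracy parameter and then releases the output perturbed by Laplace noise whose scale is calibrated, in the smooth-sensitivity sense of~\cite{NissimRS07}, to a data-dependent quantity read off from the \emph{observed} output. Concretely, I would set $\rho:=\frac{\eps\alpha}{12\log(4/\delta)}$ and a smoothing parameter $\beta=\Theta(\eps/\log(1/\delta))$, compute $y=\calA_f(D,\rho,\kappa,\delta)$, and output $y+\frac{2\widehat S(y)}{\eps}\cdot Z$ with $Z$ a standard Laplace variable, where $\widehat S(y):=\frac{2\rho}{1-\rho}(y+\kappa)+2\kappa+(1+\rho)\Delta_f$ (the constants are pinned down at the end so the accuracy arithmetic matches). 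Item~(3) is then immediate: $\calA'_f$ makes a single call to $\calA_f$ with parameters $\bigl(\Theta(\tfrac{\eps\alpha}{\log(4/\delta)}),\kappa,\delta\bigr)$ and does only $\O{1}$ extra work, so it uses $R\bigl(n,\tfrac{\eps\alpha}{\log(4/\delta)},\kappa,\delta\bigr)$ resources.

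\emph{A smooth bound on the ``conditional'' local sensitivity.} For a dataset $D$ let the good event $G_D$ be that $\calA_f(D)\in I_D:=[(1-\rho)f(D)-\kappa,(1+\rho)f(D)+\kappa]$, so $\Pr[\neg G_D]\le\delta$. For neighbors $D\sim D'$, conditioning on $G_D\cap G_{D'}$ and using two triangle inequalities through $f(D),f(D')$ together with $|f(D)-f(D')|\le\Delta_f$ gives
\begin{align*}
|\calA_f(D)-\calA_f(D')|\ \le\ \rho\bigl(f(D)+f(D')\bigr)+2\kappa+\Delta_f\ \le\ \tfrac{2\rho}{1-\rho}\bigl(\calA_f(D)+\kappa\bigr)+2\kappa+(1+\rho)\Delta_f,
\end{align*}
where the last inequality substitutes $f(D)\le\frac{\calA_f(D)+\kappa}{1-\rho}$ and $f(D')\le f(D)+\Delta_f$; the right-hand side is $\widehat S(\calA_f(D))$. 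Running the same estimate with $D,D'$ swapped, and using that the additive floor $2\kappa+(1+\rho)\Delta_f$ keeps $\widehat S$ from degenerating when $f(D')$ is small, I would check that for neighbors $\widehat S(\calA_f(D))/\widehat S(\calA_f(D'))\le 1+\O{\rho}\le e^{\beta}$ (choosing $\beta\gtrsim\rho$). Thus $\widehat S$ evaluated at the output is a $\beta$-smooth upper bound on the local sensitivity of $\calA_f$ on the relevant events.

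\emph{From there to $(\eps,\delta')$-privacy --- the main obstacle.} The delicate point is that $\calA_f$ is \emph{randomized}: there is no single coin string making $\calA_f(D)$ and $\calA_f(D')$ accurate at once, so the NRS calculus, stated for deterministic functions, cannot be invoked verbatim. I would argue at the level of output laws instead. Writing $\mu_D$ for the law of $\calA_f(D)$, the law of $\calA'_f(D)$ is the mixture over $y\sim\mu_D$ of the Laplace distribution centered at $y$ with scale $\tfrac{2\widehat S(y)}{\eps}$; split off the mass $\le\delta$ that $\mu_D$ places outside $I_D$. The core estimate is then: over any bounded window $W\supseteq I_D\cup I_{D'}$ of width $\O{\widehat S}$, and for every measurable $T$, the mass that the Laplace law centered at $y$ with scale $\tfrac{2\widehat S(y)}{\eps}$ assigns to $T$ changes by at most a factor $e^{\eps/2}$, up to an additive tail $\le\delta$, as $y$ ranges over $W$ --- this is exactly the ``sliding'' (center moves by at most the width of $W$) plus ``dilation'' (scale changes by at most $e^{\beta}$ over $W$, by smoothness of $\widehat S$) property of Laplace noise from~\cite{NissimRS07}, with $\beta$ set to its admissible value $\Theta(\eps/\log(1/\delta))$. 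Integrating this estimate against $\mu_D|_{I_D}$ and $\mu_{D'}|_{I_{D'}}$ and bookkeeping the $\O{\delta}$ terms --- the failure of $G_D$ and the Laplace tail outside $W$, which is where the factor $\exp(\eps/2)$ enters --- yields $\Pr[\calA'_f(D)\in T]\le e^{\eps}\Pr[\calA'_f(D')\in T]+\delta(1+\exp(\eps/2))$, i.e.\ item~(1).

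\emph{Accuracy (routine).} Union-bound $\neg G_D$ (probability $\le\delta$) with the Laplace tail $|Z|>\gamma$ (probability $\exp(-\gamma)$); on the complement, $\calA_f(D)\in I_D$ and the noise has magnitude at most $\frac{2\gamma}{\eps}\widehat S(\calA_f(D))\le\frac{2\gamma}{\eps}\bigl(\tfrac{2\rho}{1-\rho}((1+\rho)f(D)+2\kappa)+2\kappa+(1+\rho)\Delta_f\bigr)$. Collecting the $f(D)$-proportional contributions (the $\rho f(D)$ from $\calA_f$ plus the $\O{\rho\gamma/\eps}\cdot f(D)$ from the noise) into a single multiplicative error, the $\Delta_f$-term into $\tfrac{2\Delta_f}{\eps}\gamma$, and the rest into an additive error, and then plugging in $\rho=\frac{\eps\alpha}{12\log(4/\delta)}$, gives precisely $\alpha'=\frac{\alpha(\eps+16\gamma)}{12\log(4/\delta)}$ and $\kappa'=\kappa\bigl(\frac{2\gamma\alpha}{3\log(4/\delta)}+\frac{8\gamma}{\eps}+1\bigr)$ with failure probability $\delta+\exp(-\gamma)$, matching item~(2). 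The only care needed here is to upper bound $\widehat S(\calA_f(D))$ by an expression in $f(D)$ rather than in the random output, which the upper endpoint of $I_D$ supplies.
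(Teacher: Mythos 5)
Your route to item (1) is genuinely different from the paper's, and the difference is instructive. You correctly observed that $\calA_f$ is randomized and then treated this as the chief obstacle, arguing at the level of output laws: decompose $\mu_D$ on the event $G_D$, establish a pointwise pairwise comparison $\nu_y(T)\le e^{\O{\eps}}\nu_{y'}(T)+\O{\delta}$ for all $y\in I_D$, $y'\in I_{D'}$, and integrate. The paper dissolves the obstacle instead of attacking it: for each \emph{fixed} coin string $R$ it defines $g_{f,R}(D)$ by \emph{truncating} $\calA_{f,R}(D)$ into $I_D=[(1-\rho)f(D)-\tau,\,(1+\rho)f(D)+\tau]$, so that $g_{f,R}$ is a bona fide deterministic function (in $R$) taking values in $I_D$ with probability one. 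It then shows $S_f(D)=4\rho\, g_{f,R}(D)+4\tau+\Delta_f$ is a $\beta$-smooth upper bound on $LS_{g_{f,R}}$ (\lemref{lem:smooth-bound}), applies the NRS Laplace result (\thmref{thm:smooth:laplace}) black-box for each $R$, averages over $R$, and finishes with a one-line coupling: since $g_{f,R}(D)=\calA_{f,R}(D)$ except with probability $\delta/2$, the real and idealized mechanisms differ with probability $\le\delta/2$. Your ``there is no single coin string making $\calA_f(D)$ and $\calA_f(D')$ accurate at once'' is true but irrelevant once you truncate, because $g_{f,R}$ is accurate by fiat for \emph{every} $R$; no common accurate coin string is needed.

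Two caveats on your own route. First, the ``core estimate'' — that the Laplace law slid across $W\supseteq I_D\cup I_{D'}$ and dilated by $e^{\beta}$ moves by at most a factor $e^{\eps}$ up to an additive $\delta$ tail — is asserted, not proved; it is essentially NRS's sliding-plus-dilation admissibility lemma for the Laplace kernel, and you would need to verify it for the specific shift bound $|W|\le\min_{y\in W}\widehat S(y)$ (which your $\widehat S$ does satisfy, as a short calculation shows) rather than merely gesturing at it. You also need care in the integration step: restricting $\mu_{D'}$ to $I_{D'}$ introduces a $1/\Pr[G_{D'}]$ normalization that must be absorbed into the additive term, and the bad-event masses on both sides must be tracked on both sides of the $e^{\eps}$ factor; the paper's coupling sidesteps all of this. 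Second, your $\widehat S(y)=\frac{2\rho}{1-\rho}(y+\kappa)+2\kappa+(1+\rho)\Delta_f$ differs from the paper's $4\rho\,\calA_f(D)+4\tau+\Delta_f$ by constant factors, so the claim that you recover ``precisely'' $\alpha'=\frac{\alpha(\eps+16\gamma)}{12\log(4/\delta)}$ and $\kappa'=\kappa\bigl(\frac{2\gamma\alpha}{3\log(4/\delta)}+\frac{8\gamma}{\eps}+1\bigr)$ is optimistic; you would land on constants of the same order but not these exact ones unless you mirror the paper's $S_f$. The accuracy and resource arguments are otherwise the same as the paper's and are fine. In short: plausible and genuinely different, but the privacy step needs the pairwise Laplace lemma to be written out and the bad-event bookkeeping to be done carefully, whereas the paper's truncate-then-NRS-then-couple argument gets the same result with far less machinery.
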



We illustrate the utility of \thmref{thm:main} with specific parameters --- if we have a non-private algorithm $\calA_f$ that guarantees an $(\alpha,0,\delta)$-approximation, then for constant $\eps$, $\delta=\frac{1}{n^c}$ and $\gamma=c\log(n)$, we see that the DP algorithm $\calA_f$ achieves an $\left(\alpha(1+o(1)), \O{\frac{\Delta_f\log(n)}{\eps}}, \frac{1}{n^c}\right)$-approximation. 
We typically use these parameters for $\delta,\gamma$ in our applications for streaming and sublinear-time algorithms. 

Our reduction in \thmref{thm:main} is quite simple -- we describe the associated \algref{alg:fpras-DP} below. 

\begin{algorithm}[!htb]
\caption{$(\eps,\delta)$-differentially private framework $\calA'_f$ for tunable approximation algorithm $\calA_f$}
\alglab{alg:fpras-DP}
\begin{algorithmic}[1]
\Require{Input set $D$, accuracy parameters $\alpha \in (0,1)$ and $\kappa$, DP parameter $\eps$, DP failure probability $\delta\in(0,1)$, approx. algorithm $\calA_f$. }
\State Let $x_A:=\calA_f(D, \rho,\tau, \delta/2)$, where $\rho:=\left(\frac{\eps  \alpha }{12 \log(4/\delta)}\right)$, and $\tau:=\kappa$. 
\State{\Return $x_A+X$ where $X \sim \Lap  \left(\frac{2(4 \rho x_A+4\tau+\Delta_f)}{\eps} \right)$}
\end{algorithmic}
\end{algorithm}
Note that in~\algref{alg:fpras-DP}, we leave our additive parameter $\kappa$ as is when running $\calA_f$, but we still choose to define $\tau:= \kappa$. This is because depending on the problem, and the accuracy/efficiency guarantees desired, we can set $\tau$ to be a tuned version of $\kappa$ (for e.g., we set $\tau:= \kappa/\log(n)$ for the problem of estimating the number of connected components).

\begin{remark}\remlab{rem:median-trick}
We also note that, even if the failure probability $\delta >0$ of $\calA_f$ is non-negligible, that we can always boost the success probability by running $\calA_f(D)$ multiple times and computing the median over all outputs. Even if the error rate $0 < \delta < 1/2$ is a constant we can always reduce the failure probability to a lower target $0 < \delta' \ll \delta$ while increasing the running time by a multiplicative factor $O\left(\log(1/\delta') \right)$. In particular, we can set $\delta'$ to be a negligible function of $n$ such as $\delta' = n^{-\log n}$ whilst only incurring a $\O{\log^2 n}$ blowup in our running time. 
\end{remark}

We stress that we can only apply~\thmref{thm:main} to existing non-DP algorithms $\calA_f$ that give an approximation guarantee of the form $(1-\alpha)f(D) -\kappa \leq \calA_f \leq (1+\alpha)f(D)+\kappa$. 
For example, we cannot apply \thmref{thm:main} to obtain an $(\eps,\delta)$-DP algorithm for estimating the minimum vertex cover size in sublinear time. This is because the non-DP sublinear-time algorithm $\calA_{vc}$ has an approximation guarantee of the form $2VC(G) -\kappa n \leq \calA_{vc} \leq 2VC(G)+\kappa n$. On the other hand, we can use our DP framework to obtain an $(\eps,\delta)$-DP algorithm for obtaining a $(0,\kappa n,\delta)$-approximation of the maximum matching size in sublinear time (see~\corref{cor:max-match}). Intriguingly, both the minimum vertex cover size and the maximum matching size algorithms use the same underlying strategy of estimating a greedy maximal matching in a local fashion, but since they return different estimators based on the objective and we can only use our framework as a black-box, we cannot apply our framework to the former while we can still apply it to the latter. 

Finally, by applying a post-processing step described below, we show how to obtain an $\eps$-DP algorithm from the $(\eps,\delta)$-DP algorithm obtained in \thmref{thm:main}. Importantly, the accuracy guarantee of the resulting $\eps$-DP algorithm only differs by a small additive factor of $1/(KM)$, where $M = \max_D f(D)$ is the maximum possible output value, e.g., $M \leq n^3$ for triangle counting and $K>0$. Moreover for negligible $\delta$, the accuracy guarantee of the resulting pure DP algorithm still holds with high probability. 

\begin{restatable}{theorem}{dptransform}\thmlab{thm:dp-transform}
   Let $M = \max_D f(D)$ and let parameter $K>0$. If $\calA'_f(D)$ is $(\eps,\delta)$-DP algorithm with accuracy guarantee $(1-\alpha) f(D) - \kappa \leq \calA_f(D) \leq (1+\alpha) f(D) + \kappa$ holding with probability $1-\eta$ then there exists an algorithm $\calA''_f(D)$ which is $\eps$-DP with accuracy guarantee $(1-\alpha) f(D) - \kappa - \frac{1}{KM} \leq \calA_f(D) \leq (1+\alpha) f(D) + \kappa + \frac{1}{KM}$ with probability at least $1-\eta - p$ where $p = \frac{\delta K(M+1)}{e^\eps - 1 +\delta K(M+1)}$.   
\end{restatable}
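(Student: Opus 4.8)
The plan is to convert the $(\eps,\delta)$-DP algorithm $\calA'_f$ into a \emph{pure} $\eps$-DP algorithm $\calA''_f$ by a ``flood the range'' argument: first \emph{discretize} the output so that it takes values in a finite set $G$, and then \emph{mix} it with the uniform distribution on $G$ at a carefully tuned rate $p$. Intuitively, overlaying a uniform floor of probability mass on a finite range is exactly what absorbs the additive $\delta$ slack of the approximate-DP guarantee, at the cost of a bounded extra failure probability $p$.

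\textbf{Step 1: clamp and round.} Let $\calB(D)$ be $\calA'_f(D)$ first clamped to $[0,M]$ and then rounded to the nearest point of a fixed grid $G\subseteq[0,M]$ of at most $K(M+1)$ points, chosen dense enough that rounding perturbs any point by at most $\frac1{KM}$. Clamping and rounding are data-independent post-processing, so $\calB$ inherits $(\eps,\delta)$-DP from $\calA'_f$. These operations cost essentially nothing in accuracy: since $0\le f(D)\le M$ and $\alpha,\kappa\ge 0$, the interval $[(1-\alpha)f(D)-\kappa,\,(1+\alpha)f(D)+\kappa]$ contains $f(D)$ and hence meets $[0,M]$, and one checks that clamping any point of that interval into $[0,M]$ keeps it inside the interval; rounding then moves it by at most $\frac1{KM}$. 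Thus on the good event (probability $\ge1-\eta$) we still have $(1-\alpha)f(D)-\kappa-\frac1{KM}\le\calB(D)\le(1+\alpha)f(D)+\kappa+\frac1{KM}$.

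\textbf{Step 2: mixing and the privacy computation (the crux).} Define $\calA''_f(D)$ to return $\calB(D)$ with probability $1-p$ and a uniformly random element of $G$ with probability $p$, where $p=\frac{\delta|G|}{e^\eps-1+\delta|G|}\le\frac{\delta K(M+1)}{e^\eps-1+\delta K(M+1)}$ (the inequality because $x\mapsto\frac{\delta x}{e^\eps-1+\delta x}$ is increasing and $|G|\le K(M+1)$). To verify $\eps$-DP, fix neighboring $D\sim D'$ and any $S\subseteq G$ and write $\PPr{\calA''_f(D)\in S}=(1-p)\PPr{\calB(D)\in S}+p\frac{|S|}{|G|}$; substituting the approximate-DP bound $\PPr{\calB(D)\in S}\le e^\eps\PPr{\calB(D')\in S}+\delta$ and comparing with $e^\eps\PPr{\calA''_f(D')\in S}$, all the $\calB$-terms cancel and what remains to check is $(1-p)\delta\le(e^\eps-1)\,p\,\frac{|S|}{|G|}$. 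The worst case is a singleton, where $\frac{|S|}{|G|}=\frac1{|G|}$, and solving $(1-p)\delta\le(e^\eps-1)\frac{p}{|G|}$ for $p$ yields exactly the threshold above; since the argument is symmetric in $D$ and $D'$, this gives $\eps$-DP.

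\textbf{Step 3: accuracy, and the main obstacle.} Since $\calA''_f(D)$ equals $\calB(D)$ except with probability $p$, and the accuracy bound for $\calB(D)$ fails with probability at most $\eta$, a union bound gives $(1-\alpha)f(D)-\kappa-\frac1{KM}\le\calA''_f(D)\le(1+\alpha)f(D)+\kappa+\frac1{KM}$ with probability at least $1-\eta-p$, as claimed. I expect the crux to be the privacy computation of Step 2 --- isolating the least mixing rate $p$ compatible with $\eps$-DP and recognizing that the worst output event is a single grid point, which is precisely what forces the cardinality bound $|G|\le K(M+1)$ into the formula for $p$ (hence into the failure probability). The remainder --- that clamping and rounding form a harmless post-processing, and the concluding union bound --- is routine bookkeeping; the only minor care needed is fixing the grid $G$ so that $|G|\le K(M+1)$ while the rounding error stays at $\frac1{KM}$.
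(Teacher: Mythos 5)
Your proposal is correct and follows essentially the same route as the paper's proof: clamp and round the output to a finite grid of at most $K(M+1)$ points to make the range finite (at the cost of $\tfrac{1}{KM}$ additive error), then mix with the uniform distribution on that grid at rate $p$ to absorb the $\delta$ slack into pure $\eps$-DP, and conclude accuracy via a union bound. The only cosmetic difference is that the paper packages the final mixing step as an appeal to the standalone helper statement (\thmref{thm:epsdel-eps}), while you re-derive that calculation inline, correctly isolating the singleton event as the binding constraint on $p$.
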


Our second result is an analogous framework for converting any existing deterministic non-DP approximation algorithm $\calA_f$ that provides an $(\alpha, \kappa,0)$-approximation of $f$ into an $\eps$-DP algorithm $\calA'_f$.

\begin{restatable}{theorem}{thmmainfptas}
\thmlab{thm:main-2} ($\eps$-privacy)
Suppose that $\calA_f$ is a deterministic tunable approximation of $f: \calD \to \mathbb{R}^{+}$.
Then for all $\eps > 0$, $\alpha\geq 0$ and $ \kappa \geq 0$, there is an algorithm $\calA'_f$ such that
\begin{enumerate}
    \item (Privacy) $\calA'_f$ is $\eps$-differentially private. 
    \item (Accuracy) For all $D \in \calD$, we have that with probability $\geq 9/10$,
    $$(1-\alpha')f(D) - \kappa' - \frac{7 \Delta_f }{\eps} \leq \calA'_f(D) \leq (1+\alpha')f(D) + \kappa' + \frac{7 \Delta_f }{\eps}$$
    where $\alpha':=\alpha C_1 (\eps+C_2\gamma)$, $\kappa':=\kappa C_3 (\alpha+ \frac{C_4}{\eps})$ for some constants $C_1,C_2,C_3,C_4>0$ and $\Delta_f:=\max_{D,D'\in\calD,D \sim D'}\|f(D)-f(D')\|_1. $
    \item (Resource) $\calA'_f$ uses $R\left(n,\frac{\eps\alpha}{36}, \kappa\right)$ resource, where $R(\cdot, \cdot, \cdot)$ is the resource used by $\calA_f$. 
\end{enumerate}
\end{restatable}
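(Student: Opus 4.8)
\medskip
\noindent\textbf{Proof proposal for \thmref{main-2}.}
The plan is to instantiate the smooth‑sensitivity framework of Nissim, Raskhodnikova, and Smith, but with a \emph{heavy‑tailed} admissible noise distribution in place of the Laplace noise of \algref{fpras-DP}; calibrating such noise to a $\beta$‑smooth upper bound on sensitivity is exactly what upgrades $(\eps,\delta)$‑DP to pure $\eps$‑DP. Concretely, $\calA'_f$ would run $\calA_f$ once with a tuned multiplicative accuracy $\rho:=\eps\alpha/36$ and additive accuracy $\kappa$, obtain $x_A:=\calA_f(D,\rho,\kappa)$, and release $x_A + \frac{c\,\gamma}{\eps}\,S(x_A)\cdot Z$, where $Z\sim h_\gamma$ with density $h_\gamma(z)\propto(1+|z|)^{-\gamma}$ for a parameter $\gamma>1$, $c$ is an absolute constant, and $S(x_A)$ is a computable $\beta$‑smooth upper bound on the local sensitivity of the deterministic map $\calA_f(\cdot,\rho,\kappa)$ with $\beta=\Theta(\rho)$. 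Determinism of $\calA_f$ is used twice: it makes the local sensitivity well defined, and it guarantees $(1-\rho)f(D)-\kappa\le\calA_f(D)\le(1+\rho)f(D)+\kappa$ \emph{unconditionally}, so the sensitivity bound may be written in terms of the released value $x_A$ instead of the unavailable value $f(D)$.

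The first step is a local‑sensitivity bound: for neighbours $D\sim D'$, the triangle inequality with the deterministic $(\rho,\kappa)$‑approximation guarantee and $|f(D)-f(D')|\le\Delta_f$ gives $|\calA_f(D)-\calA_f(D')|\le\rho(f(D)+f(D'))+2\kappa+\Delta_f\le 2\rho f(D)+2\kappa+(1+\rho)\Delta_f$. The key point, where \emph{tunability} is exploited, is that $\widehat{S}(D):=2\rho f(D)+2\kappa+(1+\rho)\Delta_f$ is \emph{itself} a $\beta$‑smooth upper bound on the local sensitivity whenever $\beta\ge\ln(1+2\rho)=\Theta(\rho)$: it dominates the local sensitivity pointwise by the above, and for neighbours $\widehat{S}(D)\le\widehat{S}(D')+2\rho\Delta_f\le(1+2\rho)\widehat{S}(D')\le e^\beta\widehat{S}(D')$ since $\widehat{S}(D')\ge\Delta_f$. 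Substituting $f(D)\le\frac{x_A+\kappa}{1-\rho}$ yields the computable bound $S(x_A)$, still $\beta$‑smooth for a (slightly larger, still $\Theta(\rho)$) $\beta$. Because $\rho=\eps\alpha/36\ll\eps$, this $\beta$ lies below the admissibility threshold $\Theta(\eps/\gamma)$ of $h_\gamma$, so the NRS calibration theorem makes $\calA'_f$ exactly $\eps$‑DP (claim 1); $\calA_f$ is invoked once with parameters $(\eps\alpha/36,\kappa)$ (claim 3).

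For accuracy, I would use that heavy‑tailed noise only admits a \emph{constant}-probability tail bound, $\PPr{|Z|>c_\gamma}\le 1/10$ for a constant $c_\gamma$, which is precisely why the conclusion is stated with probability $\ge 9/10$ rather than with high probability. On the event $|Z|\le c_\gamma$, using $S(x_A)=\O{\rho f(D)+\kappa+\Delta_f+\rho\kappa}$,
\begin{align*}
\bigl|\calA'_f(D)-f(D)\bigr| \;\le\; \bigl(\rho f(D)+\kappa\bigr) + \frac{c\,\gamma\,c_\gamma}{\eps}\,S(x_A) \;=\; \O{\Bigl(\rho+\tfrac{\gamma\rho}{\eps}\Bigr)f(D)} + \O{\Bigl(1+\tfrac{\gamma}{\eps}+\gamma\rho\Bigr)\kappa} + \O{\tfrac{\gamma\Delta_f}{\eps}}.
\end{align*}
Substituting $\rho=\eps\alpha/36$ turns the multiplicative coefficient into $\alpha'=\alpha\,C_1(\eps+C_2\gamma)$ (the $\rho f(D)$ term contributes $\eps$, the $\tfrac{\gamma\rho}{\eps}f(D)$ term contributes $\gamma$), the coefficient of $\kappa$ into $\kappa'=\kappa\,C_3(\alpha+C_4/\eps)$ (the $\gamma\rho\kappa$ term contributes $\alpha$, the $\tfrac{\gamma}{\eps}\kappa$ term contributes $1/\eps$) for absolute constants $C_1,\dots,C_4$, and leaves a residual additive term $\O{\Delta_f/\eps}$ which a careful accounting pins down to $7\Delta_f/\eps$.

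The main obstacle is the middle step: verifying that the $x_A$‑expressed quantity $S(x_A)$ is simultaneously a pointwise upper bound on the local sensitivity at \emph{every} dataset and $e^\beta$‑stable between neighbours with $\beta=\Theta(\rho)$ — which requires tracking how the $(1-\rho)^{-1}$ factor introduced when replacing $f(D)$ by $x_A$ interacts with a neighbour step — and then threading the admissibility constants of $h_\gamma$ through the error calculation so that the stated constants $C_1,\dots,C_4$ and the factor $7$ come out. None of this is conceptually hard; lining up all the constants is the delicate part.
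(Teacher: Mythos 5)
Your proposal follows essentially the same route as the paper: run $\calA_f$ with tuned multiplicative accuracy $\rho=\eps\alpha/36$, build a $\beta$-smooth upper bound on local sensitivity that scales linearly in the released value, $\kappa$, and $\Delta_f$ (the paper uses $S_f(D)=4\rho\calA_f(D)+4\kappa+\Delta_f$ directly in $\calA_f(D)$ via \lemref{lem:smooth-bound-fptas} rather than first writing it in $f(D)$ and converting), and add NRS-admissible heavy-tailed noise calibrated to this bound to obtain pure $\eps$-DP with a constant-probability tail bound. The one technical deviation is that the paper specializes to standard Cauchy noise ($\lambda=2$ in \thmref{thm:smooth:laplace}, density $\propto 1/(1+|z|^2)$, not your $(1+|z|)^{-\gamma}$, which is not literally the NRS admissible family) and uses $\gamma$ as the Cauchy tail threshold with $\gamma>6.5$, not as the noise exponent, but this does not change the structure of the argument.
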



\paragraph{DP Sublinear-time Results. }
We use~\thmref{thm:main} in conjunction with \thmref{thm:dp-transform} in a black-box manner to obtain pure differentially-private sublinear time algorithms for several problems (see~\tableref{tab:sublinear} for a summary).

 In many models of sublinear-time computation the efficiency of the algorithm is measured in the number of queries made to the input, rather than the time complexity of the algorithm. It is often the case that the two are polynomially related, but there are instances in which the actual time complexity of the algorithm may be exponentially larger than the query complexity, in terms of the approximation factor. Nevertheless, in these instances too, the literature uses time and query complexity interchangeably. This is because the sublinear-time model assumes restricted or expensive access to the input, while further computation on local machines with the answers obtained from queries is considered to be cheap. We use query complexity for the sake of clarity.

We note that in the sublinear-time literature, the approximation parameters $\alpha,\kappa$ are usually considered to be a constant, but the analyses for most of these theorems hold for  $\alpha=\alpha(n),\kappa=\kappa(n) \in (0,1)$, where $n$ is the input size.

Here we do not explicitly define the sublinear model (or the queries allowed) for each problem, see \secref{sec:sublinear} for more details. For a graph $G$ we denote the number of vertices as $n$, the number of edges as $m$, and the average degree of the graph as $\bar{d}$. 

Typically, the accuracy guarantees of the non-DP results are presented with probability at least 2/3 --- in order to apply our framework, we apply the median trick (see~\remref{rem:median-trick}) to boost the probability of success to $1-\delta$. For simplicity of comparing our results, for any constant $c>0$, we set $\delta:= 1/n^c$ in the sequel.

\begin{table}[!ht] 
\centering
\begin{adjustbox}{width=1\textwidth,center=\textwidth}
\small
\begin{tabular}{|c|c|c|c|c|c|}\hline 
 Problem & Reference & Privacy & Mult. error & Add. error & Query Complexity \\ \hline
 \multirow{2}{*}{Number of Triangles} &\cite{EdenLRS17} & Non-Private & $\alpha$ & 0 &    $\O{(\frac{n}{t^{1/3}} + \frac{m^{3/2}}{t})\poly(\log(n),\frac{1}{\alpha})}$\\\cline{2-6} 
 &This Work & $\eps$-edge DP &  $\alpha$ & $\O{\frac{n \log(n)}{\eps}}$ &    $\O{(\frac{n}{t^{1/3}} + \frac{m^{3/2}}{t})\poly(\log(n),\frac{1}{\alpha\eps})}$ \\\hline 
\multirow{2}{*}{Connected Components} &\cite{BerenbrinkKM14} & Non-Private & 0 & $\kappa n$ &   $\O{\frac{1}{\kappa^2} \log\left(\frac{1}{\kappa}\right)\log(n) }$ \\\cline{2-6} 
 &This Work & $\eps$-edge DP & 0 &  $\O{\kappa n} +\O{\frac{\log(n)}{\eps}}    $&   $\O{\frac{\log^3(n)}{\kappa^2} \log\left(\frac{\log(n)}{\kappa} \right)}$\\\hline 
\multirow{2}{*}{Weighted MST}  & \cite{ChazelleRT05}  & Non-Private & $\alpha$ & 0 & $\O{\bar{d}w\alpha^{-2}\log\left(\frac{\bar{d}w}{\alpha}\right)\log(n)}$  \\\cline{2-6}
  & This Work  & $\eps$-edge DP  & $\alpha$ &$\O{\frac{\log(n)}{\eps}}$  & $\O{\bar{d}w\frac{\log^2(n)}{\alpha^2\eps^2}\log\left(\frac{\bar{d}w\log(n)}{\alpha\eps}\right)\log(n)}$ \\\hline
\multirow{4}{*}{Average Degree}  & \cite{GoldreichR04} & Non-Private & $\alpha$ & 0 &  $\O{\frac{n}{\sqrt{m}} \poly\left(\frac{\log(n)}{\alpha}\right)\log(n)}$\\\cline{2-6}
&\multirow{2}{*}{\cite{BlockiGM22}}  & \multirow{2}{*}{$\eps$-edge DP} & \multirow{2}{*}{$\alpha$} & \multirow{2}{*}{0} &   $\O{\sqrt{n} \poly\left(\frac{\log(n)}{\alpha}\right)\poly\left(\frac{1}{\eps}\right) \log(n)}$\\
& &  &  &  & (analysis assumes $\bar{d}\geq 1$)   \\\cline{2-6}
 & \multirow{2}{*}{This Work} & \multirow{2}{*}{$\eps$-edge DP} & \multirow{2}{*}{$\alpha$} & \multirow{2}{*}{0} &  $\O{\frac{n}{\sqrt{m}} \poly\left(\frac{\log^2(n)}{\alpha \eps} \right)\log(n)}$\\
 & &  &  &  & for $\bar{d}= \Omega({\frac{\log(n)}{n\eps}})$\\ \hline 
\multirow{2}{*}{Maximum Matching Size}  & \cite{YoshidaYI12} & Non-Private & 0 & $\kappa n$ &  $\O{d^{\O{1/\kappa^2}}\log(n)}$ \\\cline{2-6}
& This Work & $\eps$-node DP &  0 & $\O{\frac{\kappa n}{\eps}}$ &  $\O{d^{\O{1/\kappa^2}}\log(n)}$ \\\hline
\multirow{2}{*}{Distance to Bipartiteness}  & \cite{GhoshMRS22} & Non-Private & 0 & $\kappa n^2$ &  $\O{(1/\kappa^3)\log(n)}$ \\\cline{2-6}
& This Work & $\eps$-edge DP &  0 & $\O{\kappa n^2}+\O{\frac{\log(n)}{\eps}}$ &  $\O{(\log^4(n)/\kappa^3)}$ \\\hline
\end{tabular}
\end{adjustbox}
\caption{Summary of Sublinear-time DP graph algorithms obtained via our black-box DP transformation. According to our notation multiplicative error $\alpha$ means a multiplicative factor of $(1\pm \alpha)$. See~\secref{sec:sublinear} for details. } 
\tablelab{tab:sublinear}
\end{table}

We give the first (to the best of our knowledge) $\eps$-DP sublinear time algorithm for estimating the number of triangles, connected components, and the weight of a minimum spanning tree whose accuracy guarantees hold with high probability.

For estimating the average degree of a graph, in recent work,~\cite{BlockiGM22} gave a pure $\eps$-DP algorithm that achieves an $(\alpha,0)$-approximation --- a crucial observation is that their analysis only holds under the assumption that the average degree is at least one i.e., $\bar{d} \geq 1$ (see~\secref{sec:sublinear} for details). 
In this work, we remove the need for this assumption in the DP setting, by directly applying our black-box DP transformation to the original algorithm of~\cite{GoldreichR04} which works substantially better whenever we have $m = \omega(n)$ edges. 

For estimating the maximum matching size in a graph, although~\cite{BlockiGM22} gave an $\eps$-DP algorithm for estimating the maximum matching size that achieves a 2-multiplicative factor and $\kappa n$ additive factor, 
they left the task of finding an $(0,\kappa n)$-approximation in the DP setting as an open problem. In this work, we partially resolve this problem by presenting an $\eps$-DP algorithm that gives a  $(0,\O{\frac{\kappa n}{\eps}})$-approximation of the maximum matching size. 
Crucially, our resulting analysis cannot guarantee that the added Laplace noise will be small with high probability, but only guarantees this will be the case with \emph{constant probability}. This problem highlights a limitation of our black-box DP framework --- if the non-DP algorithm that we want to apply our DP transformation on has a time/space/query complexity that has an exponential dependence on the approximation parameters then the resulting DP algorithm that achieves a similar approximation guarantee with high probability may be highly inefficient in terms of time/space/query complexity. 

We also show how to apply our DP framework to an  algorithm estimating the distance to bipartiteness in dense graphs \cite{GhoshMRS22, AlonVKK03}, which is accurate with probability $1-o(1).$ The same reduction can be similarly applied to other natural properties that enjoy the feature that they admit distance-estimation algorithms with $\poly(1/\kappa)$ query complexity, where $\kappa$ is the additive (normalized) error. For example, in the fundamental results of \cite{GoldreichGR98} an efficient distance approximation algorithm for the maximum $k$-cut problem, and thus $k$-colorability is presented.  \cite{FiatR21}, also based on results from \cite{AnderssonE02}, generalizes these properties to the notion of ``semi-homogeneous partition properties" and show efficient distance estimation algorithms for properties such as Induced $P_3$-freeness, induced $P_4$-freeness,  and chordality.\footnote{In general, distance estimation is closely related to tolerant testing \cite{ParnasRR06}, and for dense graph properties it is known that if a property is testable with a number of queries of the form $f(\kappa)$, then they admit a distance estimator \cite{FischerN07} with an exponential blowup in $\frac{1}{\kappa}$   in the query complexity. Hence, in its general form the query complexity of estimating the distance to ``hereditary" graph properties is a tower of exponential of height $\poly(1/\kappa)$ \cite{AlonFNS09}. }

\paragraph{DP Streaming Results. }
We also apply our framework given by~\thmref{thm:main} and \thmref{thm:dp-transform} to obtain differentially-private streaming algorithms for many fundamental problems, i.e., see \tableref{tab:streaming} and \tableref{tab:sliding}. 
We remark that while the accuracy guarantees of our resulting algorithms may be surpassed by recent works studying these problems on an individual basis, our applications are black-box reductions that avoid individual utility and privacy analysis of each non-private streaming algorithm, which can be heavily involved and quite non-trivial, e.g.,~\cite{MirMNW11,BlockiBDS12,Smith0T20,BuGKLST21,WangPS22,BravermanMWZ22}. 

In the streaming model, elements of an underlying dataset arrive one-by-one and the goal is to compute or approximate some predetermined function on the dataset using space that is sublinear in the size of the dataset. 
Our reductions also have wide applications to various archetypes of data stream models, which we now discuss. 
In insertion-only streams, the updates of the stream increment the underlying dataset, such as adding edges to a graph, adding terms to a sequence, or increasing the coordinates of a frequency vector. 
In turnstile (or dynamic) streams, the updates of the stream can both increase and decrease (or insert and delete) elements of the underlying dataset. 
Finally, in the sliding window model, only the $W$ most recent updates of the data stream define the underlying dataset. 
Both the turnstile streaming model and the sliding window model are generalizations of insertion-only streams, and our framework has implications in all three models. 

We first show that our framework can be applied to existing non-private dynamic algorithms for weighted minimum spanning tree, $L_p$ norm estimation for $p\ge 1$ (and also $F_p$ moment estimation for $0<p<1$), and distinct elements estimation. 
Thus using our framework, we essentially get private dynamic algorithms for these problems for free (in terms of correctness, not optimality). 
Since the dynamic streaming model generalizes the insertion-only streaming model, we also obtain private streaming algorithms in the insertion-only model as well. 
We summarize these results in \tableref{tab:streaming}.

\begin{table}[!htb] 
\centering
\begin{adjustbox}{width=1\textwidth,center=\textwidth}
\small
\begin{tabular}{|c|c|c|c|c|c|}\hline 
 Problem & Reference & Privacy & Mult. error & Add. error & Space Complexity\\ \hline
\multirow{2}{*}{Weighted Minimum Spanning Tree} &\cite{AhnGM12}& Non-Private & $\alpha$ & 0 &   $\O{\frac{1}{\alpha}n\log^4 n}$ \\\cline{2-6} 
 &This Work & $\eps$-DP & $\alpha$ & $\O{\frac{M\log m}{\eps}}$ &   $\O{\frac{1}{\alpha\eps} n \log^5 n}$\\\hline 
\multirow{2}{*}{$L_p$-norm, $p>2$} &\cite{GangulyW18}& Non-Private & $\alpha$ & 0 &   $\O{\frac{1}{\alpha^2} n^{1-\frac{2}{p}}\log^2 n + \frac{1}{\alpha^{4/p}} n^{1-\frac{2}{p}} \log^{2/p} n \log^2 n}$ \\\cline{2-6} 
 &This Work & $\eps$-DP & $\alpha$ & $\O{\frac{\log m}{\eps}}$  &  $\O{\frac{p}{ \alpha^2\eps^2} n^{1-2/p}}\cdot\poly\left(\log n,\log\frac{1}{\alpha\eps}\right)$\\\hline 
 \multirow{2}{*}{$L_p$-norm, $p=2$} &\cite{AlonMS99} & Non-Private & $\alpha$ & 0 &   $\O{\frac{1}{\alpha^2}\log^2 n}$ \\\cline{2-6} 
 &This Work & $\eps$-DP & $\alpha$ & $\O{\frac{\log m}{\eps}}$  &  $\O{\frac{1}{\alpha^2\eps^2}\log^4 n}$ \\\hline 
 \multirow{2}{*}{$L_p$-norm, $p\in (0,2)$} &\cite{KaneNPW11}& Non-Private & $\alpha$ & 0 &   $\O{\frac{1}{\alpha^2}\log^2 n}$   \\\cline{2-6} 
 &This Work & $\eps$-DP & $\alpha$ & $\O{\frac{\log m}{\eps}}$  &  $\O{\frac{1}{\alpha^2\eps^2}\log^4 n}$  \\\hline 
  \multirow{2}{*}{$L_p$-norm, $p=0$} & \cite{KaneNW10}&Non-Private & $\alpha$ & 0 &  $\O{\frac{1}{\alpha^2}\log^2 n\log\frac{1}{\alpha}}$  \\\cline{2-6} 
 &This Work & $\eps$-DP & $\alpha$ & $\O{\frac{\log m}{\eps}}$  &  $\tO{\frac{1}{\alpha^2\eps^2}\log^4 n}$ \\\hline 
\end{tabular}
\end{adjustbox}
\caption{Summary of DP algorithms in the dynamic/turnstile model obtained via our black-box DP transformation. According to our notation multiplicative error $\alpha$ means a multiplicative factor of $(1\pm \alpha)$. See \secref{sec:dynamic} for details.  }\tablelab{tab:streaming}
\end{table}

We then apply our framework in~\thmref{thm:main} to the sliding window model. 
To that end, we first recall that given a $(\alpha,0)$-approximation algorithm for the insertion-only streaming model, the smooth histogram framework~\cite{BravermanO10} provides a transformation that obtains a $(\alpha,0)$-approximation algorithm in the sliding window model for a ``smooth'' function. 
Although there are problems that are known to not be smooth, e.g.,~\cite{BravermanOZ12,BravermanDMMUWZ20,BravermanWZ21,EpastoMMZ22,JayaramWZ22}, the smooth histogram framework does provide a $(\alpha,0)$-approximation to many important problems, such as counting, longest increasing subsequence, $L_p$ norm estimation for $p\ge 1$ (and also $F_p$ moment estimation for $0<p<1$), and distinct elements estimation. 
We remark that if we tried to apply the non-private smooth histogram framework to a DP insertion-only streaming algorithm, this might preserve privacy by post-processing, but may significantly increase the error in terms of accuracy.
On the other hand, our framework avoids these issues and achieves private analogs of these algorithms in the sliding window model without compromising utility. 
We summarize our results for the sliding window model in \tableref{tab:sliding}. We note that in recent work,~\cite{EpastoMMMVZ23} give a generalized smooth histogram approach to convert a DP continual release streaming algorithm into a sliding window algorithm in the continual release setting. We focus on the one-shot streaming setting in our work.

\begin{table}[!htb] 
\centering
\begin{adjustbox}{width=1\textwidth,center=\textwidth}
\small
\begin{tabular}{|c|c|c|c|c|c|}\hline 
 Problem & Reference & Privacy & Mult. error & Add. error & Space Complexity\\ \hline
\multirow{2}{*}{Longest Increasing Subsequence} &\cite{SunW07}& Non-Private & $\alpha$ & 0 &   $\O{\frac{k^2}{\alpha}\log^2 n}$ \\\cline{2-6} 
 &This Work & $\eps$-DP & $\alpha$ & $\O{\frac{\log m}{\eps}}$ &  $\O{\frac{k^2}{\alpha\eps} \log^4 n}$\\\hline 
\multirow{2}{*}{Distinct Elements} &\cite{Blasiok20}& Non-Private & $\alpha$ & 0 &   $\O{\frac{1}{\alpha^3}\log^2 n}$ \\\cline{2-6} 
 &This Work & $\eps$-DP & $\alpha$ & $\O{\frac{\log m}{\eps}}$  &  $\O{\frac{1}{\alpha^3\eps^3}\log^5 n}$\\\hline 
 \multirow{2}{*}{$L_p$-norm, $p=2$} &\cite{WoodruffZ21} & Non-Private & $\alpha$ & 0 &   $\O{\frac{1}{\alpha^2}\log^3 n\log^3\frac{1}{\alpha}}$ \\\cline{2-6} 
 &This Work & $\eps$-DP & $\alpha$ & $\O{\frac{\log m}{\eps}}$  &  $\tO{\frac{1}{\alpha^2\eps^2}\log^5 n\log^3\frac{1}{\alpha\eps}}$  \\\hline 
 \multirow{2}{*}{$L_p$-norm, $p\in (0,2)$} &\cite{WoodruffZ21}& Non-Private & $\alpha$ & 0 &   $\O{\frac{1}{\alpha^2}\log^3 n(\log\log n)^2\log^3\frac{1}{\alpha}}$   \\\cline{2-6} 
 &This Work & $\eps$-DP & $\alpha$ & $\O{\frac{\log m}{\eps}}$  &  $\tO{\frac{1}{\alpha^2\eps^2}\log^5 n}$  \\\hline 
\end{tabular}
\end{adjustbox}
\caption{Summary of DP algorithms in the sliding window model obtained via our black-box DP transformation. According to our notation multiplicative error $\alpha$ means a multiplicative factor of $(1\pm \alpha)$. 
See \secref{sec:sw} for details.}
\tablelab{tab:sliding}
\end{table}

\subsection{Our Techniques} 

Given a tunable $(\alpha,\kappa,\delta)$-approximation algorithm $\calA_f$ for the function $f:\calD \to \mathbb{R}^+$, our goal is to obtain a differentially private approximation algorithm that achieves a target $(\alpha',\kappa',\delta')$-approximation of $f$ where $\alpha',\kappa'$ are in terms of $\alpha,\kappa$.

\paragraph{Warm-up: When $\calA_f$ is deterministic and only has multiplicative error. }
For simplicity, let us first consider an $(\alpha,0,0)$-approximation algorithm $\calA_f$, in other words, $\calA_f$ \emph{always} outputs a value such that $(1-\alpha)f(D) \leq \calA_f(D) \leq (1+\alpha)f(D)$. Since we want to make $\calA_f$ differentially private, intuitively, we need to add noise to the output of $\calA_f$. The local sensitivity of $\calA_f$ at $D$ (i.e., $LS_{\calA_f}(D) = \max_{D' \sim D} |\calA_f(D)-\calA_f(D')|$) is upper bounded by $2\alpha f(D) + \Delta_f$. Since $\Delta_f$ is small and we can tune $\alpha$ to be arbitrarily small, it is tempting to think that we can just add noise proportional to $2\alpha f(D) + \Delta_f$. Unfortunately, scaling noise proportional to local sensitivity is not necessarily private. On the other hand we could ensure privacy by scaling noise proportional to the global sensitivity (i.e., $\max_{D \in \calD} LS_{A_f}(D) \leq \max_{D \in \calD} 2\alpha f(D) + \Delta_f$) but noise will likely be too large to obtain meaningful accuracy guarantees.  We adopt the strategy of adding noise proportional to the smooth sensitivity~\cite{NissimRS07} of $\calA_f$ instead. In particular,~\cite{NissimRS07} observed that if we can find a ``sufficiently smooth" function $S_f(D) \geq LS_{\calA_f}(D)$ upper bounding the local sensitivity of $\calA_f$ then we can preserve privacy by computing $\calA_f(D)$ and adding noise scaled according to $S_f(D)$. 

We can show that the function $S_f(D) = 4 \alpha \calA_f(D) + \Delta_f$ is a $\beta$-smooth upper bound on the local sensitivity of $\calA_f$ for $\beta=6\alpha$ where $S_f$ is $\beta$-smooth if $S_f(D) \leq e^{\beta} S_f(D')$ for all pairs of neighboring datasets $D \sim D'$. To achieve privacy using the smooth sensitivity framework we need to ensure that $\beta$ is sufficiently small relative to our privacy parameters $\eps$ and $\delta$ (if applicable). For example, we can achieve $\left(\eps,\delta \left(1+\exp{\left(\frac{\eps}{2}\right)}\right)\right)$-differential privacy by adding Laplace Noise scaled by $\frac{2 S_f(D)}{\eps}$, but only if $S_f$ is $\beta$-smooth for $\beta \leq \frac{\eps}{2 \ln (2/\delta)}$. For pure differential privacy we require that $\beta < \frac{\eps}{2(\lambda +1)}$ where $\lambda$ is a parameter of the noise distribution --- smaller $\lambda$ implies higher variance.

If we want to ensure that the output is accurate, we also need to ensure that the calibrated noise with $S_f(D)$ is small e.g., $o(f(D)) + \O{\Delta_f}$. Note that by definition since $S_f(D) = 4 \alpha \calA_f(D) + \Delta_f$, and we add noise proportional $S_f(D)$, we expect that the noise added may be $>\alpha f(D)$. Thus, in order to address this challenge, our basic strategy is to run the original (non-private) approximation algorithm $\calA_f$ with \emph{tuned} error factors e.g., we decrease $\alpha$ by a multiplicative factor of $\frac{\eps}{\ln(n)}$, let $\rho:= \frac{\eps \alpha}{\ln(n)}$. Since we are now running $\calA_f(D,\rho,0,0)$, we have that the function $S_f(D) = 4 \rho \calA_f(D) + \Delta_f$ is a $\beta$-smooth upper bound on the local sensitivity of the algorithm $\calA_f(\cdot, \rho, 0,0)$. Assuming the global sensitivity $\Delta_f$ is small, we can now show that w.h.p. the noise sampled proportional to $S_f(D)$ is at most $\alpha f(D) + O\left(\Delta_f/\eps \right)$ thus resulting in an $\eps$-differentially private algorithm with reasonable accuracy. 

By tuning the parameter $\alpha$ we actually accomplish two useful properties (1. accuracy) we decrease both the local sensitivity and our smooth upper bound $S_f(D)$ which reduces the magnitude of the noise that we add, and (2. privacy) we achieve $\beta$-smoothness for increasingly small values of $\beta$ so that the required condition $\beta \leq \frac{\eps}{2 \ln (2/\delta)}$ (or $\beta < \frac{\eps}{2(\lambda +1)}$) can be satisfied if we want to scale noise according to $S_f(D)$. 

\paragraph{Extending to deterministic $\calA_f$ with multiplicative and additive error. }More generally, if we have an $(\alpha,\kappa,0)$-approximation algorithm $\calA_f$ then we can show that $S_f(D) = 4 \alpha \calA_f(D) + \Delta_f + 4\tau$ is a $\beta$-smooth upper bound on the local sensitivity of $\calA_f$ with $\beta = 6\alpha$ (see~\lemref{lem:smooth-bound-fptas}). In particular, note that the additive error term $\kappa$ does not adversely impact smoothness. Thus, we can achieve pure differentially privacy by tuning $\alpha$ such that $6 \alpha < \beta < \frac{\eps}{2(\lambda +1)}$ and scaling our noise according to $S_f(D)$~(see~\lemref{lem:dp-fptas}). We can also obtain stronger accuracy guarantees by relaxing the requirement for pure DP and tuning $\alpha$ such that $6\alpha \leq \beta \leq \frac{\eps}{2 \ln (2/\delta)}$ so that we can sample our noise from the Laplace distribution which has strong concentration guarantees.

\paragraph{When $\calA_f$ is randomized. }The remaining challenge is to handle randomized approximation algorithms $\calA_f$ which are only guaranteed to output a good approximation with high probability i.e., with non-zero probability $\delta >0$ the algorithm is allowed to output an arbitrarily bad approximation. In particular, let us consider an $(\alpha,\kappa,\delta)$-approximation algorithm $\calA_f$. For any possible input $D $ we are always guaranteed that with probability $\geq 1-\delta$  the algorithm $\calA_f(D)$ outputs a good approximation $(1-\alpha)f(D) \leq \calA_f(D) \leq (1+\alpha)f(D)$. Unfortunately, the function $S_f(D) = 4 \alpha \calA_f(D)  + \Delta_f + 4\kappa$ is no longer guaranteed to be a $\beta$-smooth upper bound on the local sensitivity of $\calA_f$ since $\calA_f$ may sometimes output a value outside the specified approximation bounds. 

In order to address this challenge, we define a function $g_f(D)$ that matches $\calA_f(D)$ with probability at least $1-\delta/2$ and is {\em always} guaranteed to output a good approximation. We emphasize that $g_f(D)$ may not be efficiently computable, but it is well-defined and only used for the purpose of analysis. More specifically, we set $g_f(D) = \calA_f(D)$ as long as $(1-\alpha)f(D) - \kappa  \leq \calA_f(D) \leq (1+\alpha)f(D) + \kappa$. If $\calA_f(D)>(1+\alpha)f(D)$, then we define $g_f(D):= (1+\alpha)f(D)$, similarly, if $\calA_f(D)<(1-\alpha)f(D)$, then we define $g_f(D):= (1-\alpha)f(D) + \kappa$. Observe that we are always guaranteed that $(1-\alpha) f(D) - \kappa \leq g_f(D) \leq (1+\alpha)f(D) + \kappa$. Thus, $S_f(D) = 4\alpha g_f(D) + \Delta_f + 4 \tau$ is a $\beta=6\alpha$-smooth upper bound on the local sensitivity of $g_f$ (see~\lemref{lem:smooth-bound}). As long as $ 6\alpha \leq \beta \leq \frac{\eps}{2 \ln (2/\delta)}$ we could preserve $\left(\eps,\delta\left(1+\exp{\left(\frac{\eps}{2}\right)}\right)\right)$-differential privacy by  outputting $g_f(D)$ plus  Laplace Noise scaled by $\frac{8\alpha g_f(D) + \Delta_f + 8 \tau}{\eps}$ i.e., scaled according to our $\beta$-smooth upper bound on the local sensitivity of $g_f$. Unfortunately, the function $g_f$ may not be efficiently computable. Thus, we substitute $g_f$ for $\calA_f$ and instead output $\calA_f(D)$ plus Laplace noise scaled according to $\frac{8\alpha \calA_f(D) + \Delta_f + 8 \tau}{\eps}$. While $4\alpha \calA_f(D) + \Delta_f + 4\tau$ is not necessarily a $\beta$-smooth upper bound on the local sensitivity of $\calA$, the key point is that the latter (efficiently computable) procedure is equivalent to the former (differentially private) procedure as long as $g_f(D) = A_f(D)$ which happens as long as $\calA_f$ outputs a good approximation i.e., except with probability $\delta/2$. Thus, we can apply a hybrid argument to argue that the final efficiently computable algorithm is $\left(\eps,\frac{\delta}{2} + \delta \left(1+\exp{\left(\frac{\eps}{2}\right)}\right)\right)$-differential privacy (see~\lemref{lem:dp-fpras}). In order to ensure accuracy, we use the same strategy as before, i.e., we run $\calA_f(D,\rho,\tau,\delta/2)$, where $\rho \leq \frac{\eps \alpha }{\log(1/\delta)}$. Sampling noise proportional to $S_f(D)$ (where $S_f(D)$ is now defined in terms of $\rho$), and absorbing the failure probability of algorithm $\calA_f$ into the DP failure probability term $\delta$, results in an approximate differentially private algorithm. Finally applying the postprocessing step results in a pure differentially private algorithm. We refer to the full proofs (~\secref{sec:gen-transform}) for additional details. 

\paragraph{Applications. }We give some intuition on how we apply~\thmref{thm:main} to various applications by choosing appropriate parameters. Recall that with probability $1-\delta-\exp(-\gamma)$, $\calA'_f$ outputs $(1-\alpha')f(D) -\kappa' -  \frac{2\Delta_f}{\eps} \cdot \gamma \leq \calA'_f(D) \leq (1+\alpha')f(D) +  \kappa' + \frac{2\Delta_f}{\eps} \cdot \gamma $, where $\alpha' = \frac{\alpha (\eps + 16\gamma)}{12 \log(4/\delta)}$, and $\kappa'= \kappa\left(\frac{2\gamma\alpha}{3\log(4/\delta)}+\frac{8\gamma}{\eps}+1\right)$ with a time/space/query complexity blow-up incurred by running the original algorithm $\calA_f$ with multiplicative accuracy parameter $\rho = \frac{\eps\alpha}{\log(4/\delta)}$. First, observe that if the original algorithm $\calA_f$ has time/space/query complexity with a dependence on $\poly(\frac{1}{\alpha})$, then the resulting time/space/query complexities for $\calA'_f$ will still have a polynomial dependence, i.e., $\poly(\frac{\log(4/\delta)}{\alpha})$ --- this naturally leads to FPRAS or FPTAS applications, as well as other classes of approximation algorithms like sublinear time or space. On the otherhand, if the time/space/query complexity of $\calA_f$ has a non-polynomial dependence on $1/\alpha$, e.g., $\exp(\frac{1}{\alpha})$, then since $\delta$ is typically $\negl(n)$ or $\frac{1}{n^c}$ for $c>0$, the resulting DP algorithm $\calA'_f$ could have much worse time/space/query-guarantees with respect to $n$, e.g., in an extreme case if we set $\delta=2^{-\poly(n)}$ then $\rho = \Omega( \poly(n)/\alpha)$ and we could incur a $\exp(\frac{\poly(n)}{\alpha})$ multiplicative overhead in the running time. It is worth noting that one could optionally reduce the additive error term $\kappa'$ for $\calA_f'$ by reducing the error term $\kappa$ for $\calA$.  

We further emphasize this trade-off between obtaining small failure probability bounds and the accuracy or resource guarantees. Consider the following two examples---  (Example 1) if we set the probability of failure, i.e., $\exp(-\gamma) = \delta = \frac{1}{n^c}$ for any $c>0$, then the resulting approximation parameters are roughly $\alpha' = \alpha(1+o(1))$, and $\kappa'= \kappa(\alpha + \frac{\log(n)}{\eps}+1)$\footnote{In the applications we consider the original (non-private) approximation algorithm typically has only multiplicative or only additive error and not both. In particular, we typically either have $\alpha > 0$ and $\kappa =0$ or $\kappa>0$ and $\alpha =0$, but not the case where $\alpha > 0$ and $\kappa > 0$.  Considering the case when $\kappa \neq 0$, and $\alpha=0$, we (roughly) have $\kappa' = \kappa(\frac{\log(n)}{\eps}+1)$.}, and the additional error term depending on global sensitivity is roughly $\frac{\Delta_f \log(n)}{\eps}$. We incur a time/space overhead by running $\calA_f$ with multiplicative accuracy parameter $\rho = \Omega(\eps \alpha/\log n)$ instead of $\alpha$. (Example 2) if we set the probability of failure, i.e., $\exp(-\gamma) = \delta = \frac{1}{n^{\log\log(n)}} - \negl(n)$, then $\alpha'$ remains the same as before, and now $\kappa'= \kappa(\alpha + \frac{\log(n)\log\log(n)}{\eps}+1)$, but the additional error term depending on global sensitivity becomes $\frac{\Delta_f \log(n)\log\log(n)}{\eps}$. In this latter case we incur time or space overhead by running $\calA_f$ with multiplicative accuracy parameter $\rho = \Omega\left(\frac{\eps \alpha}{\log n \log \log n}\right)$ --- to reduce the $\kappa' \log n \log \log n$ error term it could be useful to run $\calA_f$ with additive error parameter $\frac{\kappa}{\log n \log \log n}$ which may incur additional time or space overhead. 
Thus these two examples illustrate how, as we decrease the failure probability, the accuracy and resource (time/space in this case) guarantees become worse. See \secref{sec:sublinear}, and \secref{sec:streaming} for applications to sublinear time and streaming algorithms.

\subsection{Related Work}\seclab{sec:related-work}
One of the first differentially private frameworks for computing general functions was introduced by~\cite{DworkMNS16} which released functions with additive noise, where the noise is calibrated according to the \emph{global sensitivity} of the function $f$. This framework was generalized by~\cite{NissimRS07}, to handle functions which might have a high global sensitivity but are usually less sensitive in practice. The framework allows the release of functions with instance-specific noise, where the noise that is added is not just determined by $f$ but by the input dataset as well. The noise magnitude calibrated is according to the \emph{smooth sensitivity} of $f$ on the input dataset which is a smooth upper bound on the \emph{local sensitivity} of $f$ on an input dataset. The smooth sensitivity of a function may be hard to compute, therefore in the same work,~\cite{NissimRS07} give a generic method called the \emph{sample and aggregate} method that bypasses the explicit computation of the smooth sensitivity of the function and works even when the function is given as a black-box. \cite{DworkL09} suggested a framework called \emph{Propose-Test-Release} to release statistical estimators with additive noise where the noise is calibrated according to the \emph{local sensitivity} of the estimator. Note that adding noise proportional to the local sensitivity of a function with respect to an input set usually does not preserve privacy, but their approach first proposes a bound on the local sensitivity and privately tests whether this bound holds for the specific input set, and then releases the noisy response to the query. 

In the context of developing differentially private frameworks for approximation algorithms, \cite{BlockiGM22} formally introduced the notion of \emph{coupled global sensitivity} of a randomized algorithm, which gives an analogous framework as that of the global sensitivity framework~\cite{DworkMNS16}, but for randomized approximation algorithms instead of deterministic functions. In this framework, one can run a non-private randomized approximation algorithm $\calA_f(D)$ on the dataset, and privacy is obtained by adding noise proportional to the coupled global sensitivity of $\calA_f$. More formally, the coupled global sensitivity measures the worst-case $L_1$-sensitivity of the outputs of a randomized algorithm $\calA_f$ on neighboring inputs over a minimum coupling of the internal coin tosses of $\calA_f$.  

In independent work, Tetek~\cite{Tetek22} also explores the problem of transforming randomized approximation algorithms into (pure) differentially private approximation algorithms. In contrast to our results Tetek's transformation \cite{Tetek22} assumes that the error of the original approximation algorithm either has small subexponential diameter or bounded mean error --- assumptions that would not apply generically to every (tunable) approximation algorithm. Assuming subexponential error their work shows that it is possible to achieve $\eps$-DP by adding Laplace Noise yielding  accuracy guarantees that hold with high probability. However, the assumption of the error being subexponential is quite strong and does not often hold for many randomized approximation algorithms. While assuming bounded mean error is a weaker assumption on the error of the non-private randomized algorithm, however the DP noise is sampled from the Pareto distribution, which has polynomial tail bounds. This leads to accuracy guarantees which only hold with constant probability. Note that applying the median trick commonly used to amplify success probability in the non-private literature adversely affects the privacy budget and is thus not desirable. In contrast, our transformation applies generically to any (tunably) accurate approximation algorithm and we achieve accuracy guarantees that hold with high probability for the same problems studied in their paper. Finally, we correct an outdated claim\footnote{A prior version of the paper achieved pure DP, but that transformation (Theorem 1.5)  only applied to deterministic tunable approximation algorithms} from the comparison to our work detailed in~\cite{Tetek22} that says that we only achieve approximate privacy. We can achieve {\em pure} DP algorithms by applying a postprocessing step to the output of our transformation as outlined in~\thmref{thm:dp-transform}.

\section{Preliminaries}

{We use the notation $\tO{f(n)}$ to mean $f(n)\cdot\polylog(f(n))$.}
We define datasets $D$ and $D'$ as \emph{neighboring}, denoted as $D\sim D'$, if removing or adding one point in $D$ results in $D'$; alternatively, if changing one data point in $D$ results in $D'$.
\begin{definition}[Differential privacy]\label{def:DP}
\cite{DworkMNS06}
An algorithm $\calA$ is $(\eps,\delta)$-DP if for every pair of neighboring datasets $D \sim D'$, and for all sets $\calS$ of possible outputs, we have that $\Pr[\calA(D) \in \calS ] \leq e^\eps \Pr [\calA(D') \in \calS] + \delta $. When $\delta=0$ we simply say that the algorithm is $\eps$-DP.
\end{definition}
Given an $(\eps,\delta)$-DP algorithm, one can obtain an $\eps$-DP algorithm under certain conditions outlined below. We include the proof for completeness in \appref{epsdp-eps}.
\begin{restatable}{theorem}{thmapproxtopure}[Approximate DP to Pure DP]
\thmlab{thm:epsdel-eps}
Let $\calA: \calD \to \calR$. If $\calA$ is an $(\eps,\delta)$-DP algorithm such that $\delta \leq \frac{(e^\eps -1)p}{\vert \calR \vert (1-p)}$ then there is an algorithm $\calA'$ such that $\calA'$ is $\eps$-DP defined in the following manner.
\begin{align*}
    \calA'(D) = \begin{cases} 
    \calA(D) & \text{with probability }1-p \\
    \mathsf{random}(\calR) &\text{with probability }p
    \end{cases}
\end{align*}
where $\calR$ is the range of $\calA_f$. 
\end{restatable}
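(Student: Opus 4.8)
The plan is to verify the pure $\eps$-DP condition for $\calA'$ directly, by writing its output distribution as a mixture of the distribution of $\calA$ and the uniform distribution on the finite range $\calR$. First I would fix an arbitrary pair of neighboring datasets $D \sim D'$ and an arbitrary set of outputs $\calS \subseteq \calR$ (the case $\calS = \emptyset$ is trivial, so assume $\calS \neq \emptyset$). By the definition of $\calA'$ and the fact that $\mathsf{random}(\calR)$ is uniform over the finite set $\calR$,
\[
\Pr[\calA'(D) \in \calS] = (1-p)\,\Pr[\calA(D) \in \calS] + p\cdot\frac{|\calS|}{|\calR|},
\]
with the analogous identity holding for $D'$.

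Next I would plug in the $(\eps,\delta)$-DP guarantee of $\calA$, namely $\Pr[\calA(D) \in \calS] \le e^\eps \Pr[\calA(D') \in \calS] + \delta$, into the first summand, obtaining
\[
\Pr[\calA'(D) \in \calS] \le e^\eps (1-p)\,\Pr[\calA(D') \in \calS] + (1-p)\delta + p\cdot\frac{|\calS|}{|\calR|}.
\]
Since $e^\eps \Pr[\calA'(D') \in \calS] = e^\eps (1-p)\,\Pr[\calA(D') \in \calS] + e^\eps\, p\cdot\frac{|\calS|}{|\calR|}$, it suffices to show $(1-p)\delta \le (e^\eps - 1)\, p \cdot \frac{|\calS|}{|\calR|}$. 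Using $|\calS| \ge 1$, this is implied by $(1-p)\delta \le \frac{(e^\eps-1)p}{|\calR|}$, which is precisely the hypothesis $\delta \le \frac{(e^\eps-1)p}{|\calR|(1-p)}$. Hence $\Pr[\calA'(D) \in \calS] \le e^\eps\Pr[\calA'(D') \in \calS]$ for every $\calS$ and every $D \sim D'$, so $\calA'$ is $\eps$-DP.

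There is no genuine technical obstacle; the argument is a short calculation. The one point worth emphasizing is conceptual: mixing in an $p$-fraction of uniform noise raises the probability of \emph{every} outcome to at least $p/|\calR|$, and this uniform "floor" is exactly what absorbs the additive slack $\delta$ of approximate DP — the stated bound on $\delta$ is simply the break-even threshold for this absorption. I would also note that the bound is symmetric in $D$ and $D'$, so both directions of the multiplicative inequality hold, and that the finiteness of $\calR$ is essential for the transformation (and the bound on $\delta$) to be meaningful.
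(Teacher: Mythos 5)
Your proof is correct and follows essentially the same argument as the paper's: express $\calA'$ as a $(1-p,p)$-mixture of $\calA$ and uniform noise, plug in the $(\eps,\delta)$-DP bound for $\calA$, and check that the uniform floor $p/|\calR|$ absorbs the $\delta$ slack under the stated hypothesis. The only cosmetic difference is that you argue over arbitrary event sets $\calS$ (invoking $|\calS|\ge 1$ in the final step), whereas the paper argues pointwise over single outputs $y\in\calR$ and implicitly sums; these are equivalent over a finite range.
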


We define the distributions we will use to sample additive noise from below. 
\begin{definition}[Laplace distribution]
We say a random variable $X$ is drawn from a Laplace distribution with mean $\mu$ and scale $b>0$ if the probability density function of $X$ at $x$ is $\frac{1}{2b}\exp\left(-\frac{|x-\mu|}{b}\right)$. 
We use the notation $X\sim\Lap(b)$ to denote that $X$ is drawn from the Laplace distribution with scale $b$ and mean $\mu=0$. 
\end{definition}
\begin{definition}[Cauchy distribution]
We say a random variable $X$ is drawn from a Cauchy distribution with location parameter $x_0$ and scale $b>0$ if the probability density function of $X$ at $x$ is $\frac{1}{\pi b}\left(\frac{b^2}{(x-x_0)^2+b^2}\right)$. 
We use the notation $X\sim\C(b)$ to denote that $X$ is drawn from the Cauchy distribution with scale $b$ and location parameter $x_0=0$. 
\end{definition}

We formally define the concept of global sensitivity which is a worst-case notion of sensitivity for deterministic functions below. 
\begin{definition}[Global sensitivity]
The global sensitivity of a function $f:\calD\to\mathbb{R}^d$ is defined by
\[\Delta_f=\max_{D,D'\in\calD,D \sim D'}\|f(D)-f(D')\|_1.\]
\end{definition}

We define the notion of local sensitivity for a fixed input, which can be much smaller than the global sensitivity, but in general, adding noise calibrated according to the local sensitivity does not preserve DP. 

\begin{definition}[Local sensitivity]
For $f:\calD\to\mathbb{R}$ and $D\in\calD$, the local sensitivity of $f$ at $D$ is defined as
\[LS_f(D)=\max_{D':D\sim D'} \|f(D)-f(D')\|_1.\]
Note: if $f:\calD \times \mathcal{R} \to \mathbb{R}$ is a randomized function which, in addition to a dataset $D \in \calD$ takes random coins $r \in \mathcal{R}$ as input we simply define $LS_f(D) =  \max_{r \in \mathcal{R}} LS_{f_r}$ where $f_r(D) \doteq f(D;r)$.
\end{definition}

In order to add instance-specific noise, we define the notions of $\beta$-smooth upper bound which is a smooth upper bound on the local sensitivity. 

\begin{definition}[Smooth upper bound on local sensitivity]\deflab{def:smooth-ub}
For $\beta>0$, a function $S:\calD\to\mathbb{R}$ is a $\beta$-smooth upper bound on the local sensitivity of $f:\calD\to\mathbb{R}$ if
\begin{enumerate}
\item \label{it:smooth-1}
For all $D\in \calD$, we have $S(D)\ge LS_f(D)$.
\item \label{it:smooth-2}
For all $D,D'\in \calD$ with $\|D-D'\|_1=1$, we have $S(D)\le e^\beta\cdot S(D')$. 
\end{enumerate}
\end{definition} 
Finally, although one cannot add noise calibrated with local sensitivity, one can add noise proportional to a $\beta$-smooth upper bound on the local sensitivity as follows. 
\begin{theorem}[Corollary 2.4 in \cite{NissimRS07}]
\thmlab{thm:smooth:laplace}
Let $f:\calD \to\mathbb{R}$ and $S:\calD\to\mathbb{R}$ be a $\beta$-smooth upper bound on the local sensitivity of $f$. 
\begin{enumerate}
    \item If $\beta \leq \frac{\eps}{2(\lambda+1)}$ and $\lambda>1$, the algorithm $D \to f(D)+\frac{2(\lambda+1)S(D)}{\eps} \cdot \eta$, where $\eta$ is sampled from the distribution with density $h(z) \propto \frac{1}{1+\vert z \vert^{\lambda}}$, is $\eps$-differentially private. 
    \item If $\beta\le\frac{\eps}{2\ln(2/\delta)}$ and $\delta\in(0,1)$, then the algorithm $D \to f(D)+\frac{2S(D)}{\eps}\cdot \eta$ where $\eta \sim \Lap(1)$ is $(\eps,\delta')$-differentially private for $\delta' = {\delta}\left(1+\exp\left(\frac{\eps}{2}\right)\right)$\footnote{These bounds differ slightly from those listed in the original paper (Corollary 2.4 in \cite{NissimRS07}). We confirmed with the authors in private communication that $\delta$ should be multiplied by $(1+\exp(\eps/2))$.}. 
\end{enumerate}
\end{theorem}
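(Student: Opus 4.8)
The result is Corollary~2.4 of~\cite{NissimRS07}, so the plan is to recall its proof. Fix a pair of neighbors $D\sim D'$ and abbreviate $u=f(D)$, $u'=f(D')$, $s=S(D)$, $s'=S(D')$. The two defining properties of a $\beta$-smooth upper bound give $|u-u'|\le LS_f(D)\le s$, $|u-u'|\le LS_f(D')\le s'$, and $e^{-\beta}\le s/s'\le e^{\beta}$. Write $c$ for the noise-scale multiplier ($c=\tfrac{2(\lambda+1)}{\eps}$ in Part~1, $c=\tfrac{2}{\eps}$ in Part~2) and $h$ for the base noise density ($h(z)\propto(1+|z|^{\lambda})^{-1}$ in Part~1, $h(z)=\tfrac12 e^{-|z|}$ in Part~2). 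The mechanism $D\mapsto f(D)+cS(D)\cdot\eta$ has output density $p_D(y)=\tfrac{1}{cs}\,h\!\big(\tfrac{y-u}{cs}\big)$, and with the substitution $w=\tfrac{y-u}{cs}$ (so that $w\sim\eta$ when the input is $D$, which lets us translate ``bad output'' events into $h$-probabilities) one gets
\[
\frac{p_D(y)}{p_{D'}(y)}=\frac{1}{\mu}\cdot\frac{h(w)}{h(\mu w+t)},\qquad \mu:=\frac{s}{s'}\in[e^{-\beta},e^{\beta}],\quad t:=\frac{u-u'}{cs'},\ |t|\le\tfrac1c .
\]
So the whole problem reduces to controlling how $\ln h$ changes under a dilation of the argument by $\mu$ and a translation by $t$.

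For Part~1 I would establish the symmetric pointwise bound $|\ln p_D(y)-\ln p_{D'}(y)|\le\eps$ for all $y$, which immediately yields $\eps$-DP with no exceptional set (so $\delta=0$). Writing $G(r):=\ln(1+r^{\lambda})$, this is the bound $\big|{-}\ln\mu+G(|\mu w+t|)-G(|w|)\big|\le\eps$, where $|\ln\mu|\le\beta\le\tfrac{\eps}{2(\lambda+1)}$ and $|t|\le\tfrac1c=\tfrac{\eps}{2(\lambda+1)}$. The relevant facts about $G$ are: it is increasing with $G(0)=0$; $G'(r)=\tfrac{\lambda r^{\lambda-1}}{1+r^{\lambda}}\le\lambda$ everywhere, and $G'(r)\le\lambda/r$ for $r>0$ (so far out in the tail a dilation by $\mu$ shifts $G$ by at most $\approx\lambda\ln\mu\le\lambda\beta$); and near the origin $G(r)=\ln(1+r^{\lambda})\le r^{\lambda}$ is tiny for $\lambda>1$. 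Organizing by whether $|w|$ is close to $0$ or bounded away from it, one checks that the dilation and the translation each contribute $O\!\big(\tfrac{\lambda}{\lambda+1}\eps\big)$ to $|\ln p_D-\ln p_{D'}|$, and that the choice $\tfrac1c=\beta=\tfrac{\eps}{2(\lambda+1)}$ is exactly what makes the total at most $\eps$. This is a finicky but entirely elementary real-analysis estimate; the $\lambda$-dependence of the noise scale is precisely calibrated for it.

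For Part~2 the translation is harmless: $\tfrac{h(w)}{h(w+t)}=e^{|w+t|-|w|}\le e^{|t|}\le e^{\eps/2}$ pointwise. The dilation is the crux and is where $\delta$ enters; for an ordered pair $(D,D')$ the analysis splits on whether $\mu<1$ or $\mu\ge1$. If $\mu<1$, then $\tfrac{p_D(y)}{p_{D'}(y)}=\tfrac1\mu e^{|\mu w+t|-|w|}\le\tfrac1\mu e^{(\mu-1)|w|+|t|}\le e^{\beta}e^{\eps/2}\le e^{\eps}$ pointwise (using $\tfrac1\mu=\tfrac{s'}{s}\le e^{\beta}$, $(\mu-1)|w|\le0$, and $\beta\le\tfrac{\eps}{2\ln(2/\delta)}\le\eps/2$, which we may assume since otherwise $\delta'\ge1$). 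If $\mu\ge1$, then $\tfrac{p_D(y)}{p_{D'}(y)}=\tfrac1\mu e^{|\mu w+t|-|w|}\le e^{(\mu-1)|w|+|t|}$, which can exceed $e^{\eps}$ only when $(\mu-1)|w|>\eps-|t|\ge\eps/2$, i.e.\ only when $|w|>\tfrac{\eps}{2(\mu-1)}$. Since $\mu-1\le e^{\beta}-1\le\beta e^{\beta}$ and $\beta\le\tfrac{\eps}{2\ln(2/\delta)}$, that threshold is at least $\ln(2/\delta)\,e^{-\beta}$, so the $A(D)$-mass of the ``bad'' set is
\[
\Pr_{\eta\sim\Lap(1)}\!\Big[|\eta|>\tfrac{\eps}{2(\mu-1)}\Big]\ \le\ e^{-\ln(2/\delta)\,e^{-\beta}}\ =\ \tfrac{\delta}{2}\,(2/\delta)^{\,1-e^{-\beta}}\ \le\ \tfrac{\delta}{2}\,e^{\eps/2}\ \le\ \delta\big(1+e^{\eps/2}\big),
\]
where the last inequalities use $1-e^{-\beta}\le\beta\le\tfrac{\eps}{2\ln(2/\delta)}$. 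Splitting every measurable set $T$ into its good and bad parts then gives $\Pr[A(D)\in T]\le e^{\eps}\Pr[A(D')\in T]+\tfrac{\delta}{2}e^{\eps/2}$, i.e.\ $(\eps,\delta')$-DP with $\delta'=\delta(1+e^{\eps/2})$.

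The step I expect to be the main obstacle is the dilation analysis of the Laplace tail in Part~2: one must pin down exactly which outputs blow up the density ratio, bound the probability mass $A(D)$ puts there, and verify --- against the precise condition $\beta\le\tfrac{\eps}{2\ln(2/\delta)}$ --- that this mass is at most $\delta(1+e^{\eps/2})$. This is the delicate bookkeeping linking $\beta$, the noise scale, and $\delta'$. In Part~1 the corresponding work is the purely algebraic verification of the generalized-Cauchy inequality with the constant $\tfrac{\eps}{2(\lambda+1)}$, which is routine but demands care in tracking the $\lambda$-dependence that constant is designed to absorb.
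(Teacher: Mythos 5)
Your statement is not proved in the paper at all: it is imported verbatim from Corollary~2.4 of Nissim--Raskhodnikova--Smith, with only a footnote recording the corrected failure probability $\delta'=\delta\left(1+e^{\eps/2}\right)$, so there is no in-paper argument to compare against. Your reconstruction is essentially the NRS admissible-noise proof, but collapsed from their two-step abstraction (a sliding property and a dilation property, each allowed to fail on mass $\delta/2$, then composed) into one direct computation of the density ratio $p_D(y)/p_{D'}(y)=\tfrac{1}{\mu}\,h(w)/h(\mu w+t)$, and the bookkeeping checks out. In Part~2, the case $\mu\le 1$ is indeed pointwise, and your observation that one may assume $\beta\le\eps/2$ (since otherwise $\delta'\ge 1$ and the guarantee is vacuous) is legitimate; in the case $\mu>1$ the chain $\mu-1\le\beta e^{\beta}$, threshold $\ge\ln(2/\delta)e^{-\beta}$, bad mass $\le(\delta/2)(2/\delta)^{1-e^{-\beta}}\le\tfrac{\delta}{2}e^{\eps/2}$ is exactly the calculation that produces the corrected factor, and you in fact establish the slightly stronger $(\eps,\tfrac{\delta}{2}e^{\eps/2})$-DP, which implies the stated $\delta'$. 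Part~1 is only sketched, but the two facts you isolate --- $G'(r)\le\lambda$ and $G'(r)\le\lambda/r$ for $G(r)=\ln(1+r^{\lambda})$ --- already give a clean finish with no case split on $|w|$: $\bigl|\ln(p_D(y)/p_{D'}(y))\bigr|\le|\ln\mu|+\bigl|G(|\mu w+t|)-G(|\mu w|)\bigr|+\bigl|G(|\mu w|)-G(|w|)\bigr|\le\beta+\lambda|t|+\lambda\beta\le(\lambda+1)\beta+\lambda\cdot\tfrac{\eps}{2(\lambda+1)}\le\eps$. The only caveats are cosmetic: the densities require $S(D)>0$ (true in all of the paper's applications, where $S\ge\Delta_f$), and the good/bad splitting of a target set $T$ must be done per ordered pair $(D,D')$, which your case analysis already respects.
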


\section{General Transformation for Approximation Algorithms}\seclab{sec:gen-transform}
In this section, we formally define our black-box differentially private transformation for (randomized) approximation algorithms. Given a tunable approximation (see \defref{def:tunable}) algorithm of $f$, call it $\calA_f$, that outputs an $(\alpha,\kappa,\delta)$-approximation, our framework for randomized algorithms involves two steps --- (1) Apply \algref{alg:fpras-DP} to $\calA_f$ to obtain an $(\eps,\delta)$-DP algorithm $\calA'_f$ with accuracy guarantees outlined in \thmref{thm:main} (2) Apply postprocessing step to the output of $\calA'_f$ to obtain an $\eps$-DP algorithm (see \thmref{thm:dp-transform}).

We first prove \thmref{thm:main} that provides theoretical guarantees for algorithm $\calA'_f$ (\algref{alg:fpras-DP}). This is our main contribution as the postprocessing step to obtain pure DP applies a folkore result. 

Observe that even for the case when the original algorithm $\calA_f$ gives an $(\alpha,0,\delta)$-approximation of $f$ (i.e., $\kappa=0$), the resulting DP algorithm $\calA'_f$ will still have an additive error, this additive error is inherent due to the requirement of adding Laplace noise to preserve DP. We emphasize that the Laplace noise added to the output of algorithm $\calA_f$ depends on the global sensitivity of the function $f$, therefore, we can only get meaningful DP approximation algorithms using this transformation for functions with low global sensitivity.

\thmmainfpras*

\begin{proof} $\calA_f'$ is defined in \algref{alg:fpras-DP} --- it first runs $\calA_f(D,\rho,\kappa, \delta/2)$ where $\rho := \frac{\eps \alpha}{12 \log(4/\delta)}$ and then adds Laplace Noise. Thus, the resource used by $\calA'_f$ is $R\left(n, \rho, \kappa,\delta \right)$. The privacy guarantee follows from \lemref{lem:dp-fpras}, and the accuracy guarantee follows from \lemref{lem:dp-acc-fpras}.
\end{proof}

\begin{remark}
When $\calA_f$ is a PRAS, by definition, the output of $\calA_f$ is an $(\alpha,0,\delta)$-approximation of $f$ running in time $T(n, \alpha, 0, \delta) = \poly(n, 1/\alpha, \log(1/\delta))$. Applying \thmref{thm:main} with negligible $\delta =n^{-\log n}$ and $\gamma = \log^2 n$ for any $\alpha' >0$ we obtain a private $\left(\alpha', O\left(\frac{\Delta_f }{\eps \log^2 n}\right) ,2n^{-\log n}\right)$-approximation with polynomial running time $\poly(n, 1/\eps, 1/\alpha')$.
\end{remark}

\begin{lemma}
\lemlab{lem:smooth-bound}
Let $0<\rho<1/2$. Suppose that $\calA_f$ outputs a $(\rho, \tau,\delta)$-approximation of a function $f:\calD \to \mathbb{R}^+$ with global sensitivity $\Delta_f$. Let $\calA_{f,R}$ denote a deterministic run of $\calA$ using a fixed set of random coins $R$. Define function $g_{f,R}$ by 
\begin{equation*}
g_{f,R}(D)=
    \begin{cases}
        \calA_{f,R}(D) &\text{if } (1-\rho)f(D) - \tau\leq \calA_{f,R}(D) \leq (1+\rho)f(D)+ \tau\\
        (1-\rho)f(D)-\tau  &\text{if } \calA_{f,R}(D) <(1-\rho)f(D) - \tau\\
        (1+\rho)f(D)+\tau &\text{if } \calA_{f,R}(D) > (1+\rho)f(D) +\tau\\
    \end{cases}
\end{equation*}
Then the function $S_f(D)= 4\rho g_{f,R}(D)+4\tau+\Delta_f$ is a  $\beta$-smooth upper bound for $g_{f,R}$ where $\beta \geq 6\rho $.
\end{lemma}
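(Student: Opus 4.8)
The plan is to verify the two defining properties of a $\beta$-smooth upper bound (Definition~\ref{def:smooth-ub}) for the function $S_f(D) = 4\rho g_{f,R}(D) + 4\tau + \Delta_f$ with respect to the deterministic function $g_{f,R}$. Since $\calA_{f,R}$ is a deterministic run on fixed coins $R$, $g_{f,R}$ is a genuine deterministic function and the notion of local sensitivity applies directly.

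First I would bound the local sensitivity of $g_{f,R}$. Fix neighboring datasets $D \sim D'$. By construction $g_{f,R}(D)$ lies in the interval $[(1-\rho)f(D) - \tau,\ (1+\rho)f(D) + \tau]$, and likewise for $D'$. Hence
\[
|g_{f,R}(D) - g_{f,R}(D')| \leq \max\big((1+\rho)f(D) + \tau,\ (1+\rho)f(D') + \tau\big) - \min\big((1-\rho)f(D) - \tau,\ (1-\rho)f(D') - \tau\big).
\]
Writing $f(D') = f(D) + c$ with $|c| \leq \Delta_f$ and doing the case analysis on the sign of $c$, this difference is at most $2\rho f(D) + 2\tau \cdot 2 / \dots$ — more cleanly, one gets $|g_{f,R}(D) - g_{f,R}(D')| \leq 2\rho \max(f(D), f(D')) + 2\tau + \rho\Delta_f \leq 2\rho f(D) + 2\tau + 2\rho\Delta_f + \Delta_f$ after absorbing the $\rho\Delta_f$ terms. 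Then I would compare this to $S_f(D) = 4\rho g_{f,R}(D) + 4\tau + \Delta_f$: since $g_{f,R}(D) \geq (1-\rho)f(D) - \tau$ and $\rho < 1/2$, we have $4\rho g_{f,R}(D) \geq 4\rho(1-\rho)f(D) - 4\rho\tau \geq 2\rho f(D) - 2\tau$ (using $\rho < 1/2$ so $1-\rho > 1/2$), and combining with the $4\tau$ and $\Delta_f$ terms shows $S_f(D) \geq LS_{g_{f,R}}(D)$, giving property~\ref{it:smooth-1}.

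For property~\ref{it:smooth-2}, I need $S_f(D) \leq e^\beta S_f(D')$ for $D \sim D'$. It suffices to bound $g_{f,R}(D)$ in terms of $g_{f,R}(D')$: since both are sandwiched between their respective $(1\pm\rho)f(\cdot) \pm \tau$ bounds and $|f(D) - f(D')| \leq \Delta_f$, and since $\Delta_f \leq S_f(D')$ (the $\Delta_f$ term appears additively in $S_f$), one shows $4\rho g_{f,R}(D) + 4\tau + \Delta_f \leq 4\rho g_{f,R}(D') + 4\tau + \Delta_f + (\text{small multiple of } \rho)\cdot S_f(D')$, and then use $1 + x \leq e^x$ to convert the additive slack into the multiplicative factor $e^{6\rho}$. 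The arithmetic should show the ratio $S_f(D)/S_f(D')$ is at most $1 + 6\rho$ or so, hence at most $e^{6\rho} \leq e^\beta$ for any $\beta \geq 6\rho$. The main obstacle is keeping the constants honest: I expect the tightest point to be showing that the $\Delta_f$ change in $f$ (and the possible "clipping" of $\calA_{f,R}$ to the boundary on one dataset but not the other) contributes a multiplicative factor controlled by $\rho$ rather than by $\Delta_f/f(D)$ — this is exactly where the presence of the additive $\Delta_f$ and $4\tau$ terms inside $S_f$ is used, since they make the denominator $S_f(D')$ large enough to absorb the $\Delta_f$-sized perturbation. Once the inequality $S_f(D) \leq (1 + 6\rho)S_f(D')$ is established (perhaps with room to spare), the conclusion for all $\beta \geq 6\rho$ is immediate.
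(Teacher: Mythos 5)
Your plan is the same as the paper's proof: bound $LS_{g_{f,R}}(D)$ using the interval $[(1-\rho)f(D)-\tau,\ (1+\rho)f(D)+\tau]$ in which $g_{f,R}$ always lies together with $|f(D)-f(D')|\le\Delta_f$, then establish smoothness by rewriting $S_f(D)$ through $(1+\rho)f(D)\le(1+\rho)(\Delta_f+f(D'))$ and back to $g_{f,R}(D')$ via $f(D')\le\frac{g_{f,R}(D')+\tau}{1-\rho}\le 2(g_{f,R}(D')+\tau)$, collecting a factor $(1+6\rho)\le e^{6\rho}$; the additive $\Delta_f$ and $4\tau$ inside $S_f$ play exactly the absorbing role you identified. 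One loose end worth closing carefully: the extra $\rho\Delta_f$ term you noticed in the local-sensitivity bound (which arises precisely when $f(D')>f(D)$) is not actually covered by $S_f(D)=4\rho g_{f,R}(D)+4\tau+\Delta_f$ as you claim "after absorbing"; the paper hides the same issue behind the not-quite-legitimate step "WLOG $f(D)\ge f(D')$", and the honest coefficient on $\Delta_f$ in the local-sensitivity bound is $(1+\rho)$ rather than $1$, so either take the constant in $S_f$ slightly larger (e.g., $2\Delta_f$) or explicitly argue the $\rho\Delta_f$ slack is immaterial downstream.
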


\begin{proof}
Fix an arbitrary set of random coin tosses $R$. We frequently use the fact that $(1-\rho)f(D)-\tau  \leq g_{f,R}(D) \leq (1+\rho)f(D)+\tau$. We also note that since $0\leq  \rho < 1/2$, we have that $\frac{1}{2}f(D) - \tau \leq g_{f,R}(D) \leq 2 f(D) + \tau$.

First, we show that Condition~\ref{it:smooth-1} of \defref{def:smooth-ub} holds. Without loss of generality, we assume $f(D)\geq f(D')$, where $D'$ is any neighboring input. 
Then: 
\begin{align*}
LS_{g_{f,R}}(D) &= \max_{D':D\sim D'} \|g_{f,R}(D)-g_{f,R}(D')\| &\\
&\leq \|(1+\rho)f(D)+\tau-(1-\rho)f(D')+\tau\| &\\
&\leq \rho \|f(D)+f(D') \|+2\tau+\Delta_f  &\\
&\leq 2\rho f(D)+2\tau + \Delta_f &\\
&\leq 4\rho g_{f,R}(D)+2\rho\tau+2\tau+\Delta_f &\text{as }\frac{1}{2}f(D) -\tau \leq g_{f,R}(D) \\
&\leq 4\rho g_{f,R}(D)+4\tau+\Delta_f &\\
&= S_f(D) &
\end{align*}
Next, we show that Condition~\ref{it:smooth-2} of \defref{def:smooth-ub} holds below. 
We have:

\begin{align*}
    S_f(D) &=  4 \rho g_{f,R}(D)+4 \tau+\Delta_f &\\
    &\leq 4\rho \left(1+\rho\right)f(D)+4\rho\tau+4\tau+\Delta_f  & \text{by def of }g_{f,R}\\
    &\leq 4\rho \left(1+\rho \right)(\Delta_f + f(D'))+4\rho\tau+4\tau+\Delta_f  & \text{since }D\sim D'\\
    &\leq 4 \rho(1+\rho) f(D') +(4\rho(1+\rho)+1)\Delta_f+4\rho\tau+4\tau & \\
    &\leq 4\rho \frac{(1+\rho)}{1-\rho}(g_{f,R}(D')+\tau) +(1+6\rho)\Delta_f+4\rho\tau+4\tau& \text{by def of }g_{f,R}\\
    &\leq 4\rho (1+\rho)(1+2\rho)(g_{f,R}(D')+\tau)+(1+6\rho)\Delta_f+4\rho\tau+4\tau& \\
    &\leq 4\rho (1+\rho)(1+2\rho)g_{f,R}(D')+12\rho\tau+(1+6\rho)\Delta_f+4\rho\tau+4\tau& \\
    &\leq 4\rho (1+\rho)(1+2\rho) g_{f,R}(D')+(1+6\rho)(\Delta_f+4\tau)& \\
    &\leq 4\rho (1+4\rho) g_{f,R}(D')+(1+6\rho)(\Delta_f+4\tau) & \\
    &\leq (1+6\rho)(4\rho g_{f,R}(D')+\Delta_f+4\tau) & \\
    &\leq  e^{6\rho} \cdot (4\rho g_{f,R}(D') +\Delta_f+4\tau)  = e^{\beta} S_f(D'),
\end{align*}
where $\beta \geq 6 \rho$.

\end{proof}

\begin{remark} As a special case if we have a $(0,\kappa, 0)$-approximation algorithm $\calA_f$ (i.e., no multiplicative error, zero failure probability) then applying \lemref{lem:smooth-bound} yields the smooth upper bound $S_f(D)= 4\tau+\Delta_f$. We observe that this smooth upper bound is independent of $D$ and, therefore, $S_f$ is just an upper bound on the global sensitivity of $g_f$. Furthermore, in this special case we are guaranteed that $\calA_f(D) = g_f(D)$ with probability $1$. Thus, in this special case, we can achieve pure $\eps$-DP by computing $\calA_f(D)$ and adding Laplace noise proportional to $S_f$.    \\
If we have a $(0,\kappa, \delta)$-approximation algorithm for $\delta \neq 0$ then we still have $S_f(D)= 4\tau+\Delta_f$ which means that $S_f(D)$ is an upper bound on the global sensitivity of $g_f$. However, computing $\calA_f(D)$ and adding Laplace noise proportional to $S_f$ does not necessarily yield a pure DP algorithm since we may have $\Delta_f(D) \neq \calA_f(D)$ with non-zero probability $\delta$. If we have $(\alpha,\kappa, 0)$-approximation algorithm $\calA_f$, and $\alpha\neq 0$, since $S_f(D)=4 \rho \calA_f(D)+4\tau + \Delta_f$ still depends on the input $D$. However, we can still achieve DP using \thmref{thm:main-2}. 
\end{remark}

Applying \lemref{lem:smooth-bound} to the theorem calibrating noise to smooth bounds on the smooth sensitivity~\cite{NissimRS07} we show that the \algref{alg:fpras-DP} preserves privacy below. 

\begin{lemma}[Privacy]\lemlab{lem:dp-fpras}
\algref{alg:fpras-DP} is $(\eps,\delta')$-differentially private where $\delta'=\delta(1+\exp(\eps/2))$.
\end{lemma}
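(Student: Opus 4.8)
\textbf{Proof plan for \lemref{lem:dp-fpras}.} The plan is to reduce the privacy of \algref{alg:fpras-DP} to a hybrid argument combining \lemref{lem:smooth-bound} with part (2) of \thmref{thm:smooth:laplace}. First I would fix the tuned parameters: the algorithm runs $\calA_f(D,\rho,\tau,\delta/2)$ with $\rho = \frac{\eps\alpha}{12\log(4/\delta)}$ and $\tau=\kappa$, and then outputs $x_A + X$ with $X \sim \Lap\!\left(\frac{2(4\rho x_A + 4\tau + \Delta_f)}{\eps}\right)$. The key observation is that $\frac{2(4\rho x_A + 4\tau + \Delta_f)}{\eps}$ is exactly $\frac{2 S_f(D)}{\eps}$ when $x_A = g_{f,R}(D)$, where $S_f$ and $g_{f,R}$ are as in \lemref{lem:smooth-bound}.

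The main step is to introduce an \emph{idealized} algorithm $\calB$ that, on the same random coins $R$ used to produce $x_A = \calA_{f,R}(D)$, instead outputs $g_{f,R}(D) + X'$ with $X' \sim \Lap\!\left(\frac{2 S_f(D)}{\eps}\right) = \Lap\!\left(\frac{2(4\rho g_{f,R}(D) + 4\tau + \Delta_f)}{\eps}\right)$. By \lemref{lem:smooth-bound}, $S_f$ is a $\beta$-smooth upper bound on the local sensitivity of $g_{f,R}$ with $\beta = 6\rho = \frac{\eps\alpha}{2\log(4/\delta)} \le \frac{\eps}{2\ln(2/(\delta/2))}$ (using $\alpha \le 1$ and $\log(4/\delta) = \log(2/(\delta/2))$, taking logarithms base $2$ consistently, or simply noting $6\rho \le \frac{\eps}{2\log(4/\delta)}$ and $\log(4/\delta) \ge \ln(4/\delta) \ge \ln(2/(\delta/2))$). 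Hence part (2) of \thmref{thm:smooth:laplace} applied to $g_{f,R}$ with privacy parameter $\delta/2$ shows that for each fixed $R$ the map $D \mapsto g_{f,R}(D) + X'$ is $(\eps, (\delta/2)(1+\exp(\eps/2)))$-DP; averaging over $R$ preserves this, so $\calB$ is $(\eps,(\delta/2)(1+\exp(\eps/2)))$-DP.

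Next I would bound the statistical distance between \algref{alg:fpras-DP} and $\calB$. Conditioned on $\calA_f$ producing a value within the approximation guarantee — i.e., $(1-\rho)f(D)-\tau \le \calA_{f,R}(D) \le (1+\rho)f(D)+\tau$, which occurs with probability $\ge 1-\delta/2$ over $R$ — we have $x_A = \calA_{f,R}(D) = g_{f,R}(D)$, so the noise scale used in \algref{alg:fpras-DP} is identical to that used in $\calB$ and the two output distributions coincide. Thus for every measurable set $\calS$,
\begin{align*}
\Pr[\calA'_f(D) \in \calS] &\le \Pr[\calB(D) \in \calS] + \frac{\delta}{2}, \\
\Pr[\calB(D) \in \calS] &\le \Pr[\calA'_f(D) \in \calS] + \frac{\delta}{2}.
\end{align*}
Combining with the $(\eps,(\delta/2)(1+\exp(\eps/2)))$-DP guarantee for $\calB$: for neighboring $D\sim D'$,
\begin{align*}
\Pr[\calA'_f(D) \in \calS] &\le \Pr[\calB(D)\in\calS] + \tfrac{\delta}{2} \le e^\eps \Pr[\calB(D')\in\calS] + \tfrac{\delta}{2}(1+\exp(\eps/2)) + \tfrac{\delta}{2} \\
&\le e^\eps \Pr[\calA'_f(D')\in\calS] + e^\eps \tfrac{\delta}{2} + \tfrac{\delta}{2}(1+\exp(\eps/2)) + \tfrac{\delta}{2}.
\end{align*}
The additive slack is $\tfrac{\delta}{2}(e^\eps + 1 + \exp(\eps/2) + 1) = \tfrac{\delta}{2}(e^\eps + \exp(\eps/2) + 2) \le \delta(1+\exp(\eps/2))$, using $e^\eps \le \exp(\eps/2)\cdot\exp(\eps/2)$ and a crude bound; more carefully one checks $\tfrac{1}{2}(e^\eps + e^{\eps/2} + 2) \le 1 + e^{\eps/2}$ iff $e^\eps \le e^{\eps/2}$, which is false, so I would instead absorb the $e^\eps\delta/2$ term by noting the hybrid can be arranged so that the extra mass is added on the $D$ side only (as in the standard proof of \thmref{thm:epsdel-eps}-style arguments), giving slack $\delta/2 + \delta/2 + \delta(1+e^{\eps/2})\cdot(\text{half}) $. \textbf{The delicate part} is exactly this constant-chasing: getting the final $\delta'$ to land at $\delta(1+\exp(\eps/2))$ rather than something slightly larger. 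The clean route, which I would adopt, is to absorb the $\delta/2$ distance on \emph{both} sides symmetrically into the existing $(\delta/2)(1+\exp(\eps/2))$ term of $\calB$: since $1+\exp(\eps/2) \ge 2$, we have $(\delta/2)(1+\exp(\eps/2)) + \delta/2 + \delta/2 \le (\delta/2)(1+\exp(\eps/2)) + (\delta/2)(1+\exp(\eps/2)) = \delta(1+\exp(\eps/2))$, which closes the argument. The only genuine mathematical content beyond bookkeeping is the verification that the tuned $\rho$ makes $\beta = 6\rho$ small enough to invoke \thmref{thm:smooth:laplace}, which is immediate from the choice $\rho = \frac{\eps\alpha}{12\log(4/\delta)}$ and $\alpha \in (0,1)$.
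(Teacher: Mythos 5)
Your overall strategy is exactly the paper's: construct the idealized map $g_{f,R}$, use \lemref{lem:smooth-bound} to show that $S_f(D)=4\rho g_{f,R}(D)+4\tau+\Delta_f$ is a $\beta$-smooth upper bound on the local sensitivity of $g_{f,R}$, tune $\rho=\frac{\eps\alpha}{12\log(4/\delta)}$ so that $\beta=6\rho\le\eps/(2\ln(4/\delta))$ for $\alpha\le 1$, invoke part (2) of \thmref{thm:smooth:laplace} to make the idealized algorithm $\bigl(\eps,(\delta/2)(1+e^{\eps/2})\bigr)$-DP for each fixed random string $R$, and then transfer privacy to the real algorithm via the observation that $\calA_{f,R}(D)=g_{f,R}(D)$ except with probability $\delta/2$. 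All of that is the intended argument and is correct.

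The constant-chasing concern you raise is genuine, but your proposed repair does not close it. The transfer step inevitably produces slack $e^\eps\tfrac{\delta}{2}+(\tfrac{\delta}{2})(1+e^{\eps/2})+\tfrac{\delta}{2}$, because the bound $\Pr[\calB(D')\in\calS]\le\Pr[\calA'_f(D')\in\calS]+\tfrac{\delta}{2}$ must be applied \emph{inside} the $e^\eps$ premium in order to return from the idealized algorithm on $D'$ to the real one; there is no rearrangement of the hybrid that moves that $\delta/2$ mass ``to the $D$ side only.'' Your penultimate display has the correct slack, but your closing inequality $(\delta/2)(1+e^{\eps/2})+\delta/2+\delta/2\le\delta(1+e^{\eps/2})$ has silently replaced $e^\eps\tfrac{\delta}{2}$ by $\tfrac{\delta}{2}$, so it does not bound the quantity you actually derived. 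In fairness, the paper's own write-up is equally loose here --- it asserts that \algref{alg:fpras-DP}$'$ is ``$(\eps,\delta/2)$-private'' rather than $(\eps,(\delta/2)(1+e^{\eps/2}))$-private and then combines additively without the $e^\eps$ factor --- so you have found a real soft spot, you just have not closed it. A rigorous version of this exact argument yields $\delta'=\tfrac{\delta}{2}(e^\eps+e^{\eps/2}+2)$, which agrees with the stated $\delta(1+e^{\eps/2})$ up to constants whenever $\eps=O(1)$ but is strictly larger for all $\eps>0$; to land exactly on the stated form one would need to either run $\calA_f$ with a smaller failure probability so the coupling slack is dominated, or state the lemma with the honest constant.
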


\begin{proof} Consider a modification of \algref{alg:fpras-DP}, call it \algref{alg:fpras-DP}' where instead of computing $\calA_f(D,\rho,\tau,\delta/2)$ we instead sample the random coins $R$ that $\calA_f$ would have used and replace the value $\calA_f(D,\rho,\tau,\delta/2; R)$ (which we denote as $\calA_{f,R}(D)$ in the sequel) with $g_{f,R}(D)$.
The function $g_{f,R}$ may not be efficiently computable, but we only use \algref{alg:fpras-DP}' for the purpose of analysis. We first observe that by \lemref{lem:smooth-bound}, for any $\beta \geq 6 \rho$ the function $S_f(D) = 4\rho g_{f,R}(D) + 4\tau+\Delta_f$ is a $\beta$-smooth upper bound on the sensitivity of $g_{f,R}$. 
Thus, by \thmref{thm:smooth:laplace}, it is sufficient to set $\rho:=\frac{\eps \alpha }{12 \log(4/\delta)}$ for $6 \rho \leq \beta\leq \frac{\eps}{2\ln\left(\frac{4}{\delta}\right)}$ and add noise proportional to $\Lap \left(\frac{2S_f(D)}{\eps} \right)= \Lap \left(\frac{2(4 \rho g_{f,R}(D)+4\tau+\Delta_f)}{\eps} \right)$ to preserve $(\eps,\delta/2)$-privacy of \algref{alg:fpras-DP}'.

Since $g_{f,R}(D)$ is $\calA_{f,R}(D)$ except with probability $\delta/2$, \algref{alg:fpras-DP} is identical to \algref{alg:fpras-DP}' except with probability $\delta/2$. Thus, this shows that \algref{alg:fpras-DP} is $(\eps,\delta)$-private.  
\end{proof}

 \begin{fact}\factlab{fact:lap}
 If $Y \sim \Lap(b)$, then  $\Pr[ \vert Y \vert \geq \ell \cdot b] = \exp(-\ell)$.
 \end{fact}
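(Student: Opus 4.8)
The plan is to compute the tail probability directly from the probability density function of the Laplace distribution recorded in the preliminaries. Recall that $Y \sim \Lap(b)$ has density $p(y) = \frac{1}{2b}\exp\left(-\frac{|y|}{b}\right)$, which is symmetric about $0$. Hence for any $\ell \geq 0$ we have $\Pr[|Y| \geq \ell b] = 2\,\Pr[Y \geq \ell b]$, so it suffices to evaluate the one-sided tail.

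Next I would evaluate $\Pr[Y \geq \ell b] = \int_{\ell b}^{\infty} \frac{1}{2b} e^{-y/b}\, dy$. The substitution $u = y/b$ (so $du = dy/b$) turns this into $\frac{1}{2}\int_{\ell}^{\infty} e^{-u}\, du = \frac{1}{2} e^{-\ell}$. Multiplying by the factor $2$ from the symmetry step yields $\Pr[|Y| \geq \ell b] = e^{-\ell}$, which is exactly the claimed identity.

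There is essentially no real obstacle here: the statement is an elementary computation, and the only points to be careful about are the factor of $2$ from the two symmetric tails and the normalization constant $\frac{1}{2b}$ in the density, which cancel cleanly. As an alternative one-line justification I would observe that $|Y|/b$ is distributed as an Exponential random variable with rate $1$, whose survival function is precisely $e^{-\ell}$, giving the same conclusion immediately.
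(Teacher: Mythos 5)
Your computation is correct, and the paper itself states this as a standard fact without proof, so there is nothing to compare against; your direct integration (with the symmetry factor of $2$) is exactly the textbook derivation, and the one-line alternative via $|Y|/b$ being $\mathrm{Exp}(1)$ is an equally clean way to say it.
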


\begin{lemma}[Accuracy]\lemlab{lem:dp-acc-fpras}

For all $\gamma>0$, with probability $1-\exp(-\gamma) - \delta$, 
\begin{align*} 
&\bigl(1- \rho(1+\frac{16\gamma}{\eps})\bigr)f(D) -\tau\bigl(\frac{8\gamma\rho}{\eps}+\frac{8\gamma}{\eps}+1\bigr) -\frac{2\Delta_f \gamma}{\eps} \leq  \calA'(D) \leq \bigl(1+ \rho(1+\frac{16\gamma}{\eps})\bigr)f(D) \\&+ \tau\bigl(\frac{8\gamma\rho}{\eps}+\frac{8\gamma}{\eps}+1\bigr) +\frac{2\Delta_f \gamma}{\eps}.
\end{align*}
\end{lemma}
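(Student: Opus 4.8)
The plan is to condition on two good events, argue deterministically on their intersection, and close with a union bound. Recall from \algref{alg:fpras-DP} that $\calA'(D) = x_A + X$, where $x_A = \calA_f(D,\rho,\tau,\delta/2)$ and, conditioned on the value of $x_A$, the noise $X$ has law $\Lap(b)$ with $b = \frac{2(4\rho x_A + 4\tau + \Delta_f)}{\eps}$. Let $\calE_1$ be the event that $\calA_f$ returns a good estimate, i.e.\ $(1-\rho)f(D) - \tau \le x_A \le (1+\rho)f(D) + \tau$; since $\calA_f$ is invoked with failure parameter $\delta/2$ and is a $(\rho,\tau,\delta/2)$-approximation, $\Pr[\calE_1] \ge 1 - \delta/2$. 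Let $\calE_2$ be the event $|X| \le \gamma b$. For every fixed value of $x_A$ the conditional law of $X$ is $\Lap(b)$, so \factref{fact:lap} (with $\ell = \gamma$) gives $\Pr[\calE_2 \mid x_A] \ge 1 - \exp(-\gamma)$ and hence $\Pr[\calE_2] \ge 1 - \exp(-\gamma)$. A union bound then yields $\Pr[\calE_1 \cap \calE_2] \ge 1 - \delta/2 - \exp(-\gamma) \ge 1 - \delta - \exp(-\gamma)$.

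Next I would work deterministically on $\calE_1 \cap \calE_2$. On $\calE_2$, $|X| \le \gamma b = \frac{8\gamma\rho x_A + 8\gamma\tau + 2\gamma\Delta_f}{\eps}$; substituting the upper bound $x_A \le (1+\rho)f(D) + \tau \le 2f(D) + \tau$ from $\calE_1$ (using $\rho < 1/2$, consistent with \lemref{lem:smooth-bound}) gives $|X| \le \frac{16\gamma\rho}{\eps}f(D) + \tau\left(\frac{8\gamma\rho}{\eps} + \frac{8\gamma}{\eps}\right) + \frac{2\gamma\Delta_f}{\eps}$. It is the \emph{upper} bound on $x_A$, not its lower bound, that matters here, since $x_A$ appears with a positive coefficient inside the noise scale $b$.

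Finally I would assemble the bounds. Writing $\calA'(D) = x_A + X$, for the upper bound add $x_A \le (1+\rho)f(D) + \tau$ to the bound on $|X|$ and regroup the $f(D)$-coefficient as $\rho\left(1 + \frac{16\gamma}{\eps}\right)$ and the $\tau$-coefficient as $\frac{8\gamma\rho}{\eps} + \frac{8\gamma}{\eps} + 1$; this is exactly the claimed upper bound. Symmetrically, $\calA'(D) \ge x_A - |X| \ge (1-\rho)f(D) - \tau - |X|$, and feeding in the same bound on $|X|$ (again using $x_A \le 2f(D) + \tau$ inside $b$) yields the claimed lower bound. Both inequalities hold on $\calE_1 \cap \calE_2$, an event of probability at least $1 - \exp(-\gamma) - \delta$, which proves the lemma.

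I do not anticipate a genuine obstacle: once the bookkeeping is set up the argument is pure arithmetic. The one subtlety worth flagging is that the scale of the added Laplace noise is itself a function of the random output $x_A$, so $X$ and $x_A$ are not independent; conditioning on $x_A$ before invoking \factref{fact:lap} handles this cleanly, and — as noted above — the positivity of the coefficient of $x_A$ in $b$ is why the accuracy guarantee is used in its upper-bound form when controlling the noise.
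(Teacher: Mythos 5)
Your proof is correct and follows essentially the same route as the paper's: a Laplace tail bound for the noise, the $(\rho,\tau,\delta/2)$-accuracy guarantee for $x_A$, a union bound over the two bad events, substitution of $x_A \le 2f(D)+\tau$ into the noise scale, and regrouping. The only difference is cosmetic --- you make the conditioning on $x_A$ explicit before invoking \factref{fact:lap} (a nice touch, since the scale of $X$ is data-dependent), whereas the paper folds this into its three-way union bound.
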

\begin{proof}

First, using \factref{fact:lap}, for any $\gamma>0$, we have that, 
\begin{align*}
 \PPr{\vert X \vert \geq \frac{2(4 \rho \calA(D)+4\tau+\Delta_f)}{\eps} \cdot \gamma} &= \exp(-\gamma)  
\end{align*}
$\calA_f$ is a $(\rho,\tau,\delta/2)$-approximation of $f$ so for any $D \in \calD$, we have that $\calA_f(D) \leq (1+\rho) f(D) + \tau $ with probability $1-\delta/2$. Since $0 < \rho < 1/2$ we have $ (1+\rho) f(D) + \tau  \leq 2f(D)+\tau$.
Therefore, by a union bound,  
\begin{align*}
    &\Pr\bigl[\bigl(\calA_f(D) > (1+\rho) f(D) + \tau \bigr) \vee \bigl(\calA_f(D) < (1-\rho) f(D) - \tau \bigr) \vee  \bigl( \vert X \vert \\&\geq \frac{2(4 \rho \calA_f(D)+4\tau+\Delta_f)}{\eps} \cdot \gamma \bigr)\bigr] \leq \delta/2+\exp(-\gamma)  
\end{align*}
Thus, with probability $1-\exp(-\gamma)-\delta/2$, we have that 
\begin{align}\label{eq:af}
(1-\rho)f(D) - \tau \leq \calA_f(D) \leq (1+\rho) f(D) + \tau \leq 2f(D) + \tau    
\end{align}
{and} 
\begin{align}\label{eq:lapx}
\vert X \vert < \frac{2(4 \rho \calA_f(D)+4\tau+\Delta_f)}{\eps} \cdot \gamma    
\end{align}
By plugging in Eq.~\ref{eq:af} into Eq.~\ref{eq:lapx}, we have that with probability $1-\exp(-\gamma)-\delta/2$,
\begin{align*}
\vert X \vert < \frac{2(4 \rho (2f(D)+\tau)+4\tau+\Delta_f)}{\eps} \cdot \gamma    
\end{align*}
Overall, this means that with probability $1-\exp(-\gamma)-\delta/2$,
\begin{align*}
    &(1-\rho)f(D) -  \tau- \frac{2(4 \rho (2f(D)+\tau)+4\tau+\Delta_f)}{\eps} \cdot \gamma  \leq  \calA'_f(D) \\&\leq (1+\rho)f(D) + \tau+ \frac{2(4 \rho (2f(D)+\tau)+4\tau+\Delta_f)}{\eps} \cdot \gamma
\end{align*}
Grouping the like terms together gives the theorem statement. 
 \end{proof}

\dptransform*

The proof of \thmref{thm:dp-transform} is in \appref{epsdp-eps}. At a high level our idea is to define $\calA'_f(D) = \frac{\lceil K \calA_f(D) \rceil }{KM}$. The rounding step introduces a small additive error term $\leq \frac{1}{KM}$ and ensures that $\calA'_f(D)$ now has bounded range $\vert \calR\vert \leq (M+1)K$. Since $\calA'_f$ has bounded range we can apply a folklore result (see~\thmref{thm:epsdel-eps}) to transform this $(\eps,\delta)$-DP algorithm into an $\eps$-DP algorithm. 

\subsection{Achieving Pure DP for Approximation Algorithms with Zero Failure Probability}  \seclab{sec:approx-zero-delta}

In this section we show how one can achieve pure differential privacy ($\delta = 0$) when we have a tunable $(\alpha,\kappa, 0)$-approximation algorithm. The basic framework is the same except that we use the Cauchy distribution instead of Laplace when applying the Smooth Sensitivity framework --- see \thmref{thm:smooth:laplace}. Since we assume $\delta=0$ in this section we will sometimes simplify notation and write $T(n,\alpha,\kappa)$ (resp. $S(n,\alpha,\kappa)$) instead of $ T(n,\alpha,\kappa, 0) $ (resp. $S(n,\alpha,\kappa,0)$).

\begin{algorithm}[!htb]
\caption{$\eps$-differentially private framework for tunable deterministic approximation algorithms}
\alglab{alg:fptas-DP}
\begin{algorithmic}[1]
\Require{Input set $D$, accuracy parameter $\alpha \in (0,1)$, differential privacy parameter $\eps$, approx. algorithm $\calA_f$. }
\State{Let $x_A:= \calA_f(D,\rho,\tau,0)$ where $\rho:= \frac{\eps \alpha}{36}$} and $\tau:=\kappa$.
\State{\Return $x_A+X$ where $X \sim \C\left(\frac{6(4\rho x_A +\Delta_f)}{\eps}\right)$} 
\end{algorithmic}
\end{algorithm}

\begin{remark}
When $\calA_f$ is a PTAS, by definition, the output of $\calA_f$ is an $(\alpha,0,0)$-approximation of $f$ running in time $T(n, \alpha, 0) = \poly(n, 1/\alpha)$. Applying \thmref{thm:main-2} with for any $\alpha >0$ we obtain a private $\left(\alpha, O\left(\frac{\Delta_f }{\eps }\right) ,9/10 \right)$-approximation with polynomial running time $\poly(n, 1/\eps, 1/\alpha)$.
\end{remark}

\thmmainfptas*
\begin{proof}
$\calA_f'$ is defined in \algref{alg:fptas-DP} --- we first run $\calA_f(D,\rho,\kappa)$ where $\rho := \frac{\eps\alpha}{36}$ and then we add noise proportional to the standard Cauchy distribution. Thus, the resource used will be $R(n,\rho,\kappa)$.

The privacy guarantee follows from \lemref{lem:dp-fptas}, and the accuracy guarantee follows from \lemref{lem:dp-acc-fptas}. 
\end{proof}

\begin{lemma}
\lemlab{lem:smooth-bound-fptas}
Suppose that $\calA_f$ outputs a $(\rho,\tau,0)$-approximation where $0<\rho < 1/2$ of a function $f:\calD \to \mathbb{R}^+$ with global sensitivity $\Delta_f$. 
Then the function $S_f(D)= 4\rho \calA_f(D)+4\tau +\Delta_f$ is a  $\beta$-smooth upper bound for $\calA_f$ where $\beta \geq 6\rho $.

\end{lemma}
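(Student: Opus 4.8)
The plan is to mirror the proof of \lemref{lem:smooth-bound} almost verbatim, exploiting the key simplification that a $(\rho,\tau,0)$-approximation is deterministic and \emph{always} obeys the sandwich $(1-\rho)f(D)-\tau \leq \calA_f(D) \leq (1+\rho)f(D)+\tau$; hence $\calA_f$ itself can play the role that the clipped surrogate $g_{f,R}$ played there, and no hybrid/coupling argument is needed. So I first record the two elementary consequences of the approximation guarantee that drive everything: for every $D$ we have $(1-\rho)f(D)-\tau \leq \calA_f(D) \leq (1+\rho)f(D)+\tau$, and since $0<\rho<1/2$ this implies $\tfrac12 f(D)-\tau \leq \calA_f(D)$, hence $f(D) \leq 2(\calA_f(D)+\tau)$, and also $f(D) \leq \tfrac{1}{1-\rho}(\calA_f(D)+\tau) \leq (1+2\rho)(\calA_f(D)+\tau)$ (the last step again using $\rho<1/2$).

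To verify Condition~\ref{it:smooth-1} of \defref{def:smooth-ub} (that $S_f(D) \geq LS_{\calA_f}(D)$), I would fix a neighbor $D'$, assume without loss of generality $f(D) \geq f(D')$, and bound $|\calA_f(D)-\calA_f(D')| \leq (1+\rho)f(D)-(1-\rho)f(D')+2\tau \leq \Delta_f + \rho(f(D)+f(D'))+2\tau \leq \Delta_f + 2\rho f(D) + 2\tau$; substituting $f(D) \leq 2(\calA_f(D)+\tau)$ yields $\leq 4\rho\calA_f(D)+4\rho\tau+2\tau+\Delta_f \leq 4\rho\calA_f(D)+4\tau+\Delta_f = S_f(D)$, using $4\rho\tau \leq 2\tau$ for $\rho<1/2$.

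For Condition~\ref{it:smooth-2} ($S_f(D) \leq e^\beta S_f(D')$ for neighbors $D\sim D'$), I would start from $S_f(D) = 4\rho\calA_f(D)+4\tau+\Delta_f \leq 4\rho(1+\rho)f(D)+4\rho\tau+4\tau+\Delta_f$, then replace $f(D) \leq f(D')+\Delta_f$ and then $f(D') \leq (1+2\rho)(\calA_f(D')+\tau)$, and collect the $\Delta_f$ and $\tau$ terms exactly as in \lemref{lem:smooth-bound}. The arithmetic reduces to the two facts $4\rho(1+\rho)(1+2\rho) \leq 4\rho(1+4\rho)$ (true since $(1+\rho)(1+2\rho)=1+3\rho+2\rho^2 \leq 1+4\rho$ when $\rho<1/2$) and $1+6\rho \leq e^{6\rho}$, so that the whole expression is dominated by $(1+6\rho)\bigl(4\rho\calA_f(D')+\Delta_f+4\tau\bigr) \leq e^{6\rho} S_f(D')$; this gives $\beta = 6\rho$, and since $S_f$ does not depend on $\beta$ and a larger $\beta$ only weakens Condition~\ref{it:smooth-2}, the claim holds for every $\beta \geq 6\rho$.

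I do not anticipate a genuine obstacle here: the only care required is tracking the additive $\tau$ and $\Delta_f$ terms through the chain of inequalities and invoking $\rho<1/2$ at the two points where a $\rho^2$ term must be absorbed into a linear term — precisely the bookkeeping already done in \lemref{lem:smooth-bound}, with $g_{f,R}$ replaced throughout by $\calA_f$ (and the failure-probability accounting removed, since here $\delta=0$).
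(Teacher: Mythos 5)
Your proposal is correct and takes exactly the approach the paper intends: the paper's own ``proof'' of \lemref{lem:smooth-bound-fptas} is literally the single sentence ``The proof remains the same as in \lemref{lem:smooth-bound},'' and your argument is precisely that proof with $g_{f,R}$ replaced by the deterministic $\calA_f$ (which always lies in the sandwich, so the clipping/hybrid bookkeeping disappears), with the same two uses of $\rho<1/2$ to absorb the quadratic terms and the same final step $1+6\rho\le e^{6\rho}$.
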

The proof remains the same as in \lemref{lem:smooth-bound}. Applying \lemref{lem:smooth-bound-fptas} to the theorem calibrating noise to smooth bounds on the smooth sensitivity~\cite{NissimRS07} we show that the \algref{alg:fptas-DP} preserves privacy below.

\begin{lemma}\lemlab{lem:dp-fptas}
\algref{alg:fptas-DP} is $\eps$-differentially private.
\end{lemma}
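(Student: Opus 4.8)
The plan is to invoke the smooth-sensitivity mechanism of \cite{NissimRS07} (\thmref{thm:smooth:laplace}, part~1) as a black box, feeding it the $\beta$-smooth upper bound established in \lemref{lem:smooth-bound-fptas}. Because $\calA_f$ is deterministic and has zero failure probability, the argument is a strict simplification of the proof of \lemref{lem:dp-fpras}: there is no need to pass through a surrogate function $g_{f,R}$ or to run a hybrid argument across a bad event, since $\calA_f$ is itself a deterministic function whose local sensitivity is dominated pointwise by $S_f(D)=4\rho\calA_f(D)+4\tau+\Delta_f$, and \algref{alg:fptas-DP} releases exactly $\calA_f(D)$ plus noise calibrated to $S_f(D)$.

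Concretely, first I would apply \lemref{lem:smooth-bound-fptas}: running $\calA_f$ with multiplicative parameter $\rho:=\frac{\eps\alpha}{36}$, additive parameter $\tau:=\kappa$, and failure probability $0$ yields a $(\rho,\tau,0)$-approximation of $f$, so $S_f(D)=4\rho\calA_f(D)+4\tau+\Delta_f$ is a $\beta$-smooth upper bound on $LS_{\calA_f}$ for every $\beta\ge 6\rho$. Next I would match the standard Cauchy distribution $\C(1)$ to the noise family $h(z)\propto\frac{1}{1+|z|^{\lambda}}$ of \thmref{thm:smooth:laplace} at $\lambda=2$: the Cauchy density is $\frac{1}{\pi(1+z^2)}$, so this is an exact match and $\lambda=2>1$. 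Part~1 of \thmref{thm:smooth:laplace} then states that $D\mapsto \calA_f(D)+\frac{2(\lambda+1)S_f(D)}{\eps}\cdot\eta=\calA_f(D)+\frac{6 S_f(D)}{\eps}\cdot\eta$ with $\eta\sim\C(1)$ is $\eps$-differentially private, provided $\beta\le\frac{\eps}{2(\lambda+1)}=\frac{\eps}{6}$.

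It then remains only to verify the smoothness condition and the identification with \algref{alg:fptas-DP}. Taking $\beta=6\rho$, the condition $\beta\le\eps/6$ reads $6\rho=\frac{\eps\alpha}{6}\le\frac{\eps}{6}$, which holds since $\alpha\in(0,1)$. And by the scaling property of the Cauchy distribution, $\frac{6 S_f(D)}{\eps}\cdot\eta$ with $\eta\sim\C(1)$ has the same law as $\C\left(\frac{6 S_f(D)}{\eps}\right)$, where $S_f(D)=4\rho x_A+4\tau+\Delta_f$ and $x_A=\calA_f(D,\rho,\tau,0)$, i.e. exactly the Cauchy noise prescribed in line~2 of \algref{alg:fptas-DP}. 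Hence \algref{alg:fptas-DP} is $\eps$-DP.

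I do not expect a genuine obstacle here: all of the substance lives in \lemref{lem:smooth-bound-fptas}, whose proof the paper notes mirrors that of \lemref{lem:smooth-bound}, and the present lemma is a direct plug-in. The only points needing care are bookkeeping ones — choosing $\lambda=2$ so that the constants $2(\lambda+1)=6$ and $\eps/(2(\lambda+1))=\eps/6$ are precisely those appearing in the mechanism, and observing that, unlike the $(\eps,\delta)$ case, no failure term enters: with $\delta=0$ the quantity analyzed by \thmref{thm:smooth:laplace} coincides with the algorithm's actual output with probability one, so we obtain \emph{pure} $\eps$-DP with no $\bigl(1+e^{\eps/2}\bigr)$ blow-up.
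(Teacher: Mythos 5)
Your proof is correct and follows exactly the paper's own argument: invoke \lemref{lem:smooth-bound-fptas} to get the $\beta$-smooth upper bound, then apply part~1 of \thmref{thm:smooth:laplace} with $\lambda=2$ and check $6\rho=\eps\alpha/6\le\eps/6$. One minor note: you identify the scale as $\frac{6(4\rho x_A+4\tau+\Delta_f)}{\eps}$ and call it ``exactly the noise in line~2 of \algref{alg:fptas-DP}'', but line~2 of the algorithm as printed omits the $4\tau$ term --- your expression matches the $S_f$ used in the paper's own proof text, so this is a typo in the algorithm statement rather than in your argument.
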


\begin{proof}
We first observe that by \lemref{lem:smooth-bound-fptas}, $S_f(D)= 4\calA_f(D)+4\tau + \Delta_f$ is a  $\beta$-smooth upper bound for $\calA_f$. Recall that $\rho:=\frac{\eps \alpha }{36}$, thus we can apply \thmref{thm:smooth:laplace} (with $\lambda=2$) where $6\rho \leq \beta\leq \frac{\eps}{6}$ and conclude that it is sufficient to add noise proportional to $\C\left(\frac{2(2+1)S_f(x)}{\eps}\right)=\C\left(\frac{6(4\rho \calA_f(D)+4\tau +\Delta_f)}{\eps}\right)$ to preserve $\eps$-privacy.

\end{proof}

 \begin{fact}\factlab{fact:cauch}
 If $Y \sim \C(x;0,b)$, then  $\Pr[ \vert Y \vert \geq \ell b] = 1-\frac{2\tan^{-1}(\ell)}{\pi}$.
 \end{fact}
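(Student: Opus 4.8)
The plan is to compute the tail probability directly from the density of the Cauchy distribution, which by the definition given in the excerpt (location parameter $x_0=0$, scale $b>0$) is $h(x) = \frac{1}{\pi b}\cdot\frac{b^2}{x^2+b^2} = \frac{b}{\pi(x^2+b^2)}$. First I would exploit the symmetry $h(x)=h(-x)$ to write $\Pr[|Y|\ge \ell b] = 2\Pr[Y\ge \ell b] = 2\int_{\ell b}^{\infty}\frac{b}{\pi(x^2+b^2)}\,dx$, which is valid for $\ell\ge 0$ (the only regime in which the statement is used).

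Next I would evaluate the integral via the substitution $x = b\tan\theta$, so that $dx = b\sec^2\theta\,d\theta$ and $x^2+b^2 = b^2\sec^2\theta$; the integrand collapses to $\frac{1}{\pi}\,d\theta$, and the limits $x=\ell b$ and $x\to\infty$ become $\theta=\tan^{-1}(\ell)$ and $\theta\to \pi/2$. Hence $\int_{\ell b}^{\infty}\frac{b}{\pi(x^2+b^2)}\,dx = \frac{1}{\pi}\left(\frac{\pi}{2}-\tan^{-1}(\ell)\right)$, and multiplying by $2$ gives $\Pr[|Y|\ge \ell b] = 1 - \frac{2\tan^{-1}(\ell)}{\pi}$, as claimed. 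Equivalently, one could just quote the standard antiderivative $\int \frac{b}{\pi(x^2+b^2)}\,dx = \frac{1}{\pi}\tan^{-1}(x/b)+C$, i.e. that the Cauchy CDF is $F(t) = \frac12 + \frac1\pi\tan^{-1}(t/b)$, and compute $\Pr[|Y|\ge \ell b] = 1 - (F(\ell b) - F(-\ell b)) = 1 - \frac{2}{\pi}\tan^{-1}(\ell)$ using $\tan^{-1}(-\ell) = -\tan^{-1}(\ell)$.

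There is essentially no obstacle here: the statement is a routine one-variable integral, and the only things to be mild about are (i) confirming the density matches the excerpt's parametrization (it does, after cancelling one factor of $b$), and (ii) noting the formula is intended for $\ell\ge0$ so that $\tan^{-1}(\ell)\in[0,\pi/2)$ and the right-hand side lies in $(0,1]$. I would present the substitution-based computation as the cleanest self-contained argument.
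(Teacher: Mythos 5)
Your computation is correct: the substitution $x=b\tan\theta$ (equivalently, quoting the Cauchy CDF $F(t)=\tfrac12+\tfrac1\pi\tan^{-1}(t/b)$ and using symmetry) gives exactly $\Pr[|Y|\ge \ell b]=1-\tfrac{2}{\pi}\tan^{-1}(\ell)$ for $\ell\ge 0$. The paper states this as a fact without proof, treating it as a standard property of the Cauchy distribution, so your routine verification is precisely what is implicitly intended and there is nothing to add.
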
 
 
\begin{lemma}\lemlab{lem:dp-acc-fptas}
For all $\gamma>6.5$, with probability at least $9/10$, 
\begin{align*}
&\left(1-\rho \left(1+ \frac{48 \gamma }{\eps}\right)\right)f(D) - \frac{24  (\rho+1) \gamma \tau}{\eps} - \frac{6\Delta_f}{\eps}\cdot \gamma \leq \calA'_f(D) \\&\leq  \left(1+\rho \left(1+ \frac{48 \gamma }{\eps}\right)\right)f(D) + \frac{24  (\rho+1) \gamma \tau}{\eps}+\frac{6\Delta_f}{\eps}\cdot \gamma 
\end{align*}
\end{lemma}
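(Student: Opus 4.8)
This is the accuracy lemma for Algorithm~\ref{alg:fptas-DP}, so the approach is the exact analogue of the proof of \lemref{lem:dp-acc-fpras}, only with the Cauchy tail bound (\factref{fact:cauch}) replacing the Laplace tail bound (\factref{fact:lap}). First I would fix $\gamma > 6.5$ and use \factref{fact:cauch} to control the additive noise: since $X \sim \C\left(\frac{6(4\rho x_A + \Delta_f)}{\eps}\right)$ and $x_A = \calA_f(D,\rho,\tau,0)$, we get $\PPr{|X| \geq \frac{6(4\rho \calA_f(D) + \Delta_f)}{\eps}\cdot \gamma} = 1 - \frac{2\tan^{-1}(\gamma)}{\pi}$, and for $\gamma > 6.5$ this probability is below $1/10$ (since $\tan^{-1}(6.5) > 1.42$, so $1 - \frac{2(1.42)}{\pi} < 1/10$). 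Note here that $\calA_f$ is deterministic with $\delta = 0$, so unlike the randomized case there is no $\delta/2$ failure event to union-bound over — the approximation guarantee $(1-\rho)f(D) - \tau \leq \calA_f(D) \leq (1+\rho)f(D)+\tau$ holds with probability $1$.

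Second, conditioned on the complement of the noise-tail event (probability $\geq 9/10$), I would substitute the deterministic bound $\calA_f(D) \leq (1+\rho)f(D) + \tau \leq 2f(D) + \tau$ (using $0 < \rho < 1/2$) into the noise bound to get $|X| < \frac{6(4\rho(2f(D)+\tau) + \Delta_f)}{\eps}\cdot \gamma = \frac{6(8\rho f(D) + 4\rho\tau + \Delta_f)}{\eps}\cdot\gamma$. Combining with $(1-\rho)f(D) - \tau \leq x_A \leq (1+\rho)f(D)+\tau$ and $\calA'_f(D) = x_A + X$ yields
\[
(1-\rho)f(D) - \tau - \frac{6(8\rho f(D) + 4\rho\tau + \Delta_f)\gamma}{\eps} \leq \calA'_f(D) \leq (1+\rho)f(D) + \tau + \frac{6(8\rho f(D) + 4\rho\tau + \Delta_f)\gamma}{\eps}.
\]
Finally I would regroup: the $f(D)$ terms give $\rho(1 + \frac{48\gamma}{\eps})f(D)$; the $\tau$ terms give $\tau(1 + \frac{24\rho\gamma}{\eps}) \leq \frac{24(\rho+1)\gamma\tau}{\eps}$ (absorbing the lone $\tau$ crudely using $\gamma > 6.5$, $\eps$ bounded appropriately — or more simply noting $1 \leq \frac{24\gamma}{\eps}$ in the regime of interest, though one may instead just keep $\tau(\frac{24\rho\gamma}{\eps}+1)$ and bound it by $\frac{24(\rho+1)\gamma\tau}{\eps}$ since $\frac{24\gamma}{\eps} \geq 1$); and the $\Delta_f$ term is exactly $\frac{6\Delta_f\gamma}{\eps}$. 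This reproduces the claimed inequality.

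**Main obstacle.** The only genuinely non-routine point is verifying that $\gamma > 6.5$ suffices to push the Cauchy tail probability below $1/10$ — this is where the specific constant $6.5$ comes from, via $1 - \frac{2\tan^{-1}(6.5)}{\pi} < \frac{1}{10} \iff \tan^{-1}(6.5) > \frac{9\pi}{20} \approx 1.4137$, which holds since $\tan(1.4137) \approx 6.1 < 6.5$. Everything else is bookkeeping: the regrouping of multiplicative/additive terms and the crude absorption of constants into the stated form. As in \lemref{lem:dp-acc-fpras}, I would not belabor the constant-chasing, since the lemma is stated with explicit (if loose) constants and the goal is only to exhibit \emph{some} bound of the claimed shape.
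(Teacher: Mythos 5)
Your proposal is correct and follows essentially the same steps as the paper's proof: invoke the Cauchy tail bound (\factref{fact:cauch}) with $\gamma>6.5$ to get failure probability below $1/10$, substitute the deterministic guarantee $(1-\rho)f(D)-\tau \le \calA_f(D) \le 2f(D)+\tau$ into the noise scale, and regroup. One bookkeeping note: the paper's proof (and \lemref{lem:dp-fptas}) scales the Cauchy noise by $\frac{6(4\rho x_A + 4\tau + \Delta_f)}{\eps}$, whereas the display of \algref{alg:fptas-DP} omits the $4\tau$ term — you read the algorithm as printed, the paper uses the $4\tau$; this changes the $\tau$ coefficient from $1+\frac{24\rho\gamma}{\eps}$ to $1+\frac{24(\rho+1)\gamma}{\eps}$, but in both cases the stated bound $\frac{24(\rho+1)\gamma\tau}{\eps}$ is recovered by the same crude absorption of the lone $\tau$ (using $24\gamma/\eps \ge 1$), so your proof reaches the correct conclusion.
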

\begin{proof}
First, we invoke \factref{fact:cauch} below,
\begin{align*}
    \Pr\left[\vert X \vert \geq  \frac{6(4\rho \calA_f(D)+4\tau+\Delta_f)}{\eps} \cdot \gamma \right] = 1 - \frac{2\tan^{-1}(\gamma)}{\pi} \leq \frac{1}{10}
\end{align*}
where the final inequality comes from using the fact that $\gamma>6.5$. 
In other words, with probability $\geq 9/10$, 
\begin{align}\label{eq:x-det}
    \vert X \vert \leq  \frac{6(4\rho \calA_f(D)+4\tau+\Delta_f)}{\eps} \cdot \gamma 
\end{align}
$\calA_f$ is a $(\rho,\tau,0)$-approximation of $f$ so for any $D \in \calD$, we have that $\calA_f(D) \leq (1+\rho) f(D) + \tau $. Since $0 < \rho < 1/2$ we have $ (1+\rho) f(D) + \tau  \leq 2f(D)+\tau$.

By plugging in the relation $\calA_f(D) \leq 2f(D) + \tau $ into Eq.~\ref{eq:x-det}, we have that with probability at least $9/10$, $$\vert X \vert \leq  \frac{6(8\rho f(D)+4\rho\tau+4\tau+ \Delta_f)}{\eps} \cdot \gamma$$ 
Thus with probability at least 9/10, 
\begin{align*}
    &(1-\rho)f(D)-\tau - \frac{6(8\rho f(D)+4\rho\tau+4\tau+ \Delta_f)}{\eps} \cdot \gamma \leq \calA'_f(D) \\&\leq  (1+\rho)f(D)+\tau + \frac{6(8\rho f(D)+4\rho\tau+4\tau+ \Delta_f)}{\eps} \cdot \gamma
\end{align*}
Rearranging the like terms together in the above expression completes the proof. 
\end{proof}

\begin{remark}
For simplicity, we have chosen to sample from the standard cauchy distribution $\lambda=2$, more generally, if we sample noise with density $h(z) \propto \frac{1}{1+\vert z \vert^{\lambda}}$, where $\lambda=c$, then with probability $1-\delta$, $\gamma=\frac{1}{\delta^{1/c}}$ in \lemref{lem:dp-acc-fptas}. 
\end{remark}

\paragraph{Application to the Knapsack Problem}
As a fun example we consider the knapsack problem. The knapsack problem is well known to be NP-Hard, but also admits an FPTAS. To define an instance of the knapsack problem we have a maximum weight capacity $W$ for the knapsack and $n$ items each with a value $v_{max} \geq v_i \geq 0$ and a weight $w_i \geq 0$. The goal is to find a subset $S \subseteq [n]$ of items to put in the knapsack maximizing the total value $v(S) = \sum_{i \in S} v_i$ subject to the constraint that the total weight $w(S) = \sum_{i \in S} w_i$ does not exceed our capacity i.e., $w(S) \leq W$. 

For the purpose of this illustration let's fix the capacity $W$ and weights $w_1,\ldots, w_n$ and let $f(v_1,\ldots,v_n)$ denote the value of the optimal knapsack solution given values $v_1,\ldots, v_n$. Let's say that two knapsack instances $(W,v_1,\ldots,v_n, w_1,\ldots,w_n)$ and $(W, v_1',\ldots, v_n', w_1,\ldots, w_n)$ are neighbors if $\sum_i \|v_i-v_i'| \leq 1$. Thus, we are viewing the exact value of each item as sensitive and the goal of differential privacy is to prevent an attacker from inferring these sensitive values exactly. Observe that the global sensitivity of $f$ is upper bounded by $\Delta_f \leq \max_{v \sim v'} \max_{S \subseteq [n]} \left|v(S)-v'(S) \right| \leq 1$\footnote{We could also define neighboring knapsack instances such that we can completely replace the value of any item i.e.,  $v$ and $v'$ are neighbors if there exists some index $i \in [n]$ such that $v_i \neq v_i'$ and $v_j = v_j'$ for all $j \neq i $. However, in this case we can we would have large global sensitivity $\Delta_f =v_{max}$. Thus, we won't be able to design an accurate differentially private approximation even if we are willing to solve the NP-Hard knapsack problem exactly. }. 

Since there is an FPTAS algorithm for Knapsack we can find a non-private approximation algorithm $\calA_f(\vec{v}, \alpha, \kappa = 0)$ running in time $T(n,\alpha) = \poly(n,1/\alpha)$. If we apply \thmref{thm:main-2}  then for any target $\alpha'$ our $\eps$-DP algorithm $\calA_f'$ runs in time $\poly(n,1/\eps,1/\alpha)$ and solves Knapsack with additive error $\O{1/\eps}$ and multiplicative error $\alpha'$ with probability at least $9/10$. If we don't require pure DP then we can also apply \thmref{thm:main} then for any target $\alpha'$ our algorithm $\calA_f'$ runs in time $\poly(n,1/\eps, 1/\alpha, \log(1/\delta))$ and solves Knapsack with probability at least $1-\delta - \exp(-\gamma)$ with additive error at most $\O{\gamma/\eps}$ and multiplicative error $\alpha'$.

\section{Private Sublinear-time Algorithms}\seclab{sec:sublinear}
We present a variety of DP results for sublinear-time graph algorithms by directly applying our DP black-box transformation (see~\thmref{thm:main} and \thmref{thm:dp-transform}). Let $\mathcal{G}_n$ denote the set of all $n$-node graphs. For a graph $G \in \mathcal{G}_n$ on $m$ edges, we denote $\bar{d}$ as the average degree of $G$. Graphs $G_1=(V, E_1)$, $G_2=(V, E_2)$ are \emph{node-neighboring}, denoted by $G_1\sim_v G_2$, if there exists a vertex $v\in V$ such that $E_1(V\setminus\{v\})=E_2(V\setminus\{v\})$. Graphs $G_1$ and $G_2$ are \emph{edge-neighboring} i.e., $G_1 \sim_e G_2$ if there exists an edge $e$ such that $E_1 \setminus{\{e\}}=E_2 \setminus{\{e\}}$.

We first define the sublinear models that we will be working with in the following problems. In the \emph{adjacency list query} model the following queries can be answered in constant time --- (1) degree queries: given $v \in V$, output $\deg(v)$. (2) neighbor queries: given $v \in V$, and $i \in [n]$, output the $i$-th neighbor of $v$ or $\perp$ if $i>\deg(v)$. In the \emph{adjacency matrix query} model the following queries can be answering in constant time--- pair queries: given $u,v \in V$, output whether $(u,v)$ is an edge in $G$ or not. The \emph{general query} model allows for degree queries, neighbor queries, and pair queries. 

Typically the success probability for non-DP sublinear-time algorithms is stated in terms of constants (i.e., $\geq 2/3$). We implicitly apply the standard median trick to boost the probability of success to $1-\delta$ through $\O{\log\frac{1}{\delta}}$ parallel iterations, and state the non-DP results with success probability $1-\delta$ in the following exposition. 
\paragraph{Estimating the number of Triangles. }Given a graph $G$, we study the problem of estimating the number of triangles in the general query model. \cite{EdenLRS17} gave an approximation algorithm for this problem with an expected query complexity of $\O{(\frac{n}{t^{1/3}} + \frac{m^{3/2}}{t})\poly(\log(n),1/\alpha)\log(1/\delta)}$. 
In order to achieve a high probability bound on the query-complexity, we run $\Theta(\log(1/\delta))$ instances of the algorithm in parallel and return the output of the instance that terminates first. 

\begin{theorem}~\cite{EdenLRS17}
Let $t$ be the number of triangles in graph $G$. For any $0<\alpha<1$, there is an algorithm that makes $\O{(\frac{n}{t^{1/3}} + \frac{m^{3/2}}{t})\poly(\log(n),1/\alpha)\log(1/\delta)}$ queries and outputs an estimate $\hat{t}$ such that with probability at least $1-\delta$, 
$$ (1-\alpha) \cdot t \leq \hat{t} \leq (1+\alpha)\cdot t $$
\end{theorem}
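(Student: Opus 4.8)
The plan is to derive this statement from the un-amplified version of the algorithm of \cite{EdenLRS17}, which makes $q := \O{(\frac{n}{t^{1/3}} + \frac{m^{3/2}}{t})\poly(\log n, 1/\alpha)}$ queries \emph{in expectation} and outputs an estimate $\hat t$ satisfying $(1-\alpha)t \le \hat t \le (1+\alpha)t$ with probability at least $2/3$. Two upgrades are needed to reach the stated form: the success probability must be boosted from a constant to $1-\delta$, and the bound on the number of queries must hold with high probability rather than merely in expectation, since the statement asserts a query bound of $\O{q \log(1/\delta)}$ rather than an expected bound.

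First I would truncate: let $\calB'$ be the algorithm that simulates the base algorithm but halts and returns an arbitrary value (say $0$) once it has made $c q$ queries, for a suitably large constant $c$. By Markov's inequality the base algorithm exceeds $c q$ queries with probability at most $1/c$, so $\calB'$ still outputs a valid $(1\pm\alpha)$-estimate with probability at least $2/3 - 1/c \ge 3/5$, while \emph{always} making at most $c q$ queries. Then I would run $k = \Theta(\log(1/\delta))$ independent copies of $\calB'$ and output the median $\hat t$ of their outputs. Letting $Z_i$ be the indicator that copy $i$ returns a value in $[(1-\alpha)t,(1+\alpha)t]$, the $Z_i$ are i.i.d.\ Bernoulli with mean at least $3/5$, so a Chernoff bound gives $\PPr{\sum_i Z_i \le k/2} \le \exp(-\Omega(k)) \le \delta$ once the hidden constant in $k$ is large enough. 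Whenever strictly more than half of the copies land in $[(1-\alpha)t,(1+\alpha)t]$, so does the median, hence $\hat t$ is a $(1\pm\alpha)$-approximation with probability at least $1-\delta$. The total number of queries is at most $k \cdot c q = \O{q\log(1/\delta)} = \O{(\frac{n}{t^{1/3}} + \frac{m^{3/2}}{t})\poly(\log n, 1/\alpha)\log(1/\delta)}$ deterministically, which is exactly the claimed bound.

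The main obstacle is the interaction between the running-time truncation and the correctness guarantee: conditioning on ``the copy terminated early'' could in principle bias toward the failure event, so one cannot simply return ``the output of the instance that terminates first'' without an argument. Truncating every copy at the \emph{same} deterministic budget $c q$ sidesteps this, since correctness and query count are then analyzed independently, the only price being the additive $1/c$ loss in success probability, absorbed by choosing $c$ a large constant. (If one prefers the round-robin ``first to finish'' implementation, the same analysis goes through by running the $k$ copies in round-robin, waiting until a majority of them have halted, and taking the median over those, now using a Chernoff bound on the number of copies that halt within $c q$ queries.) Everything else is the standard median-of-estimates amplification, so no further nontrivial steps are required.
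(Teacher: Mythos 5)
Your proposal is correct and, in fact, more careful than the paper's own one-line justification. The paper states that it runs $\Theta(\log(1/\delta))$ instances ``in parallel and return[s] the output of the instance that terminates first.'' As you rightly flag, that selection rule is not automatically sound: conditioning on an instance having terminated earliest can bias toward the failure event, and patching this with a union bound over the $\Theta(\log(1/\delta))$ copies would force each copy's failure probability down to $O(\delta/\log(1/\delta))$, costing an extra $\log(1/\delta)$ factor in queries beyond what the theorem claims. Your route is the standard rigorous one: truncate each \emph{unamplified} copy at a deterministic budget $cq$ (Markov bounds the truncation's additive loss in per-copy success probability by $1/c$), then median-aggregate $\Theta(\log(1/\delta))$ truncated copies so that Chernoff drives the failure probability to $\delta$. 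This cleanly decouples the query-count analysis from the correctness analysis and produces exactly the claimed worst-case bound of \O{\left(\frac{n}{t^{1/3}} + \frac{m^{3/2}}{t}\right)\poly(\log n,1/\alpha)\log(1/\delta)} queries. Your round-robin variant at the end is also fine, provided the median is taken over all copies that halted within the budget rather than over the single earliest finisher; the paper's phrasing elides this, and your argument supplies the missing step.
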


Observe that the global sensitivity (under edge-DP) of the function that computes the number of triangles is $n-2$, since each edge in an $n$ node graph is incident to at most $n-2$ triangles.

We first show that any $\eps$-edge DP algorithm for estimating the number of triangles in an arbitrary graph must have an $\Omega(n)$ additive error. Consider graphs $G_1$ and $G_2$ such that $G_1:= K_{2,n-2}$ (the complete bipartite graph where the left partite set contains 2 vertices and the right partite set contains $n-2$ vertices), and $G_2$ is $K_{2,n-2}$ with an edge between the two vertices in the left partite set. Clearly, $G_1\sim_e G_2$, and $G_1$ has zero triangles, while $G_2$ has $n-2$ triangles. A non-DP sublinear-time algorithm with multiplicative error may output 0 for $G_1$ and a value in the range $(1\pm \alpha)(n-2)$ for $G_2$, but any $\eps$-DP algorithm must output an answer with an $\Omega(n)$ additive error. We also note that when the number of triangles in the graph is $\Omega(n)$, this additive error may be absorbed into the multiplicative error.     

We apply our black-box transformation (\thmref{thm:main} and \thmref{thm:dp-transform}) to the result of~\cite{EdenLRS17} to get the following $\eps$-DP algorithm.

\begin{corollary}\corlab{cor:triangle-count}
Let $t$ denote the number of triangles in a graph. Then for any integer $K>0$ and for all $c>0,\ \eps>0$,  there exists an algorithm $\calA'$ such that 
\begin{enumerate}
    \item (Privacy) $\calA'$ is $\eps$-edge differentially private where $\eps$ is constant.
    \item (Accuracy) For all graphs $G$ on $n$ vertices, with probability $1-\frac{1}{n^c}-\frac{1}{n^{c-3}(e^\eps -1)/K+1}$,
    $$(1-\alpha')t -   \frac{2c}{\eps}\cdot {n\log(n)} - \frac{1}{Kn^3}\leq \calA'(G) \leq (1+\alpha')t + \frac{2c}{\eps}\cdot {n\log(n)}+ \frac{1}{Kn^3} $$
    where $\alpha' =\alpha \left(1+\frac{\eps}{C'\log(n)}\right)$ for some constant $C'>0$.
    \item (Query) $\calA'$ makes $\O{(\frac{n}{t^{1/3}} + \frac{m^{3/2}}{t})\poly(\log(n),\frac{1}{\alpha\eps})}$ queries. 
\end{enumerate}
\end{corollary}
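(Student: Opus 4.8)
The plan is to instantiate the two-step black-box transformation --- \algref{alg:fpras-DP}/\thmref{thm:main} followed by the postprocessing of \thmref{thm:dp-transform} --- on the triangle-counting estimator of \cite{EdenLRS17}, and track how the generic parameters specialize. The function to be released is $f(G)=t$, the number of triangles; as noted just before the corollary its edge-global-sensitivity is $\Delta_f=n-2$, and its maximum value is bounded by $M:=n^3$. The estimator of \cite{EdenLRS17}, as stated in the theorem preceding the corollary, is a tunable $(\alpha,0,\delta)$-approximation in the sense of \defref{def:tunable}: it has no additive error ($\kappa=0$), and its query bound $Q(n,\alpha,0,\delta)=\O{(\tfrac{n}{t^{1/3}}+\tfrac{m^{3/2}}{t})\poly(\log n,1/\alpha)\log(1/\delta)}$ already holds with probability $1-\delta$ (the $\log(1/\delta)$ factor absorbing both the median-trick amplification and the conversion of the expected-query bound to a high-probability one, cf.\ \remref{rem:median-trick}). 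This is the $\calA_f$ fed to \algref{alg:fpras-DP}.

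First I would apply \thmref{thm:main} with $\kappa=0$, a polynomially small DP-failure parameter $\delta=n^{-\Theta(1)}$, and $\gamma=\Theta(\log n)$, the two chosen so that $\log(4/\delta)=\Theta(\log n)$ and $\gamma\approx\tfrac34\log(4/\delta)$; this makes $\alpha'=\frac{\alpha(\eps+16\gamma)}{12\log(4/\delta)}=\alpha\bigl(1+\tfrac{\eps}{C'\log n}\bigr)$ for a suitable constant $C'>0$. \thmref{thm:main} then yields an $(\eps,\delta(1+e^{\eps/2}))$-DP algorithm $\calA'$ with $\kappa'=0$, additive error $\tfrac{2\Delta_f}{\eps}\gamma=\O{\tfrac{n\log n}{\eps}}$ (at most $\tfrac{2c}{\eps}\,n\log n$ once the constant hidden in $\gamma$ is fixed), accuracy holding with probability $\ge 1-\delta-e^{-\gamma}\ge 1-n^{-c}$, and --- since \algref{alg:fpras-DP} internally runs $\calA_f$ with multiplicative parameter $\rho=\Theta(\tfrac{\eps\alpha}{\log n})$ and failure probability $n^{-\Theta(1)}$ --- query complexity $\O{(\tfrac{n}{t^{1/3}}+\tfrac{m^{3/2}}{t})\poly(\log n,\tfrac{1}{\alpha\eps})}$, which is part~(3).

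Then I would feed $\calA'$ into \thmref{thm:dp-transform} with $M=n^3$ and the parameter $K$ of the corollary, first clamping the output to $[0,M]$ (free for accuracy since $t\in[0,M]$, and it is what makes the range bound $|\calR|\le K(M+1)$ hold). This produces the $\eps$-DP algorithm $\calA''$ of part~(1); the rounding adds $\tfrac{1}{KM}=\tfrac{1}{Kn^3}$ to the additive error, and the extra failure probability is $p=\frac{\delta'K(M+1)}{e^\eps-1+\delta'K(M+1)}$, which with $\delta'=\delta(1+e^{\eps/2})=\Theta(n^{-c})$ and $M+1=\Theta(n^3)$ equals $\frac{1}{n^{c-3}(e^\eps-1)/K+1}$. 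Composing the $n^{-c}$ accuracy-failure term of $\calA'$ with $p$ gives the probability bound and the accuracy window of part~(2), completing the proof.

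The only genuine work is the parameter bookkeeping in the middle step: the choices of $\delta$ and $\gamma$ must \emph{simultaneously} keep $\alpha'$ within a $(1+o(1))$ factor of $\alpha$ (wanting $\gamma\approx\tfrac34\log(4/\delta)$), keep the accuracy-failure probability $\delta+e^{-\gamma}$ at most $n^{-c}$, keep the additive term $\tfrac{2\Delta_f\gamma}{\eps}$ at $\O{\tfrac{n\log n}{\eps}}$, and make $\delta'K(M+1)$ grow like $n^{3-c}$ so that $p$ takes the stated shape. These constraints pull in the same direction and are all satisfied by $\delta=n^{-\Theta(1)}$, $\gamma=\Theta(\log n)$, but pinning down the absolute constants --- notably $C'$ and the constant inside the $\O{\cdot}$ additive term --- requires a short computation; beyond that, the corollary is a direct substitution into \thmref{thm:main} and \thmref{thm:dp-transform}.
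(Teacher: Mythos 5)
Your proposal is correct and follows exactly the route the paper intends: the text immediately preceding the corollary says ``We apply our black-box transformation (\thmref{thm:main} and \thmref{thm:dp-transform}) to the result of~\cite{EdenLRS17},'' and you instantiate precisely that pipeline with the right inputs ($\Delta_f \le n-2$ under edge-neighboring, $M=n^3$, $\kappa=0$, $\delta=n^{-\Theta(1)}$, $\gamma=\Theta(\log n)$, and the median/parallel-instance trick to turn the expected query bound of~\cite{EdenLRS17} into a high-probability one). The only wrinkle, which you flag yourself, is that the absolute constants in the corollary (the $\frac{2c}{\eps}$, the $C'$, and the exact shape of the second failure term) cannot all be hit simultaneously with a single clean choice of $(\delta,\gamma)$ without a small constant-factor fudge --- e.g.\ setting $\gamma=c\log n$ to control the additive error forces $\delta\approx n^{-4c/3}$ for the $\alpha'$ bound, making the Theorem~\ref{thm:main} failure probability $O(n^{-c})$ rather than exactly $n^{-c}$ --- but this is a bookkeeping slack already present in the paper's statement rather than a gap in your argument.
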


\paragraph{Estimating the number of Connected Components. }
Given a graph $G$, we study the problem of estimating the number of connected components in a graph in the adjacency query list model. We recall the result by~\cite{ChazelleRT05}, whose query complexity was later improved by~\cite{BerenbrinkKM14}. 

\begin{theorem}~\cite{ChazelleRT05,BerenbrinkKM14}
Let $\mathsf{C}$ be the number of connected components in a graph with $n$ vertices. Then for $\kappa>0$, there is an algorithm that makes $\O{\frac{1}{\kappa^2}\log(\frac{1}{\kappa})\log(\frac{1}{\delta})}$ queries and with probability at least $1-\delta$, $$ \mathsf{C}-\kappa n \leq \mathsf{\tilde{C}} \leq \mathsf{C} + \kappa n \;,$$
where $\mathsf{\tilde{C}}$ denotes the output of the algorithm.
\end{theorem}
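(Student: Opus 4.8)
The plan is to reduce estimating $\mathsf{C}$ to estimating a simple vertex-indexed sum, apply a sample-and-concentrate argument, and then — for the query bound — replace the naive local exploration by the more careful one of \cite{ChazelleRT05,BerenbrinkKM14}. The starting point is the exact identity $\mathsf{C} = \sum_{v\in V} 1/n_v$, where $n_v$ is the number of vertices in the connected component containing $v$; this holds because a component of size $s$ contributes exactly $s$ equal terms $1/s$. Since $1/n_v$ is expensive to determine when $n_v$ is large, I would truncate at $t_0 := \lceil 2/\kappa\rceil$ and work with $\bar n_v := \min(n_v, t_0)$. Then $\bar n_v = n_v$ whenever $n_v \le t_0$, while $0 \le 1/\bar n_v - 1/n_v < 1/t_0 \le \kappa/2$ otherwise, so summing over $v$ gives $\mathsf{C} \le \widehat{\mathsf{C}} \le \mathsf{C} + \kappa n/2$ where $\widehat{\mathsf{C}} := \sum_v 1/\bar n_v$. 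Thus it suffices to estimate $\widehat{\mathsf{C}}$ to additive error $\kappa n/2$ with probability $1-\delta$.

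For that I would draw $r = \Theta(\kappa^{-2}\log(1/\delta))$ vertices $u_1,\dots,u_r$ independently and uniformly at random, compute $\bar n_{u_i}$ for each by a BFS from $u_i$ that halts as soon as it discovers $t_0$ distinct vertices or exhausts the component, and output $\widetilde{\mathsf{C}} := \tfrac{n}{r}\sum_{i=1}^r 1/\bar n_{u_i}$. Then $\mathbb{E}[\widetilde{\mathsf{C}}] = \widehat{\mathsf{C}}$, and because each summand $1/\bar n_{u_i}$ lies in $[1/t_0, 1]\subseteq[0,1]$, Hoeffding's inequality gives $\Pr\big[\,|\tfrac1r\sum_i 1/\bar n_{u_i} - \widehat{\mathsf{C}}/n| > \kappa/2\,\big] \le 2\exp(-r\kappa^2/2)\le\delta$ for the stated $r$, i.e. $|\widetilde{\mathsf{C}} - \widehat{\mathsf{C}}| \le \kappa n/2$ with probability $1-\delta$. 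Combining this with the truncation bound by the triangle inequality yields $\mathsf{C} - \kappa n \le \widetilde{\mathsf{C}} \le \mathsf{C} + \kappa n$ with probability $1-\delta$, which is the claimed accuracy. (If one prefers to quote the underlying estimator at its natural constant success probability, one boosts by taking $r$ larger or by a median trick.)

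The hard part is the query complexity. A truncated BFS can, in the worst case, scan all $\binom{t_0}{2} = \O{1/\kappa^2}$ edges inside a dense component before it has found $t_0$ vertices or certified exhaustion, so naively the total cost over all $r$ samples is $\O{\kappa^{-4}\log(1/\delta)}$ — a factor roughly $\kappa^{-2}/\log(1/\kappa)$ above the target. To reach $\tO{\kappa^{-2}\log(1/\delta)}$ I would instead use the coin-flipped exploration of \cite{ChazelleRT05}: grow the BFS tree one vertex at a time and, before admitting each new vertex, flip a fair coin, stopping on tails (and also on exhaustion or on reaching $t_0$ vertices), then report a suitably reweighted estimate of $1/n_v$. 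The expected number of vertices explored per sample is then $\O{1}$ (and $\O{\log(1/\kappa)}$ with high probability), which is what drives the improved bound. Two technical points remain, and they are essentially where all the work goes: (i) an amortized accounting, following \cite{BerenbrinkKM14}, showing the total number of degree/neighbor queries across all explorations is $\tO{\kappa^{-2}\log(1/\delta)}$ \emph{independently} of the graph's degrees; and (ii) re-checking that the reweighting keeps the estimator's variance small enough that $r = \Theta(\kappa^{-2}\log(1/\delta))$ samples still suffice for concentration (using a median-of-means bound in place of plain Hoeffding if the reweighted summands are no longer bounded). I expect (i), the degree-independent query accounting, to be the main obstacle.
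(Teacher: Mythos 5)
This statement is a cited result (\cite{ChazelleRT05,BerenbrinkKM14}); the paper reproduces it without proof, so there is no in-paper argument to compare against. Your outline, however, does track the actual structure of the Chazelle--Rubinfeld--Trevisan / Berenbrink--Krayenhoff--Mallmann-Trenn argument, and the parts you carry out in full are correct: the identity $\mathsf{C}=\sum_v 1/n_v$, the truncation at $t_0=\lceil 2/\kappa\rceil$ with $0\le \widehat{\mathsf{C}}-\mathsf{C}\le\kappa n/2$, the unbiased sampling estimator, the Hoeffding bound (valid since each summand lies in $(0,1]$), and the triangle-inequality combination all check out and yield a correct $\pm\kappa n$ estimator with failure probability $\delta$ using $\Theta(\kappa^{-2}\log(1/\delta))$ samples and a deterministic truncated BFS. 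That already proves a version of the theorem with query cost $\O{\kappa^{-4}\log(1/\delta)}$.

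What is not established is the query complexity in the statement, $\O{\kappa^{-2}\log(1/\kappa)\log(1/\delta)}$, and you are explicit that you are deferring exactly the two steps that would close this gap: (i) the degree-independent amortized accounting of neighbor/degree queries, and (ii) the variance and concentration analysis once the deterministic truncated BFS is replaced by the reweighted geometric-stopping exploration (so that plain Hoeffding on $[0,1]$-bounded summands no longer applies). Your diagnosis of where the difficulty lies is accurate --- the na\"{\i}ve coin-flipped CRT exploration as you describe it still pays a factor proportional to the vertex degree per discovered vertex, which is precisely what \cite{BerenbrinkKM14} removes --- but identifying the obstacle is not the same as overcoming it, and you also appeal to a median-of-means fallback without specifying the reweighted estimator or verifying its moment bounds. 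As written this is a correct plan with a correct warm-up, not a proof of the stated query bound.
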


Observe that the global sensitivity of the function computing the number of connected components under edge-DP is 2, since the removal or addition of an edge can increase the number of connected components by at most 2. We apply our blackbox transformation to the result of~\cite{BerenbrinkKM14} to obtain an $\eps$-DP approximation algorithm with the following guarantees.

\begin{corollary}\corlab{cor:conn-comp}
Let the number of connected components of a graph $G$ be denoted as $\textsf{C}(G)$. Then for any integer $K>0$ and for all $c > 0, \eps > 0, \kappa > 0$, there is an algorithm $\calA'$ such that  
\begin{enumerate}
    \item (Privacy) $\calA'$ is $\eps$-differentially private where $\eps$ is constant.
    \item (Accuracy) For all graphs $G$ on $n$ vertices, with probability $1-\frac{1}{n^c} - \frac{1}{n^{c-1}(e^\eps -1)/K + 1}$,
    \begin{align*}
    &\textsf{C}(G) -\kappa n\left(\frac{8c}{\eps}+ \frac{1}{\log(n)}\right)-  \frac{2 c\log(n)}{\eps} -\frac{1}{Kn}\leq \calA'(G) \\&\leq  \textsf{C}(G)+\kappa n\left(\frac{8c}{\eps}+ \frac{1}{\log(n)}\right)+ \frac{2 c\log(n)}{\eps}   +\frac{1}{Kn}
    \end{align*}
    \item (Query) $\calA'$ makes $\O{\frac{\log^3(n)}{\kappa^2} \log\left(\frac{\log(n)}{\kappa} \right)}$ queries. 
 \end{enumerate}
\end{corollary}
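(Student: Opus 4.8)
The plan is to obtain \corref{cor:conn-comp} as a direct two-step instantiation of our black-box framework: first apply \thmref{thm:main} to the non-private sublinear-time component-counting estimator of \cite{ChazelleRT05,BerenbrinkKM14} (stated above, after the median trick, with success probability $1-\delta$ and query complexity $\O{\frac{1}{\kappa^2}\log\frac1\kappa\log\frac1\delta}$) to get an $(\eps,\delta')$-DP algorithm, and then apply the post-processing of \thmref{thm:dp-transform} to upgrade it to a pure $\eps$-DP algorithm. Beyond substitution, the only care needed is in identifying the relevant quantities: $f = \mathsf{C}$, its edge-global sensitivity $\Delta_f = 2$ (as noted just above the corollary), and its maximum value $M := \max_G \mathsf{C}(G) = n$, attained by the empty graph on $n$ vertices.

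First I would recast the non-private estimator as a tunable approximation of $f = \mathsf{C}$ in the sense of \defref{def:tunable}: running it with internal parameter $\kappa/n$ (and median-boosting to failure probability $\delta$) yields a $(0,\kappa,\delta)$-approximation of $\mathsf{C}$ with resource $R(n,\alpha,\kappa,\delta) = \O{\frac{n^2}{\kappa^2}\log\frac n\kappa\log\frac1\delta}$, independent of $\alpha$. I would then invoke \thmref{thm:main} with $\alpha = 0$, with the additive parameter rescaled down by a $\log n$ factor --- i.e.\ the framework's additive input is $\frac{\kappa n}{\log n}$, which is exactly the ``$\tau := \kappa/\log n$'' tuning flagged after \algref{alg:fpras-DP} --- and with $\delta$ and $\gamma$ chosen so that $\delta + \exp(-\gamma) = n^{-c}$. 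Since $\alpha = 0$ this gives $\alpha' = 0$; plugging additive input $\frac{\kappa n}{\log n}$ into $\kappa' = \kappa(\tfrac{8\gamma}{\eps}+1)$ produces $\kappa' = \kappa n\!\left(\tfrac{8c}{\eps}+\tfrac{1}{\log n}\right)$ (up to constants), the extra noise term $\tfrac{2\Delta_f}{\eps}\gamma$ contributes $\O{\tfrac{c\log n}{\eps}}$, accuracy holds with probability $1-n^{-c}$, and the resource becomes $R\!\left(n,0,\tfrac{\kappa n}{\log n},\delta\right) = \O{\tfrac{\log^2 n}{\kappa^2}\log\tfrac{\log n}{\kappa}\log\tfrac1\delta} = \O{\tfrac{\log^3 n}{\kappa^2}\log\tfrac{\log n}{\kappa}}$ --- exactly parts (2) and (3).

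Next I would feed this $(\eps,\delta'')$-DP algorithm, with $\delta'' = \delta(1+e^{\eps/2}) = \Theta(n^{-c})$, into \thmref{thm:dp-transform} using $M = n$ and the given integer $K$. This preserves the accuracy guarantee up to an additional additive $\frac{1}{KM} = \frac{1}{Kn}$ term, costs an extra failure probability $p = \frac{\delta'' K(M+1)}{e^\eps - 1 + \delta'' K(M+1)} = \Theta\!\left(\frac{1}{n^{c-1}(e^\eps-1)/K+1}\right)$ (here $\eps$ constant keeps $p = o(1)$), and yields pure $\eps$-DP. Collecting the privacy, accuracy, and query statements from the two theorems gives the corollary.

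The one genuinely delicate point --- the ``hard part'' such as it is --- is the additive-error bookkeeping. Because the underlying estimator's error is $(\text{parameter})\cdot n$ rather than $(\text{parameter})$, the $\log n$ rescaling of the framework's additive input must be chosen to simultaneously (a) keep $\kappa'$ at the claimed $\Theta(\kappa n/\eps)$ scale, (b) avoid picking up a spurious $n^2$ factor in the query complexity (which would happen if one passed additive input $\kappa n$ directly), and (c) leave $\delta$ small enough that, after the \thmref{thm:dp-transform} step with $M = n$, the two failure-probability contributions combine to exactly $n^{-c} + p$ with the stated dependence on $\eps$ and $K$. Everything else is routine substitution into \thmref{thm:main} and \thmref{thm:dp-transform}.
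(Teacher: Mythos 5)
Your proposal is correct and follows exactly the route the paper intends: black-box application of \thmref{thm:main} with $\alpha=0$, additive input rescaled to $\kappa n/\log n$ (the ``$\tau := \kappa/\log n$'' tuning noted after \algref{alg:fpras-DP}), $\Delta_f=2$, $\delta,\gamma$ set so the Laplace-tail and algorithm failure probabilities sum to $n^{-c}$, followed by \thmref{thm:dp-transform} with $M=n$ to purify. The bookkeeping you flag as the ``hard part'' is handled correctly and reproduces the accuracy, query, and failure-probability expressions up to the same small constants the paper absorbs.
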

 
We note that~\cite{Tetek22} give a pure DP algorithm for this problem as well, however their accuracy guarantee holds with constant probability. 
\paragraph{Estimating the weight of the MST. }
Given a (connected) graph $G=(V,E)$, with an associated weight function $wt:E(G) \to \{1,\ldots,w \}$, the problem is to estimate the weight of a minimum spanning tree in sublinear time. \cite{ChazelleRT05} reduce the problem of computing the weight of the minimum spanning tree to counting the number of connected components in various subgraphs of $G$. 

\begin{theorem}~\cite{ChazelleRT05}
Let $G$ be a graph on $n$ vertices, with an associated weight function $wt:E(G) \to \{1,\ldots,w \}$, and $w<n/2$. Let $\mathsf{M}$ be the weight of the MST of $G$.  Then for any $0<\alpha<1$, there is an algorithm that makes $\O{\bar{d}w \alpha^{-2}\log \frac{\bar{d}w}{\eps}\log(1/\delta)}$ queries and outputs a value $\mathsf{\tilde{M}}$ such that with probability at least $1-\delta$, $$ \vert \mathsf{M} - \mathsf{\tilde{M}}\vert \leq \alpha \mathsf{M} \;.$$
where $\mathsf{\tilde{M}}$ denotes the output of the algorithm. 
\end{theorem}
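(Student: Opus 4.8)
The plan is to reduce estimating $\mathsf{M}$ to counting connected components of a family of nested threshold subgraphs, and then run the sublinear connected-components estimator of \cite{ChazelleRT05,BerenbrinkKM14} (quoted above) on all thresholds at once rather than one at a time.

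\textbf{Step 1: combinatorial identity.} For $0\le i\le w$ let $G^{(i)}$ be the spanning subgraph of $G$ keeping exactly the edges of weight at most $i$, and let $c^{(i)}$ be its number of connected components; so $c^{(0)}=n$ and, since $G$ is connected, $c^{(w-1)}=c^{(w)}=1$. Running Kruskal's algorithm shows the number of MST edges of weight greater than $i$ equals $c^{(i)}-1$ (the MST edges of weight $\le i$ form a maximal forest of $G^{(i)}$, with $n-c^{(i)}$ edges), and summing $\mathsf{M}=\sum_j j\cdot(\#\text{ MST edges of weight }j)=\sum_{i=0}^{w-1}(c^{(i)}-1)$ yields $\mathsf{M}=n-w+\sum_{i=1}^{w-1}c^{(i)}$. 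Hence it suffices to estimate $Z:=\sum_{i=1}^{w-1}c^{(i)}$ to additive error $\alpha\mathsf{M}$. The hypotheses ``$G$ connected'' and $w<n/2$ give $\mathsf{M}\ge n-1\ge n/2$ and $Z=\mathsf{M}-n+w\le 2\mathsf{M}$, which is exactly what converts an additive ``$\pm\Theta(\alpha n)$''-type guarantee on the $c^{(i)}$'s into the required relative guarantee on $\mathsf{M}$.

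\textbf{Step 2: estimator via shared samples and a single nested exploration.} Write $c^{(i)}=\sum_{v\in V}1/|C^{(i)}(v)|$, where $C^{(i)}(v)$ is $v$'s component in $G^{(i)}$; the structural fact we exploit is that for fixed $v$ the sets $C^{(1)}(v)\subseteq C^{(2)}(v)\subseteq\cdots\subseteq C^{(w-1)}(v)$ are nested. Draw a \emph{single} sample $S$ of uniformly random vertices, reused for all $w-1$ thresholds, fix a truncation parameter $t=\Theta(w/\alpha)$, and for each $v\in S$ run one exploration from $v$ that scans incident edges in nondecreasing weight order and halts once $t$ distinct vertices are seen; by nestedness this one exploration simultaneously determines $|C^{(i)}(v)|$ for every threshold $i$ at which it is at most $t$. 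Output $\mathsf{\tilde M}=n-w+\tfrac{n}{|S|}\sum_{v\in S}\sum_{i=1}^{w-1}\mathbf{1}[|C^{(i)}(v)|\le t]/|C^{(i)}(v)|$. Truncating at $t$ changes each $c^{(i)}$ by at most $n/t$, hence $Z$ by at most $(w-1)n/t\le\tfrac{\alpha}{2}\mathsf{M}$ by our choice of $t$ and $\mathsf{M}\ge n/2$; a Chernoff/Bernstein bound (or the median of $\O{\log(1/\delta)}$ repetitions) controls the sampling error by a further $\tfrac{\alpha}{2}\mathsf{M}$, giving $|\mathsf{\tilde M}-\mathsf{M}|\le\alpha\mathsf{M}$ with probability $\ge 1-\delta$. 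The query cost is $|S|$ times the per-vertex exploration cost, which CRT's edge-budgeted exploration rule keeps at $\tO{\bar d\, t}$, and their analysis sizes $|S|$ so the product is $\O{\bar d w\alpha^{-2}\log(\bar d w/\alpha)\log(1/\delta)}$.

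\textbf{Main obstacle.} The identity and the concentration step are routine; the delicate part is the query-complexity accounting. Applying the connected-components estimator separately to each $G^{(i)}$, each tuned to additive error $\Theta(\alpha n/w)$ so the $w-1$ errors sum to $\alpha\mathsf{M}$, costs $\tO{\bar d\, w^3\alpha^{-2}}$ — cubic in $w$. Obtaining the stated bound, which is only \emph{linear} in $w$, needs both (i) reusing one sample set across all thresholds and exploiting a variance cancellation between ``the per-vertex contribution is at most $w-1$'' and ``its expectation is $\Theta(\mathsf{M}/n)$'', and (ii) the single nested exploration of Step 2, so each sampled vertex is touched once instead of $w-1$ times, together with CRT's trick of capping the exploration by an edge budget so its cost scales with $\bar d\, t$ rather than with the total degree of the explored region. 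I would therefore invoke \cite{ChazelleRT05} for this refined analysis, or reproduce it carefully; everything else follows from Step 1 and standard tail bounds.
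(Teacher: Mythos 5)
This theorem is a black-box citation to \cite{ChazelleRT05}; the paper itself gives no proof, so there is no internal argument to compare against, and your proposal must be judged on its own merits as a reconstruction of CRT.

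Your Step~1 (the identity $\mathsf{M}=n-w+\sum_{i=1}^{w-1}c^{(i)}$ via Kruskal) is correct, and the high-level idea in Step~2 --- reuse one sample set across all $w-1$ thresholds, exploit $\sum_i 1/|C^{(i)}(v)|\in[0,w-1]$ with mean $\Theta(\mathsf{M}/n)$, and apply a Bernstein bound --- is also the right one. But there is a genuine gap in the query-complexity accounting that your text both causes and half-notices. With a deterministic truncation at $t=\Theta(w/\alpha)$, each sampled vertex's exploration touches up to $t$ vertices and hence $\Theta(\bar d t)=\Theta(\bar d w/\alpha)$ edges. Your own Bernstein bound (variance $\le (w-1)\cdot Z/n\le 2w\mathsf{M}/n$, target deviation $\alpha\mathsf{M}/n$, $\mathsf{M}\ge n/2$) requires $|S|=\Theta(w/\alpha^2)$ samples. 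Multiplying gives $\Theta(\bar d w^2/\alpha^3)$, not the stated $\tO{\bar d w/\alpha^2}$; the ``product is $\O{\bar d w\alpha^{-2}\log(\cdot)\log(1/\delta)}$'' sentence at the end of Step~2 does not follow from the numbers you set up. The missing idea in \cite{ChazelleRT05} is not, as you describe it, ``capping the exploration by an edge budget so its cost scales with $\bar d\,t$''; it is a randomized geometric stopping (doubling) rule that makes the \emph{expected} edges visited per sample $O(\bar d)$ \emph{independent of $t$}, while a separate importance-weighting argument keeps the estimator (nearly) unbiased and controls a different variance expression. That is what trades the explicit $t/\alpha$ factor down to $1/\alpha$, and it changes the estimator, not just the stopping cost. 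As written, your algorithm does not achieve the claimed query bound, and ``I would invoke CRT for the refined analysis'' is a pointer, not a proof. To close the gap you need to present the geometric-stopping estimator, show unbiasedness of its weighted output, bound its second moment (which is where the factor of $w$, and no more, appears), and then apply Chernoff/median amplification for $\log(1/\delta)$.
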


We first observe that the global sensitivity of the weight of the MST under edge-DP is $w$ where $w$ is the maximum weight of an edge, since the addition or deletion of an edge can change the total weight by at most $w$. By applying~\thmref{thm:main} and \thmref{thm:dp-transform} to the result of \cite{ChazelleRT05}, we obtain the following DP result for integral weights.

\begin{corollary}
Let $\mathsf{M}(G)$ be the weight of the MST of $G$. Then for all integers $K >0$ and all $c>0, \eps>0$, there exists an algorithm $\calA'$ such that 
\begin{enumerate}
    \item (Privacy) $\calA'$ is $\eps$-edge differentially private where $\eps$ is constant.
    \item (Accuracy) For all graphs $G$ on $n$ vertices, with probability $1-\frac{1}{n^c}-\frac{1}{n^{c-3}(e^\eps-1)/K + 1}$,
    $$(1-\alpha')\mathsf{M}(G) - \frac{2cw \log(n)}{\eps} - \frac{1}{Kn^3} \leq \calA'(G) \leq (1+\alpha')\mathsf{M}(G) + \frac{2cw \log(n)}{\eps}+\frac{1}{Kn^3}$$
    where $\alpha' =\alpha \left(1+\frac{\eps}{C'\log(n)}\right)$ for some constant $C'>0$.
    \item (Query) $\calA'$ makes $\O{\bar{d}w\frac{\log^3(n)}{\alpha^2\eps^2}\log\left(\frac{\bar{d}w\log(n)}{\alpha\eps}\right)}$ queries.
\end{enumerate}
\end{corollary}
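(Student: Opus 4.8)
The plan is to instantiate the black-box transformation of \thmref{thm:main} followed by the post-processing of \thmref{thm:dp-transform} on the sublinear-time estimator of \cite{ChazelleRT05}, and then simplify the resulting parameters. First I would fix the non-private starting point: the algorithm of \cite{ChazelleRT05} is a tunable $(\alpha,0,\delta)$-approximation of $f(G):=\mathsf{M}(G)$ in the adjacency-list model, and after boosting its stated constant success probability via the median trick (\remref{rem:median-trick}) its query complexity is $\O{\bar d w\,\alpha^{-2}\log(\bar d w/\alpha)\log(1/\delta)}$. I would also record the two structural facts the transformation needs: under edge-neighboring, $\Delta_f = w$ (deleting or inserting an edge changes the MST weight by at most $w$, as already noted above), and $M:=\max_G \mathsf{M}(G)\le (n-1)w < n^3$ since $w<n/2$.

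Next I would apply \thmref{thm:main} with $\kappa=0$, $\delta := n^{-c}$ (absorbing the factor-$2$ loss of the sub-routine's failure probability into the constant) and $\gamma = \Theta(\log n)$ chosen so that $\exp(-\gamma)$ is also $n^{-\Omega(c)}$. This produces an $(\eps,\delta')$-DP algorithm $\calA'$ with $\delta' = \delta(1+e^{\eps/2}) = \O{n^{-c}}$ whose output satisfies, with probability $1-\delta-\exp(-\gamma) = 1 - \O{n^{-c}}$,
\[
(1-\alpha')\mathsf{M}(G) - \frac{2w\gamma}{\eps} \le \calA'(G) \le (1+\alpha')\mathsf{M}(G) + \frac{2w\gamma}{\eps},
\]
where $\alpha' = \frac{\alpha(\eps+16\gamma)}{12\log(4/\delta)}$ and the $\kappa'$-term vanishes since $\kappa=0$. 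Substituting $\gamma = \Theta(\log n)$ and $\log(4/\delta)=\Theta(\log n)$ gives additive error $\O{\frac{w\log n}{\eps}}$ and, after collecting constants, $\alpha' = \alpha\bigl(1 + \tfrac{\eps}{C'\log n}\bigr)$ for a suitable constant $C'>0$. For the query bound I would invoke item~(3) of \thmref{thm:main}: $\calA'$ simply runs $\calA_f$ with multiplicative parameter $\rho = \Theta\!\bigl(\tfrac{\eps\alpha}{\log(4/\delta)}\bigr) = \Theta\!\bigl(\tfrac{\eps\alpha}{\log n}\bigr)$, so plugging $\rho$ in place of $\alpha$ into the boosted complexity above and using $\log(1/\delta)=\Theta(\log n)$ yields $\O{\bar d w\,\tfrac{\log^3 n}{\alpha^2\eps^2}\log\!\bigl(\tfrac{\bar d w\log n}{\alpha\eps}\bigr)}$.

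Finally I would apply \thmref{thm:dp-transform} to $\calA'$ with the bound $M\le n^3$ and the free parameter $K$. Since that step is pure post-processing of the (scalar) output — round it onto a grid of granularity $\tfrac{1}{Kn^3}$ and, with small probability, resample from the bounded range — it does not change the query complexity, it makes the algorithm $\eps$-DP, it adds at most $\tfrac{1}{Kn^3}$ to each side of the accuracy bound, and it decreases the success probability by $p = \frac{\delta' K(n^3+1)}{e^\eps-1+\delta' K(n^3+1)} = \frac{1}{n^{c-3}(e^\eps-1)/K + 1}$ (up to the constant absorbed into $\delta'$). Combining the two success-probability losses gives the stated $1 - n^{-c} - \bigl(n^{c-3}(e^\eps-1)/K+1\bigr)^{-1}$, which finishes the corollary.

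Every individual step is a direct invocation of \thmref{thm:main} or \thmref{thm:dp-transform}; the only part that needs genuine care is the parameter bookkeeping — verifying that running \cite{ChazelleRT05} with the shrunken accuracy $\rho = \Theta(\eps\alpha/\log n)$ together with the $\O{\log n}$-factor median boosting produces exactly the claimed $\log^3 n$-type query complexity, and that the additive-error and failure-probability expressions collapse to the precise forms in the statement (in particular that $\alpha'$ can be written as $\alpha(1+\eps/(C'\log n))$).
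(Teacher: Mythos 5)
Your proposal is correct and is essentially the proof the paper intends: the paper itself presents this corollary as a direct application of \thmref{thm:main} (with $\kappa=0$, $\delta=n^{-c}$, $\gamma=\Theta(\log n)$, $\Delta_f=w$, and median boosting of the \cite{ChazelleRT05} estimator) followed by the rounding/post-processing step of \thmref{thm:dp-transform} with $M\le n^3$. Your bookkeeping of the additive error, the query complexity under $\rho=\Theta(\eps\alpha/\log n)$, and the success-probability loss $p=\bigl(n^{c-3}(e^\eps-1)/K+1\bigr)^{-1}$ matches the paper's stated parameters; the only cosmetic wrinkle (present in the paper as well) is that $\alpha'=\frac{\alpha(\eps+16\gamma)}{12\log(4/\delta)}$ carries a leading factor of roughly $4/3$ that must be absorbed by rescaling the input accuracy parameter before one can write $\alpha'=\alpha(1+\eps/(C'\log n))$.
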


\paragraph{Estimating the average degree.}Recall that $\bar{d}$ denotes the average degree of a graph. We study the problem of estimating the average degree of the graph in the adjacency query list model. We first recall the result of~\cite{GoldreichR04} below, 
\begin{theorem}~\cite{GoldreichR04}
For every $\alpha \in (0,1)$, there is an algorithm that makes $\O{\frac{n}{\sqrt{m}} \poly\left(\frac{\log(n)}{\alpha}\right)\log(1/\delta)}$ queries and outputs $\tilde{d}$ such that with probability at least $1-\delta$,
$$\vert \bar{d} - \tilde{d} \vert \leq \alpha\bar{d} 
\;.$$
where $\tilde{d}$ denotes the output of the algorithm. 
\end{theorem}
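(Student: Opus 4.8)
This is the Goldreich--Ron average-degree estimate, so the plan is to prove it via the standard ``sampled vertex plus random neighbor'' estimator, in the cleanest form I know (which in fact yields the slightly sharper bound $\O{\frac{n}{\sqrt m}\cdot\frac1{\alpha^2}}$ in place of $\O{\frac{n}{\sqrt m}\poly(\frac{\log n}{\alpha})}$). Write $\bar d = 2m/n$, and recall that adjacency-list access supplies both degree and neighbor queries. Fix the total order $\succ$ on $V$ comparing vertices first by degree and breaking ties by id, and charge every edge to its $\succ$-smaller endpoint, so that $m=\sum_v c_v$ where $c_v$ is the number of neighbors of $v$ that are $\succ$-larger than $v$. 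The estimator draws $s$ i.i.d.\ uniform vertices $v_1,\dots,v_s$; for each $v_j$ it queries $\deg(v_j)$, samples a uniform neighbor $u_j$, queries $\deg(u_j)$, sets $X_j = \deg(v_j)\cdot\mathbf{1}[u_j\succ v_j]$, and outputs $\hat d = \frac2s\sum_j X_j$ (to be median-boosted below).

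First I would check unbiasedness: conditioned on $v_j$, a uniform neighbor is $\succ$-larger with probability $c_{v_j}/\deg(v_j)$, so $\mathbb{E}[X_j\mid v_j]=c_{v_j}$, hence $\mathbb{E}[X_j]=\frac1n\sum_v c_v = m/n$ and $\mathbb{E}[\hat d]=\bar d$. The crux is the variance. The key inequality is $c_v\le\min\!\bigl(\deg(v),\,2m/\deg(v)\bigr)\le\sqrt{2m}$: every $\succ$-larger vertex has degree at least $\deg(v)$, and at most $2m/\deg(v)$ vertices have degree $\ge\deg(v)$. Therefore $\mathbb{E}[X_j^2]=\frac1n\sum_v\deg(v)\,c_v\le\frac{\sqrt{2m}}{n}\sum_v\deg(v)=\frac{2\sqrt2\,m^{3/2}}{n}$, so $\mathrm{Var}(\hat d)\le\frac4s\,\mathbb{E}[X_1^2]\le\frac{8\sqrt2\,m^{3/2}}{sn}$. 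Taking $s=\Theta\!\bigl(\frac{n}{\alpha^2\sqrt m}\bigr)$ makes this at most $\frac13(\alpha\bar d)^2$, so Chebyshev gives $|\hat d-\bar d|\le\alpha\bar d$ with probability $\ge 2/3$.

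It then remains to (i) amplify the success probability and (ii) handle that $m$, hence the right $s$, is not known a priori. For (i) I would run $\Theta(\log(1/\delta))$ independent copies and return the median, which drives the failure probability below $\delta$ and multiplies the query count by $\log(1/\delta)$, giving $\O{\frac{n}{\sqrt m}\cdot\frac{\log(1/\delta)}{\alpha^2}}$ queries, within the claimed bound. For (ii) I would use a standard guess-and-check over the scale of $\bar d$ (equivalently a doubling schedule for $s$ with a self-consistency test), noting that running with $s=\Theta(n/\alpha^2)$ always works since $\frac n{\sqrt m}\le n$; the guess-and-check halts at $s=\Theta(n/(\alpha^2\sqrt m))$ and its geometrically growing costs still sum to $\O{\frac n{\alpha^2\sqrt m}}$.

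The hard part is the variance bound, and within it the inequality $c_v\le 2m/\deg(v)$: it is essential to charge each edge to its \emph{lower}-degree endpoint. Charging to the higher-degree endpoint instead lets $c_v$ be as large as $\deg(v)$ --- think of the center of a star, where it equals $m$ --- which inflates the second moment to $\Omega(m^2/n)$ and destroys the sublinear sample bound. Once the charging direction is fixed, turning $\sum_v\deg(v)c_v\le\sqrt{2m}\sum_v\deg(v)$ into the sample-size bound is the heart of the argument; unbiasedness, the Chebyshev step, median amplification, and the scale search for $m$ are all routine.
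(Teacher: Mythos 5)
Your proof is correct. Note that the paper gives no internal proof of this theorem --- it is a bare citation to \cite{GoldreichR04} --- so there is nothing in the paper itself to compare against; but relative to the cited Goldreich--Ron argument you have taken a genuinely different and cleaner route. Goldreich and Ron partition the vertices into $\Theta(\log n/\alpha)$ degree buckets, estimate the population and aggregate contribution of each bucket by sampling, and handle the small buckets (which cannot be reliably populated) via a separate one-sided neighbor-probing step; this bucketing machinery is where the $\poly(\log n/\alpha)$ factors in the stated query bound come from. You instead use the single unbiased estimator that samples a uniform vertex $v$, a uniform neighbor $u$, and keeps $\deg(v)\cdot\mathbf{1}[u\succ v]$, and you bound the second moment through the charging inequality $c_v\le\min\bigl(\deg(v),\,2m/\deg(v)\bigr)\le\sqrt{2m}$, which gives $\mathbb{E}[X^2]\le\sqrt{2m}\cdot\bar d$. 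This is the now-standard simplification (\`{a} la Seshadhri and Eden--Ron--Seshadhri), and it is both simpler and quantitatively sharper, yielding $\O{\frac{n}{\sqrt m}\cdot\frac{1}{\alpha^2}}$ queries before amplification with no $\log n$ dependence, which is subsumed by the stated $\poly(\log n/\alpha)$ bound. You also correctly isolate the crux: charging each edge to its lower-degree endpoint is indispensable, since charging the other way can make $c_v=\deg(v)$ and inflate the second moment to $\Omega(m^2/n)$, e.g.\ on a star. The remaining pieces (unbiasedness, Chebyshev, median amplification, and a doubling schedule with a self-consistency stopping rule to handle the unknown $m$) are routine and correct; just note that the doubling schedule needs a union bound over its $O(\log n)$ stages to control the overall failure probability, which at worst adds a $\log\log n$ factor, still within the claimed bound.
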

We will first present our result using our DP framework and then compare this result to existing DP results. In particular, by applying our blackbox transformation (\thmref{thm:main} and \thmref{thm:dp-transform}) directly to the result of~\cite{GoldreichR04}, and observing that the global sensitivity of the average degree under edge-DP is $2/n$ (since the addition or deletion of an edge affects two vertices in the sum), we obtain the following.  
\begin{corollary}\corlab{cor:avg-deg}
Let $\bar{d}$ denote the average degree of a graph. Then for any integer $K>0$ and for all $c>0,\ \eps>0$,  there exists an algorithm $\calA'$ such that 
\begin{enumerate}
    \item (Privacy) $\calA'$ is $\eps$-edge differentially private where $\eps$ is constant.
    \item (Accuracy) For all graphs $G$ on $n$ vertices, with probability $1-\frac{1}{n^c}-\frac{1}{n^{c-1}(e^\eps-1)/K + 1}$,
    $$(1-\alpha')\bar{d} -   \frac{4c}{\eps}\cdot \frac{\log(n)}{n} - \frac{1}{Kn} \leq \calA'(G) \leq (1+\alpha')\bar{d} +   \frac{4c}{\eps}\cdot \frac{\log(n)}{n} + \frac{1}{Kn} $$
    where $\alpha' =\alpha \left(1+\frac{\eps}{C'\log(n)}\right)$ for some constant $C'>0$.
    \item (Query) $\calA'$ makes $\O{\frac{n}{\sqrt{m}} \poly\left(\frac{\log^2(n)}{\eps\alpha}\right)\log(n)}$ queries.
\end{enumerate}
\end{corollary}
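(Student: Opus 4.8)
The plan is to obtain $\calA'$ by feeding the non-private average-degree estimator of \cite{GoldreichR04} into the black-box transformation of \thmref{thm:main} and then applying the pure-DP post-processing of \thmref{thm:dp-transform}; there is no new algorithmic idea, only a sensitivity computation and parameter bookkeeping. First I would record the global sensitivity: writing $f(G) = \bar d(G) = \frac1n\sum_{v}\deg(v)$, for edge-neighboring $G \sim_e G'$ the insertion or deletion of one edge changes exactly two terms of the sum by one, so $\Delta_f = \max_{G \sim_e G'}|f(G)-f(G')| \le 2/n$. The estimator of \cite{GoldreichR04}, after the median trick of \remref{rem:median-trick} (already folded into its stated guarantee), is a tunable $(\alpha,0,\delta)$-approximation of $f$ with $\kappa = 0$ and query complexity $\O{\frac{n}{\sqrt m}\poly\bigl(\frac{\log n}{\alpha}\bigr)\log\frac1\delta}$, so all hypotheses of \thmref{thm:main} hold.

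Next I would instantiate the parameters. Take the internal failure parameter $\delta = \Theta(1/n^c)$ and $\gamma = c\log n + O(1)$, so $\exp(-\gamma) = \Theta(1/n^c)$ and \thmref{thm:main} yields an $(\eps,\delta')$-DP algorithm with $\delta' = \delta(1+\exp(\eps/2)) = \Theta(1/n^c)$ whose accuracy holds with probability $1-\Theta(1/n^c)$. Substituting $\kappa = 0$ and $\Delta_f \le 2/n$: the multiplicative parameter is $\alpha' = \frac{\alpha(\eps + 16\gamma)}{12\log(4/\delta)}$, which under $\gamma,\log(4/\delta) = \Theta(\log n)$ collapses to $\alpha\bigl(1 + \frac{\eps}{C'\log n}\bigr)$ for a suitable constant $C'>0$ (rescaling $\alpha$ in the call to \cite{GoldreichR04} if one wants to absorb the leading constant); the only additive term is the Laplace contribution $\frac{2\Delta_f}{\eps}\gamma \le \frac{4c\log n}{\eps n} = \frac{4c}{\eps}\cdot\frac{\log n}{n}$, since $\kappa'=0$ when $\kappa=0$; and the resource used is $Q\bigl(n,\tfrac{\eps\alpha}{\log(4/\delta)},0,\delta\bigr)$, which with $\rho := \tfrac{\eps\alpha}{\log(4/\delta)} = \Theta\bigl(\tfrac{\eps\alpha}{\log n}\bigr)$ and $\delta = \Theta(1/n^c)$ is $\O{\frac{n}{\sqrt m}\poly\bigl(\frac{\log^2 n}{\eps\alpha}\bigr)\log n}$, matching part (3).

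Then I would apply \thmref{thm:dp-transform} with an upper bound $M \le n$ on $\max_G\bar d(G)$ and the given integer $K$: this rounds the output into a range of size $O(KM)$, adds an extra additive error of at most $\frac1{Kn}$, upgrades privacy from $(\eps,\delta')$ to pure $\eps$, and preserves the accuracy event up to an additional failure probability $p = \frac{\delta' K(M+1)}{e^\eps - 1 + \delta' K(M+1)}$. Dividing numerator and denominator of $p$ by $\delta' K(M+1)$ and substituting $\delta' = \Theta(1/n^c)$, $M+1 \le n$ gives $p \le \frac{1}{n^{c-1}(e^\eps-1)/K + 1}$. A union bound over the two bad events yields the claimed success probability $1 - \frac1{n^c} - \frac{1}{n^{c-1}(e^\eps-1)/K+1}$, and propagating the additive errors through gives the stated interval $(1-\alpha')\bar d - \frac{4c}{\eps}\cdot\frac{\log n}{n} - \frac{1}{Kn} \le \calA'(G) \le (1+\alpha')\bar d + \frac{4c}{\eps}\cdot\frac{\log n}{n} + \frac{1}{Kn}$.

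The argument is routine once \thmref{thm:main} and \thmref{thm:dp-transform} are in hand; the only part requiring care is lining up the constants so that the two failure probabilities (the $\Theta(1/n^c)$ coming out of \thmref{thm:main} after merging $\delta$ with $\exp(-\gamma)$, and the $p$ from \thmref{thm:dp-transform}) match the two terms in the statement exactly --- in particular choosing the internal $\delta$ a constant factor below $1/n^c$ to absorb the $1+\exp(\eps/2)$ blow-up and the $\delta/2$ split between approximation failure and DP failure inside \algref{alg:fpras-DP}, and re-indexing the constant $c$ as needed.
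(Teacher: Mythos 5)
Your proposal is correct and follows the same route the paper implicitly takes: the paper offers no detailed derivation for \corref{cor:avg-deg}, only the observation that $\Delta_{\bar d}=2/n$ under edge-neighboring and a pointer to \thmref{thm:main} and \thmref{thm:dp-transform}, which is precisely the plug-in you carry out. Your bookkeeping of $\alpha'$, the Laplace additive term $\tfrac{2\Delta_f}{\eps}\gamma=\tfrac{4c\log n}{\eps n}$, the query blow-up from $\rho=\Theta(\eps\alpha/\log n)$, and the post-processing failure probability $p$ all match, and you correctly flag the only places requiring care (absorbing the $4/3$ leading factor in $\alpha'$ by rescaling the call to Goldreich--Ron, and choosing the internal $\delta$ a constant factor smaller than $1/n^c$ so that $\delta'=\delta(1+e^{\eps/2})$ and the $\exp(-\gamma)$ term still sum to $1/n^c$ up to the usual re-indexing of $c$).
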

In recent work,~\cite{BlockiGM22} adapted the algorithm by~\cite{GoldreichR04} to obtain an $\eps$-DP $(\alpha,0,\delta)$-approximation of $\bar{d}$ in $\O{\sqrt{n} \poly\left(\frac{\log(n)}{\alpha}\right)\poly\left(\frac{1}{\eps}\right) \log(1/\delta)}$ for $\bar{d} \geq 1$ queries. 
First observe that,~\cite{BlockiGM22} achieves a multiplicative approximation, whereas~\corref{cor:avg-deg} has an additive error (along with the multiplicative error). Thus, it may seem that the algorithm given by~\cite{BlockiGM22} is a better algorithm overall. But the algorithm in~\corref{cor:avg-deg} has a few advantages. The first advantage is that unlike~\cite{BlockiGM22}, the algorithm works for all values $\bar{d} \geq 0$, whereas~\cite{BlockiGM22} only works for the (albeit natural) assumption that $\bar{d}\geq 1$. Also for reasonable values of $\bar{d}$, i.e., for $\bar{d} \geq \Omega(\frac{\log n}{n})$, the additive error above is absorbed into the multiplicative error thus resulting in a pure DP algorithm with multiplicative error and no prior assumptions on the average degree $\bar{d}$. The query complexity achieved in~\corref{cor:avg-deg} can also be much better than ~\cite{BlockiGM22}. For example, if $m = \Omega(n^2)$ then our algorithm runs in $\poly \left(\frac{\log(n)}{\alpha\eps} \right)$ queries. In general if the graph has $m = \omega(n)$ edges then the query complexity of the original non-DP algorithm~\cite{GoldreichR04} is  superior to \cite{BlockiGM22}, and the query complexity of our algorithm is comparable to ~\cite{GoldreichR04}  ---  only incurs a multiplicative factor of $\poly \left(\frac{\log(n)}{\alpha\eps} \right)$ overhead.  Lastly, due to its black-box nature, the algorithm in~\corref{cor:avg-deg} is much simpler in nature. The algorithm of~\cite{BlockiGM22} is more complex due to its careful addition of noise in a way that results in (essentially) the same approximation guarantees as the non-DP algorithm. 

We note that \cite{GoldreichR04} avoided assuming any lowerbound on the average degree by first designing a sublinear-time algorithm that estimates the average degree by taking a degree lower bound $\ell$ as an input parameter and then removing this lower bound parameter via a geometric search. Implementing this type of geometric-search procedure in the DP setting would result in a privacy loss of $\log(n)$, and thus~\cite{BlockiGM22} avoided this step by apriori assuming $\bar{d} \geq 1$. In recent work~\cite{Tetek22} gives a general technique of making this procedure DP and thus achieves a pure DP algorithm for the average degree problem with (essentially) the same guarantees as the non-DP algorithm and no prior assumption on $\bar{d}$. However, their accuracy guarantees only hold with constant probability, whereas the accuracy guarantees achieved by applying our DP framework hold with high probability.  

%

\paragraph{Estimating the maximum matching size. }
We study the problem of estimating the maximum matching size in a graph of maximum degree $d$ in the adjacency query list model. \cite{YoshidaYI12} gave an $(0,\kappa n)$-approximation for this problem with an expected query complexity of $d^{\O{1/\kappa^2}}$. 
In order to achieve a high probability bound on the query-complexity, we run $\Theta(\log(1/\delta))$ instances of the algorithm in parallel and return the output of the instance that terminates first.

\begin{theorem}~\cite{YoshidaYI12}
For all $\kappa>0$, there is a $(0,\kappa n,\delta)$-approximation algorithm for the maximum matching problem that uses $\O{d^{\O{1/\kappa^2}}\log(1/\delta)}$ queries with probability at least $1-\delta$. 
\end{theorem}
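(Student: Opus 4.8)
The plan is to obtain this statement from the original Yoshida--Yamamoto--Ito oracle-based estimator, which on any graph of maximum degree $d$ returns a value $\hat\mu$ with $|\hat\mu - \mu|\le \kappa n$ (where $\mu$ denotes the maximum matching size) with probability at least $2/3$, while using $d^{\O{1/\kappa^2}}$ queries \emph{in expectation}. Two upgrades are needed relative to this guarantee: the constant success probability must be amplified to $1-\delta$, and the \emph{expected} query bound must be turned into a bound that holds with probability $1-\delta$ (as required by \defref{def:akd-approx} and by the theorem statement itself).

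First I would truncate the base algorithm: run it with a hard query budget $B = \Theta(d^{\O{1/\kappa^2}})$ set to a sufficiently large constant times its expected query complexity, and output an arbitrary placeholder value if the budget is exceeded. By Markov's inequality the budget is exceeded with probability at most, say, $1/12$, so the truncated algorithm uses at most $B$ queries with certainty and still returns a $(0,\kappa n)$-approximation with probability at least $2/3 - 1/12 > 1/2$. Next I would apply the median trick of \remref{rem:median-trick}: run $\Theta(\log(1/\delta))$ independent copies of the truncated algorithm and output the median of their estimates. Since each copy independently yields a good estimate with probability bounded strictly above $1/2$, a Chernoff bound over the (two-sided) count of bad copies shows that the median is a $(0,\kappa n)$-approximation except with probability $\delta$, while the total query count is deterministically $\Theta(d^{\O{1/\kappa^2}}\log(1/\delta)) = \O{d^{\O{1/\kappa^2}}\log(1/\delta)}$. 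Equivalently, as indicated in the text, one can run untruncated copies in parallel, advancing each by a single query, and stop at the first to halt, bounding the queries used before some copy terminates via Markov and handling accuracy by a separate median step.

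The main obstacle --- the reason a one-line reduction does not quite work --- is that the YYI query complexity is controlled only in expectation and can in principle be correlated with the correctness of the returned estimate, so one cannot simply output the answer of the fastest-terminating copy and assume it is accurate. Truncation resolves this by decoupling the two concerns: it converts the expectation bound into a worst-case bound at the price of only an $O(1)$ loss in success probability, after which standard median amplification is clean. The remaining work is routine bookkeeping: choosing the constants in the budget, in the number of copies, and in the exponent of $d$ consistently so that the final query bound is exactly $\O{d^{\O{1/\kappa^2}}\log(1/\delta)}$, with any lower-order logarithmic factors introduced by the amplification absorbed into the $\log(1/\delta)$ term.
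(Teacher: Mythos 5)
Your proof is correct and takes a genuinely different (and arguably cleaner) route than the paper's sketch. The paper cites the Yoshida--Yamamoto--Ito estimator with \emph{expected} query complexity $d^{\O{1/\kappa^2}}$ and then says: run $\Theta(\log(1/\delta))$ independent copies in parallel, advancing each one query at a time, and ``return the output of the instance that terminates first.'' That converts the expected query bound into a high-probability bound via Markov plus a union/amplification argument over the $\Theta(\log(1/\delta))$ copies, but --- as you correctly flag --- simply \emph{trusting the answer of the fastest-terminating copy} does not cleanly inherit the $2/3$ accuracy guarantee, since the query count and the correctness of a YYI run need not be independent, and conditioning on being the minimum could in principle bias the answer. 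Your truncation fix sidesteps this entirely: imposing a hard budget $B = \Theta(d^{\O{1/\kappa^2}})$ and outputting a placeholder on overrun costs at most a $1/12$ hit in success probability by Markov, yields a worst-case query bound per copy, and decouples the query analysis from the accuracy analysis so that the standard median amplification (\remref{rem:median-trick}) applies with no correlation concerns. What the paper's ``fastest-finisher'' phrasing buys is slightly less bookkeeping; what your truncation buys is a rigorous composition of the query and accuracy boosts, and it is the same device you would want when implementing \corref{cor:max-match} via \algref{alg:fpras-DP}. One small caveat on your final sentence: the parallel ``stop at the first to halt'' variant is not quite equivalent to your truncated-copy-plus-median scheme unless you also commit to a budget and discard fast-but-potentially-biased winners; if you keep that sentence, it would be clearer to say the two are equivalent \emph{after} truncation, i.e., that running $k$ truncated copies in parallel and taking the median of the $\ge k/2$ copies that halted within budget is the precise statement you intend.
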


We observe that the global sensitivity of the size of a maximum matching of a graph under node (and edge-DP) is at most one, since the node that is added/removed could have only contributed to at most one edge in the matching. By applying our black-box DP transformation to the results of~\cite{YoshidaYI12}, we get the following guarantees. 
\begin{corollary}\corlab{cor:max-match}
Let the size of the maximum matching of a graph $G$ be denoted as $\mathsf{MM}(G)$. Then for any integer $K>0$ and for all $c > 0,\ \eps > 0$, there is an algorithm $\calA'$ such that  
\begin{enumerate}
    \item (Privacy) $\calA'$ is $\eps$-node (and edge) differentially private where $\eps$ is constant.
    \item (Accuracy) For all graphs $G$ on $n$ vertices, with probability $1-(\frac{1}{n^c}+\frac{1}{n^{c-1}(e^\eps-1)/K + 1}+\frac{1}{10})$,
     \begin{align*}
    &\mathsf{MM}(G) -\kappa n\left(\frac{8 \log 10}{\eps}+ {1}\right)-  \frac{2 \log 10}{\eps} - \frac{1}{Kn}\leq \calA'(G) \\&\leq  \mathsf{MM}(G)+\kappa n\left(\frac{8 \log 10}{\eps}+ {1}\right)+ \frac{2 \log 10}{\eps} + \frac{1}{Kn}    
    \end{align*} 
    \item (Query) $\calA'$ makes $\O{d^{\O{1/\kappa^2}}\log(n)}$ queries. 
 \end{enumerate}
\end{corollary}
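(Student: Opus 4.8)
The plan is to run our two-step pipeline --- \algref{alg:fpras-DP} (hence \thmref{thm:main}) followed by the post-processing of \thmref{thm:dp-transform} --- on the \cite{YoshidaYI12} estimator, which is already presented as a tunable $(0,\kappa n,\delta)$-approximation of $f = \mathsf{MM}$ (the $\log(1/\delta)$ factor in its query bound being the usual median-trick/parallel-repetition boosting of the basic $2/3$-success, $d^{\O{1/\kappa^2}}$-expected-query algorithm, cf.\ \remref{rem:median-trick}). The two structural inputs the framework needs are: $\Delta_f \le 1$ --- under both edge- and node-neighboring, since deleting an edge changes the matching number by at most $1$, and a deleted vertex lies on at most one matched edge --- and $M := \max_G \mathsf{MM}(G) \le n$.

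I would then invoke \thmref{thm:main} with $\alpha = 0$, framework additive error $\kappa n$, $\Delta_f = 1$, $\gamma = \log 10$, and $\delta$ chosen so that $\delta(1+\exp(\eps/2)) = 1/n^c$ (legitimate since $\eps$ is constant, so this still gives $\delta = \Theta(1/n^c)$). Because $\alpha = 0$, the multiplicative parameter $\alpha'$ also vanishes, the additive error collapses to $\kappa' + \tfrac{2\Delta_f}{\eps}\gamma = \kappa n\bigl(\tfrac{8\log 10}{\eps}+1\bigr) + \tfrac{2\log 10}{\eps}$, this bound holds with probability $1-\delta-e^{-\gamma} \ge 1-\tfrac1{n^c}-\tfrac1{10}$, and the output algorithm is $(\eps,\delta')$-DP with $\delta' = 1/n^c$. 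Since the additive parameter is \emph{not} tuned in the call to $\calA_f$ (see the remark after \algref{alg:fpras-DP}) and the multiplicative parameter passed to $\calA_f$ does not enter the \cite{YoshidaYI12} query bound at all, the query complexity is unchanged up to the boosting factor, i.e.\ $\O{d^{\O{1/\kappa^2}}\log n} = \tO{d^{\O{1/\kappa^2}}}$.

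Finally I would apply \thmref{thm:dp-transform} with this $M \le n$ and an integer $K > 0$: rounding the output to a grid of mesh $\tfrac1{KM}$ contributes an $\O{\tfrac1{Kn}}$ additive error, bounds the range by $K(M+1) \le Kn$, and converts the $(\eps,\delta')$-DP guarantee into pure $\eps$-DP at the price of an extra failure probability $p = \tfrac{\delta' K(M+1)}{e^\eps - 1 + \delta' K(M+1)} \le \tfrac1{n^{c-1}(e^\eps-1)/K + 1}$. Summing the three additive contributions and union-bounding the three failure events --- $\delta$ from \thmref{thm:main}, $e^{-\gamma} = \tfrac1{10}$ from the Laplace tail, and $p$ from \thmref{thm:dp-transform} --- reproduces exactly the stated accuracy window around $\mathsf{MM}(G)$, the stated failure probability, and node (hence also edge) $\eps$-DP, with the query bound inherited unchanged.

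The only spot requiring a genuine decision --- and the reason the guarantee holds with merely constant probability ($\approx 9/10$) rather than with high probability --- is the interaction between $\gamma$ and efficiency. Pushing the Laplace-tail failure $e^{-\gamma}$ down to $\poly(1/n)$ forces $\gamma = \Theta(\log n)$, which through the $\tfrac{8\gamma}{\eps}$ factor in $\kappa'$ inflates the additive error to $\kappa n\cdot\O{\tfrac{\log n}{\eps}}$; the natural remedy of instead running \cite{YoshidaYI12} with additive parameter $\kappa/\log n$ costs $d^{\polylog(n)/\kappa^2}$ queries, which is super-polynomial. One is therefore forced to take $\gamma = O(1)$. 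Everything else is bookkeeping: verifying $\Delta_f \le 1$, noting that $\alpha = 0$ kills every multiplicative term, and carefully propagating the three sources of error and of failure through the two theorems.
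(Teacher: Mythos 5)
Your proof is correct and follows essentially the same route as the paper: apply \thmref{thm:main} to the \cite{YoshidaYI12} estimator with $\alpha=0$, $\Delta_f\le 1$, $\gamma=\log 10$ (constant), and then post-process via \thmref{thm:dp-transform} with $M\le n$. In particular you correctly identify the key trade-off the paper highlights --- that because the query bound is exponential in $1/\kappa^2$, one cannot afford $\gamma=\Theta(\log n)$ (with the corresponding tuning $\tau=\kappa/\log n$) without a super-polynomial blow-up, and one must settle for constant Laplace-tail success probability $9/10$.
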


In recent work, \cite{BlockiGM22} obtains an approximation for the maximum matching size with a multiplicative factor of 2 and additive factor of $\kappa n$ in time $\tilde{O}\left((\bar{d}+1)/\kappa^2 \right)$. They left the problem of designing a DP algorithm that achieves a $(0, \kappa n, \delta)$-approximation as an open problem. \corref{cor:max-match} partially resolves this open problem. We note that~\cite{Tetek22} fully resolves this problem by giving a pure DP algorithm whose accuracy guarantee holds with high probability. 

One disadvantage of our framework is that if the resource (e.g., time, query or space) complexity of the non-DP approximation algorithm depends on the approximation parameter in a non-polynomial way (e.g., exponentially), then the resulting DP algorithm that achieves an accuracy of similar magnitude with high probability may be inefficient. This is the case for the algorithm of~\cite{YoshidaYI12} --- the non-DP approximation algorithm has a query complexity of $\O{d^{\O{1/\kappa^2}}\log(1/\delta)}$ (which is exponential in the approximation parameter $\kappa$). If we wanted to guarantee that the additive error due to sampling from the Laplace distribution is small \emph{with high probability} (e.g., with probability $\geq 1-\frac{1}{n}$), then we would need to set $\gamma=\log(n)$ in \thmref{thm:main}. Without tuning $\kappa$, the resulting accuracy would have an error of $\O{n\log n}$, which would render the output of the algorithm meaningless (since the matching size $\leq n$). On the otherhand, if we tune $\kappa$ to $\tau:=\kappa/\log(n)$ in \algref{alg:fpras-DP}, the approximation achieved would be meaningful, i.e., the additive error would be small enough, but the query complexity of the resulting DP algorithm would become $\O{d^{\O{\log^2(n)/\kappa^2}}\log(1/\delta)}$. Thus in order to achieve the guarantees in \corref{cor:max-match}, we need to relax the guarantee that the additive error due to sampling from the Laplace distribution is small \emph{with high probability}, and ensure this is true \emph{with constant probability} instead (for e.g., with probability $\geq 9/10$).  

\paragraph{Estimating the distance to bipartiteness. } We study the problem of estimating the distance to bipartiteness in the adjacency list query model. The distance to the graph property of bipartiteness is said to be $\psi$ if $\psi n^2$ edges need to be added or removed from the graph on $n$ vertices in order to obtain a bipartite graph. We first recall the result by~\cite{GhoshMRS22} as follows. 

\begin{theorem}~\cite{GhoshMRS22}
For every $\kappa > 0$, there exists an algorithm that given input graph $G$ inspects a random subgraph of $G$ on $\tO{(1/\kappa^3)\log(1/\delta)}$ vertices and estimates the distance from $G$ to bipartiteness to within an additive error of $\kappa n^2$ with probability $\geq 1-\delta$. 
\end{theorem}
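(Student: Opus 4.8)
The plan is to establish the sampling fact behind the statement: a uniformly random vertex sample, together with a short enumeration over colorings of a small ``core'' of the sample, already pins down the distance to bipartiteness up to $\pm\kappa$. First I would fix notation: for a graph $H$ and a $2$-coloring $\sigma$ of $V(H)$, let $e_\sigma(H)$ be the number of \emph{monochromatic} edges of $H$ under $\sigma$, and put $\mathrm{dist}_{\mathrm{bip}}(H):=\frac{1}{|V(H)|^{2}}\min_\sigma e_\sigma(H)$ --- this normalization is legitimate because making a graph bipartite only ever requires deleting edges, and its $\kappa n^{2}$ scale matches the theorem's error budget. Write $\psi:=\mathrm{dist}_{\mathrm{bip}}(G)$. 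The algorithm samples a uniform vertex set $S=U\cup W$ consisting of a ``core'' $U$ and a ``verifier'' $W$, and queries all pairs inside $S$. For each of the $2^{|U|}$ colorings $\tau$ of $U$, it forms the \emph{greedy completion} $\Phi_\tau:V\to\{0,1\}$ that agrees with $\tau$ on $U$ and places every $v\notin U$ on the side minimizing its number of monochromatic edges to $U$ (ties broken arbitrarily); the point is that $\Phi_\tau$ is a function of $(U,\tau)$ alone, and its restriction to $W$ --- hence the number of monochromatic edges of $G[W]$ under $\Phi_\tau$ --- is computable from the queried pairs. The algorithm returns $\min_\tau\frac{1}{|W|^{2}}\,e_{\Phi_\tau}(G[W])$. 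Only the number of inspected vertices is bounded here; the enumeration over $2^{|U|}$ colorings is exponential-time local post-processing, consistent with the exponential-in-$1/\kappa$ behaviour of distance estimation noted above.

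It suffices to show the output lies in $[\psi-\kappa,\psi+\kappa]$ with probability at least $2/3$, with amplification deferred to the end. For the lower bound $\mathrm{output}\ge\psi-\kappa/2$, the easy direction: every completion $\Phi_\tau$ is a legal $2$-coloring of $V$, so $e_{\Phi_\tau}(G)\ge\psi n^{2}$; and because $\Phi_\tau$ depends only on $(U,\tau)$, conditioning on $U$ makes $W$ a fresh uniform sample, so $\frac{1}{|W|^{2}}e_{\Phi_\tau}(G[W])$ is a bounded-difference function of the sampled vertices concentrating about $e_{\Phi_\tau}(G)/n^{2}$. A McDiarmid bound together with a union bound over the $2^{|U|}$ colorings $\tau$ shows that \emph{all} of these empirical costs are within $\kappa/2$ of their true values, provided $|W|=\tilde\Omega\big(|U|/\kappa^{2}\big)$; this is precisely what forces the verifier to be a $1/\kappa^{2}$ factor larger than the core, so it is the \emph{relation} between $|U|$ and $|W|$, more than their absolute sizes, that determines the shape of the final bound.

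The upper bound $\mathrm{output}\le\psi+\kappa$ is the main obstacle and is what forces $|U|$ to be polynomial in $1/\kappa$. Fix an optimal coloring $\sigma^{\mathrm{OPT}}$ of $G$ and consider the core coloring $\tau^{\star}:=\sigma^{\mathrm{OPT}}|_{U}$; by the verification estimate above, $\mathrm{output}\le\frac{1}{|W|^{2}}e_{\Phi_{\tau^{\star}}}(G[W])\le e_{\Phi_{\tau^{\star}}}(G)/n^{2}+\kappa/2$, so the whole thing reduces to the \emph{quality} estimate $e_{\Phi_{\tau^{\star}}}(G)\le\bigl(\psi+\tfrac{\kappa}{2}\bigr)n^{2}$: completing greedily from a random core is almost as good as $\sigma^{\mathrm{OPT}}$. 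I would prove this by charging, vertex by vertex, the excess monochromatic edges of $\Phi_{\tau^{\star}}$ over $\sigma^{\mathrm{OPT}}$ --- vertices of degree below $\kappa n/10$ contribute at most their degree apiece and at most $\kappa n^{2}/10$ in total no matter how they are colored, whereas for a higher-degree vertex $v$ the neighbours of $v$ that land in $U$ faithfully reflect how $v$'s neighbourhood splits under $\sigma^{\mathrm{OPT}}$, so by a Chernoff bound $\Phi_{\tau^{\star}}$ disagrees with $\sigma^{\mathrm{OPT}}$ at $v$ only when that split is nearly balanced, in which case the penalty for disagreeing is itself small; summing the resulting per-vertex bound over all $v$ controls the total excess once $|U|$ is a suitable $\mathrm{poly}(1/\kappa)$. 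Optimizing the two sample sizes subject to $|W|=\tilde\Omega(|U|/\kappa^{2})$, following the classical sampling analyses of~\cite{GoldreichGR98} and~\cite{AlonVKK03} for dense-graph partition problems, gives total sample size $\tO{1/\kappa^{3}}$ for constant success probability; running $\Theta(\log(1/\delta))$ independent copies and returning the median then boosts this to success probability $1-\delta$ using $\tO{\kappa^{-3}\log(1/\delta)}$ inspected vertices, as claimed. The delicate part I expect to spend the most effort on is making the quality estimate quantitatively tight --- the per-vertex trade-off between the probability that the greedy rule errs and the cost of erring, and the aggregate optimization over neighbourhood imbalances --- together with bookkeeping the $\mathrm{polylog}(1/\kappa)$ factors from the two union bounds so the sample size is genuinely $\tO{1/\kappa^{3}}$.
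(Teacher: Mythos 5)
First, a point of context: the paper does not prove this statement at all --- it is imported verbatim from \cite{GhoshMRS22} and used as a black box, so your proposal can only be judged against the known sampling arguments (\cite{GoldreichGR98}, \cite{AlonVKK03}) rather than against an in-paper proof.

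Your outline is the classical core-plus-verifier scheme and its qualitative ingredients are right, but the quantitative bookkeeping does not deliver the stated $\tO{1/\kappa^3}$ bound, and this is a genuine gap rather than a loose constant. Your own constraints are: (i) the union bound over the $2^{|U|}$ greedy completions forces $|W|=\tilde\Omega(|U|/\kappa^2)$, and (ii) the ``quality estimate'' $e_{\Phi_{\tau^\star}}(G)\le(\psi+\kappa/2)n^2$ must hold for the core size you choose. For (ii), the per-vertex trade-off you describe actually pins down the core size: if a high-degree vertex $v$ has imbalance $\Delta_v=|a_v-b_v|$ between the two sides of its neighbourhood under $\sigma^{\mathrm{OPT}}$, the greedy rule errs with probability roughly $\exp(-\Omega(|U|\Delta_v^2/(n\deg v)))$ and pays $\Delta_v$ when it errs, so the worst-case expected excess at $v$ is of order $\sqrt{n\deg(v)/|U|}\le n/\sqrt{|U|}$, and summing over $n$ vertices gives excess $n^2/\sqrt{|U|}$; making this at most $\kappa n^2$ requires $|U|=\tilde\Omega(1/\kappa^2)$, not $\tO{1/\kappa}$. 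Plugging $|U|\approx 1/\kappa^2$ into (i) gives $|W|=\tilde\Omega(1/\kappa^4)$, i.e.\ your scheme as analyzed yields the $\tO{1/\kappa^4}$ bound of the \cite{AlonVKK03}-style analysis, whereas your final ``optimization'' step silently assumes $|U|=\tO{1/\kappa}$, which the quality estimate does not support. To reach $\tO{1/\kappa^3}$ one needs an idea beyond ``greedy completion from a random core plus a union bound over all $2^{|U|}$ core colorings'' (e.g.\ an adaptive or averaged choice of the core, or an analysis that avoids paying for all $2^{|U|}$ completions), which is precisely the content of the cited result and is missing from the proposal. Two smaller points: the charging step comparing $e_{\Phi_{\tau^\star}}(G)$ to $\psi n^2$ must handle monochromatic edges between two non-core vertices, whose status depends on two greedy decisions simultaneously (you flag this as delicate but give no mechanism), and the independence of $U$ and $W$ should be made explicit since your concentration step conditions on $U$.
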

We observe that the global sensitivity (under edge-DP) of the distance to bipartiteness function is $1/n^2$ (or 1 without the normalization factor), since by adding an edge one can introduce an odd cycle, making a previously bipartite graph into a non-bipartite graph, and conversely, by removing an edge, one can remove the existence of an odd cycle thus making a previously non-bipartite graph into a bipartite graph.  By applying \thmref{thm:main} to the results of~\cite{GhoshMRS22}, we get the following guarantees.

\begin{corollary}\corlab{cor:dist-bipart}
Let the distance from a graph $G$ to bipartiteness be denoted as $\mathsf{d_B}(G)$. For any integer $K>0$ and for all $c > 0, \eps > 0$, there is an algorithm $\calA'$ such that  
\begin{enumerate}
    \item (Privacy) $\calA'$ is $\eps$-differentially private where $\eps$ is constant.
    \item (Accuracy) For all graphs $G$ on $n$ vertices, with probability $1-\frac{1}{n^c}-\frac{1}{n^{c-2}(e^\eps-1)/K+1}$,
    \begin{align*}\small
    &\mathsf{d_B}(G) -\kappa n^2\left(\frac{8c}{\eps}+ \frac{1}{\log(n)}\right)-  \frac{2c \log(n)}{\eps}-\frac{1}{Kn^2} \leq \calA'(G) \\&\leq  \mathsf{d_B}(G)+\kappa n^2\left(\frac{8c}{\eps}+ \frac{1}{\log(n)}\right)+ \frac{2c\log(n)}{\eps} +\frac{1}{Kn^2}
    \end{align*}
    \item (Query) $\calA'$ inspects $\tO{\frac{\log^4(n)}{\kappa^3}}$ vertices.
 \end{enumerate}
\end{corollary}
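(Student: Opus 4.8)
The plan is to apply the black-box framework directly: invoke \thmref{thm:main} on the non-private distance-to-bipartiteness estimator of \cite{GhoshMRS22}, then apply the post-processing step of \thmref{thm:dp-transform} to upgrade from approximate to pure DP. The preliminary observation needed is the global sensitivity bound: under edge-neighboring, $|\mathsf{d_B}(G)-\mathsf{d_B}(G')|\le 1$, since given an optimal edit set $S'$ witnessing $\mathsf{d_B}(G')$ for $G'=G\triangle\{e\}$, the set $S'\triangle\{e\}$ makes $G$ bipartite and has size at most $|S'|+1$, and symmetrically; hence $\Delta_f=1$ for $f=\mathsf{d_B}$ in its un-normalized (integer-valued) form, which takes values in $[0,M]$ with $M\le n^2$ (the $1/n^2$ normalization merely rescales $M$ and the noise scale together). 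Boosted by the median trick, the estimator of \cite{GhoshMRS22} is a tunable $(0,\kappa' n^2,\delta)$-approximation of $f$ that inspects $\tO{(1/\kappa')^3\log(1/\delta)}$ vertices for any target normalized additive error $\kappa'$ and any failure probability $\delta$; crucially there is no multiplicative error, i.e. $\alpha=0$.

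The key point is how to set the knobs. Because $\alpha=0$, the usual tuning $\rho=\eps\alpha/(12\log(4/\delta))$ in \algref{alg:fpras-DP} collapses to $\rho=0$ and does nothing; instead I would push all of the tuning into the additive slot, running the estimator with additive-error parameter $\tau:=\kappa n^2/\log n$ and failure probability $\delta/2$, and releasing $x_A+X$ with $X\sim\Lap\left(\frac{2(4\tau+\Delta_f)}{\eps}\right)$. Privacy of this $(\eps,O(\delta))$-DP mechanism is immediate from \lemref{lem:dp-fpras} specialized to $\rho=0$. For accuracy, \lemref{lem:dp-acc-fpras} with $\rho=0$ and $\gamma=c\log n$ gives, with probability $1-\exp(-\gamma)-\delta$, an additive error of $\tau\left(\frac{8\gamma}{\eps}+1\right)+\frac{2\Delta_f\gamma}{\eps}$; substituting $\tau=\kappa n^2/\log n$, $\Delta_f=1$, $\gamma=c\log n$ yields exactly $\kappa n^2\left(\frac{8c}{\eps}+\frac{1}{\log n}\right)+\frac{2c\log n}{\eps}$, the inner error term of the corollary. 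The query complexity is $R(n,0,\tau,\delta/2)=\tO{(\log n/\kappa)^3\log(1/\delta)}$.

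Finally I would invoke \thmref{thm:dp-transform} with $M\le n^2$ and the stated integer parameter $K$: the rounding step adds at most $\frac{1}{KM}=\frac{1}{Kn^2}$ to the additive error, incurs no additional queries, and shrinks the success probability by $p=\frac{\delta' K(M+1)}{e^\eps-1+\delta' K(M+1)}$, where $\delta'=O(\delta)$ is the privacy slack carried over. Choosing $\gamma$ and $\delta$ of order $\log n$ and $n^{-c}$ so that $\exp(-\gamma)+\delta\le n^{-c}$ and $\delta'=\Theta(n^{-c})$, the $M+1\approx n^2$ factor makes $p=\Theta\!\left(1/(n^{c-2}(e^\eps-1)/K+1)\right)$, the net success probability is $1-\frac{1}{n^c}-p$, and $\log(1/\delta)=\Theta(\log n)$ collapses the query bound to $\tO{\log^4 n/\kappa^3}$; assembling these three facts gives the three parts of the statement.

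I expect the only real obstacle to be parameter bookkeeping rather than any new idea. Since $\alpha=0$, the multiplicative-tuning mechanism that powers the framework is vacuous, so one has to notice that the accuracy/efficiency trade-off is driven entirely by $\tau$: the naive choice $\tau=\kappa n^2$ leaves an un-absorbed factor of $\log n$ in the additive error ($\kappa n^2\cdot\frac{8c\log n}{\eps}$), whereas $\tau=\kappa n^2/\log n$ fixes this at the cost of a $\polylog(n)$ blow-up in the $\tO{1/\kappa^3}$ query bound. One must also keep the two roles of ``$\delta$'' separate (the estimator's failure probability versus the privacy slack fed into \thmref{thm:dp-transform}), likewise the two roles of ``$\kappa$'' (the framework's additive knob versus the target normalized error), and carry out the short sensitivity argument above; none of this is deep, but it is where a careless write-up would go wrong.
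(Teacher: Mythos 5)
Your proposal is correct and follows essentially the same route the paper takes: observe $\Delta_{\mathsf{d_B}} = 1$ under edge-neighboring, apply the framework of \thmref{thm:main} with $\alpha=0$ and the tuned additive knob $\tau=\kappa n^2/\log n$ (the paper explicitly flags this tuning trick, citing the analogous $\tau=\kappa/\log n$ choice for connected components), take $\gamma=c\log n$ and $\delta=\Theta(n^{-c})$, and finish with the \thmref{thm:dp-transform} post-processing at scale $M\approx n^2$. The parameter bookkeeping you carry out (additive error $\kappa n^2(\frac{8c}{\eps}+\frac{1}{\log n})+\frac{2c\log n}{\eps}$, post-processing slack $p\approx \frac{1}{n^{c-2}(e^\eps-1)/K+1}$, and query count $\tO{\log^4 n/\kappa^3}$) matches the corollary, and your symmetric-difference argument for $\Delta_f\le 1$ is a cleaner formalization of the paper's informal odd-cycle remark.
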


\section{Private Streaming Algorithms}\seclab{sec:streaming}

In this section, we apply our general framework to achieve private algorithms on data streams, where updates to an underlying dataset arrive sequentially and the goal is to output or approximate some predetermined function using space sublinear in the size of the input. 
In \secref{sec:dynamic}, we consider applications to dynamic/turnstile streams, where updates of the stream can be both insertions and deletions.
In \secref{sec:sw}, we consider applications to the sliding window model, where all updates of the stream are insertions, but elements are also implicitly deleted by the sliding window once they are past their expiration, i.e., when $W$ subsequent updates arrive in the stream, for a fixed window parameter $W>0$.

\subsection{Dynamic/Turnstile Streams}
\seclab{sec:dynamic}
In this section, we first apply our general framework to problems in the dynamic/turnstile streaming model. 
We remark that since this model generalizes the insertion-only model, our results also automatically apply to the insertion-only model. 
We say that streams $\frakS$ and $\frakS'$ are \emph{neighboring} if there exists a single update $i\in[m]$ such that $u_i\neq u'_i$, where $u_1,\ldots,u_m$ are the updates of $\frakS$ and $u'_1,\ldots,u'_m$ are the updates of $\frakS'$. The DP streaming algorithms we present here are in the one-shot model, i.e., the algorithm outputs at the end of the stream. 

\paragraph{Weighted minimum spanning tree.}
We first study the problem of estimating the weighted minimum spanning tree of a graph implicitly defined in a data stream. 
Given a set of vertices $V$, each update in the stream has the form $[-M,M]\times(u,v)$ for some $u,v\in V$, which changes the weight of edge $(u,v)$ by some amount between $-M$ and $M$. 
We recall the following turnstile streaming algorithm for estimating the weighted minimum spanning tree by~\cite{AhnGM12}. 
\begin{theorem}~\cite{AhnGM12}
\thmlab{thm:mst:turnstile}
For any $\alpha\in(0,1]$, there exists a single-pass algorithm that outputs a $(\alpha,0)$-approximation to the weight of the minimum spanning tree of a dynamic graph with probability at least $\frac{2}{3}$, using $\O{\frac{1}{\alpha}n\log^3 n}$ bits of space. 
\end{theorem}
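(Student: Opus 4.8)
The plan is to follow the Ahn--Guha--McGregor linear-sketching paradigm: reduce the weight of a minimum spanning tree to a small number of connected-components queries, and answer each such query with a $\polylog$-space \emph{graph sketch} that supports turnstile updates for free because it is linear.

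\textbf{Step 1 (reduce $\mathsf{M}$ to counting connected components).} First I would assume, after rescaling and truncating weights at $\poly(n)$ for a negligible loss, that edge weights are integers in $\{1,\dots,W\}$ with $W=\poly(n)$. For a threshold $i$ let $G_{\le i}$ be the subgraph of edges of weight at most $i$ and $\mathsf{cc}(G_{\le i})$ its number of connected components; Kruskal's algorithm yields the identity $\mathsf{M}=n-W+\sum_{i=1}^{W-1}\mathsf{cc}(G_{\le i})$. To avoid a $W$-fold blow-up I would round every weight up to the nearest power of $(1+\alpha)$, which distorts $\mathsf{M}$ by at most a $(1+\alpha)$ factor and leaves only $L=\O{\frac1\alpha\log W}=\O{\frac1\alpha\log n}$ distinct weight levels; it then suffices to estimate $\mathsf{cc}(G_{\le(1+\alpha)^j})$ for each of the $L$ levels and recombine them via the rounded version of the identity.

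\textbf{Step 2 (the connectivity sketch).} For each vertex $v$ I would maintain a linear sketch of its signed incidence vector $a_v\in\mathbb{R}^{\binom n2}$, where $a_v[\{u,v\}]=+1$ if $u<v$ and $-1$ if $u>v$, scaled by the net presence of the edge. The key cancellation property is that for any $S\subseteq V$ the vector $\sum_{v\in S}a_v$ is supported exactly on the edges with exactly one endpoint in $S$, since internal edges contribute $+1$ and $-1$ from their two endpoints. Each $a_v$ is sketched with $t=\O{\log n}$ independent $\ell_0$-samplers (each an $\O{\log^2 n}$-bit linear sketch returning a uniformly random nonzero coordinate, failing with $1/\poly(n)$ probability). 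Linearity makes turnstile weight updates $\pm\delta\times(u,v)$ cheap to apply, and the per-level sketch of a supernode $S$ is obtained as the sum of the per-vertex sketches over $S$.

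\textbf{Step 3 (recover a spanning forest, then conclude).} For a fixed level I would simulate Bor\r uvka's algorithm for $\O{\log n}$ rounds: in round $r$ use the $r$-th batch of $\ell_0$-samplers to extract from each current supernode a random boundary edge (valid by the cancellation property), add all of them to the forest, and contract. Each round at least halves the number of supernodes, so after $\O{\log n}$ rounds the partition is exactly the connected components, the recovered edges form a spanning forest, and $\mathsf{cc}(G_{\le(1+\alpha)^j})=n-(\#\text{forest edges})$. A union bound over the $\le n$ sampler calls per round, the $\O{\log n}$ rounds, and the $L$ levels — driving each sampler failure probability to $1/\poly(n)$ — gives overall success $\ge 2/3$; the cross-round adaptivity is harmless because the supernodes queried in round $r$ depend only on the outputs of batches $1,\dots,r-1$ and hence are independent of the fresh batch $r$. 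The space is $\O{n\log^3 n}$ bits per level ($n$ vertices, $\O{\log n}$ samplers each, $\O{\log^2 n}$ bits each), hence $\O{\frac1\alpha n\log^3 n}$ over all levels, absorbing the $\log n$ from $L$ into the polylog factors.

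\textbf{Main obstacle.} The step I expect to be the real difficulty is making the reduction genuinely work in the \emph{turnstile} model: an incoming update does not reveal an edge's final weight, so one cannot route it to the correct threshold level, and the per-level sketches must instead be set up so that a single weight update maps linearly onto all levels simultaneously, while also correctly discarding edges whose net weight collapses to $0$. The $\ell_0$-sampler construction and the Bor\r uvka round/adaptivity bookkeeping are the other delicate pieces, but both are by now standard and can be invoked essentially as black boxes.
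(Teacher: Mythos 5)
This statement is a citation to Ahn, Guha, and McGregor (SODA 2012); the paper invokes it as a black box and contains no proof of its own, so there is nothing in the paper to compare your argument against. Your reconstruction follows the standard AGM graph-sketching route --- the Kruskal identity reducing MST weight to connected-component counts at $O(\frac{1}{\alpha}\log n)$ geometric weight thresholds, signed-incidence-vector cancellation, $\ell_0$-samplers per vertex per level, and a Boruvka simulation using fresh sampler batches per round to sidestep adaptivity --- and those are indeed the ideas behind the cited result.

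One issue with the reconstruction as written: your space tally does not reach the claimed bound. You count $O(n\log^3 n)$ bits per level and $L = O(\frac{1}{\alpha}\log n)$ levels, which multiplies out to $O(\frac{1}{\alpha}n\log^4 n)$, not the stated $O(\frac{1}{\alpha}n\log^3 n)$; the remark that the $\log n$ from $L$ can be ``absorbed into the polylog factors'' is not available when the target bound fixes the polylog exponent at $3$. You would need to shave one logarithm, for example via a tighter $\ell_0$-sampler or by sharing Boruvka-round samplers across threshold levels. Your ``main obstacle'' observation about routing additive weight updates to the correct threshold level in a general turnstile stream is a fair worry, but in the AGM model each update identifies a fixed-weight edge to insert or delete, so level routing is immediate once the weight is read from the update; the paper's looser phrasing of the update model as arbitrary additive weight changes is best read as shorthand for that setting.
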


\begin{lemma}
\lemlab{lem:mst:sens}
Suppose a graph is implicitly defined by a dynamic stream that changes the weight of each edge between $[-M,M]$. 
Then the global sensitivity of the weighted minimum spanning tree is at most $2M$. 
\end{lemma}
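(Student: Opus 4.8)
The plan is to run the textbook minimum-spanning-tree exchange argument, after first doing a little bookkeeping on how much changing one stream update can perturb the edge weights. Concretely, let $\frakS$ and $\frakS'$ be neighboring streams differing only in update $i$, say the $i$-th update of $\frakS$ is $(a,(x,y))$ and that of $\frakS'$ is $(a',(x',y'))$ with $a,a'\in[-M,M]$ (the two updates may touch the same edge or different edges). Let $G$ and $G'$ be the weighted graphs on the common vertex set $V$ defined by $\frakS$ and $\frakS'$, with weight functions $w$ and $w'$. Since all updates other than the $i$-th coincide, $w$ and $w'$ agree on every edge except possibly $(x,y)$ and $(x',y')$. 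Writing $c_e:=|w(e)-w'(e)|$, I would check the two cases: if $(x,y)\neq(x',y')$ then $c_{(x,y)}\le|a|\le M$ and $c_{(x',y')}\le|a'|\le M$, and all other $c_e=0$; if $(x,y)=(x',y')$ then $c_{(x,y)}\le|a-a'|\le 2M$ and all other $c_e=0$. Either way, $\sum_e c_e\le 2M$.

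Next I would invoke the exchange step. Assume (as the weighted-MST problem and \thmref{thm:mst:turnstile} implicitly do) that $G$ and $G'$ are connected on the common edge set, and let $T'$ be a minimum spanning tree of $G'$, so $T'$ is also a spanning tree of $G$. Then, denoting by $\mathsf{M}(\cdot)$ the weight of the MST,
\[
\mathsf{M}(G)\;\le\;\sum_{e\in T'}w(e)\;\le\;\sum_{e\in T'}\bigl(w'(e)+c_e\bigr)\;=\;\mathsf{M}(G')+\sum_{e\in T'}c_e\;\le\;\mathsf{M}(G')+\sum_{e}c_e\;\le\;\mathsf{M}(G')+2M.
\]
By symmetry $\mathsf{M}(G')\le\mathsf{M}(G)+2M$, hence $|\mathsf{M}(G)-\mathsf{M}(G')|\le 2M$. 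Taking the maximum over all neighboring pairs of streams yields $\Delta\le 2M$, which is the claim.

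The only real subtlety — and the step I would be most careful to get right — is the perturbation bookkeeping in the first paragraph: a single changed stream update can affect \emph{two} distinct edges (when it switches which edge it modifies), yet the total $\ell_1$-perturbation of the weight vector is still at most $2M$ because each of the two changes has magnitude at most $M$. Once that fact is pinned down, the MST part is just the standard tree-exchange inequality. A secondary point worth one sentence is the connectivity assumption: if an update could toggle an edge in or out in a way that disconnects one of the graphs, the MST weight is not defined, so (as with the surrounding streaming result) the lemma is understood to range over streams defining connected graphs; under the alternative reading where the graph is on the full edge set with real weights, this issue does not arise and the argument above applies verbatim.
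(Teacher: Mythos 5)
Your proof is correct and uses essentially the same tree-exchange argument as the paper: a minimum spanning tree of one graph is a spanning tree of its neighbor, so the MST weight can change by at most the total $\ell_1$-perturbation of the edge weights. The paper organizes this slightly differently — it first proves that a single-edge change of magnitude $\le M$ moves the MST weight by $\le M$ (using that the perturbed weights dominate), then finishes with a terse appeal to the triangle inequality to cover the fact that one changed stream update can touch two edges or shift one edge by up to $2M$ — whereas you fold the case analysis into the bound $\sum_e c_e \le 2M$ up front and run the exchange inequality once, which is a bit cleaner and makes explicit the bookkeeping the paper compresses into ``triangle inequality.''
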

\begin{proof}
Let $G,G'$ be two connected graphs so that some edge $(u,v)$ has weight $W$ in $G$ and $W'$ in $G'$ and suppose without loss of generality that $W<W'$. 
Let $T$ be the minimum weighted spanning tree of $G$ and $T'$ be the minimum weighted spanning tree of $G'$. 
Note that since $W'>W$, then $T$ is also a valid tree in $G$, so that the minimum weighted spanning tree of $G'$ has at most the weight of $T$ in $G'$. 
Since the weight of $(u,v)$ changed by at most $M$, then the weight of $T$ in $G'$ is at most $M$ more than the weight of $T$ in $G$. 
Hence, the minimum weighted spanning tree of $G'$ has weight at most $M$ more than that of the minimum weighted spanning tree of $G$. 
Moreover, the weight of each edge in $G'$ is at least the weight of each edge in $G$, so the minimum weighted spanning tree of $G'$ has weight at least that of the minimum weighted spanning tree of $G$. 
Thus the minimum weighted spanning trees of $G$ and $G'$ differ by at most $M$. 
By the triangle inequality, it follows that the global sensitivity of the weighted minimum spanning tree of a dynamic graph is at most $2M$. 
\end{proof}

Therefore, by first applying the median trick (see~\remref{rem:median-trick}) to \thmref{thm:mst:turnstile} to boost the success probability and then applying our main framework, and noting the global sensitivity bounds in \lemref{lem:mst:sens}, we have the following guarantee:
\begin{corollary}
\corlab{cor:mst:turnstile}
For any integer $K>0$ and for all $\alpha\in(0,1)$, $c>0$, $\eps>0$, and $\rho:= \O{\frac{\alpha\eta}{\log m}}$ on a turnstile stream $\frakS$ of length $m=\poly(n)$, where $\eta=\min(\eps,1)$, there exists a turnstile streaming algorithm $\calA$ algorithm that:
\begin{enumerate}
\item (Privacy) The algorithm $\calA$ is $\eps$-differentially private where $\eps$ is constant.
\item (Accuracy) The algorithm $\calA$ outputs $\widehat{W}(\frakS)$ such that with probability at least $1-\frac{1}{m^c} - \frac{1}{m^{c-1}(e^\eps-1)/(KMm) + 1}$, 
\[(1-\alpha)W(\frakS)-\frac{12cM\log m}{\eps} - \frac{1}{KMm}\le\widehat{W}(\frakS)\le(1+\alpha)W(\frakS)+\frac{12cM\log m}{\eps} +  \frac{1}{KMm},\]
where $W(\frakS)$ denotes the minimum weighted spanning tree in the graph induced by the stream $\frakS$. 
\item (Space) The algorithm $\calA$ uses space $\O{\frac{1}{\alpha\eta} n \log^5 n}$ bits. 
\end{enumerate}
\end{corollary}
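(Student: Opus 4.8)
The plan is to instantiate the three-step template of this section---median boosting, then \thmref{thm:main}, then \thmref{thm:dp-transform}---on the turnstile algorithm of \thmref{thm:mst:turnstile}, using the bound $\Delta_f \le 2M$ from \lemref{lem:mst:sens}. First I would apply the median trick (\remref{rem:median-trick}): running $\Theta(\log(1/\delta))$ independent copies of the algorithm of \thmref{thm:mst:turnstile} and returning the median boosts its $\frac23$-probability guarantee to a $(\alpha,0,\delta)$-approximation of the weighted MST for a target $\delta$ that is an inverse polynomial in $m$, at the cost of an $\O{\log(1/\delta)}$ factor in the space. Since $m = \poly(n)$, this yields a tunable approximation $\calA_f$ of $W$ using $R(n,\alpha,0,\delta) = \O{\frac1\alpha n\log^3 n\,\log(1/\delta)}$ bits.

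Next I would run \algref{alg:fpras-DP}/\thmref{thm:main} on $\calA_f$ with $\kappa = 0$, with $\gamma := 3c\log m$, with $\delta := \Theta(m^{-c})$, and with the input multiplicative parameter set to $a := \frac{12\alpha\log(4/\delta)}{\eps + 16\gamma} = \Theta(\alpha)$, chosen precisely so that the output multiplicative error $\alpha' = \frac{a(\eps + 16\gamma)}{12\log(4/\delta)}$ equals $\alpha$. The additive noise term from \thmref{thm:main} is $\frac{2\Delta_f}{\eps}\gamma \le \frac{4M\gamma}{\eps} = \frac{12cM\log m}{\eps}$ by \lemref{lem:mst:sens}, and the accuracy holds except with probability $\delta + \exp(-\gamma) = \O{m^{-c}}$. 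The resource consumed is $R(n,\rho,0,\delta)$ with $\rho = \frac{\eps a}{12\log(4/\delta)} = \frac{\eps\alpha}{\eps + 16\gamma} = \O{\frac{\alpha\eta}{\log m}}$, where the factor $\eta = \min(\eps,1)$ just records that $\rho$ stays $\O{\alpha/\log m}$ in the constant-$\eps$ regime of interest; plugging this $\rho$ and $\log(1/\delta) = \Theta(\log n) = \Theta(\log m)$ into Step~1 gives the claimed space bound $\O{\frac{1}{\alpha\eta}n\log^5 n}$. This step already yields an $(\eps,\delta(1+\exp(\eps/2)))$-DP algorithm.

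Finally I would apply \thmref{thm:dp-transform} to upgrade this to pure $\eps$-DP. The only problem-specific ingredient is a bound on the maximum output value $M_{\max} := \max_{\frakS}W(\frakS)$: since the stream has length $m$ and each update moves a single edge weight by at most $M$, the sum of all edge weights---and hence the weight of any spanning tree---is at most $Mm$, so $M_{\max}\le Mm$. Invoking \thmref{thm:dp-transform} with this $M_{\max}$ and the free integer $K$ adds an extra $\frac{1}{KMm}$ to the additive error and an extra failure probability $p = \frac{\delta'K(Mm+1)}{e^\eps-1+\delta'K(Mm+1)}$ with $\delta' = \Theta(m^{-c})$, which matches the fraction $\frac{1}{m^{c-1}(e^\eps-1)/(KMm)+1}$ in the statement up to constants. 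A union bound over the failure events of Steps~1--3 gives the stated overall failure probability, and the multiplicative error, additive error, and space guarantees are exactly those accumulated above.

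I expect the argument to be essentially mechanical once these parameter choices are pinned down; the two places that need a genuine (if short) argument rather than a direct appeal to the framework are the bound $M_{\max}\le Mm$ on the largest attainable MST weight in a turnstile stream, and the constant-chasing needed to make $\gamma$ and $\delta$ line the three error terms up with the exact expressions $\frac{12cM\log m}{\eps}$, $\frac{1}{KMm}$, and the two failure-probability fractions as written.
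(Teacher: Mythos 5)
Your proposal follows exactly the route the paper sketches in the sentence immediately preceding the corollary: apply the median trick to \thmref{thm:mst:turnstile} to get a tunable $(\alpha,0,\delta)$-approximation, run \algref{alg:fpras-DP}/\thmref{thm:main} with the sensitivity bound $\Delta_f\le 2M$ from \lemref{lem:mst:sens}, and finally invoke \thmref{thm:dp-transform} to upgrade to pure $\eps$-DP. The parameter choices are the right ones: $\gamma = 3c\log m$ gives $\frac{2\Delta_f}{\eps}\gamma \le \frac{12cM\log m}{\eps}$, the ``reverse-engineered'' input parameter $a$ yields output multiplicative error exactly $\alpha$, and $\rho=\Theta\left(\frac{\alpha\eta}{\log m}\right)$ combined with the $\O{\log(1/\delta)}=\O{\log m}=\O{\log n}$ median-boosting factor turns the $\O{\frac1\alpha n\log^3 n}$ bound of \thmref{thm:mst:turnstile} into the claimed $\O{\frac1{\alpha\eta}n\log^5 n}$. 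The observation that $\max_{\frakS}W(\frakS)\le Mm$ is the needed problem-specific step, and it is correct.

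One small inaccuracy in your write-up that you should be aware of rather than worry about: with your choices $\delta'=\Theta(m^{-c})$ and $M_{\max}=Mm$, \thmref{thm:dp-transform} gives $p=\frac{1}{\frac{e^\eps-1}{\delta' K(Mm+1)}+1}$ whose key ratio comes out to roughly $\frac{m^{c-1}(e^\eps-1)}{KM}$, not the $\frac{m^{c-1}(e^\eps-1)}{KMm}$ appearing in the corollary statement. That is a factor of $m$, not a constant, so ``matches up to constants'' as you wrote is not quite right. Since a smaller $p$ only strengthens the final probability guarantee, your argument still proves the corollary as stated; the discrepancy appears to be a spurious factor of $m$ in the corollary's printed expression, and your derivation is the one that is internally consistent with \thmref{thm:dp-transform}.
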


\paragraph{$L_p$ norm and $F_p$ moment estimation.}
We next study the related problems of $L_p$ norm and $F_p$ moment estimation, where a frequency vector $x\in\mathbb{R}^n$ is implicitly defined by updates of the stream and the goal is to approximate the the $F_p$ moment of $x$, i.e., $x_1^p+\ldots+x_n^p$, or the $L_p$ norm of $x$, i.e., $\left(x_1^p+\ldots+x_n^p\right)^{1/p}$. 
Each update of the stream has the form $\{-1,+1\}\times[n]$, indicating whether the update implicitly decrements or increments the corresponding coordinate of the frequency vector $x$. 
Finally, the $L_0$ norm of $x$ is defined as $\|x\|_0=|\{i\in[n]\,|\,x_i\neq 0\}|$, or the number of nonzero coordinates of $x$. 
We assume the data stream has length $m$.

We first upper bound the global sensitivity of the $L_p$ norm for $p\ge 1$ and the $F_p$ moment for $p\in[0,1]$. 
\begin{lemma}
\lemlab{lem:lp:sens}
For $p\ge 1$, the global sensitivity of the $L_p$ norm is at most $2$.
For $p\in[0,1]$, the global sensitivity of the $F_p$ moment is at most $2$. 
\end{lemma}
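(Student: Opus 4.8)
The plan is to reduce the statement to an elementary observation about how the frequency vector changes under a single neighboring update. Recall that the frequency vector $x\in\mathbb{R}^n$ is $x=\sum_j s_j\, e_j$, summed over all updates $(s_j,j)\in\{-1,+1\}\times[n]$ of the stream, where $e_j$ denotes the $j$-th standard basis vector. Thus if $\frakS'$ is obtained from $\frakS$ by replacing a single update $(s,j)$ with $(s',k)$, then the induced frequency vector changes from $x$ to $x'=x-s\,e_j+s'\,e_k$. Hence either $j\neq k$, in which case $x$ and $x'$ differ by exactly $1$ in each of the two coordinates $j$ and $k$, or $j=k$, in which case they differ in the single coordinate $j$ by $|s'-s|\le 2$. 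In either case, $x-x'$ is an \emph{integer} vector supported on at most two coordinates with $\|x-x'\|_1\le 2$; this is the only structural fact both halves of the lemma will use.

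For $p\ge 1$, I would simply invoke that $\|\cdot\|_p$ is a norm. By the reverse triangle inequality, $\bigl|\,\|x\|_p-\|x'\|_p\,\bigr|\le\|x-x'\|_p$, and by the standard monotonicity $\|v\|_p\le\|v\|_1$ valid for $p\ge 1$ we get $\|x-x'\|_p\le\|x-x'\|_1\le 2$. Taking the maximum over neighboring streams gives the claimed bound of $2$ on the global sensitivity of the $L_p$ norm.

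For $p\in(0,1]$ and the $F_p$ moment $F_p(x)=\sum_i|x_i|^p$, I would use the subadditivity of $t\mapsto t^p$ on $\mathbb{R}_{\ge 0}$ (a consequence of concavity together with $0^p=0$), which yields $\bigl|\,|a|^p-|b|^p\,\bigr|\le|a-b|^p$ for all reals $a,b$. Summing coordinatewise, $|F_p(x)-F_p(x')|\le\sum_i|x_i-x_i'|^p$. The key point is that $x-x'$ has integer entries, so every nonzero entry has absolute value at least $1$ and hence satisfies $|x_i-x_i'|^p\le|x_i-x_i'|$; therefore $\sum_i|x_i-x_i'|^p\le\|x-x'\|_1\le 2$. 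The boundary case $p=0$ (with the convention $F_0(x)=\|x\|_0$) is even more direct: a single update alters at most two coordinates, so it changes the number of nonzero coordinates by at most $2$.

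I do not expect a genuine obstacle; the only spot needing care is the $F_p$ step, where one must resist bounding $|x_i-x_i'|^p$ by $|x_i-x_i'|$ without justification, since that inequality fails for arguments in $(0,1)$ and it is precisely the integrality of the frequency vector (and of all updates) that makes it legitimate here. A secondary point worth spelling out is that a single stream update can touch \emph{two} coordinates at once, so the extremal case is two unit changes rather than one change of magnitude $1$; both arguments are robust to this since they depend only on the bound $\|x-x'\|_1\le 2$.
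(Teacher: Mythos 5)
Your proof is correct and follows essentially the same strategy as the paper's: bound $\|x-x'\|_1\le 2$ for neighboring streams and then transfer this to the $L_p$ norm via monotonicity of $\ell_p$ norms, and to the $F_p$ moment via a per-coordinate bound. The one place you genuinely improve on the paper is the $p\in(0,1]$ case: the paper argues via ``$|y^p-(y-1)^p|\le 1$ and at most two coordinates differ, each by at most one,'' which silently ignores the possibility that both the deleted and the inserted update hit the \emph{same} coordinate (so that one coordinate changes by $2$, not two coordinates by $1$), and it does not make explicit that the argument hinges on the coordinates being integers. Your route — subadditivity of $t\mapsto t^p$ to get $\big||a|^p-|b|^p\big|\le|a-b|^p$, followed by the observation that integer-valued differences satisfy $|x_i-x_i'|^p\le|x_i-x_i'|$ — handles both of these points cleanly and is uniformly valid for $\|x-x'\|_1\le 2$ regardless of how that mass is distributed. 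This is a more robust write-up of the same underlying fact, not a structurally different proof.
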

\begin{proof}
Let $x,x'\in\mathbb{R}^n$ be two vectors defined by neighboring streams, so that $\|x-x'\|_1\le 2$. 
Then for $p\ge 1$, we have $\|x-x'\|_p\le\|x-x'\|_1\le 2$. 
For $p\in(0,1]$, it suffices to note that $|y^p-(y-1)^p|\le 1$ and at most two coordinates differ between $x$ and $x'$, each by at most one. 
Finally, for $p=0$, note that since at most two coordinates differ between $x$ and $x'$, then $|x-x'|_0\le 2$. 
\end{proof}

For $p>2$, we use the following turnstile streaming algorithm of \cite{GangulyW18}: 
\begin{theorem}~\cite{GangulyW18}
\thmlab{thm:lp:turnstile:big}
For $p>2$, and accuracy parameter $\alpha>0$ there is a turnstile streaming algorithm for $(\alpha,0)$-approximation of $L_p$ with probability at least $1-\delta$ that uses \sloppy $\O{\frac{1}{\alpha^2} n^{1-\frac{2}{p}}\log\frac{1}{\delta}\log n + \frac{1}{\alpha^{4/p}} n^{1-\frac{2}{p}} \log^{2/p}\frac{1}{\delta} \log^2 n}$ bits of space. 
\end{theorem}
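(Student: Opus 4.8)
This statement is quoted from Ganguly and Woodruff~\cite{GangulyW18}, so strictly we would just cite it; but to indicate how such a bound arises, the plan is to recall the level-set / heavy-hitter paradigm for $F_p$ estimation with $p>2$ (originating with Indyk--Woodruff and streamlined by the precision-sampling technique of Andoni--Krauthgamer--Onak). Note first that it suffices to $(1\pm\alpha)$-estimate $F_p=\|x\|_p^p$, since this yields a $(1\pm\O{\alpha/p})$-estimate of $\|x\|_p=F_p^{1/p}$, and one may absorb the $1/p$ into the accuracy target.

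First I would group the coordinates of the implicitly defined frequency vector $x\in\mathbb{R}^n$ into $\O{\log n}$ ``level sets'' $\{i : |x_i|\in[2^{t},2^{t+1})\}$ and observe that $F_p$ equals, up to a $(1\pm\alpha)$ factor, the sum over levels of (level size) times $2^{tp}$; call a coordinate $L_p$-\emph{important} if its $p$-th power is at least an $\alpha^2$ fraction of $F_p$, and note that a counting argument bounds the number of important coordinates by $\O{\alpha^{-2}n^{1-2/p}}$, each of which becomes $L_2$-heavy once the universe is subsampled at a rate matched to its level. Accordingly I would maintain, for each of $\O{\log n}$ geometric subsampling rates, a \countsketch-type structure of width $\tO{\alpha^{-2}n^{1-2/p}}$ that recovers the $L_2$-heavy hitters of the subsampled substream together with approximate frequencies; such sketches are linear, hence support turnstile inserts and deletes, and this accounts for the first space term. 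To estimate $F_p$ I would then sum $|x_i|^p$ over all recovered coordinates, re-weighting a recovery made at subsampling rate $2^{-j}$ by $2^j$, and invoke the precision-sampling analysis to control the estimator's variance level by level; a median-of-means over independent copies, together with the refined high-probability analysis of heavy-hitter recovery at the deepest levels, drives the failure probability down to $\delta$ and produces the $\log^{2/p}(1/\delta)$-type dependence in the second space term.

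The step I expect to be the genuine obstacle---and the reason this is a deep theorem rather than a routine application---is making the subsampling rate, the \countsketch width, and the level-set counting bound fit together so that the recovered heavy hitters capture \emph{every} $\Omega(\alpha^2)$ contribution to $F_p$ while the total space stays at $\tO{\alpha^{-2}n^{1-2/p}}$; obtaining the claimed $\log^{2/p}(1/\delta)$ high-probability dependence rather than the cruder $\polylog(1/\delta)$ of naive repetition needs the sharper per-level concentration bounds of~\cite{GangulyW18}, which is the technical heart of that work.
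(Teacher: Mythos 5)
The paper does not prove this statement at all: it is quoted verbatim as a black-box result from Ganguly and Woodruff~\cite{GangulyW18} (see the line immediately preceding it, ``For $p>2$, we use the following turnstile streaming algorithm of \cite{GangulyW18}''), and its only role here is as an input to the generic transformation of \thmref{thm:main}. You correctly recognize this, and your sketch of the level-set/heavy-hitter/precision-sampling pipeline is a reasonable high-level account of how such $\tO{\alpha^{-2}n^{1-2/p}}$ bounds arise; but since the paper offers no proof, there is nothing in it to compare your sketch against, and the sketch itself would not substitute for the careful per-level concentration analysis that produces the sharp $\log^{2/p}(1/\delta)$ dependence in the second space term.
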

Thus by applying our main framework from \thmref{thm:main} and \thmref{thm:dp-transform} to \thmref{thm:lp:turnstile:big} and noting the global sensitivity bounds in \lemref{lem:lp:sens}, we have the following guarantee:
\begin{corollary}
\corlab{cor:lp:turnstile:big}
For any integer $K>0$, and for all $p>2$, $\alpha>0$, $c>0$, $\eps>0$, and $\rho:= \O{\frac{\alpha\eta}{\log m}}$ on a stream $\frakS$ of length $m=\poly(n)$, where $\eta=\min(\eps,1)$, there exists a turnstile streaming algorithm $\calA$ algorithm that:
\begin{enumerate}
\item (Privacy) The algorithm $\calA$ is $\eps$-differentially private where $\eps$ is constant.
\item (Accuracy) The algorithm $\calA$ outputs $\widehat{L_p}(\frakS)$ such that with probability at least $1-\frac{1}{m^c}-\frac{1}{\poly(m)(e^\eps -1)/K +1}$, 
\[(1-\alpha)L_p(\frakS)-\frac{12c\log m}{\eps} - \frac{1}{K \poly(m)}\le\widehat{L_p}(\frakS)\le(1+\alpha)L_p(\frakS)+\frac{12c\log m}{\eps} + \frac{1}{K \poly(m)},\]
where $L_p(\frakS)$ denotes the $L_p$ norm of the frequency vector induced by the stream $\frakS$. 
\item (Space) The algorithm $\calA$ uses space $\O{\frac{p}{ \alpha^2\eta^2} n^{1-2/p}}\cdot\poly\left(\log n,\log\frac{1}{\alpha\eta}\right)$ bits.
\end{enumerate}
\end{corollary}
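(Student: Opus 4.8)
The plan is to instantiate the black-box pipeline of \thmref{thm:main} followed by the post-processing of \thmref{thm:dp-transform}, taking the non-private turnstile algorithm of \thmref{thm:lp:turnstile:big} as the tunable approximation $\calA_f$. First I would check that $\calA_f$ qualifies: it is a $(\alpha,0,\delta)$-approximation of $L_p$ with purely multiplicative error ($\kappa=0$), and although \thmref{thm:lp:turnstile:big} is stated with a failure probability $\delta$, the explicit $\log\frac1\delta$ dependence in its space bound realizes the median trick of \remref{rem:median-trick}, so $\delta$ is freely tunable; hence $\calA_f$ is a tunable approximation in the sense of \defref{def:tunable}. The relevant global sensitivity bound, $\Delta_f\le 2$, is exactly \lemref{lem:lp:sens}.

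Next I would fix the free parameters. Set $\delta=m^{-3c}$ and the Laplace-tail parameter $\gamma=3c\log m$, so that $e^{-\gamma}=m^{-3c}$ and $\log(4/\delta)=\Theta(\log m)$. Running \algref{alg:fpras-DP} then invokes $\calA_f$ with multiplicative parameter $\rho=\frac{\eps\alpha}{12\log(4/\delta)}=\Theta\!\big(\frac{\alpha\eta}{\log m}\big)$ with $\eta=\min(\eps,1)$ (for $\eps>1$ one simply runs the framework with privacy budget $1$), which is the $\rho$ named in the statement; since $\eps$ is constant, $\eps+16\gamma=\Theta(\log m)$ and the output multiplicative error $\alpha'=\frac{\alpha(\eps+16\gamma)}{12\log(4/\delta)}$ is $\Theta(\alpha)$, so after rescaling the input parameter the final multiplicative factor is $1\pm\alpha$ as claimed. \thmref{thm:main} then yields an $(\eps,\delta')$-DP algorithm $\calA'_f$ with $\delta'=\delta(1+e^{\eps/2})=\Theta(m^{-3c})$ whose output, using $\kappa=0$ (so $\kappa'=0$) and $\Delta_f\le 2$, satisfies $(1-\alpha)L_p(\frakS)-\frac{2\Delta_f}{\eps}\gamma\le\calA'_f(\frakS)\le(1+\alpha)L_p(\frakS)+\frac{2\Delta_f}{\eps}\gamma$ with probability $1-\delta-e^{-\gamma}=1-\Theta(m^{-3c})$, where $\frac{2\Delta_f}{\eps}\gamma\le\frac{12c\log m}{\eps}$; this is the stated additive term.

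Then I would apply \thmref{thm:dp-transform} to $\calA'_f$ with the given integer $K$. The one quantity to pin down is $M=\max_\frakS L_p(\frakS)$: on a length-$m$ turnstile stream the induced frequency vector $x$ has $\|x\|_1\le m$, so $L_p(\frakS)\le\|x\|_1\le m$ for $p\ge 1$, giving $M\le m=\poly(m)$. \thmref{thm:dp-transform} produces an $\eps$-DP algorithm $\calA$ whose accuracy degrades only by the additive $\frac{1}{KM}=\frac{1}{K\poly(m)}$ term, with failure probability increasing by $p=\frac{\delta K(M+1)}{e^\eps-1+\delta K(M+1)}=\frac{1}{(e^\eps-1)/(\delta K(M+1))+1}$; substituting $\delta=m^{-3c}$ and $M+1=\poly(m)$ and absorbing polynomial factors into $\poly(m)$ (and adjusting the constant $c$) gives exactly the stated $1-\frac{1}{m^c}-\frac{1}{\poly(m)(e^\eps-1)/K+1}$.

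Finally I would verify the space bound by substituting $\rho=\Theta\!\big(\frac{\alpha\eta}{\log m}\big)$ and $\log\frac1\delta=\Theta(\log m)=\Theta(\log n)$ (using $m=\poly(n)$) into \thmref{thm:lp:turnstile:big}: the first term becomes $\O{\frac{1}{\alpha^2\eta^2}n^{1-2/p}}\cdot\polylog(n)$, and since $p>2$ forces $4/p<2$, the second term contributes only $\O{\frac{1}{\alpha^{4/p}\eta^{4/p}}n^{1-2/p}}\cdot\polylog(n)$, which for $\alpha\eta\le 1$ is dominated by the first; collecting and loosely bounding yields $\O{\frac{p}{\alpha^2\eta^2}n^{1-2/p}}\cdot\poly\!\big(\log n,\log\frac{1}{\alpha\eta}\big)$ bits. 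The whole argument is essentially a mechanical substitution into the two framework theorems; the only place that demands care is tracking the $\poly(m)$ and $\polylog(n)$ factors --- in particular bounding $M$ so that both the post-processing error $\frac{1}{KM}$ and the added failure probability $p$ come out in the stated form, and confirming that the $p>2$ regime makes the non-$\poly(1/\alpha)$ term of Ganguly--Woodruff subdominant. There is no conceptual obstacle beyond this bookkeeping.
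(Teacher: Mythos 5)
Your proposal is correct and follows essentially the same route the paper takes (but leaves implicit): instantiate \thmref{thm:main} with the Ganguly--Woodruff turnstile algorithm of \thmref{thm:lp:turnstile:big}, use $\Delta_f\le 2$ from \lemref{lem:lp:sens}, and then apply the post-processing of \thmref{thm:dp-transform} after bounding $M=\max_\frakS L_p(\frakS)\le m$. Your parameter choices ($\delta=m^{-3c}$, $\gamma=3c\log m$, $\rho=\Theta(\alpha\eta/\log m)$) and the observation that the $1/\alpha^{4/p}$ term of the space bound is dominated by the $1/\alpha^2$ term for $p>2$ reproduce the stated bounds; the bookkeeping checks out.
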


When $p=2$, we use the following well-known AMS algorithm~\cite{AlonMS99}:
\begin{theorem}~\cite{AlonMS99}
\thmlab{thm:lp:turnstile:two}
For an accuracy parameter $\alpha>0$ there is a turnstile streaming algorithm for $(\alpha,0)$-approximation of $L_2$ with probability at least $1-\delta$ that uses $\O{\frac{1}{\alpha^2}\log m\log\frac{1}{\delta}}$ bits of space, for a stream of length $m$.  
\end{theorem}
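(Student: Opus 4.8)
The plan is to instantiate the classical AMS sketch and argue correctness via a second-moment (Chebyshev) argument combined with the median-of-means trick. First I would fix a family of $4$-wise independent hash functions $h:[n]\to\{-1,+1\}$, which can be sampled and stored using $\O{\log n}$ bits, and maintain the single random linear sketch $Z=\sum_{i\in[n]}h(i)\,x_i$. Since each turnstile update $(\pm 1, j)$ changes $x_j$ by $\pm 1$, the counter $Z$ is updated in $\O{1}$ time by adding $\pm h(j)$, so the sketch is trivially compatible with insertions and deletions; moreover $|Z|\le\sum_i|x_i|\le m$, so $Z$ fits in $\O{\log m}$ bits.

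Next I would establish the two moment bounds that drive the analysis. Pairwise independence of $h$ gives $\Ex{h(i)h(j)}=\mathbf{1}[i=j]$, hence $\Ex{Z^2}=\sum_i x_i^2=\|x\|_2^2$, so $Z^2$ is an unbiased estimator of $F_2=\|x\|_2^2$. Expanding $\Ex{Z^4}=\sum_{i,j,k,\ell}\Ex{h(i)h(j)h(k)h(\ell)}x_ix_jx_kx_\ell$ and using $4$-wise independence, the only surviving terms are those in which the four indices coincide in two equal pairs, which yields $\Ex{Z^4}=\|x\|_2^4+2\sum_{i\neq j}x_i^2x_j^2\le 3\|x\|_2^4$, and therefore $\mathrm{Var}(Z^2)\le 2\|x\|_2^4$.

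Then I would apply the standard two-level boosting. Taking $t=\O{1/\alpha^2}$ independent copies $Z_1,\dots,Z_t$ (with independent hash seeds) and averaging $Y=\frac1t\sum_{r}Z_r^2$ reduces the variance to $\mathrm{Var}(Y)\le 2\|x\|_2^4/t$, so Chebyshev's inequality gives $\Pr\bigl[|Y-\|x\|_2^2|>\alpha\|x\|_2^2\bigr]\le\frac{2}{t\alpha^2}\le\frac13$ for a suitable constant in $t$. Running $s=\O{\log(1/\delta)}$ independent such estimators and returning their median drives the failure probability below $\delta$ by a Chernoff bound on the number of ``good'' estimators. Finally, since the returned value $\widehat{F_2}$ satisfies $(1-\alpha)\|x\|_2^2\le\widehat{F_2}\le(1+\alpha)\|x\|_2^2$ with probability $\ge 1-\delta$, outputting $\sqrt{\widehat{F_2}}$ gives a $(1\pm\alpha)$-approximation to $\|x\|_2$ after rescaling $\alpha$ by a constant, using $\sqrt{1\pm\alpha}=1\pm\O{\alpha}$. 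The total space is $s\cdot t$ counters of $\O{\log m}$ bits plus the hash seeds, i.e. $\O{\frac{1}{\alpha^2}\log m\log\frac1\delta}$ bits, where $\log n$ is absorbed into $\log m$ under the (WLOG) assumption $n\le\poly(m)$.

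The main obstacle — really the only subtle point — is pinning down exactly how much independence is needed: pairwise independence suffices for unbiasedness, but the variance bound $\mathrm{Var}(Z^2)=\O{\|x\|_2^4}$ genuinely requires $4$-wise independence, and one must check that such a family exists with $\O{\log n}$-bit seeds (e.g.\ evaluations of a uniformly random degree-$3$ polynomial over a field of size $\poly(n)$) and can be evaluated cheaply enough not to disturb the space bound. Everything else is routine second-moment bookkeeping plus the median-of-means amplification.
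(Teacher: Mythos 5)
Your reconstruction is correct and is exactly the classical AMS argument that the cited reference proves; the paper itself just cites this result from \cite{AlonMS99} without reproving it. The moment calculations ($\Ex{Z^2}=\|x\|_2^2$ from pairwise independence, $\mathrm{Var}(Z^2)\le 2\|x\|_2^4$ from $4$-wise independence), the Chebyshev/median-of-means amplification, the observation that $|Z|\le m$ bounds each counter by $\O{\log m}$ bits, and the seed-length accounting via degree-$3$ polynomials over a field of size $\poly(n)$ are all standard and stated accurately, including the mildly implicit convention $n\le\poly(m)$ (equivalently $m=\poly(n)$, which the paper adopts throughout) so that hash seeds do not dominate the space. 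The only point worth flagging is cosmetic: $\sqrt{1\pm\alpha}$ already lies in $[1-\alpha,1+\alpha]$ for $\alpha\in(0,1)$, so the final rescaling step is not strictly needed, but invoking it does no harm.
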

We note that~\cite{Tetek22} give an $\eps$-DP algorithm using the AMS algorithm as well, but the accuracy guarantee only holds with constant probability. By applying our main framework from \thmref{thm:main} and \thmref{thm:dp-transform} to the AMS algorithm in \thmref{thm:lp:turnstile:two} and noting the global sensitivity bounds in \lemref{lem:lp:sens}, we have the following guarantee:
\begin{corollary}
\corlab{cor:lp:turnstile:two}
For any integer $K>0$ and for any $\alpha>0$, $c>0$, $\eps>0$, and $\rho:= \O{\frac{\alpha\eta}{\log m}}$ on a stream $\frakS$ of length $m=\poly(n)$, where $\eta=\min(\eps,1)$, there exists a turnstile streaming algorithm $\calA$ algorithm that:
\begin{enumerate}
\item (Privacy) The algorithm $\calA$ is $\eps$-differentially private where $\eps$ is constant.
\item (Accuracy) The algorithm $\calA$ outputs $\widehat{L_2}(\frakS)$ such that with probability at least $1-\frac{1}{m^c}-\frac{1}{\poly(m)(e^\eps -1)/K +1}$, 
\[(1-\alpha)L_2(\frakS)-\frac{12c\log m}{\eps} - \frac{1}{K\poly(m)}\le\widehat{L_2}(\frakS)\le(1+\alpha)L_2(\frakS)+\frac{12c\log m}{\eps}+\frac{1}{K \poly(m)},\] 
where $L_2(\frakS)$ is the $L_2$-norm of the stream $\frakS$. 
\item (Space) The algorithm $\calA$ uses space $\O{\frac{1}{\alpha^2\eta^2}\log^4 n}$ bits.
\end{enumerate}
\end{corollary}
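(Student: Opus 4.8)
The plan is to treat \corref{cor:lp:turnstile:two} as a direct instantiation of the black-box pipeline: feed the AMS sketch of \thmref{thm:lp:turnstile:two} into \algref{alg:fpras-DP}/\thmref{thm:main} to obtain an $(\eps,\delta')$-DP streaming algorithm, and then postprocess via \thmref{thm:dp-transform} to upgrade to pure $\eps$-DP. The only problem-specific ingredient is the global-sensitivity bound $\Delta_f\le 2$ for the $L_2$ norm of the frequency vector of a turnstile stream, which is exactly \lemref{lem:lp:sens}. I would also note that AMS is a tunable $(\alpha,0,\delta)$-approximation (no additive error, so $\kappa=0$; tunability in both $\alpha$ and the failure probability is the standard AMS-plus-median-trick guarantee recorded in \thmref{thm:lp:turnstile:two}), so \thmref{thm:main} applies with $\kappa=0$ and hence $\kappa'=0$ in the conclusion.

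Next I would pin down the two free parameters of \thmref{thm:main}: the DP-failure parameter $\delta$ (the input to \algref{alg:fpras-DP}, which internally runs AMS with failure $\delta/2$) and the accuracy parameter $\gamma$. Take $\gamma=\Theta(c\log m)$ so that $\exp(-\gamma)=\O{1/m^c}$ and simultaneously $\frac{2\Delta_f}{\eps}\gamma\le\frac{12c\log m}{\eps}$ (using $\Delta_f\le 2$). Take $\delta=1/m^{c'}$ with $c'=\Theta(c)$ chosen large enough that (i) $\log(4/\delta)=\Theta(\log m)$ dominates $\eps+16\gamma$, so the framework's multiplicative error $\alpha'=\frac{\alpha(\eps+16\gamma)}{12\log(4/\delta)}$ is at most $\alpha$, and (ii) $\delta'=\delta(1+\exp(\eps/2))$ stays $1/\poly(m)$ with a strictly larger exponent than the $\poly(m)$ bound on $M$ derived below. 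With these choices, part~2 of \thmref{thm:main} yields precisely the claimed accuracy bound with failure probability $\delta+\exp(-\gamma)=\O{1/m^c}$, and part~3 gives resource $R(n,\rho,0,\delta)$ with $\rho=\frac{\eps\alpha}{12\log(4/\delta)}=\Theta\!\left(\frac{\alpha\eps}{\log m}\right)$. Writing $\eta=\min(\eps,1)$ to absorb the case $\eps>1$, we get $\rho=\O{\frac{\alpha\eta}{\log m}}$, and substituting into the AMS space bound $\O{\frac{1}{\rho^2}\log m\log\frac{1}{\delta}}$ gives $\O{\frac{1}{\alpha^2\eta^2}\log^4 m}=\O{\frac{1}{\alpha^2\eta^2}\log^4 n}$, using $m=\poly(n)$.

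Then I would invoke \thmref{thm:dp-transform} with the given integer $K$ and $M:=\max_{\frakS}L_2(\frakS)$. Since each coordinate of the frequency vector changes by at most $1$ per update over $m$ updates, $\|x\|_2\le\|x\|_1\le m$, hence $M\le m\le\poly(m)$; so the postprocessing adds only $\frac{1}{KM}\le\frac{1}{K\poly(m)}$ to the additive error and, because $\delta'K(M+1)=1/\poly(m)$ (this is where the exponent gap between $\delta$ and $M$ is used), inflates the failure probability by $p=\frac{\delta'K(M+1)}{e^\eps-1+\delta'K(M+1)}=\frac{1}{\poly(m)(e^\eps-1)/K+1}$, exactly the second term in the corollary's probability. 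A union bound over the $\O{1/m^c}$ failure of the $(\eps,\delta')$-DP algorithm and the event of size $p$ from postprocessing, with stray constants folded into the exponent of $1/m^c$, finishes the argument. There is no conceptual obstacle here beyond the parameter bookkeeping of the previous paragraph — in particular, checking that a single $c'$ depending only on $c$ simultaneously forces $\alpha'\le\alpha$, keeps $\delta+\exp(-\gamma)=\O{1/m^c}$, and makes $p$ come out in the stated form — since \thmref{thm:main}, \thmref{thm:dp-transform}, and \lemref{lem:lp:sens} supply everything else.
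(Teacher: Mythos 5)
Your proposal is correct and follows exactly the route the paper intends: the paper states this corollary as an immediate instantiation of \thmref{thm:main} and \thmref{thm:dp-transform} applied to the AMS sketch of \thmref{thm:lp:turnstile:two} together with the sensitivity bound of \lemref{lem:lp:sens}, and leaves the parameter bookkeeping implicit. Your choices $\gamma=\Theta(c\log m)$, $\delta=1/m^{c'}$ with $c'=\Theta(c)$, and $M\le m$ (via $\|x\|_2\le\|x\|_1\le m$) fill in that bookkeeping correctly and recover the stated accuracy, failure-probability, and $\O{\frac{1}{\alpha^2\eta^2}\log^4 n}$-space bounds.
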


For $p\in(0,2)$, we use the following turnstile streaming algorithm of~\cite{KaneNPW11}:
\begin{theorem}~\cite{KaneNPW11}
\thmlab{thm:lp:turnstile:med}
For $p\in(0,2)$, and accuracy parameter $\alpha>0$ there is a turnstile streaming algorithm for $(\alpha,0)$-approximation of $L_p$ with probability at least $\frac{2}{3}$ that uses $\O{\frac{1}{\alpha^2}\log m+\log\log n}$ bits of space. 
\end{theorem}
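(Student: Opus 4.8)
The plan is to implement the classical $p$-stable linear-sketch approach and then derandomize it carefully enough to reach the optimal space bound. Recall that for every $p\in(0,2)$ there is a $p$-stable distribution $\mathcal{D}_p$ such that if $Z_1,\dots,Z_n$ are i.i.d.\ draws from $\mathcal{D}_p$ then $\sum_i Z_i x_i$ is distributed as $\|x\|_p\cdot Z$ for a single $Z\sim\mathcal{D}_p$. First I would maintain $k=\O{1/\alpha^2}$ independent linear sketches $y_j=\sum_i Z^{(j)}_i x_i$, each updated in $\O{1}$ time per stream update; because the stream has length $m$ with bounded updates, each $y_j$ is a bounded combination and fits in $\O{\log m}$ bits, for $\O{\alpha^{-2}\log m}$ bits of counter storage. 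To recover the norm I would feed $\{|y_j|\}$ into a low-variance estimator --- a suitably clipped median, or a geometric-mean-type estimator in the style of Li --- and verify by a second-moment (Chebyshev) computation that $k=\O{1/\alpha^2}$ samples yield a $(1\pm\alpha)$-approximation of $\|x\|_p$ with probability at least $2/3$.

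The difficulty is that the idealized sketch references $\Theta(nk)$ random reals $Z^{(j)}_i$ that cannot be stored. Generating them on the fly from a seed via a pseudorandom generator that fools small-space read-once computations (Nisan-style) costs an extra $\log n$ factor, giving only $\O{\alpha^{-2}\log n\log m}$ bits --- so the real content of the theorem is eliminating that $\log n$ blowup in favor of an additive $\log\log n$.

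To do this I would follow \cite{KaneNPW11}: (i) hash the $n$ coordinates into $\poly(1/\alpha)$ buckets with a hash function from an explicit bounded-independence family, so that the estimator analysis only ever couples a bounded number of coordinates; (ii) prove the estimator still concentrates when the per-coordinate $p$-stable scalings are only $O(\log(1/\alpha))$-wise (essentially constant-wise for the variance bound) independent rather than fully independent --- this is the technical heart, via a careful moment and anti-concentration analysis of the estimator applied to a sum of limited-independence, heavy-tailed $p$-stable terms, with the heavy tails handled by truncation together with a separate tail-contribution bound; and (iii) account for the remaining truly random bits: the hash seeds cost $O(\log(1/\alpha)\log m)$ bits (absorbed into the $\alpha^{-2}\log m$ term), while specifying and evaluating the generator to the precision needed to distinguish $n$ coordinates and to round the $Z^{(j)}_i$ so as to perturb the estimate by at most an $\alpha$-fraction costs only the additive $O(\log\log n)$. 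Summing the pieces gives $\O{\alpha^{-2}\log m+\log\log n}$ bits.

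I expect step (ii) to be the main obstacle --- it is exactly what separates this optimal-space result from Indyk's original algorithm. Getting the estimator to concentrate under bounded independence simultaneously requires the hashing reduction and a delicate bound on the moments of a truncated sum of $p$-stable variables, and one must check throughout that finite-precision rounding of the pseudorandom values keeps both the accuracy loss and the precision cost under control, so that the overhead beyond the $\alpha^{-2}\log m$ counters stays at $O(\log\log n)$.
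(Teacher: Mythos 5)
This theorem is not proved in the paper at all: it is imported verbatim from \cite{KaneNPW11} and used as a black-box input to the DP transformation, so there is no internal proof to compare your argument against. Judged on its own terms, your write-up is a faithful roadmap of how the cited result is actually obtained --- $p$-stable linear sketches, a hashing-based split so that only boundedly many coordinates interact, a limited-independence derandomization, and an additive $O(\log\log n)$ for the seed/precision bookkeeping --- but it is a plan rather than a proof. The step you yourself flag as the heart, showing that the estimator still concentrates when the $p$-stable scalings are only $O(\log(1/\alpha))$-wise independent, is precisely where the known argument is delicate: in the Kane--Nelson--Porat--Woodruff line of work this is done via FT-mollification (building on the earlier Kane--Nelson--Woodruff analysis), not by a generic ``truncate and apply Chebyshev'' computation, and a median/geometric-mean estimator over heavy-tailed stable variables does not admit a routine second-moment bound since the relevant moments do not exist without the truncation being woven into the independence argument itself. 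Likewise, the heavy coordinates are handled in \cite{KaneNPW11} by a separate high-end data structure rather than by the same light-element estimator, and the claim that all precision and seed costs beyond the $\O{\alpha^{-2}\log m}$ counters collapse to an additive $O(\log\log n)$ is exactly the optimality claim being proved, so asserting it in step (iii) is circular.

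In short: you have correctly identified the architecture and the main obstacle, but the proposal defers the central technical content to the very paper being cited, so as a standalone proof it has a genuine gap. For the purposes of this paper no such proof is needed --- the theorem is a quoted prior result whose only role here is to supply a tunable $(\alpha,0,\delta)$-approximation (after the median trick of \remref{rem:median-trick}) to which \thmref{thm:main} is applied.
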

Hence by applying the median trick to \thmref{thm:lp:turnstile:med} to boost the success probability, and then applying our main framework from \thmref{thm:main}, and finally noting the global sensitivity bounds in \lemref{lem:lp:sens}, we have the following guarantee:
\begin{corollary}
\corlab{cor:lp:turnstile:small}
For any integer $K>0$, $p\in(0,2)$, and for any $\alpha>0$, $c>0$, $\eps>0$, and $\rho:= \O{\frac{\alpha\eta}{\log m}}$ on a stream $\frakS$ of length $m=\poly(n)$, where $\eta=\min(\eps,1)$, there exists a turnstile streaming algorithm $\calA$ algorithm that:
\begin{enumerate}
\item (Privacy) The algorithm $\calA$ is $\eps$-differentially private where $\eps$ is constant.
\item (Accuracy) The algorithm $\calA$ outputs $\widehat{X}$ such that with probability at least $1-\frac{1}{m^c}-\frac{1}{\poly(m)(e^\eps -1)/K +1}$, 
\[(1-\alpha)X-\frac{12c\log m}{\eps} - \frac{1}{K \poly(m)}\le\widehat{X}\le(1+\alpha)X+\frac{12c\log m}{\eps}+\frac{1}{K \poly(m)},\]
where $X$ is the $L_p$ norm of the stream $\frakS$ for $p\in[1,2)$ and $X$ is the $F_p$ moment of the stream $\frakS$ for $p\in(0,1)$. 
\item (Space) The algorithm $\calA$ uses space $\O{\frac{1}{\alpha^2\eta^2}\log^4 n}$ bits. 
\end{enumerate}
\end{corollary}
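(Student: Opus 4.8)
The plan is to instantiate the general transformation of \thmref{thm:main} and \thmref{thm:dp-transform} with the non-private streaming algorithm of \thmref{thm:lp:turnstile:med}, exactly as in the proofs of \corref{cor:lp:turnstile:two} and \corref{cor:lp:turnstile:big}. Write $f$ for the target function: $f(\frakS)=L_p(\frakS)$ when $p\in[1,2)$ and $f(\frakS)=F_p(\frakS)$ when $p\in(0,1)$. First I would record the two ingredients the framework needs. (i) Small global sensitivity: by \lemref{lem:lp:sens}, $\Delta_f\le 2$ in both regimes, which is exactly why that lemma is stated for both the $L_p$ norm and the $F_p$ moment. (ii) A tunable $(\alpha,0,\delta)$-approximation of $f$: \thmref{thm:lp:turnstile:med} gives an $(\alpha,0)$-approximation succeeding with probability $\tfrac23$ in $\O{\frac1{\alpha^2}\log m+\log\log n}$ bits, and running $\Theta(\log(1/\delta))$ independent copies in parallel and reporting the median (\remref{rem:median-trick}) boosts the success probability to $1-\delta$ with an $\O{\log(1/\delta)}$ multiplicative blow-up in space, producing a tunable approximation in the sense of \defref{def:tunable}. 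Throughout I would fix $\delta$ to be a fixed polynomially small function of $m$ (equivalently $\poly(1/n)$, since $m=\poly(n)$).

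Next I would feed this tunable algorithm into \algref{alg:fpras-DP} through \thmref{thm:main}, with $\kappa=\tau=0$ and $\gamma=\Theta(\log m)$. This yields an $(\eps,\delta(1+e^{\eps/2}))$-DP algorithm whose output $\widehat X$ satisfies, with probability $1-\delta-e^{-\gamma}$,
\[(1-\alpha')X-\tfrac{2\Delta_f}{\eps}\gamma\ \le\ \widehat X\ \le\ (1+\alpha')X+\tfrac{2\Delta_f}{\eps}\gamma,\]
with $\alpha'=\frac{\alpha(\eps+16\gamma)}{12\log(4/\delta)}$ and $\kappa'=0$. Choosing $\gamma$ of order $\log(4/\delta)$ but small enough that $\eps+16\gamma\le 12\log(4/\delta)$ forces $\alpha'\le\alpha$, while $\Delta_f\le 2$ and $\gamma=\Theta(c\log m)$ make the additive term at most $\frac{12c\log m}{\eps}$ and $e^{-\gamma}$ at most $m^{-c}$ after relabeling $c$. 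By \thmref{thm:main}(3) the space used is that of the median-boosted base algorithm run with the tuned multiplicative parameter $\rho=\O{\frac{\alpha\eta}{\log m}}$ as in the corollary statement (the clamp $\eta=\min(\eps,1)$ guaranteeing $\rho\le\alpha$ so that \algref{alg:fpras-DP} applies): this is $\O{\frac{\log^2 m}{\rho^2}}=\O{\frac1{\alpha^2\eta^2}\log^4 m}=\O{\frac1{\alpha^2\eta^2}\log^4 n}$ bits, the terminal Laplace perturbation and the $\log m\log\log n$ term being lower order, and the algorithm remains single-pass and one-shot on a turnstile stream.

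Finally I would apply the postprocessing step of \thmref{thm:dp-transform} to upgrade the $(\eps,\delta(1+e^{\eps/2}))$-DP algorithm to an $\eps$-DP algorithm. The one thing that genuinely needs checking is $M=\max_{\frakS}f(\frakS)$: since every coordinate of the induced frequency vector has absolute value at most $m$, both $L_p(\frakS)$ and $F_p(\frakS)$ are bounded by $\poly(m)$, so $M=\poly(m)$. With range parameter $K$, rounding the output to $\frac{\lceil K\widehat X\rceil}{KM}$ inflates the additive error by at most $\frac1{KM}=\frac1{K\poly(m)}$ and caps the output range at $(M+1)K$, so \thmref{thm:epsdel-eps} (via \thmref{thm:dp-transform}) yields an $\eps$-DP algorithm with an extra failure probability $p=\frac{\delta' K(M+1)}{e^\eps-1+\delta' K(M+1)}=\frac{1}{\poly(m)(e^\eps-1)/K+1}$, using $\delta'=\Theta(\delta)=\poly(1/m)$ and $M=\poly(m)$ and absorbing constants into $c$. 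A union bound over the three failure events — $\delta$ for the base algorithm, $e^{-\gamma}$ for the Laplace tail, and $p$ for the random replacement — gives the stated $\frac1{m^c}+\frac1{\poly(m)(e^\eps-1)/K+1}$, finishing the proof. I expect no new mathematical difficulty beyond \thmref{thm:main}, \thmref{thm:dp-transform}, and \lemref{lem:lp:sens}; the only real work is the bookkeeping: verifying $M=\poly(m)$ so the pure-DP conversion costs only negligible additive error, and picking $\gamma$ and $\delta$ consistently so that the multiplicative error collapses to exactly $\alpha$ while the space stays $\O{\frac1{\alpha^2\eta^2}\log^4 n}$.
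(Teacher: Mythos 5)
Your proposal is correct and follows exactly the route the paper takes (median trick on \thmref{thm:lp:turnstile:med}, then \thmref{thm:main}, then \thmref{thm:dp-transform}, using the $\Delta_f\le 2$ bound from \lemref{lem:lp:sens}); you simply supply the bookkeeping the paper's one-line preamble leaves implicit. You even silently correct a small slip in that preamble, which cites only \thmref{thm:main} though the $\eps$-DP claim and the $\frac{1}{K\poly(m)}$ terms in the corollary plainly require the \thmref{thm:dp-transform} postprocessing step you include.
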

For $p=0$, we use the following turnstile streaming algorithm of~\cite{KaneNW10}:
\begin{theorem}~\cite{KaneNW10}
\thmlab{thm:distinct:turnstile}
For any accuracy parameter $\alpha>0$ there is a turnstile streaming algorithm for $(\alpha,0)$-approximation of $L_0$ with probability at least $\frac{2}{3}$ that uses $\O{\frac{1}{\alpha^2}\log n\log\frac{1}{\alpha}+\log\log m}$ bits of space. 
\end{theorem}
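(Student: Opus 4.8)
The plan is to reconstruct the argument of Kane, Nelson, and Woodruff~\cite{KaneNW10} via the standard subsampling-plus-recovery paradigm for distinct-elements estimation in turnstile streams. First I would set up $L=\O{\log n}$ nested subsampled substreams: using a hash function $h\colon[n]\to\{0,\dots,L\}$ drawn from a $\Theta(\log(1/\alpha))$-wise independent family, let level $j$ retain coordinate $i$ iff $h(i)\ge j$, so level $j$ keeps each nonzero coordinate with probability $\approx 2^{-j}$. On each level I would maintain, in a \emph{linear} sketch (so that deletions in the turnstile stream are handled automatically), (i) a small recovery table of $\O{\alpha^{-2}\log(1/\alpha)}$ cells that, whenever the number of surviving nonzero coordinates is at most $T=\Theta(\alpha^{-2})$, lets us read off that count to within a $(1\pm\alpha)$ factor --- e.g.\ a hashing-based sparse-recovery structure with short fingerprints to identify singleton cells --- and (ii) contributions to a global constant-factor ``rough estimator'' $\tilde R$ of $L_0$.

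Next I would argue correctness conditioned on $\tilde R$ being a constant-factor estimate of $L_0$ (a standard fact holding with probability $\ge 5/6$): pick the level $j^\star$ with $\tilde R\cdot 2^{-j^\star}=\Theta(\alpha^{-2})$. The number $Y$ of nonzero coordinates of the frequency vector surviving to level $j^\star$ has mean $\mu=L_0\cdot 2^{-j^\star}=\Theta(\alpha^{-2})$; using a limited-independence Chernoff-type tail bound --- this is precisely where $\Theta(\log(1/\alpha))$-wise independence is needed, so that $\Pr[\,|Y-\mu|>\alpha\mu\,]\le 1/6$ while the hash seed is only $\O{\log(1/\alpha)\log n}$ bits --- we get $Y=(1\pm\alpha)\mu$. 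Since $Y\le T$, the recovery table returns $Y$ exactly with probability $\ge 5/6$, and a union bound over the $\O{1}$ failure events leaves the final estimate $2^{j^\star}\cdot Y=(1\pm\alpha)L_0$ with probability $\ge 2/3$.

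For the space bound, the delicate point is obtaining only a $\log\log m$ (rather than $\log m$) dependence on the stream length. I would first hash the coordinate universe $[n]$ down to $[\poly(\alpha^{-1}\log n)]$ so that, at the relevant level, no two surviving nonzero coordinates collide with high probability; then every fingerprint and counter needs only $\O{\log(1/\alpha)+\log\log n}$ bits, giving total space $\approx\O{\alpha^{-2}\log(1/\alpha)\log n}$ over the $\O{\log n}$ levels, plus $\O{\log\log m}$ bits to store seeds and cope with an arbitrarily long stream. The main obstacle I anticipate is the joint analysis of the rough estimator, the geometric subsampling, and the recovery table all under small-seed hashing: in particular, checking that the limited-independence tail bound is strong enough at the chosen level $j^\star$ while the seed stays $\O{\log(1/\alpha)\log n}$, and that the universe-reduction step does not perturb the nonzero counts it is supposed to preserve.
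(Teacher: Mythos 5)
This theorem is not proved in the paper at all; it is a verbatim citation of the $L_0$-estimation result of Kane, Nelson, and Woodruff~\cite{KaneNW10}, imported as a black box so that the DP transformation (\thmref{thm:main}, \thmref{thm:dp-transform}) can be applied to it in \corref{cor:distinct:turnstile}. So there is no ``paper's own proof'' to compare against, and your attempt to reconstruct the KNW argument is going beyond what the paper asks of you.

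That said, your sketch is a reasonable reconstruction of the high-level KNW architecture: geometric subsampling across $\O{\log n}$ levels driven by a limited-independence hash, a constant-factor rough estimator used to select the level $j^\star$ at which $\Theta(\alpha^{-2})$ nonzero coordinates survive, a limited-independence tail bound at that level, and a universe-reduction step to keep per-cell storage to $\O{\log(1/\alpha)+\log\log n}$ bits rather than $\O{\log n}$. Two places where your sketch drifts from (or under-specifies) the actual KNW construction are worth flagging. First, KNW do not use a full exact sparse-recovery table at the chosen level; they hash the survivors into $K=\Theta(\alpha^{-2})$ bins and estimate from the \emph{number of nonempty bins}, with each bin containing only a short random linear combination sufficient to test nonemptiness in the turnstile setting. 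A generic ``recovery table with fingerprints to identify singleton cells'' would need to resolve collisions and store more per cell, and your own accounting ($\O{\alpha^{-2}\log(1/\alpha)}$ cells $\times$ $\O{\log(1/\alpha)+\log\log n}$ bits $\times$ $\O{\log n}$ levels) does not match the claimed $\O{\alpha^{-2}\log n\log(1/\alpha)+\log\log m}$ bound without further bit-level work. Second, the $\Theta(\log(1/\alpha))$-wise independence in KNW is driven primarily by controlling the balls-and-bins occupancy estimator, not merely by the tail bound on the survivor count (for which pairwise independence together with a large enough constant in $\mu=\Theta(\alpha^{-2})$ already suffices via Chebyshev). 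Neither of these is a fatal error as a proof \emph{sketch}, but if you intended to give a self-contained proof rather than a pointer to~\cite{KaneNW10}, these are the places that would need to be pinned down.
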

Then by applying the median trick to \thmref{thm:lp:turnstile:med} to boost the probability of success to $1-\delta$, and then applying our main framework from \thmref{thm:main}, and finally noting the global sensitivity bounds in \lemref{lem:lp:sens}, we have the following guarantee:
\begin{corollary}
\corlab{cor:distinct:turnstile}
For any integer $K>0$, and for any $\alpha>0$, $c>0$, $\eps>0$, and $\rho:= \O{\frac{\alpha\eta}{\log m}}$ on a stream $\frakS$ of length $m=\poly(n)$, where $\eta=\min(\eps,1)$, there exists a turnstile streaming algorithm $\calA$ algorithm that:
\begin{enumerate}
\item (Privacy) The algorithm $\calA$ is $\eps$-differentially private where $\eps$ is constant.
\item (Accuracy) The algorithm $\calA$ outputs $\widehat{L_0}(\frakS)$ such that with probability at least $1-\frac{1}{m^c}-\frac{1}{\poly(m)(e^\eps -1)/K +1}$, 
\[(1-\alpha)L_0(\frakS)-\frac{12c\log m}{\eps}-\frac{1}{K \poly(m)}\le\widehat{L_0}(\frakS)\le(1+\alpha)L_0(\frakS)+\frac{12c\log m}{\eps}+\frac{1}{K \poly(m)},\]
where $L_0(\frakS)$ is the number of distinct elements in the stream $\frakS$. 
\item (Space) The algorithm $\calA$ uses space $\tO{\frac{1}{\alpha^2\eta^2}\log^4 n}$ bits.
\end{enumerate}
\end{corollary}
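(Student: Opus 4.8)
The plan is a black-box application of the transformation from \thmref{thm:main} followed by \thmref{thm:dp-transform} to the turnstile $L_0$ estimator of \thmref{thm:distinct:turnstile}. First I would boost the constant success probability of that algorithm to $1-\delta/2$ using the median trick (\remref{rem:median-trick}): running $\Theta(\log(1/\delta))$ independent copies and reporting the median costs a $\Theta(\log(1/\delta))=\Theta(\log m)$ factor in space (with $\delta=1/m^c$) while leaving the dependence on the accuracy parameter untouched, so the boosted algorithm is a tunable $(\rho,0,\delta/2)$-approximation of $L_0$ in the sense of \defref{def:tunable}.

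Second, I would invoke the only problem-specific input: by \lemref{lem:lp:sens} the global sensitivity of $L_0$ under neighboring streams is at most $2$. Feeding the boosted algorithm into \thmref{thm:main} with $\delta:=1/m^c$ and $\gamma:=\Theta(c\log m)$ (so that $\exp(-\gamma)=1/\poly(m)$ and $\log(4/\delta)=\Theta(\log m)$) yields an $(\eps,\delta')$-DP algorithm with $\delta'=\delta(1+e^{\eps/2})=\O{1/m^c}$. Since the base error has $\kappa=0$ the transformed additive term $\kappa'$ also vanishes, so the only additive error is $\frac{2\Delta_f}{\eps}\gamma=\O{\frac{c\log m}{\eps}}$, while the multiplicative factor $\alpha'=\Theta(\alpha)$ can be normalized to exactly $\alpha$ by pre-scaling the accuracy parameter handed to \thmref{thm:main} by a constant. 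The resource bullet of \thmref{thm:main} runs \cite{KaneNW10} with parameter $\rho=\Theta\!\left(\frac{\eps\alpha}{\log(4/\delta)}\right)=\Theta\!\left(\frac{\eta\alpha}{\log m}\right)$, where $\eta=\min(\eps,1)$ is used only to report a clean space upper bound; substituting into $\O{\frac{1}{\rho^2}\log n\log\frac{1}{\rho}+\log\log m}$, multiplying by the $\O{\log m}$ median-trick overhead, and using $\log n=\Theta(\log m)$ gives $\tO{\frac{1}{\alpha^2\eta^2}\log^4 n}$ bits.

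Third, I would apply \thmref{thm:dp-transform} with $M:=m$ (a valid upper bound, since $L_0$ of an $m$-update stream is at most $m$) and an integer parameter $K>0$: rounding the output up to the nearest multiple of $1/(KM)=1/(Km)$ adds an extra additive error of $1/(K\poly(m))$ and restricts the range to a set of size at most $K(M+1)$, exactly the hypothesis of \thmref{thm:epsdel-eps}. This produces the desired $\eps$-DP algorithm; its accuracy holds with probability $1-\frac{1}{m^c}-p$ where $p=\frac{\delta' K(M+1)}{e^\eps-1+\delta' K(M+1)}=\frac{1}{\poly(m)(e^\eps-1)/K+1}$, and privacy is immediate since both the rounding (post-processing) and the \thmref{thm:epsdel-eps} step yield $\eps$-DP.

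None of the three steps involves a genuine obstacle --- each is a direct invocation of an earlier result, and the one substantive ingredient, $\Delta_{L_0}\le 2$, is already \lemref{lem:lp:sens}. The only care required is parameter bookkeeping: choosing $\gamma$ proportional to $\log(4/\delta)$ so that the additive and failure-probability expressions collapse to the clean forms stated, selecting $M$ as a polynomial upper bound on $L_0$, and (if one wants the multiplicative error to be exactly $\alpha$ rather than $\Theta(\alpha)$) pre-scaling the accuracy parameter before calling \thmref{thm:main}.
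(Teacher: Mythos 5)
Your proposal is correct and follows essentially the same route as the paper: boost the constant success probability of the turnstile $L_0$ sketch of \thmref{thm:distinct:turnstile} via the median trick, plug the resulting tunable $(\rho,0,\delta/2)$-approximation into \thmref{thm:main} using $\Delta_{L_0}\le 2$ from \lemref{lem:lp:sens} with $\gamma=\Theta(c\log m)$, and then apply \thmref{thm:dp-transform} with a polynomial range bound $M$ to upgrade to pure $\eps$-DP. Your parameter bookkeeping (the $\eta=\min(\eps,1)$ cap in $\rho$, the $\log m$ median overhead, and the source of the $\frac{1}{K\poly(m)}$ and $\frac{1}{\poly(m)(e^\eps-1)/K+1}$ terms) is consistent with the stated bounds.
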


\subsection{Sliding Windows}
\seclab{sec:sw}
In this section, we introduce the sliding window model and apply our DP framework to ensure privacy for many sliding window algorithms. 

\begin{definition}[Sliding window model]
In the sliding window model, there exists a data stream of length $m$ and a window parameter $W>0$. 
The underlying dataset is then implicitly defined by only the $W$ most recent updates in the stream. 
\end{definition}

The sliding window model is a generalization of the insertion-only streaming model that captures the prioritization of recent data over outdated or expired data and thus there are a number of time sensitive applications applications~\cite{BabcockBDMW02,DatarM07,MankuM12,PapapetrouGD15,EpastoLVZ17}, such as network monitoring~\cite{CormodeM05a,CormodeG08} or event detection on social media~\cite{OsborneEtAl2014}, for which the sliding window model has better performance than the insertion-only streaming model. 
The sliding window model is especially appropriate for time-sensitive applications such as network monitoring~\cite{CormodeM05a,CormodeG08} and has been subsequently studied in a number of additional settings~\cite{DatarM07,BravermanOZ12,EpastoLVZ17,BravermanGLWZ18, BravermanDMMUWZ20,BravermanWZ21,EpastoMMZ22,JayaramWZ22}.

For the purposes of differential privacy, we permit two neighboring streams to differ by a single update over the entire data stream. 
Observe that this notion includes the setting where two neighboring streams differ by a single update in the dataset induced by the sliding window, i.e., the last $W$ updates of the data stream. 

\cite{BravermanO10} presented the smooth histogram framework~\cite{BravermanO10}, which converts an existing insertion-only streaming algorithm into a sliding window algorithm. 
We remark that unfortunately, due to standard nomenclature in previously non-intersecting literature, the notion of smooth histogram does not use the same notion of smooth as smooth sensitivity. 
Hence, we first define the following notion of a smooth function:

\begin{definition} \deflab{def:smooth}
A function $f \ge 1$ is $(\rho,\xi)$-smooth if it has the following properties:
\begin{description}
\item [Monotonicity]
$f(A)\ge f(B)$ for $B\subseteq A$ ($B$ is a suffix of $A$)
\item [Polynomial boundedness]
There exists $c>0$ such that $f(A)\le n^c$.
\item [Smoothness]
For any $\rho\in(0,1)$, there exists $\xi\in(0,\rho]$ so that if $B\subseteq A$ and $(1-\xi)f(A)\le f(B)$, then $(1-\rho)f(A\cup C)\le f(B\cup C)$ for any adjacent $C$.
\end{description}
\end{definition}
The smooth histogram framework has the following guarantees:
\begin{theorem}[Smooth Histogram Framework,~\cite{BravermanO10}]
\thmlab{thm:smooth:histogram}
Let $f:\calU^m \to \mathbb{R}$ be a $(\rho, \beta(\rho))$-smooth function for any $\rho\in(0,1)$ on a stream $\frakS$ of length $m=\poly(n)$, and suppose there exists an insertion-only streaming algorithm $\calA$ that outputs a $(\alpha,0)$-approximation of $f$ using space $\calS(\alpha,\delta,m,n)$ and update time $\calT(\alpha,\delta,m,n)$, for any approximation parameter $\alpha\in(0,1)$ and failure probability $\delta$. 
Then for any constant $c>0$ and window parameter $W>0$, there exists a sliding window algorithm $\calA'$ such that for the dataset $\calW$ induced by the last $\min(m,W)$ updates of the stream:
\begin{enumerate}
\item (Accuracy) The algorithm $\calA'$ outputs $\widehat{f}(\calW)$ such that with probability at least $1-\frac{1}{m^c}$, 
\[(1-\alpha)f(\calW)\le\widehat{f}(\calW)\le(1+\alpha)f(\calW).\]
\item (Time/Space) The algorithm $\calA'$ uses space $\O{\frac{1}{\beta(\alpha) }(\calS(\beta(\alpha),\frac{\delta}{\poly(m,n)},m,n)+\log m)\log m}$ and update time $\O{\frac{1}{\beta(\alpha)}(\calT(\beta(\alpha),\frac{\delta}{\poly(m,n)},m,n))\log m}$.
\end{enumerate}
\end{theorem}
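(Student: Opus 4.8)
This is the classical smooth‑histogram construction of~\cite{BravermanO10}, so I would follow that blueprint. The plan is to maintain a collection of checkpoints (timestamps) $t_1<t_2<\cdots<t_s$, where $t_s$ is always the current time, and for each $t_i$ a running copy of $\calA$ instantiated with accuracy parameter $\xi:=\beta(\alpha)$ and failure probability $\delta':=\delta/\poly(m,n)$, which maintains an estimate $v_i$ of $f$ evaluated on the sub‑stream consisting of all updates at times $\ge t_i$ up to the present; denote this sub‑stream $S_{\ge t_i}$. Writing $L$ for the current left endpoint of the window (so $\calW=S_{\ge L}$), I would keep two invariants: (i) $t_1\le L<t_2$, so the window is sandwiched between the first two checkpoints; and (ii) for every internal checkpoint ($1<i<s$) one has $v_{i+1}<(1-\xi)\,v_{i-1}$. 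On each new update the algorithm feeds it to all surviving copies of $\calA$, appends a fresh checkpoint at the current time running a new copy, and then prunes: first \emph{expiration} — while $t_2\le L$, delete $t_1$ (this restores (i)); then \emph{compression} — scan, and whenever three consecutive checkpoints $t_{i-1},t_i,t_{i+1}$ satisfy $v_{i+1}\ge(1-\xi)v_{i-1}$, delete the middle one $t_i$ (this enforces (ii)). At query time the algorithm outputs $\widehat f(\calW):=v_1$. Note that all copies are run with the \emph{smaller} parameter $\xi=\beta(\alpha)\le\alpha$.

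\textbf{Correctness.} I would take a union bound over all $(\text{copy},\text{query time})$ pairs — there are only $\poly(m,n)$ of them — so that, choosing $\delta'=\delta/\poly(m,n)$ appropriately small, with probability at least $1-1/m^c$ every estimate ever produced lies within a $(1\pm\xi)$ factor of the corresponding value of $f$. Condition on this event. At query time, invariant (i) gives $t_1\le L<t_2$. If $t_1=t_2-1$ then necessarily $L=t_1$, so the copy at $t_1$ runs exactly on $\calW$ and $v_1\in(1\pm\xi)f(\calW)\subseteq(1\pm\alpha)f(\calW)$. Otherwise, since every intermediate timestamp was once a checkpoint, the interval $(t_1,t_2)$ has been emptied by compression steps; let $\tau$ be the moment of the last such deletion. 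At time $\tau$ the deleted checkpoint was the middle of the triple whose outer members were exactly $t_1$ and $t_2$ (all other checkpoints in $(t_1,t_2)$ were already gone, and both $t_1,t_2$ are present then since they survive to the query time), so the compression rule guarantees $v_2^{(\tau)}\ge(1-\xi)v_1^{(\tau)}$. By the accuracy event this yields $f(S_{\ge t_2}^{(\tau)})\ge(1-\xi')\,f(S_{\ge t_1}^{(\tau)})$ for a slightly larger $\xi'$ (absorbed by running everything with $\alpha/3$ in place of $\alpha$). Now apply the smoothness property of \defref{def:smooth} with $A=S_{\ge t_1}^{(\tau)}$, $B=S_{\ge t_2}^{(\tau)}$ (a suffix of $A$), and $C$ the block of all updates arriving in $(\tau,\text{query}]$: this gives $f(S_{\ge t_2})\ge(1-\rho)\,f(S_{\ge t_1})$ with $\rho=\alpha$ at query time. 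Since $t_1\le L<t_2$, we have $S_{\ge t_2}\subseteq\calW\subseteq S_{\ge t_1}$, so monotonicity yields $(1-\alpha)f(S_{\ge t_1})\le f(S_{\ge t_2})\le f(\calW)\le f(S_{\ge t_1})$, and combining with $v_1\in(1\pm\xi)f(S_{\ge t_1})$ gives $\widehat f(\calW)=v_1\in(1\pm\alpha)f(\calW)$ after the constant rescaling.

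\textbf{Resource bounds.} Invariant (ii) forces $v_{i+1}<(1-\xi)v_{i-1}$ for internal checkpoints, which via the accuracy event forces $f$ on the corresponding suffixes to drop by a $\bigl(1-\Omega(\xi)\bigr)$ factor every two checkpoints; since $1\le f\le n^c$ and $m=\poly(n)$, this bounds the number of checkpoints by $\O{\tfrac{\log n}{\xi}}=\O{\tfrac{\log m}{\beta(\alpha)}}$. Each checkpoint stores one copy of $\calA$ (space $\calS(\beta(\alpha),\delta/\poly(m,n),m,n)$) plus an $\O{\log m}$-bit timestamp, giving total space $\O{\tfrac{1}{\beta(\alpha)}\bigl(\calS(\beta(\alpha),\tfrac{\delta}{\poly(m,n)},m,n)+\log m\bigr)\log m}$; and each update advances $\O{\tfrac{\log m}{\beta(\alpha)}}$ copies plus one linear pruning scan, giving update time $\O{\tfrac{1}{\beta(\alpha)}\calT(\beta(\alpha),\tfrac{\delta}{\poly(m,n)},m,n)\log m}$, matching the statement.

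\textbf{Main obstacle.} Everything except one point is bookkeeping — maintaining the invariants under the two pruning rules, counting checkpoints, and the resource accounting. The one genuinely delicate step is the propagation argument in the non‑trivial case: one must identify the correct past moment $\tau$, argue that $t_1$ and $t_2$ were precisely the neighbours of the last checkpoint deleted from $(t_1,t_2)$, translate the estimate inequality at time $\tau$ into an $f$‑inequality via the union bound, and then invoke the smoothness of \defref{def:smooth} exactly once with $C$ equal to the block of updates since $\tau$. Getting the quantifier order right there (the $\xi\le\rho$ relationship, and the rescaling that turns internal parameter $\beta(\alpha)$ and slack accumulation into a clean final $(1\pm\alpha)$) is where care is needed.
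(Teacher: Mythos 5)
The paper does not prove \thmref{thm:smooth:histogram}; it is stated as a cited result from \cite{BravermanO10} and used as a black box, so there is no in-paper argument to compare against. Your reconstruction is a faithful and essentially correct rendition of the original Braverman--Ostrovsky smooth-histogram proof: the checkpoint/compression/expiration invariants, the propagation argument that goes back to the last compression event in $(t_1,t_2)$, translates the sketch inequality there into an $f$-inequality via the union-bounded accuracy event, and then applies \defref{def:smooth} once with $C$ the suffix of updates since that event, followed by the $1/\beta(\alpha)\cdot\log$ checkpoint count for the resource bounds, is exactly the standard route; the places you flag as needing constant rescaling (absorbing the $\xi'\approx 3\xi$ slack and the $\frac{1+\xi}{1-\alpha}$ overestimate into a final $(1\pm\alpha)$) are genuine but routine.
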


The main takeaway for the smooth histogram framework is that it essentially provides a means to achieve a $(\rho,0)$-approximation algorithm in the sliding window model for a smooth function for which there already exists a $(\rho,0)$-approximation algorithm in the insertion-only streaming model. 
For many additional problems, either there are no known $(\rho,0)$-approximation algorithms for all $\rho>0$, e.g.,~\cite{EpastoLVZ17} or the output is not a scalar quantity, e.g.,~\cite{BravermanOZ12,BravermanDMMUWZ20,BravermanWZ21,EpastoMMZ22,JayaramWZ22}. 

By \thmref{thm:smooth:histogram} and \thmref{thm:main}, we have the following guarantee:
\begin{theorem}[DP Smooth Histogram Framework]
\thmlab{thm:main-hist-framework}
Let $\alpha >0, c > 0, \eps > 0$ and a $(\rho, \beta(\rho))$-smooth function $f: \calU^m \to \mathbb{R}$ for $\rho:= \O{\frac{\alpha\eta}{\log m}}$ on a stream $\frakS$ of length $m=\poly(n)$, where $\eta=\min(\eps,1)$ and suppose there exists an insertion-only streaming algorithm $\calA$ that outputs a $(\alpha,0)$-approximation of $f$ using space $\calS(\alpha,\delta,m,n)$ and update time $\calT(\alpha,\delta,m,n)$, and failure probability $\delta$. 
Then there exists a sliding window algorithm $\calA'$ such that for the dataset $\calW$ induced by the last $\min(m,W)$ updates of the stream:
\begin{enumerate}
\item (Privacy) The algorithm $\calA'$ is $(\eps,\delta)$-differentially private where $\eps$ is constant and $\delta=\frac{1}{m^c}$.  
\item (Accuracy) The algorithm $\calA'$ outputs $\widehat{f}(\calW)$ such that with probability at least $1-\frac{1}{m^c}$, 
\[(1-\alpha)f(\calW)-\frac{6c\Delta_f\log m}{\eps}\le\widehat{f}(\calW)\le(1+\alpha)f(\calW)+\frac{6c\Delta_f\log m}{\eps}.\]
\item (Time/Space) The algorithm $\calA'$ uses space $\O{ \frac{1}{\beta(\rho) }(\calS(\beta(\rho),\frac{\delta}{\poly(m,n)},m,n)+\log m)\log m}$ and update time $\O{\frac{1}{\beta(\rho)}(\calT(\beta(\rho),\frac{\delta}{\poly(m,n)},m,n))\log m}$.
\end{enumerate}
\end{theorem}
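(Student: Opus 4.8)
The plan is to recognize that the sliding-window algorithm produced by the smooth histogram framework (\thmref{thm:smooth:histogram}) is itself a tunable approximation of the ``windowed'' function $g(\frakS) := f(\calW(\frakS))$, where $\calW(\frakS)$ denotes the dataset induced by the last $\min(m,W)$ updates of the stream $\frakS$, and then to invoke \thmref{thm:main} on $g$ as a black box. The first thing to verify is the global sensitivity of $g$ under the stream-level neighboring relation. If $\frakS\sim\frakS'$ differ in a single update, then either that update has already expired, in which case $\calW(\frakS)=\calW(\frakS')$ and $g(\frakS)=g(\frakS')$, or it lies inside the active window, in which case $\calW(\frakS)$ and $\calW(\frakS')$ differ in a single element and are neighboring in the ordinary sense, so $|g(\frakS)-g(\frakS')|=|f(\calW(\frakS))-f(\calW(\frakS'))|\le\Delta_f$. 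Hence $\Delta_g\le\Delta_f$. We also note that $g$ maps into $\mathbb{R}^+$, since any smooth function satisfies $f\ge 1$ by \defref{def:smooth}, so \thmref{thm:main} applies.

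Second, I would instantiate \thmref{thm:smooth:histogram} with a free accuracy parameter $\rho'$ and a free failure probability $\delta_0$, so that it outputs, with probability $1-\delta_0$, a $(\rho',0)$-approximation of $g(\frakS)$ using space $\O{\frac{1}{\beta(\rho')}\bigl(\calS(\beta(\rho'),\delta_0/\poly(m,n),m,n)+\log m\bigr)\log m}$ and comparable update time. In the terminology of \defref{def:tunable}, this is exactly a tunable $(\rho',0,\delta_0)$-approximation of $g$, with resource function $R(n,\rho',0,\delta_0)$ given by those bounds (the distinction between $n$ and $m$ is harmless since $m=\poly(n)$).

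Third, I would apply \thmref{thm:main} to $g$ and this tunable approximation, taking the input multiplicative parameter to be a suitable constant fraction of $\alpha$, the additive parameter $\kappa=0$, and the DP failure parameter $\delta=\frac{1}{m^c}$ (shrunk by a constant if necessary so that $\delta(1+\exp(\eps/2))\le\frac{1}{m^c}$, using $\eps=\O{1}$). The privacy clause of \thmref{thm:main} immediately yields $(\eps,\frac{1}{m^c})$-differential privacy. For accuracy I would set $\gamma=\Theta(\log m)$ in the accuracy clause (equivalently in \lemref{lem:dp-acc-fpras}): since $\kappa=0$ we get $\kappa'=0$; the multiplicative distortion $\alpha'=\frac{\alpha(\eps+16\gamma)}{12\log(4/\delta)}$ is within a constant factor of $\alpha$ and can be pushed below $\alpha$ by the constant-fraction rescaling of the input parameter; and the Laplace term is $\frac{2\Delta_f\gamma}{\eps}=\O{\frac{\Delta_f\log m}{\eps}}$, all holding with probability $1-\frac{1}{m^c}-\exp(-\gamma)=1-\O{\frac{1}{m^c}}$. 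The resource clause of \thmref{thm:main} states the DP algorithm uses the resource of the tunable approximation run with $\rho=\Theta\bigl(\frac{\eps\alpha}{\log(4/\delta)}\bigr)=\Theta\bigl(\frac{\alpha\eta}{\log m}\bigr)$, $\eta=\min(\eps,1)$; substituting $\rho'=\rho$ into the smooth-histogram bounds above produces precisely the claimed space and update time. We do not need the post-processing step of \thmref{thm:dp-transform}, since only approximate DP is asserted here.

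The main obstacle is genuinely just the sensitivity reduction in the first step: one must check that the stream-level ``one update changes'' neighboring relation used for privacy pushes forward, under the windowing map $\frakS\mapsto\calW(\frakS)$, to at most the ordinary dataset-level neighboring relation with respect to which $\Delta_f$ is defined, including the degenerate case where the modified update has already expired. Everything downstream is a direct black-box composition of \thmref{thm:smooth:histogram} with \thmref{thm:main} under the parameter choices $\kappa=0$, $\delta=\frac{1}{m^c}$, $\gamma=\Theta(\log m)$.
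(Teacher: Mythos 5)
Your proposal matches the paper's approach: the paper proves this theorem by composing \thmref{thm:smooth:histogram} with \thmref{thm:main} (stated as "By \thmref{thm:smooth:histogram} and \thmref{thm:main}, we have the following guarantee"), exactly as you do. You have also filled in the details the paper leaves implicit — notably the observation that a single-update change in the stream pushes forward to at most a single-element change in the windowed dataset so that $\Delta_g\le\Delta_f$, that smooth functions are bounded below by $1$ so \thmref{thm:main} applies, and the bookkeeping for $\gamma=\Theta(\log m)$, $\kappa=0$, $\delta=1/m^c$ — all of which is correct.
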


Observe that if we tried to apply the non-private smooth histogram framework to a DP insertion-only streaming algorithm, this might preserve privacy (by post-processing), but may significantly increase the error (in terms of accuracy). 

\paragraph{Length of the longest increasing sub-sequence.} 
The longest increasing subsequence problem $LIS$ is to find an increasing subsequence of maximum length of a sequence whose elements are from the universe $[n]$ and sequentially defined by the stream.  
The $LIS$-$length$ problem is to output the length of such a subsequence. 
\cite{BravermanO07} showed that the $LIS$-$length$ function is $(\rho,\rho)$-smooth. 
\begin{theorem}~\cite{BravermanO07}
For any $\rho\in(0,1)$, LIS-length is a $(\rho,\rho)$-smooth function. 
\end{theorem}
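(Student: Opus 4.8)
The plan is to verify the three conditions of \defref{def:smooth} for the LIS-length function $f$, taking the smoothness parameter to be $\xi=\rho$. Monotonicity and polynomial boundedness are immediate: if $B$ is a suffix of $A$ then every increasing subsequence of $B$ is also an increasing subsequence of $A$, so $f(B)\le f(A)$ (and $f\ge 1$ on any nonempty stream); and $f(A)$ is at most the stream length $m=\poly(n)$, hence bounded by $n^c$ for a suitable constant $c$. All of the work is in the smoothness condition.

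For smoothness I would in fact establish the stronger \emph{unconditional} bound
\[f(A\cup C)-f(B\cup C)\ \le\ f(A)-f(B)\qquad\text{whenever }B\text{ is a suffix of }A\text{ and }C\text{ is adjacent,}\]
from which smoothness with $\xi=\rho$ is immediate: if $(1-\rho)f(A)\le f(B)$ then $f(B\cup C)\ge f(A\cup C)-(f(A)-f(B))\ge f(A\cup C)-\rho f(A)\ge(1-\rho)f(A\cup C)$, the last step using $f(A)\le f(A\cup C)$ (monotonicity, since $A$ is a prefix of $A\cup C$). To prove the unconditional bound, let $\sigma$ be a longest increasing subsequence of $A\cup C$ and split it into the part $\sigma_A$ lying in $A$ and the part $\sigma_C$ lying in $C$; every element of $\sigma_A$ is smaller than every element of $\sigma_C$, so there is a value $v$ with $\max\sigma_A\le v<\min\sigma_C$ (take $v=+\infty$ if $\sigma_C$ is empty). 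Writing $\mu_X(v)$ for the length of the longest increasing subsequence of a stream $X$ that uses only values $\le v$, we get $f(A\cup C)=|\sigma_A|+|\sigma_C|\le \mu_A(v)+|\sigma_C|$; and appending $\sigma_C$ to any increasing subsequence of $B$ with values $\le v$ yields an increasing subsequence of $B\cup C$, so $f(B\cup C)\ge \mu_B(v)+|\sigma_C|$. Subtracting, it suffices to prove the \emph{restricted-LIS comparison}: $\mu_A(v)-\mu_B(v)\le f(A)-f(B)$ for every threshold $v$ and every pair $B\subseteq A$ with $B$ a suffix of $A$.

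The restricted-LIS comparison is the crux and, I expect, the main obstacle. I would attack it via patience sorting: for a stream $X$ and each $1\le i\le f(X)$, let $p_i(X)$ be the smallest possible last element of an increasing subsequence of $X$ of length $i$; these values are strictly increasing in $i$, and $\mu_X(v)=|\{\,i:p_i(X)\le v\,\}|$. Since $B$ is a suffix of $A$, every length-$i$ increasing subsequence of $B$ is also one of $A$, hence $p_i(A)\le p_i(B)$ for all $i\le f(B)$, and the claimed bound becomes the statement that at every level $v$ the number of ``pile tops'' lying above $v$ does not decrease when we pass from $B$ to the longer stream $A$. The delicate point — and the reason the naive idea of simply deleting from $\sigma$ the elements lying in $A\setminus B$ does not close the proof on its own — is that the pointwise inequality $p_i(A)\le p_i(B)$ is too weak: one must control how the $f(A)-f(B)$ \emph{additional} pile tops created by the prefix $A\setminus B$ are distributed relative to $v$. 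I would handle this by an exchange argument on increasing subsequences (equivalently, an induction on $|A\setminus B|$ that prepends one element $a$ at a time and splits into the cases $a\le v$ and $a>v$, tracking the simultaneous changes in $\mu_{(\cdot)}(v)$ and in $f(\cdot)$); the bookkeeping in the subcase where $a\le v$ raises $\mu_A(v)$ without a matching increase in $f(A)$ is exactly where the exchange argument is needed. Once the restricted-LIS comparison is established, the displayed unconditional inequality, and hence the theorem, follow.
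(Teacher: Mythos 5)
The paper proves nothing here: it states the result as a citation to Braverman--Ostrovsky, so your argument has to be judged on its own terms rather than against an internal proof. Your scaffolding is sound. Monotonicity and polynomial boundedness of LIS-length are indeed immediate; your derivation of the smoothness condition from the unconditional inequality $f(A\cup C)-f(B\cup C)\le f(A)-f(B)$ is correct; and your threshold reduction of that inequality to the restricted-LIS comparison $\mu_A(v)-\mu_B(v)\le f(A)-f(B)$ is also correct (the bounds $|\sigma_A|\le\mu_A(v)$ and $f(B\cup C)\ge\mu_B(v)+|\sigma_C|$ both hold with your choice of $v$, and subtracting gives the reduction).

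The genuine gap is that you never prove the restricted-LIS comparison, and that is where all of the combinatorial content lives; the other steps are soft. In the one-element induction $A=aB$, the bad case is exactly $\mu_{aB}(v)=\mu_B(v)+1$ together with $f(aB)=f(B)$. Unwinding the patience-sorting characterization, that case forbids the following: $B$ has a length-$\mu_B(v)$ increasing subsequence with all values in $(a,v]$, yet every length-$f(B)$ increasing subsequence of $B$ uses some value $\le a$. Ruling this out is not a routine exchange. A length-$\mu_B(v)$ subsequence confined to $(a,v]$ and the $>v$ tail of an LIS of $B$ need not be compatible in stream order, so simply concatenating a prefix of one to a suffix of the other fails, and one has to argue more carefully (precisely the ``bookkeeping in the subcase $a\le v$'' you flag). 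You correctly diagnose this as the crux and correctly observe that $p_i(A)\le p_i(B)$ alone is too weak (one can have $p_i(aB)\le v<p_i(B)$ at several indices $i$ simultaneously), but writing ``I would handle this by an exchange argument'' names the missing step without carrying it out. Until that lemma is actually established, the proof is incomplete.
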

We now recall the following insertion-only streaming algorithm for estimating the length of the longest increasing subsequence

\begin{theorem}~\cite{SunW07}
Let $\frakS$ be a stream $u_1, u_2, \ldots $ such that $u_i \in [n]$ and let $k$ be an upper bound on the LIS-length in the stream, then there exists a 1-pass streaming algorithm that computes LIS-length with probability at least $\frac{2}{3}$, using space $\O{k^2\log\frac{n}{k}}$. 
\end{theorem}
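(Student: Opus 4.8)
The plan is to prove the theorem through the classical \emph{patience sorting} characterization of the longest increasing subsequence, implemented as a single-pass streaming algorithm. Recall that if one processes $u_1,u_2,\ldots$ left to right and places each arriving element on the leftmost ``pile'' whose current top value is at least that element (opening a new rightmost pile when no such pile exists), then at every point the number of piles equals the LIS-length of the prefix seen so far. So I would maintain only the vector $B$ of pile tops, where $B[\ell]$ is the smallest value that can occur as the last element of a length-$\ell$ increasing subsequence of the prefix processed so far; on the arrival of $u_i$ the algorithm binary-searches for the smallest index $\ell$ with $B[\ell]\ge u_i$, overwrites $B[\ell]\gets u_i$, and appends $u_i$ as a new last entry if no such $\ell$ exists. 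The reported value at the end of the stream is the length of $B$. Since we are promised an a priori upper bound $k$ on the true LIS-length, $B$ never needs to exceed length $k$ on valid inputs, so the algorithm simply caps it at $k$.

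The key steps, in order, are as follows. First I would establish the structural invariant that after processing any prefix, $B$ is strictly increasing and $B[\ell]$ is exactly the \emph{minimum attainable tail value} over length-$\ell$ increasing subsequences of that prefix (so the length of $B$ equals the prefix's LIS-length); this is a short induction whose inductive step is a case split on where $u_i$ lands, and strict monotonicity of $B$ is precisely what validates the binary search. Second, I would bound the space: $B$ holds at most $k$ values from $[n]$, so $\O{k\log n}$ bits trivially suffice, and a standard differential encoding (the $k$ consecutive gaps of $B$ are positive and sum to at most $n$) brings this down to $\O{k\log\frac{n}{k}}$ bits, which comfortably implies the stated bound. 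Third, I would observe that the algorithm is single-pass, spends $\O{\log k}$ time per update, and is deterministic and exact, so the ``probability at least $\frac{2}{3}$'' guarantee is met vacuously; should the reference's model instead only expose element values through an approximate comparison oracle, the same skeleton with a hashed bucketing of values recovers correctness up to a union bound over the at most $m$ comparisons, giving failure probability below $\frac{1}{3}$ with the same space footprint.

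The hard part will be the first step, and specifically the ``minimum attainable'' clause of the invariant: it is not enough for $B[\ell]$ to be \emph{some} tail of a length-$\ell$ increasing subsequence --- minimality is exactly what forces a later element $u_j$ to be matched to the largest feasible index $\ell$, and hence guarantees that the length of $B$ never undercounts the true LIS-length. Once that invariant is in hand, everything else --- the space accounting, the pass count, the (trivial) success probability --- is routine; the only additional care needed is the clean handling of the promised cap $k$ so that the word size stays $\O{\log n}$ regardless of how the stream behaves.
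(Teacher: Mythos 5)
This theorem is imported from the cited reference~\cite{SunW07} and used as a black box; the paper itself contains no proof, so there is nothing in‑paper to compare your argument against. On its own merits, your patience‑sorting argument is correct --- the invariant you state (that $B[\ell]$ is the minimum tail value over all length‑$\ell$ increasing subsequences of the processed prefix, so that $|B|$ equals the prefix's LIS‑length, $B$ is strictly increasing, and an arriving $u_i$ replaces the leftmost $B[\ell]\ge u_i$ or is appended) is the classical one, and the induction over positions goes through. In fact you prove something \emph{stronger} than the stated bound: $B$ is a strictly increasing length‑$\le k$ sequence over $[n]$, so $\O{k\log n}$ bits suffice outright, and gap/Elias coding gives $\O{k\log(n/k)}$ bits, whereas the theorem only claims $\O{k^2\log(n/k)}$; also the algorithm is deterministic and exact, so the $\frac{2}{3}$ success probability is vacuous, as you note. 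Two small things to tighten. First, the $\O{\log k}$ per‑update time you assert is in tension with the $\O{k\log(n/k)}$‑bit gap‑coded representation, since binary search needs random access; with gap coding you do a single left‑to‑right decode scan per update using $\O{\log n}$ auxiliary bits, which is fine because the theorem says nothing about update time, but you should not advertise both simultaneously. Second, the closing hedge about an ``approximate comparison oracle'' is unnecessary: the universe is $[n]$ and exact comparisons cost nothing, so the algorithm is simply deterministic and that branch can be dropped.
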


We can therefore apply \thmref{thm:main-hist-framework} and \thmref{thm:dp-transform} to the algorithm of \cite{SunW07} to obtain \corref{cor:lis}. 
We note that while the longest increasing subsequence may have large global sensitivity, the length of the longest increasing subsequence only has global sensitivity 1, i.e., $\Delta_{LIS\text{-}length}=1$. 
\begin{lemma}
\lemlab{lem:lis:sens}
For the length of the longest increasing subsequence, the global sensitivity is at most $1$. 
\end{lemma}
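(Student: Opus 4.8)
The plan is to bound $|\ell(\frakS) - \ell(\frakS')|$ directly for an arbitrary pair of neighboring streams, where $\ell(\cdot)$ denotes the length of the longest increasing subsequence; the global sensitivity is then the maximum of this quantity over all neighboring pairs. Since neighboring streams differ in a single update, write $\frakS = u_1,\dots,u_m$ and $\frakS' = u'_1,\dots,u'_m$ with $u_j = u'_j$ for every $j \ne i$ and $u_i \ne u'_i$ at the single differing coordinate $i$. Fix a longest increasing subsequence of $\frakS$, realized by an index set $I \subseteq [m]$ with $|I| = \ell(\frakS)$.

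The core of the argument is a two-case split on whether $i \in I$. If $i \notin I$, then the elements indexed by $I$ are identical in $\frakS$ and $\frakS'$, so the same index set yields an increasing subsequence in $\frakS'$ and thus $\ell(\frakS') \ge \ell(\frakS)$. If $i \in I$, delete the index $i$: the remaining index set $I \setminus \{i\}$ picks out elements that are unchanged between the two streams, and a subsequence of an increasing sequence is itself increasing, so $\frakS'$ admits an increasing subsequence of length $\ell(\frakS) - 1$, giving $\ell(\frakS') \ge \ell(\frakS) - 1$. In either case $\ell(\frakS') \ge \ell(\frakS) - 1$; exchanging the roles of $\frakS$ and $\frakS'$ yields $\ell(\frakS) \ge \ell(\frakS') - 1$, so $|\ell(\frakS) - \ell(\frakS')| \le 1$, and taking the maximum over neighboring pairs gives $\Delta_{\ell} \le 1$.

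There is no substantive obstacle in the argument itself; the only point requiring care is which notion of neighboring is in force. If neighboring streams are instead obtained by inserting or deleting a single element, the same reasoning adapts with a trivial modification: deleting an element removes at most one term from any fixed optimal subsequence, while for insertion one runs the argument on the longer stream, noting that an optimal subsequence there uses the inserted position at most once and that dropping it leaves a valid subsequence of the original stream of length at most one smaller. In all cases the bound is $1$, which is exactly what is needed to combine with \thmref{thm:main-hist-framework} and \thmref{thm:dp-transform} to obtain the private sliding-window algorithm for the length of the longest increasing subsequence.
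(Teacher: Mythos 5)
Your proof is correct, and it actually achieves the bound of $1$ that the lemma states, which the paper's own proof does not. The paper argues on the insert/delete model: it shows that deleting a single term changes the LIS length by at most $1$, then appeals to the triangle inequality to handle the substitution neighbor relation that is actually in force for streams, and consequently concludes a bound of $2$ rather than $1$ (the proof's last sentence literally reads ``at most $2$,'' contradicting the stated claim). Your direct two-case split on whether the differing index $i$ lies in a fixed optimal index set $I$ avoids the detour through deletion and the resulting loss of a factor of~$2$: when $i\notin I$ the same index set certifies $\ell(\frakS')\ge\ell(\frakS)$, and when $i\in I$ the set $I\setminus\{i\}$ indexes positions whose values are identical in both streams and remain increasing, so $\ell(\frakS')\ge\ell(\frakS)-1$. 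Symmetrizing gives the tight bound $|\ell(\frakS)-\ell(\frakS')|\le 1$, which is exactly what \corref{cor:lis} consumes via $\Delta_{\mathrm{LIS}\text{-}\mathrm{length}}=1$. So your argument both matches the stated lemma and repairs the constant in the paper's proof; the closing aside about insertion/deletion neighbors is accurate but not the relation the paper fixes for streams, so it can be dropped.
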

\begin{proof}
Let $S,S'$ be two sequences so that $S'$ is formed by deleting a term of $S$. 
Since any subsequence of $S'$ is a subsequence of $S$ and any subsequence of $S$ can be transformed into a subsequence of $S'$ by deleting at most one term, then the lengths of the longest subsequences in $S$ and $S'$ differ by at most $1$. 
Hence by the triangle inequality, the global sensitivity of the length of the longest increasing subsequence is at most $2$. 
\end{proof}

\begin{corollary}\corlab{cor:lis}
\sloppy Let $\frakS$ be a stream $u_1, u_2, \ldots $ such that $u_i \in [n]$ of length $m=\poly(n)$ and let $k$ be an upper bound on the LIS-length in the stream. 
For some $K>0$, and for any $\alpha >0$, $c>0$, $\eps>0$, and $\rho:= \O{\frac{\alpha\eta}{\log m}}$ on a stream of length $m$, where $\eta=\min(\eps,1)$, there exists a sliding window algorithm $\calA$ such that:
\begin{enumerate}
\item (Privacy) The algorithm $\calA$ is $\eps$-differentially private where $\eps$ is constant.
\item (Accuracy) The algorithm $\calA$ outputs $\widehat{LIS}(\calW)$ such that with probability at least $1-\frac{1}{m^c}-\frac{1}{m^c(e^\eps-1)/(Kk) + 1}$, 
\[(1-\alpha)LIS(\calW)-\frac{12c\log m}{\eps}-\frac{1}{Kk}\le\widehat{LIS}(\calW)\le(1+\alpha)LIS(\calW)+\frac{12c\log m}{\eps}+\frac{1}{Kk},\]
where $LIS(\calW)$ denotes the length of the longest increasing subsequence of the last $\min(m,W)$ updates of $\frakS$, for a window parameter $W>0$. 
\item (Time/Space) \sloppy The algorithm $\calA$ uses an $\O{\frac{k^2}{\alpha\eta} \log^4 n}$ space.
\end{enumerate}

\end{corollary}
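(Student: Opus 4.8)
The plan is to prove \corref{cor:lis} by composing three results already in hand: the insertion-only streaming algorithm of \cite{SunW07} for the length of the longest increasing subsequence, the fact that LIS-length is $(\rho,\rho)$-smooth, and our DP smooth histogram framework (\thmref{thm:main-hist-framework}), followed by the approximate-to-pure DP postprocessing step (\thmref{thm:dp-transform}). Concretely, I would first boost the success probability of the \cite{SunW07} algorithm with the median trick (\remref{rem:median-trick}) to obtain, for any target failure probability $\delta'$, an insertion-only algorithm that computes LIS-length with probability $1-\delta'$ using space $\O{k^2 \log\frac{n}{k} \cdot \log\frac{1}{\delta'}}$; this is the base algorithm $\calA$ required by \thmref{thm:main-hist-framework}. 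Since LIS-length is $(\rho,\rho)$-smooth we have $\beta(\rho)=\rho$, and instantiating \thmref{thm:main-hist-framework} with $\rho = \O{\frac{\alpha\eta}{\log m}}$ and $\delta = \frac{1}{m^c}$ yields an $(\eps,\delta)$-DP sliding-window algorithm. Finally, I would apply \thmref{thm:dp-transform} to convert it into a pure $\eps$-DP algorithm.

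For the accuracy of the intermediate $(\eps,\delta)$-DP algorithm, \lemref{lem:lis:sens} gives $\Delta_{LIS\text{-}length}\le 2$, so the additive term $\frac{6c\Delta_f\log m}{\eps}$ produced by \thmref{thm:main-hist-framework} is at most $\frac{12c\log m}{\eps}$, and we obtain $(1-\alpha)LIS(\calW)-\frac{12c\log m}{\eps}\le\widehat{LIS}(\calW)\le(1+\alpha)LIS(\calW)+\frac{12c\log m}{\eps}$ with probability at least $1-\frac{1}{m^c}$. For the postprocessing step, the key observation is that $k$ is a promised upper bound on the LIS-length of the window, so we may invoke \thmref{thm:dp-transform} with its maximum-value parameter set to $M=k$: the rounding $\widehat{LIS}(\calW)\mapsto\frac{\lceil K\widehat{LIS}(\calW)\rceil}{Kk}$ introduces an extra additive error of at most $\frac{1}{Kk}$ and keeps the range of the rounded estimator bounded by $(k+1)K$, so the folklore reduction of \thmref{thm:dp-transform} applies and produces a pure $\eps$-DP algorithm while inflating the failure probability by $p=\frac{\delta K(k+1)}{e^\eps-1+\delta K(k+1)}$, which for $\delta=\frac{1}{m^c}$ is at most $\frac{1}{m^c(e^\eps-1)/(Kk)+1}$ (up to the harmless $k$ versus $k+1$ slack). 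A union bound over the histogram-framework failure event and this extra event gives the claimed success probability $1-\frac{1}{m^c}-\frac{1}{m^c(e^\eps-1)/(Kk)+1}$, and privacy is pure $\eps$-DP by \thmref{thm:dp-transform}.

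For the space bound I would substitute $\frac{1}{\beta(\rho)}=\O{\frac{\log m}{\alpha\eta}}$ and the boosted base-algorithm space $\calS(\rho,\frac{\delta}{\poly(m,n)},m,n)=\O{k^2\log\frac{n}{k}\cdot\log\frac{\poly(m,n)}{\delta}}$ into the space bound $\O{\frac{1}{\beta(\rho)}\bigl(\calS(\beta(\rho),\frac{\delta}{\poly(m,n)},m,n)+\log m\bigr)\log m}$ of \thmref{thm:main-hist-framework}; using $m=\poly(n)$ and $\delta=\frac{1}{m^c}$ to absorb every $\log m$, $\log\frac{n}{k}$, and $\log\frac{\poly(m,n)}{\delta}$ factor into $\O{\log n}$, this collapses to $\O{\frac{k^2}{\alpha\eta}\log^4 n}$, and the rounding in \thmref{thm:dp-transform} contributes no asymptotic space overhead.

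The argument is a bookkeeping composition rather than a new idea, so the main obstacle is simply threading all parameters correctly. The two places that require care are: choosing the maximum-value parameter in \thmref{thm:dp-transform} to be the promised window-bound $k$ rather than a global worst case (so that the additive rounding error is $\frac{1}{Kk}$ and the range stays $O(kK)$), and correctly carrying the $\eta=\min(\eps,1)$ factor that enters through the tuned accuracy parameter $\rho=\O{\frac{\alpha\eta}{\log m}}$ of the histogram framework; beyond that one only needs a single union bound to combine the failure probability of \thmref{thm:main-hist-framework} with the extra failure probability introduced by the approximate-to-pure DP reduction.
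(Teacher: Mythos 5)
Your proposal is correct and follows the same route the paper intends: boost \cite{SunW07} via the median trick, feed it into the DP smooth histogram framework (\thmref{thm:main-hist-framework}) using the $(\rho,\rho)$-smoothness of LIS-length and $\Delta_f\le 2$ from the proof of \lemref{lem:lis:sens}, then make it pure DP with \thmref{thm:dp-transform} using $M=k$. The only extra care you supply, correctly, is the observation that one should instantiate the range parameter of \thmref{thm:dp-transform} with the promised window bound $k$ (not a global worst case) so that the rounding error is $\frac{1}{Kk}$ and the added failure probability matches the stated bound up to the harmless $k$ versus $k+1$ slack.
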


\paragraph{Distinct elements.} 
We first give a simple proof to show that the function computing the number of distinct elements is $(\rho,\rho)$-smooth below. 
\begin{lemma}
For any $\rho\in(0,1)$, the number of distinct elements is $(\rho,\rho)$-smooth. 
\end{lemma}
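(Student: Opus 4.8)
The plan is to verify the three conditions of \defref{def:smooth} directly, with the choice $\xi := \rho$. Throughout, write $D(X) \subseteq [n]$ for the set of distinct universe elements that appear in a stream $X$, so that the distinct-elements function is $f(X) = |D(X)|$ (and $f(X) \ge 1$ on any nonempty stream). \textbf{Monotonicity} and \textbf{polynomial boundedness} are immediate: if $B$ is a suffix of $A$ then every element occurring in $B$ also occurs in $A$, so $D(B) \subseteq D(A)$ and hence $f(B) \le f(A)$; and since $D(X) \subseteq [n]$ always, $f(X) \le n$, giving the polynomial bound with exponent $c = 1$.

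For \textbf{smoothness}, fix $\rho \in (0,1)$, put $\xi := \rho$, and suppose $B \subseteq A$ (i.e.\ $B$ is a suffix of $A$) with $(1-\rho)f(A) \le f(B)$; let $C$ be any adjacent block appended after both. Concatenation merely takes the union of distinct-element sets, so $D(A \cup C) = D(A) \cup D(C)$ and $D(B \cup C) = D(B) \cup D(C)$, and the hypothesis says $|D(A) \setminus D(B)| = f(A) - f(B) \le \rho\, f(A)$. The crux is the set inclusion
\[
\bigl(D(A) \cup D(C)\bigr) \setminus \bigl(D(B) \cup D(C)\bigr) \;=\; D(A) \setminus \bigl(D(B) \cup D(C)\bigr) \;\subseteq\; D(A) \setminus D(B),
\]
which yields $f(A \cup C) - f(B \cup C) \le |D(A)\setminus D(B)| \le \rho\, f(A) \le \rho\, f(A \cup C)$, where the last step uses $D(A) \subseteq D(A) \cup D(C)$ together with monotonicity. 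Rearranging gives $f(B \cup C) \ge (1-\rho)\, f(A \cup C)$, exactly the smoothness requirement.

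There is essentially no obstacle here: the whole argument is a one-line inclusion of finite sets. The only points that warrant care are purely notational --- that ``$B \subseteq A$'' in this context means $B$ is a \emph{suffix} of $A$ (hence $D(B) \subseteq D(A)$), and that the sliding-window ``$\cup$'' operation on streams corresponds to union of the associated distinct-element sets. Once those correspondences are pinned down, the three bullet points of \defref{def:smooth} follow as above.
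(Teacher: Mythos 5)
Your proof is correct, and in fact it takes a materially different (and more careful) route than the paper's. The paper's own argument for the smoothness condition is the chain $(1-\rho)f(A\cup C)\le(1-\rho)\bigl(f(A)+f(C)\bigr)\le f(B)+f(C)\le f(B\cup C)$, but the final inequality $f(B)+f(C)\le f(B\cup C)$ is in the \emph{wrong} direction: the number of distinct elements is subadditive, not superadditive, under concatenation (e.g.\ $B=(1,2,3)$, $C=(1)$ gives $f(B)+f(C)=4>3=f(B\cup C)$), so that chain does not establish the claim. Your argument avoids this by bounding the \emph{difference} $f(A\cup C)-f(B\cup C)$ via the set inclusion $\bigl(D(A)\cup D(C)\bigr)\setminus\bigl(D(B)\cup D(C)\bigr)\subseteq D(A)\setminus D(B)$, and then using the hypothesis $|D(A)\setminus D(B)|\le\rho f(A)$ together with monotonicity $f(A)\le f(A\cup C)$; every step there is sound. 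So your write-up is not just correct, it is a fix for a subadditivity/superadditivity slip in the published proof, and it also makes explicit the two semantic conventions (that ``$B\subseteq A$'' means $B$ is a suffix of $A$, and that ``$\cup$'' on streams induces union of distinct-element sets) that the paper leaves implicit.
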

\begin{proof}
Let $f$ be the number of distinct elements on an input stream. 
Given a stream $A$, it is easy to see that $f$ preserves the following properties of smoothness: (1) $f(A)\geq 0$, (2) $f(A) \geq f(B)$ where $B \subseteq A$, (3) $f(A) \leq \poly(n)$. 

It remains to show that if $B \subseteq A$ and $(1-\rho)f(A) \leq f(B)$ then $(1-\rho) f(A \cup C) \leq f(B \cup C)$ for any adjacent $C$. 
Note that $(1-\rho) f(A \cup C) \leq (1-\rho)( f(A) + f(C)) \leq f(B) + f(C) \leq f(B \cup C)$, where the first step is true because $A$ and $C$ are adjacent streams and the number of distinct elements in their union is therefore at most the sum of distinct elements in both. 
\end{proof}

We apply our general framework to the following insertion-only streaming algorithm given by \cite{Blasiok20}: 
\begin{theorem}~\cite{Blasiok20}
Given an accuracy parameter $\alpha>0$ and a failure probability $\delta \in (0,1)$, there exists a streaming algorithm that outputs a $(\alpha,0)$-approximation to the number of distinct elements with probability $1-\delta$ using $\O{\frac{1}{\alpha^2}\log\frac{1}{\delta} + \log n}$ bits.
\end{theorem}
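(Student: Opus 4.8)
This is the optimal space bound for distinct–elements ($F_0$) estimation, and the point of interest is precisely that the $\log\frac1\delta$ and $\log n$ factors enter \emph{additively}: the naive strategy of running $\O{\log\frac1\delta}$ independent copies of an $\O{\frac1{\alpha^2}\log n}$–bit constant–failure sketch and taking a median only gives $\O{\frac1{\alpha^2}\log n\log\frac1\delta}$. The plan is to use the classical subsampling skeleton, but with a space–bounded ``level selector.'' Fix a hash $h:[n]\to[0,1]$ drawn from a family with $\O{\log\frac1\delta}$–wise independence, representable with $\O{\log n}$ random bits (via a $k$–wise independent family or Nisan's generator). For a level $j$, let $X_j$ be the number of distinct items $i$ in the stream with $h(i)\le 2^{-j}$, so $\Ex{X_j}=F_0\cdot 2^{-j}$. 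Maintain a single ``bottom–$K$'' structure, $K=\Theta\!\left(\frac1{\alpha^2}\log\frac1\delta\right)$: the (up to) $K$ distinct items with smallest hash values, together with those values (collapsing to a ``full'' flag once more than $K$ distinct items appear — legitimate since $F_0$ is monotone along the stream, so a full structure stays full). From this one structure we can read off $X_j$ \emph{exactly} for any level whose $X_j\le K$, namely by counting retained items with $h(i)\le 2^{-j}$.

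\noindent First I would build the \textbf{rough estimator}: a constant–factor estimate $\widetilde F_0$ of $F_0$ that is correct with probability $1-\delta/3$ using only $\O{\log\frac1\delta+\log n}$ bits, since this is what tells the algorithm (online) which level to read. Track, within one hashed copy, the $k$ smallest hash values for $k=\Theta(\log\frac1\delta)$ (a bottom–$k$/$k$–MinHash sketch); the $k$–th smallest of $F_0$ (near–)uniforms concentrates multiplicatively around $k/F_0$ except with probability $\exp(-\Omega(k))=\poly(\delta)$ by a Chernoff/Bennett bound, yielding $\widetilde F_0$. To keep this from costing $\O{\log\frac1\delta\cdot\log n}$ bits one first gets a very crude (within a $\poly(n)$ factor) scale estimate in $\O{\log\log n+\log\frac1\delta}$ bits and then stores the $k$ minima relative to a shared $\O{\log n}$–bit ``anchor,'' using that the bottom–$k$ values all lie in a narrow band. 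The algorithm then reads off level $\widehat\jmath:=\lceil\log(\widetilde F_0/K)\rceil$ and outputs $2^{\widehat\jmath}\,X_{\widehat\jmath}$; making the accounting here genuinely additive is the first delicate point.

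\noindent Next I would analyze the \textbf{estimator} on the rough estimator's good event. There $\widetilde F_0=\Theta(F_0)$, hence $\Ex{X_{\widehat\jmath}}=F_0\cdot 2^{-\widehat\jmath}=\Theta(K)$, so in particular $X_{\widehat\jmath}\le K$ and the bottom–$K$ structure reports it exactly. Now $X_{\widehat\jmath}=\sum_{i:\,f_i>0}\mathbf 1[h(i)\le 2^{-\widehat\jmath}]$ is a sum of indicators which, under full independence of $h$ on the $F_0$ relevant coordinates, are independent Bernoullis; a Chernoff bound gives $\Pr\!\left[\,|X_{\widehat\jmath}-\Ex{X_{\widehat\jmath}}|>\tfrac\alpha2\Ex{X_{\widehat\jmath}}\right]\le 2\exp\!\left(-\Omega(\alpha^2 K)\right)\le\delta/3$ once the constant in $K=\Theta(\frac1{\alpha^2}\log\frac1\delta)$ is large enough, and with $\O{\log\frac1\delta}$–wise independence the same tail follows from a high–moment (Bennett–type) inequality. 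Re–hashing the surviving items into a universe of size $\poly(K/\delta)$ makes the probability that two distinct true items collide inside the structure (and cause an undercount) at most $\delta/3$ by a union bound over $\binom K2$ pairs. A union bound over these $\O{1}$ bad events gives overall failure probability $\le\delta$, and $2^{\widehat\jmath}X_{\widehat\jmath}=(1\pm\alpha)F_0$.

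\noindent The \textbf{main obstacle} is the space accounting, specifically the \emph{compact encoding of the bottom–$K$ structure}: one must store $\Theta(\frac1{\alpha^2}\log\frac1\delta)$ distinct items together with enough of their hash values to support deduplication of the (many) repeated arrivals and the level–counting query, all in $\O{\frac1{\alpha^2}\log\frac1\delta}$ total bits — i.e.\ $\O{1}$ amortized bits per retained item — plus only an additive $\O{\log n}$ overhead. This is exactly where the improvement over the $\O{\frac1{\alpha^2}\log\frac1\delta\log n}$ bound is won and where the bulk of B\l asiok's technical work lies: it requires factoring each retained identifier into a structure–global $\O{\log n}$–bit part (amortized away over all $K$ items) and an $\O{1}$–bit residual from a secondary hash whose collisions are charged against the $\delta/3$ budget above, together with exploiting that the retained hash values are order statistics of uniforms (roughly geometric spacings) so that storing their compressed representation costs $\O{1}$ amortized bits each. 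Everything else — the subsampling skeleton, the Chernoff/moment analysis, the $\O{1}$–event union bound — is routine given the limited–independence and concentration tools already cited.
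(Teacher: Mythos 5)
The statement you are proving is not proved in the paper at all: it is quoted as a black-box result from \cite{Blasiok20}, so the comparison here is against that cited work rather than an in-paper argument. Your proposal correctly reproduces the high-level architecture of that algorithm (subsampling levels, a bottom-$K$ sketch with $K=\Theta(\alpha^{-2}\log\frac{1}{\delta})$, a separate rough constant-factor estimator, and a compressed encoding), but as a proof it has two genuine gaps. First, the randomness accounting is wrong as stated: a hash family over $[n]$ with $\Theta(\log\frac{1}{\delta})$-wise independence requires a seed of $\Theta\left(\log\frac{1}{\delta}\cdot\log n\right)$ bits, not $\O{\log n}$ bits, and for small $\delta$ this alone already exceeds the claimed $\O{\frac{1}{\alpha^2}\log\frac{1}{\delta}+\log n}$ budget. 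Invoking Nisan's generator does not repair this in one line either, since its seed length carries an extra logarithmic factor in the space and its use requires a nontrivial reduction; handling exactly this issue (derandomizing the sketch so that the stored random bits fit the additive bound) is a substantial part of B\l asiok's contribution and cannot be waved through.

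Second, the step you label the ``main obstacle'' --- encoding the $\Theta(\alpha^{-2}\log\frac{1}{\delta})$ retained items with their hash information in $\O{1}$ amortized bits each, while still supporting deduplication of repeated arrivals and the level-counting query --- is precisely the content of the theorem, and your text explicitly defers it (``this is where the bulk of B\l asiok's technical work lies''). Without that compression lemma, your construction stores $\Theta(\log n)$ or at least $\Theta(\log\frac{K}{\delta})$ bits per retained item and only yields the $\O{\frac{1}{\alpha^2}\log\frac{1}{\delta}\log n}$-type bound that the theorem is meant to beat; the collision/Chernoff analysis you do spell out is the routine part. So what you have is an accurate roadmap of the known proof, not a proof: to close it you would need (i) a correct treatment of the hash seed within the additive $\O{\log n}$ term, and (ii) the actual encoding/decoding scheme achieving constant amortized bits per stored element, with its interaction with the failure-probability budget made explicit.
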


\begin{corollary} \corlab{cor:distinct}
For any integer $K>0$ and for any $\alpha >0$, $c>0$, $\eps>0$, and $\rho:= \O{\frac{\alpha\eta}{\log m}}$ on a stream $\frakS$ of length $m=\poly(n)$, where $\eta=\min(\eps,1)$, there exists a sliding window algorithm $\calA$ such that for the dataset $W$ induced by the stream:
\begin{enumerate}
\item (Privacy) The algorithm $\calA$ is $(\eps,\delta)$-differentially private where $\eps$ is constant and $\delta=\frac{1}{m^c}$.  
\item (Accuracy) The algorithm $\calA$ outputs $\widehat{\dist}(\calW)$ such that with probability at least $1-\frac{1}{m^c}-\frac{1}{m^{c-1}(e^\eps-1)/K+1}$, 
\[(1-\alpha)\dist(\calW)-\frac{12c\log m}{\eps}-\frac{1}{Km}\le\widehat{\dist}(\calW)\le(1+\alpha)\dist(\calW)+\frac{12c\log m}{\eps} + \frac{1}{Km},\]
where $\dist(\calW)$ denotes the number of distinct elements in the last $\min(m,W)$ updates of the stream $\frakS$, for a window parameter $W>0$. 
\item (Time/Space) \sloppy The algorithm $\calA$ uses $\O{\frac{1}{\alpha^3\eta^3}\log^5 n}$ bits.
\end{enumerate}
\end{corollary}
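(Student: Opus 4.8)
The plan is to obtain \corref{cor:distinct} by instantiating the DP smooth histogram framework (\thmref{thm:main-hist-framework}) and then composing it with the approximate-to-pure DP postprocessing of \thmref{thm:dp-transform}, using the insertion-only streaming algorithm of \cite{Blasiok20} as the black-box primitive. Two ingredients feed the reduction and I would establish (or quote) them first: that the number-of-distinct-elements function $\dist$ is $(\rho,\rho)$-smooth, which is the lemma stated just before the corollary, so that $\beta(\rho)=\rho$ in the notation of \thmref{thm:main-hist-framework}; and that the global sensitivity of $\dist$ on neighboring streams is $\Delta_{\dist}\le 2$, which is the $p=0$ instance of \lemref{lem:lp:sens}. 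The algorithm of \cite{Blasiok20} provides the required $(\alpha',0)$-approximation with failure probability $\delta'$ in space $\calS(\alpha',\delta',m,n)=\O{\frac{1}{(\alpha')^2}\log\frac1{\delta'}+\log n}$, whose essential feature is the \emph{polynomial} (not exponential) dependence on $\frac1{\alpha'}$.

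I would then invoke \thmref{thm:main-hist-framework} with tuned accuracy parameter $\rho:=\O{\frac{\alpha\eta}{\log m}}$ (where $\eta=\min(\eps,1)$) and DP failure parameter $\delta=\frac1{m^c}$. This directly yields a sliding window algorithm that is $(\eps,\frac1{m^c})$-differentially private and, with probability at least $1-\frac1{m^c}$, outputs $\widehat{\dist}(\calW)$ obeying
\[(1-\alpha)\dist(\calW)-\frac{6c\,\Delta_{\dist}\log m}{\eps}\le\widehat{\dist}(\calW)\le(1+\alpha)\dist(\calW)+\frac{6c\,\Delta_{\dist}\log m}{\eps},\]
so that $\Delta_{\dist}\le 2$ turns the additive term into $\frac{12c\log m}{\eps}$. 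For the space bound I would use that $m=\poly(n)$ and $\delta=\frac1{m^c}$ make $\log\frac{\poly(m,n)}{\delta}=\O{\log n}$, hence $\calS(\beta(\rho),\frac{\delta}{\poly(m,n)},m,n)=\O{\frac{\log^3 n}{\alpha^2\eta^2}}$, and then plug this into the $\O{\frac1{\beta(\rho)}(\calS(\cdot)+\log m)\log m}$ space expression of \thmref{thm:main-hist-framework} together with $\frac1{\beta(\rho)}=\O{\frac{\log m}{\alpha\eta}}$, landing at $\O{\frac{1}{\alpha^3\eta^3}\log^5 n}$ bits.

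Finally I would apply \thmref{thm:dp-transform} with $M=\max_D\dist(D)\le m$ and a parameter $K>0$ to upgrade the $(\eps,\frac1{m^c})$-DP algorithm to a pure $\eps$-DP one. That step only post-processes the output, so the space is unchanged; it adds the rounding error $\frac1{KM}\le\frac1{Km}$ to each side of the accuracy bound and inflates the failure probability by $p=\frac{\delta K(M+1)}{e^\eps-1+\delta K(M+1)}$, which with $\delta=\frac1{m^c}$ and $M+1\le m+1$ simplifies to $\frac{1}{m^{c-1}(e^\eps-1)/K+1}$; a union bound then gives total failure probability $\frac1{m^c}+\frac1{m^{c-1}(e^\eps-1)/K+1}$, as claimed. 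I do not anticipate a real obstacle — all the mathematical content is packaged in the two general theorems, and the only genuine work is the bookkeeping in the middle paragraph, namely verifying that the tuned accuracy $\rho=\O{\frac{\alpha\eta}{\log m}}$ and the amplified failure probability $\frac{\delta}{\poly(m,n)}$ cost only polylogarithmic-in-$n$ and polynomial-in-$\frac1{\alpha\eps}$ overhead, which is exactly the place where the polynomial dependence of \cite{Blasiok20} on $\frac1{\alpha}$ is used.
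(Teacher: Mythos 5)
Your proposal is correct and is essentially the paper's intended derivation: instantiate the DP smooth histogram framework (\thmref{thm:main-hist-framework}) with $\beta(\rho)=\rho$ (the $(\rho,\rho)$-smoothness of $\dist$), $\Delta_{\dist}\le 2$ (the $p=0$ case of \lemref{lem:lp:sens}), and the \cite{Blasiok20} primitive, then compose with \thmref{thm:dp-transform}. The bookkeeping — replacing the accuracy parameter with $\rho=\O{\alpha\eta/\log m}$, using $m=\poly(n)$ and $\delta=m^{-c}$ so $\log\frac{\poly(m,n)}{\delta}=\O{\log n}$, and multiplying by $\frac{1}{\beta(\rho)}\cdot\log m$ — correctly lands at $\O{\frac{1}{\alpha^3\eta^3}\log^5 n}$ bits.

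One small slip: you write ``$M=\max_D\dist(D)\le m$'' and then ``adds the rounding error $\frac1{KM}\le\frac1{Km}$,'' but if $M\le m$ the inequality points the other way, $\frac{1}{KM}\ge\frac{1}{Km}$. The fix is to \emph{choose} $M=m$ as the truncation bound in \thmref{thm:dp-transform} (any upper bound on the output suffices, as the theorem's proof truncates to $[0,M]$), which gives $\frac{1}{KM}=\frac{1}{Km}$ exactly and also matches the stated failure probability $p=\frac{1}{m^{c-1}(e^\eps-1)/K+1}$. Note also that your conclusion is pure $\eps$-DP (which is what the accuracy terms and Table~3 indicate); the corollary's privacy line saying ``$(\eps,\delta)$-DP'' appears to be an editing artifact in the paper rather than an error on your side, since $\eps$-DP implies $(\eps,\delta)$-DP for any $\delta\ge 0$.
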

We remark that the dependency on $\alpha$ and $\eta$ can be improved to $\frac{1}{\alpha\eta}$ with a worse dependency on $\log n$ by using an algorithm of \cite{BravermanGLWZ18} that achieves a $(\alpha,0)$-approximation to the number of distinct elements in the sliding window model, without using the smooth histogram framework.

\paragraph{$L_p$ norm and $F_p$ moment estimation.} 
We remark that $\cite{BravermanO10}$ showed that the $L_p$ norm and $F_p$ moment estimation problems are $(\alpha, \alpha^p/p)$-smooth for $p \geq 1$ and $(\alpha,\alpha)$-smooth for $0< p \leq 1$. 
Therefore, we can apply the smooth histogram framework to the $L_p$ norm and $F_p$ moment estimation problem and then further apply our general framework to the smooth histogram framework. 
However, it turns out this approach gives a sub-optimal result, since there exist the following more efficient algorithms for $L_p$ norm estimation in the sliding window model:

\begin{lemma}
\cite{WoodruffZ21}
Given $\alpha$ and $p\in(0,2]$, there exists a one-pass algorithm in the sliding window model that outputs a $(\alpha,0)$-approximation to the $L_p$ norm/$F_p$ moment with probability at least $1-\frac{1}{\poly(n)}$. 
The algorithm uses $\O{\frac{1}{\alpha^2}\log^3 n\log^3\frac{1}{\alpha}}$ space for $p=2$ and $\O{\frac{1}{\alpha^2}\log^3 n(\log\log n)^2\log^3\frac{1}{\alpha}}$ space for $p\in(0,2)$.
\end{lemma}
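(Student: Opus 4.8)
The statement is a restatement of the sliding-window $L_p$/$F_p$ estimation result of \cite{WoodruffZ21}, so the plan is to invoke their construction rather than build a new one; below I outline the structure of that argument. The starting point is any insertion-only $(\alpha,0)$-approximation sketch for $L_p$ with $p\in(0,2]$ (e.g.\ the $p$-stable random-projection sketch, or AMS when $p=2$), together with the structural fact that $L_p$ is a smooth function in the sense of \defref{def:smooth}. A naive route would be to plug this sketch into the smooth-histogram framework (\thmref{thm:smooth:histogram}); but, as already observed in the surrounding text, this pays an extra $\frac{1}{\beta(\alpha)}=\poly(1/\alpha)$ factor, which is precisely what we want to avoid.

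The improvement comes from the \emph{difference-estimator} technique. Instead of maintaining $\Theta(1/\beta(\alpha))$ independent running sketches whose start points straddle the window boundary, one maintains a hierarchy of sketches at $O(\log n)$ accuracy levels $j$. At level $j$ it suffices to estimate only the \emph{difference} between $L_p$ of a finer suffix and $L_p$ of the next coarser suffix of the window; because consecutive suffixes of a smooth function differ by at most a $(1-\xi)$ factor, such a difference can be estimated with a sketch whose size scales like $2^{-j}/\alpha^2$ rather than $1/\alpha^2$. Summing this geometric series over the $O(\log n)$ levels yields total space $\tO{1/\alpha^2}$ up to polylogarithmic factors, and maintaining $O(\log m)$ ``checkpoints'' along the stream (to cover all positions the window boundary can take) contributes the remaining $\log^3 n$-type overhead; the slightly worse $(\log\log n)^2$ dependence for $p\in(0,2)$ comes from the precision needed in the underlying $p$-stable sketch, matching the two cases in the statement.

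The main technical obstacle --- and where \cite{WoodruffZ21} does the real work --- is the correctness of this hierarchical combination: one must show (i) each difference estimator is unbiased with variance controlled by the magnitude of the difference it estimates, (ii) the telescoping sum of the level estimates reconstructs $L_p$ of the true window up to relative error $\alpha$, and (iii) the failure probability can be driven to $1/\poly(n)$ by running $\Theta(\log n)$ parallel copies at each level and taking medians, then union-bounding over all $O(\log^2 n)$ (level, checkpoint) pairs. For our purposes only the black-box statement is needed, and once the stated space and probability bounds are in hand the lemma follows; it then feeds directly into our transformation via \thmref{thm:main} and \thmref{thm:dp-transform} together with the global-sensitivity bound $\Delta_{L_p}\le 2$ from \lemref{lem:lp:sens}.
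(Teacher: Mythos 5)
Your proposal is correct and matches the paper's treatment: the paper simply cites \cite{WoodruffZ21} as a black box with no internal proof, which is exactly what you do, and your sketch of the difference-estimator hierarchy accurately reflects the technique that underlies the cited bounds. Nothing further is needed.
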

\begin{corollary} \corlab{cor:lp:sw}
For any integer $K>0$, $p\in(0,2]$, and for any $\alpha>0$, $c>0$, $\eps>0$, and $\rho:= \O{\frac{\alpha\eta}{\log m}}$ on a stream $\frakS$ of length $m=\poly(n)$, where $\eta=\min(\eps,1)$. 
Then there exists a sliding window algorithm $\calA$ such that:
\begin{enumerate}
\item (Privacy) The algorithm $\calA$ is $\eps$-differentially private where $\eps$ is constant.
\item (Accuracy) The algorithm $\calA$ outputs $\widehat{X}$ such that with probability at least $1-\frac{1}{m^c}-\frac{1}{\poly(m)(e^\eps-1)/K+1}$, 
\[(1-\alpha)X-\frac{12c\log m}{\eps}-\frac{1}{K\poly(m)}\le\widehat{X}\le(1+\alpha)X+\frac{12c\log m}{\eps}+\frac{1}{K\poly(m)},\]
where $X$ is the $L_p$ norm of the frequency vector induced by the most recent $\min(m,W)$ updates of the stream for $p\in(1,2]$ and $X$ is the $F_p$ frequency moment of the same frequency vector for $p\in(0,1)$.
\item (Time/Space) The algorithm uses $\tO{\frac{1}{\alpha^2\eta^2}\log^5 n\log^3\frac{1}{\alpha\eta}}$ space for $p=2$ and $\tO{\frac{1}{\alpha^2\eta^2}\log^5 n}$ space for $p\in(0,2)$.
\end{enumerate}
\end{corollary}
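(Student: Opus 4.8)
\textbf{Proof proposal for \corref{cor:lp:sw}.} The plan is to derive the corollary as a direct black-box application of \thmref{thm:main} followed by the pure-DP post-processing step of \thmref{thm:dp-transform}, using the sliding window algorithm of \cite{WoodruffZ21} as the underlying non-private tunable approximation. First I would observe that, possibly after boosting the success probability via the median trick (\remref{rem:median-trick}), the \cite{WoodruffZ21} algorithm run on $\frakS$ is a tunable approximation (\defref{def:tunable}) of the function $f$ mapping a stream to the $L_p$ norm (for $p\in(1,2]$) or $F_p$ moment (for $p\in(0,1)$) of the frequency vector induced by the last $\min(m,W)$ updates: for any target multiplicative parameter it is an $(\alpha,0,\delta)$-approximation with $\delta=\frac{1}{\poly(m)}$ and no additive error, using the stated space (since $m=\poly(n)$ we rewrite $\frac{1}{\poly(n)}$ as $\frac{1}{\poly(m)}$).

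Next I would bound $\Delta_f$. Two neighboring streams differ in a single update $u_i\neq u_i'$; if position $i$ lies inside the active window, the induced frequency vectors $x,x'$ satisfy $\|x-x'\|_1\le 2$ (one coordinate decremented and one incremented, each by one), and otherwise $x=x'$. The computation in the proof of \lemref{lem:lp:sens} then gives $\Delta_f\le 2$ for the $L_p$ norm with $p\ge 1$ and for the $F_p$ moment with $p\in(0,1]$.

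Then I would instantiate \thmref{thm:main} with $\kappa=0$, $\gamma=\Theta(c\log m)$, and $\delta=\frac{1}{\poly(m)}$ chosen small enough relative to the range bound below, obtaining an $(\eps,\delta')$-DP sliding window algorithm whose output $\widehat X$ satisfies $(1-\alpha')X-\frac{2\Delta_f}{\eps}\gamma\le\widehat X\le(1+\alpha')X+\frac{2\Delta_f}{\eps}\gamma$ with probability $1-\frac{1}{\poly(m)}$, where $\alpha'=\alpha(1+o(1))$ is folded back into the target parameter $\alpha$ by rescaling, $\frac{2\Delta_f}{\eps}\gamma=\frac{12c\log m}{\eps}$ after substituting $\Delta_f\le 2$, and $\kappa'=0$ since $\kappa=0$. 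The resource cost is running \cite{WoodruffZ21} with the tuned parameter $\rho=\O{\frac{\eps\alpha}{\log m}}$; substituting $\eta=\min(\eps,1)$ for $\eps$ in $\rho$ ensures the stated space bound remains valid when $\eps>1$, and absorbing $\log m=\Theta(\log n)$ together with the $\frac{1}{\rho^2}$, $(\log\log n)^2$ and $\log^3(\cdot)$ factors into $\tO{\cdot}$ yields $\tO{\frac{1}{\alpha^2\eta^2}\log^5 n\log^3\frac{1}{\alpha\eta}}$ for $p=2$ and $\tO{\frac{1}{\alpha^2\eta^2}\log^5 n}$ for $p\in(0,2)$. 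Finally I would apply \thmref{thm:dp-transform} with $M=\poly(m)$ a polynomial upper bound on any attainable $L_p$ norm or $F_p$ moment, converting the $(\eps,\delta')$-DP guarantee to pure $\eps$-DP at the cost of an extra $\frac{1}{KM}=\frac{1}{K\poly(m)}$ additive error and additional failure probability $p=\frac{\delta'K(M+1)}{e^\eps-1+\delta'K(M+1)}=\frac{1}{\poly(m)(e^\eps-1)/K+1}$, after choosing $\delta'$ so that $\delta'(M+1)=\frac{1}{\poly(m)}$; collecting the $\frac{1}{\poly(m)}$ contributions of the streaming algorithm and the Laplace tail into the $\frac{1}{m^c}$ term gives the claimed probability bound.

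The derivation is essentially mechanical, so there is no real obstacle; the single point that genuinely needs care is the sensitivity claim in the second step, namely that a single changed update anywhere in the length-$m$ stream perturbs the windowed frequency vector in $L_1$ by at most two, so that the bound of \lemref{lem:lp:sens} transfers to the sliding window model. The remainder is constant bookkeeping: matching the factor $12c$, fixing the polynomial degrees hidden in $M$ and $\delta'$, and checking that the $\tO{\cdot}$ space bounds absorb the extra polylogarithmic factors introduced by running the base algorithm with parameter $\rho$.
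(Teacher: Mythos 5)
Your proposal is correct and follows essentially the same route as the paper: the paper explicitly notes that routing through the smooth histogram framework would be sub-optimal here, so it instead applies \thmref{thm:main} and \thmref{thm:dp-transform} directly to the sliding window algorithm of \cite{WoodruffZ21}, using the $\Delta_f\le 2$ bound from (the argument of) \lemref{lem:lp:sens} and absorbing the $\Theta(\log m/(\alpha\eta))^2$ blowup from running with parameter $\rho$ into the $\tilde{\mathcal{O}}$ space bound. The remaining discrepancies you flagged (the $12c$ constant, the exact $\delta'$ relative to $M=\poly(m)$, and the hedge about whether the median trick is needed given \cite{WoodruffZ21} already succeeds with probability $1-1/\poly(n)$) are indeed just the bookkeeping you identified.
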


\section{Conclusion and Open Questions}
In this work, we introduce a general framework for transforming a non-private approximation algorithm into a differentially private approximation algorithm. 
We show specific applications of our framework for sublinear time and sublinear space algorithms. 
Although our framework applies to a large variety of problems and settings, it does incur a small penalty in both runtime and space for achieving differential privacy. 
A natural question is whether these losses are necessary for a general black-box framework and what are sufficient conditions for achieving a black-box reduction. 

It also seems possible that our framework could provide a method for achieving differentially private algorithms when the important resource is not runtime, number of queries, or space. 
For example, in distributed algorithms, it is often desired to achieve sublinear communication while in learning/testing, it is often desired to achieve sublinear query complexity. 
We believe that exploring the limits and capabilities of our framework in those settings would be a natural future direction of work.


\bibliographystyle{alpha}
\bibliography{references}

\appendix

\section{Proof of Approximate DP to Pure DP transformation}\label{app:epsdp-eps}
\dptransform*
\begin{proof}
Note that WLOG we can assume that $\calA_f(D)$ outputs a value between 0 and $M$ since we can always truncate the output to this range --- this operation preserves privacy by postprocessing and does not adversely affect accuracy. For some $K>0$, define algorithm $\calA''_f(D)$ as outputting $\frac{\lceil K \calA_f(D) \rceil }{KM}$. Observe that $\calA''_f$ is $(\eps,\delta)$-DP by postprocessing and the accuracy guarantee of $\calA''_f$ is almost identical to that of $\calA_f$ since by definition $\vert \calA''_f(D) - \calA_f(D) \vert < \frac{1}{KM}$. By post-processing we can ensure that the range $\calR$ of $\calA''_f(D)$ is small $\vert \calR\vert = (M+1)K$ since $\calR=\{\frac{i}{KM} : 0 \leq i\leq KM\}$. Thus, we can pick $p$ such that $\delta \leq \frac{(e^\eps -1)p}{\vert \calR \vert (1-p)}$ and apply a folklore theorem (see~\thmref{thm:epsdel-eps}) to transform our $(\eps,\delta)$-DP algorithm $\calA''_f(D)$ to an $\eps$-DP algorithm $\calA'_f(D)$ in the following manner:
\begin{align*}
    \calA'_f(D) = \begin{cases} 
    \calA''_f(D) & \text{with probability }1-p \\
    \mathsf{random}(\calR) &\text{with probability }p
    \end{cases}
\end{align*}
By combining the accuracy guarantees of $\calA_f$ and $\calA''_f$ we see that with probability $1-\eta - p$, we have that $(1-\alpha) f(D) - \kappa - \frac{1}{KM} \leq \calA_f(D) \leq (1+\alpha) f(D) + \kappa + \frac{1}{KM}$ where $p = \frac{\delta K(M+1)}{e^\eps - 1 +\delta K(M+1)}$ as claimed. 
\end{proof}

\thmapproxtopure*
 \begin{proof}

Recall that we define $\calA'$ as follows: 
\begin{align*}
    \calA'(D) = \begin{cases} 
    \calA(D) & \text{with probability }1-p \\
    \mathsf{random}(\calR) &\text{with probability }p
    \end{cases}
\end{align*}

   Let $D,D' \in \calD$ be neighboring databases and fix output $y \in \calR$. We first give a general claim regarding the probability of $\calA'(D)=y$ in terms of the $\Pr[\calA(D)=y]$. 
    \begin{claim} 
    For $D \in \calD$, 
     \begin{align*}
         \Pr[\calA'(D) = y] = \Pr[\calA(D)=y] \left( 1 - p \right) + \frac{p}{\vert \calR \vert}
     \end{align*}   
    \end{claim}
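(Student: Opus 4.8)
The plan is a direct application of the law of total probability, splitting on which of the two branches in the definition of $\calA'$ is executed. Recall that on input $D$, the algorithm $\calA'$ tosses an independent coin that, with probability $1-p$, causes it to output $\calA(D)$ and, with probability $p$, causes it to output $\mathsf{random}(\calR)$; the key (and implicit) point is that this coin is independent of the internal randomness of $\calA$ and of the uniform draw from $\calR$.

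Concretely, I would let $E$ denote the event that the first branch is taken, so $\Pr[E]=1-p$ and $\Pr[\overline E]=p$, and write
\begin{align*}
\Pr[\calA'(D)=y] = \Pr[\calA'(D)=y\mid E]\cdot\Pr[E] + \Pr[\calA'(D)=y\mid\overline E]\cdot\Pr[\overline E].
\end{align*}
Conditioned on $E$ we have $\calA'(D)=\calA(D)$, and since $E$ is independent of the coins of $\calA$, $\Pr[\calA'(D)=y\mid E]=\Pr[\calA(D)=y]$. Conditioned on $\overline E$ we have $\calA'(D)=\mathsf{random}(\calR)$, a uniformly random element of $\calR$, so $\Pr[\calA'(D)=y\mid\overline E]=\frac{1}{|\calR|}$ for every $y\in\calR$. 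Substituting these two quantities yields
\[
\Pr[\calA'(D)=y]=\Pr[\calA(D)=y]\,(1-p)+\frac{p}{|\calR|},
\]
which is exactly the claim.

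There is essentially no obstacle here; the only thing to be careful about is that the branch-selecting coin is independent of everything else, which is built into the description of $\calA'$. Once the claim is established, the remainder of the proof of \thmref{thm:epsdel-eps} is short: applying the identity to $D$ and to a neighbor $D'$, the ratio $\Pr[\calA'(D)=y]/\Pr[\calA'(D')=y]$ is largest when $\Pr[\calA(D)=y]$ is replaced by its upper bound via the $(\eps,\delta)$-DP guarantee of $\calA$ and $\Pr[\calA(D')=y]$ by $0$, so it suffices to verify $(1-p)+\frac{p}{|\calR|}\le e^\eps\cdot\frac{p}{|\calR|}$ after absorbing the $\delta$ slack, i.e. $\delta\le\frac{(e^\eps-1)p}{|\calR|(1-p)}$, which is precisely the hypothesis. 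I expect the only mildly delicate bookkeeping to be in handling the additive $\delta$ term of $\calA$'s $(\eps,\delta)$-guarantee when forming the ratio, but this is routine.
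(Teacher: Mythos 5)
Your proof is correct and matches the (omitted) reasoning the paper relies on: the paper states this claim without proof, treating the law-of-total-probability decomposition over the branch-selecting coin as immediate, and your argument simply writes out that unstated step, correctly noting the independence of the coin from $\calA$'s randomness and the uniform draw from $\calR$.
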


    Now we need to show that $\Pr[\calA'(D) = y] \leq e^\eps \Pr[\calA'(D')=y]$.
    \begin{align}
        &\Pr[\calA'(D) = y] \nonumber\\
        &= \Pr[\calA(D)=y] \left( 1 - p \right) + \frac{p}{\vert \calR \vert} \nonumber\\
        &\leq \left(1-p \right) (e^\eps \Pr[\calA(D')=y] + \delta) + \frac{p}{\vert \calR \vert} \nonumber\\
        &\leq e^\eps \Pr[\calA(D')=y]\left(1-p \right)  + \delta\left(1-p \right)+ \frac{p}{\vert R \vert} \label{eqtrans1}\\
      &= e^\eps (\Pr[\calA'(D')=y] - \frac{p}{\vert \calR \vert}) +\delta\left(1-p\right)+ \frac{p}{\vert \calR \vert} \label{eqtrans2}\\
      &\leq  e^\eps \Pr[\calA'(D')=y] +\delta\left(1 - p\right)+\frac{p}{\vert \calR \vert}(1-e^\eps) \nonumber\\
      &\leq e^\eps \Pr[\calA'(D') = y] \label{eqfinal} 
    \end{align}
The transition \ref{eqtrans1} to \ref{eqtrans2} follows from the observation that $\Pr[\calA'(D')=y] = (1-p)\Pr[\calA(D')=y] + \frac{p}{\vert \calR \vert}$ and therefore, $(1-p)\Pr[\calA(D')=y]=\Pr[\calA'(D')=y]-\frac{p}{\vert \calR \vert}$. The last equation \ref{eqfinal} follows because $\delta \leq \frac{(e^\eps-1)p}{\vert \calR \vert (1-p)}$ and thus
\[\delta\left(1 - p\right)+\frac{p}{\vert \calR \vert}(1-e^\eps) \leq 0 \ . \]
 \end{proof}
 
\end{document}